\definecolor{BrickRed}{rgb}{0.8,0.25,0.33}
\definecolor{darkgreen}{HTML}{06402B}
\definecolor{navyblue}{HTML}{000080}
\newcommand{\cvxle}{\le_{\textup{\textsf{cv}}}}
\newcommand{\pr}[1]{{\rm Pr} \left[ #1 \right]}
\newcommand{\bs}[1]{\boldsymbol{#1}}
\newcommand{\ex}[1]{{\mathbb E} \left[ #1 \right]}
\newcommand{\expar}[1]{{\mathbb E} [ #1 ]}
\newenvironment{wrapper}[1]
{
	\smallskip
	\begin{center}
		\begin{minipage}{\linewidth}
			\begin{mdframed}[hidealllines=true, backgroundcolor=gray!20, leftmargin=0cm,innerleftmargin=0.35cm,innerrightmargin=0.35cm,innertopmargin=0.375cm,innerbottommargin=0.375cm,roundcorner=10pt]
				#1}
			{\end{mdframed}
		\end{minipage}
	\end{center}
	\smallskip
}
\newcommand{\prtwo}[2]{{\rm Pr}_{#1} \left[ #2 \right]}
\newcommand{\extwo}[2]{{\mathbb E}_{#1} \left[ #2 \right]}
\theoremstyle{plain}
\newtheorem{thm}{Theorem}[section]
\newtheorem{cor}[thm]{Corollary}
\newtheorem{prop}[thm]{Proposition}
\newtheorem{fact}[thm]{Fact}
\newtheorem{lemma}[thm]{Lemma}
\newtheorem{claim}[thm]{Claim}
\newtheorem{definition}[thm]{Definition}
\newtheorem{obs}[thm]{Observation}
\newtheorem{exm}[thm]{Example}
\crefname{thm}{Theorem}{theorems}
\crefname{cla}{Claim}{claims}
\crefname{lem}{Lemma}{lemmas}
\crefname{fact}{Fact}{facts}
\newcommand{\E}{\mathbb{E}}
\newcommand{\eps}{\epsilon}
\newcommand{\topp}{\mathrm{TOP}}
\newcommand{\bott}{\mathrm{BOT}}
\newcommand{\bb}{\mathbb}
\newcommand{\Var}{\textup{Var}}
\newcommand{\optoff}{\mathrm{OPT}_\mathrm{off}}
\newcommand{\opton}{\mathrm{OPT}_\mathrm{on}}
\begin{document}

\title{Improved Approximations for Stationary Bipartite Matching: Beyond Probabilistic Independence}
% \author{\vspace{-1.5cm}}
\author[1]{Alireza AmaniHamedani\thanks{Email: {aamanihamedani@london.edu};}}
\affil[1]{London Business School}
\author[2]{Ali Aouad\thanks{Email: \href{mailto:maouad@mit.edu}{maouad@mit.edu}; This work was partially supported by the UKRI Engineering and Physical Sciences Research Council [EP/Y003721/1].}}
\affil[2]{Massachusetts Institute of Technology}
\author[3]{Tristan Pollner\thanks{Email: {tpollner@stanford.edu};}}
\affil[3]{Stanford University}
\author[3]{Amin Saberi\thanks{Email: {saberi@stanford.edu}.}}
% \affil[1]{Stanford University}
\date{}

\maketitle
\pagenumbering{arabic}
\begin{abstract}

We study stationary online bipartite matching, where both types of nodes---offline and online---arrive according to Poisson processes. Offline nodes wait to be matched for some random time, determined by an exponential distribution, while online nodes need to be matched immediately. This model captures scenarios such as  deceased organ donation and time-sensitive task assignments, where there is an inflow of patients and workers (offline nodes) with limited patience, while organs and tasks (online nodes) must be assigned upon arrival.

We present an efficient online algorithm that achieves a $(1-1/e+\delta)$-approximation to the optimal online policy's reward for a constant $\delta > 0$, simplifying and improving previous work by \cite{aouad2022dynamic}.  Our solution combines recent online matching techniques, particularly pivotal sampling, which enables correlated rounding of tighter linear programming approximations, and a greedy-like algorithm. A key technical component is the analysis of a  stochastic process that exploits subtle correlations between offline nodes, using renewal theory. A byproduct of our result is an improvement to the best-known \emph{competitive ratio}---that compares an algorithm's performance to the optimal offline policy---via a $(1-1/\sqrt{e} + \eta)$-competitive algorithm for a universal constant $\eta > 0$, advancing the results of \cite{patel2024combinatorial}.

% We study online bipartite matching under a stationary setting capturing uncertain agent waiting times. In particular, offline nodes arrive in the system each according to a Poisson process, and must be matched within some exponentially-distributed wait time (online nodes arrive in the system via a Poisson process and must be matched immediately). This model captures natural correlations between the neighborhood distributions of online nodes arriving in close succession. \Alirezacomment{I am not sure about emphasizing on the correlation. It could open a can of worms.}

% Our main result is an efficient online algorithm which achieves a $(1-1/e+\delta)$-approximation to the reward of the optimal online policy, improving the best-known ratio by $\delta > 0$. We also provide a significantly simpler proof of the $1-1/e$ bound provided in prior work \cite{aouad2022dynamic}.

% Breaking the $(1-1/e)$-approximation level requires a tighter LP relaxation and a correlated  LP-rounding using pivotal sampling. The main idea in the analysis is a novel stochastic process that captures weak correlations between offline nodes, going beyond the natural analysis with probabilistically independent nodes. As an additional byproduct of our approach for analyzing correlated queues, we also improve upon the best-known \emph{competitive ratio}---that compares an algorithm against the optimal offline policy---to obtain a $(1-1/\sqrt{e} + \eta)$ performance guarantee, for a constant $\eta > 0$ \cite{patel2024combinatorial}.

\end{abstract}

% \Alicomment{Let's agree on a paper outline}

%\tableofcontents 

\newpage
\section{Introduction}

Online bipartite matching has a storied history in computer science. The foundational work of Karp, Vazirani, and Vazirani introduced the problem along with the optimal RANKING algorithm, achieving a competitive ratio of $1-1/e$ \cite{karp1990optimal}.  Karp et al.'s work catalyzed extensive future research improving our understanding of online matching. Extensions considered have included guarantees for vertex-/edge-weighted graphs \cite{aggarwal2011online, fahrbach2020edge, gao2021improved, blanc2022multiway}, stochastic arrivals \cite{feldman2009online, manshadi2012online, jaillet2013online, huang2021online, tang2022fractional, huang2022power, chen2024stochastic}, post-allocation stochasticity \cite{mehta2012online, goyal2023online, huang2023onlinestochastic, huang2024online}, online matching mechanisms \cite{chawla2010multi, feldman2015combinatorial, kleinberg2019matroid}, and settings incorporating partial information or predictions \cite{jin2022online, antoniadis2023secretary}. When offline nodes are given upfront and online arrivals are sampled from time-dependent distributions, edge-weighted online matching was studied from the perspective of ``prophet inequalities''---namely, against the offline optimum. Numerous elegant $0.5$-{\em competitive} online algorithms are known \cite{feldman2015combinatorial,ezra2022prophets,aouad2023nonparametric}. Approximations to the online benchmark were introduced by \cite{papadimitriou2023online}, leading to ``philosopher inequalities'' \cite{saberi2021greedy, braverman2022max, naor2023dependentroundingarxiv, braverman2024new} improving the {\em approximation ratio} to $0.67$.

%These theoretical results, beautiful in their own right, are closely connected to diverse domains such as online advertising \cite{mehta2007adwords}, resource allocation, ride-sharing platforms \cite{dickerson2021reusable}, and organ allotment \cite{dickerson2012dynamic}. 

The theoretical results in matching have found practical applications in online advertising \cite{mehta2007adwords}, ride-sharing \cite{dickerson2021reusable,ashlagi2019edge}, and organ transplantation \cite{dickerson2012dynamic,ashlagi2021kidney}; see \cite{huang2024online} for a brief survey and \cite[Chap.~5]{echenique2023online} for a comprehensive treatment. These markets are often characterized by continuous agent turnover: 
%In real-world matching markets, there is often a continuous inflow and outflow of agents, which is not controlled by the matchmaker. %\Alidelete{For example, new passengers continually request rides with a risk of dropout if they are not matched to a driver in time \Alicomment{should it be flipped? drivers waiting}.  Similarly,} 
in the gig economy, a stream of arriving workers must be assigned to time-sensitive tasks within limited time windows. For example, drivers join and exit ride-hailing platforms at will. Similarly, organ transplant candidates arrive continuously and may exit the system unmatched. Many other resource allocation markets (like inventory management for a food bank) exhibit this dynamic structure, which affects the
“market thickness”—the availability of compatible market participants for matching.

%The theoretical results in matching have found practical applications in online advertising \cite{mehta2007adwords}, ride-sharing \cite{dickerson2021reusable, ashlagiwindowed}, and organ transplantation \cite{ashlagi-roth,dickerson2012dynamic}; see \cite{matching-market-book} for a comprehensive survey. These markets are characterized by continuous agent turnover: gig economy workers join and leave platforms at will, organ transplant candidates continuously and may exit the system unmatched, and similar dynamics appear in resource allocation systems like food banks. The key challenge lies in finding  within constrained and uncertain temporal windows.

This continuous flow of new offline agents and their limited patience are crucial considerations that are overlooked by the classic online matching models. A notable recent line of work considers the problem under adversarial arrivals and deadlines~\cite{huang2018match, huang2019tight, huang2020fully, eckl2021stronger, tang2022improved}, or assumes that agents are available for a fixed ``window'' of time~\cite{ashlagi2019edge}. 
In the stochastic setting, a recently introduced {\em stationary} formulation of the online matching problem models the arrivals and departures as a continuous-time process, accounting for heterogeneity across agents.
Aouad and Sarita{\c{c}} gave a $(1-1/e)$-approximation to the optimal online algorithm \cite{aouad2022dynamic}. Despite extensive efforts, and further simplification of their algorithm and analysis, improving upon the $(1-1/e)$ approximation ratio has been elusive. In the special case where there is a {\em single type} of offline node, an improvement was obtained by \cite{kessel2022stationary}, who gave a $0.656$-approximation. Against the offline benchmark, the best-known competitive ratio is $1-1/\sqrt{e}$, which is a recent improvement due to \cite{patel2024combinatorial}, relative to the original result by~\cite{collina2020dynamic}. 

The main contribution of this paper is an efficient online algorithm that achieves a $(1-1/e+\delta)$-approximation to the optimal online policy's reward, where $\delta>0$ is a  constant bounded away from zero.  
%Our solution combines recent online matching techniques, particularly pivotal sampling, which enables correlated rounding of tighter linear programming approximations with the analysis of a  stochastic process that exploits subtle correlations between offline nodes, using renewal theory.  
A byproduct of our result is an improvement to the best-known \emph{competitive ratio}---that compares an algorithm to the optimal offline policy---via a $(1-1/\sqrt{e} + \eta)$-competitive algorithm for a universal constant $\eta > 0$, advancing and simplifying the results of \cite{patel2024combinatorial}. 

% In these domains, the limited patience of users is a crucial consideration that is often overlooked by classical models. Offline nodes, whether drivers or patients, might depart from the system after a certain period if not matched. This temporal dimension of the problem has been studied mainly in the queueing and probability literature, often referred to as the ``stationary setting'' due to the memoryless nature of arrivals and departures in steady-state analysis.

\paragraph{Problem formulation: Online stationary bipartite matching.} We are given offline types $I$ and online types $J$. Offline nodes of type $i \in I$ arrive at rate $\lambda_i$ and each one departs after time $\text{Exp}(\mu_i)$, independently from others; online nodes of type $j$ arrive at rate $\gamma_j$. Upon arrival of a type-$j$ node, we must immediately and irrevocably decide how to match it (if at all). Matching to a present and unmatched offline node of type $i$ gains some specified reward $r_{i,j} \ge 0$ while choosing not to match gains no reward. Naturally, each arriving node can be matched at most once. Our goal is to design an online matching policy maximizing the expected long-term average reward, i.e., $$\textsf{Gain(ALG)} := \liminf_{t \rightarrow \infty} \frac{\textsf{ALG}[0,t]}{t} $$ where $\textsf{ALG}[0,t]$ denotes the reward $\textsf{ALG}$ accrues during time $[0,t]$. 

Our main performance measure is the \emph{approximation ratio}, which computes the ratio between the algorithm's performance $\textsf{Gain(ALG)}$ and  that of the \emph{optimal online} algorithm ($\opton$), which is the solution of a dynamic program solving Bellman's equations \cite{bertsekas2012dynamic, puterman2014markov}. We also consider the measure of \emph{competitive ratio}, which computes an analogous ratio with respect to the \emph{optimal offline} algorithm ($\optoff$), that has exact knowledge about all arrival and departure times a priori. % Denoting this optimal algorithm by $\optoff$, we define
%\[
%    \textsf{CR(ALG)} := \frac{\textsf{Gain(ALG)}}{\textsf{Gain(}\optoff\textsf{)}} \ .
%\] %Letting $\opton$ be this optimal online algorithm, AR of our algorithm is formally defined to be
%\[
%    \textsf{AR(ALG)} := \frac{\textsf{Gain(ALG)}}{\textsf{Gain(}\opton\textsf{)}} \ .
%\]

% \paragraph{Prior work.} 

\begin{comment}
Our main performance measure is the \emph{approximation ratio} (AR), which compares the performance of our algorithm versus that of the \emph{optimal online} algorithm, which is the solution of a dynamic program. Namely, this dynamic program keeps track of the set of available offline nodes and solves the ``Bellman's equations'' \cite{bertsekas2012dynamic, puterman2014markov}. Letting $\opton$ be this optimal online algorithm, AR of our algorithm is formally defined to be
\[
    \textsf{AR(ALG)} := \frac{\textsf{Gain(ALG)}}{\textsf{Gain(}\opton\textsf{)}} \ .
\]
We also consider another measure, \emph{competitive ratio} (CR), which compares an algorithm with the \emph{optimal offline} algorithm, that has exact knowledge about all arrival and abandonment times a priori. Denoting this optimal algorithm by $\optoff$, we define
\[
    \textsf{CR(ALG)} := \frac{\textsf{Gain(ALG)}}{\textsf{Gain(}\optoff\textsf{)}} \ .
\]
\end{comment}

\subsection{Our Results}

Our main result is an algorithm breaking the $1-1/e$ barrier for online stationary matching.

\begin{wrapper}
\begin{restatable}{theorem}{mainthm} \label{thm:main}
    There exists a polynomial-time algorithm for the online stationary matching problem that achieves expected average reward at least a $(1-1/e+\delta)$-factor of the optimal \underline{online} algorithm, for some universal constant $\delta > 0$. 
\end{restatable}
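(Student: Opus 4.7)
The plan is to design a randomized algorithm that is a convex combination of two complementary sub-algorithms, and to argue that at least one of them strictly beats $1-1/e$ regardless of the structure of the underlying LP solution. I would first formulate a linear programming relaxation of the long-run expected reward of $\opton$, strengthening the natural arrival-rate LP used in \cite{aouad2022dynamic} with additional constraints capturing higher-order structure (such as joint availability moments of offline types). Showing that this tightened LP is a valid upper bound on $\opton$ sets the benchmark against which the algorithm's reward is ultimately compared.

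The two sub-algorithms I would combine are: an \emph{LP-based} policy that, upon the arrival of an online node $j$, proposes to match it to an offline type $i$ proportional to the LP value $x_{i,j}$, using \emph{pivotal sampling} across successive online arrivals to induce negative correlation in the proposals directed to any fixed offline type; and a \emph{greedy-like} policy that myopically matches each arriving online node to an available offline neighbor maximizing reward. The final algorithm executes one of the two with a fixed probability, and the analysis shows that the weighted average of their gains exceeds a $(1-1/e + \delta)$ fraction of the LP value, for some universal $\delta > 0$.

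The core of the analysis tracks, for each offline node, the steady-state \emph{availability}---the probability that it is present in the system and still unmatched. I would use renewal theory to model the lifecycle of a type-$i$ offline node as a regenerative process with arrival rate $\lambda_i$, departure rate $\mu_i$, and a matching hazard rate induced by the algorithm's decisions. Under the naive independence assumption leveraged in \cite{aouad2022dynamic}, the availability is at least $e^{-1}$ times the LP prescription, yielding exactly the $(1-1/e)$ bound. The pivotal sampling scheme controllably breaks this independence: conditional on one offline node being unmatched at a given time, its LP-neighbors are \emph{more} likely to have absorbed earlier competing proposals, which lowers future matching pressure on this node and boosts its availability.

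The main obstacle will be in rigorously handling the cross-node correlations induced by pivotal sampling while running the renewal process for a single offline type, since the matching hazard rate is no longer a product of the marginal availabilities of neighbors. I would attack this via a case analysis on the structure of the LP solution: in the regime where the LP mass at each offline node is spread over many online types, the greedy sub-algorithm benefits from thick availability and strictly beats $(1-1/e)$; in the concentrated regime, pivotal sampling exploits the induced negative correlations to produce the improvement. Quantifying both regimes through a renewal-reward computation on the joint (present, matched) state of an offline type---and balancing them via the mixing probability of the two sub-algorithms---is the delicate step that yields the universal advantage $\delta > 0$.
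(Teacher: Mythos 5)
You correctly identify the high-level architecture the paper uses—a tightened LP relaxation, an LP-based pivotal-sampling policy, a complementary greedy-type policy, a structural case analysis, and a renewal-theoretic computation of availability—but the specific mechanisms you propose are not the ones that make the argument go through, and the most delicate ingredient is missing.

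First, you propose applying pivotal sampling ``across successive online arrivals'' to negatively correlate over time the proposals directed at a fixed offline type. The paper instead applies pivotal sampling at a single time step, across the currently available offline types competing for the same arriving online node $j$, sorted by decreasing reward. The benefit is isolated in Case~3 (\Cref{obscase3}): it tightens the convex-order bound on $R_j(w)$ when some proposal probabilities $p_{i,j}$ are bounded away from $1$, and this gain is obtained entirely within the \emph{independent} Markov chains abstraction, i.e., without appealing to cross-node availability correlations at all. Your proposed mechanism---that conditioning on a node being unmatched shifts earlier matching pressure onto its LP-neighbors and thereby boosts its own availability---is not what pivotal sampling delivers here, and nothing in the proposal would establish it.

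Second, and more critically, your greedy sub-algorithm myopically maximizes reward. In the hard residual case (the vertex-weighted highly-connected instances), all rewards incident to each online $j$ lie within a factor $(1+\eps)$ of each other, so reward-maximizing tie-breaking conveys essentially no information; worse, the pivotal-sampling policy already prioritizes by reward via the prefix ordering, and the paper shows that that policy can still be pinned at $1-1/e+o(1)$ on such instances (\Cref{app:alg_bad_example}). The missing, non-obvious idea is \emph{Balanced Greedy}: randomly splitting each offline type into a ``top'' and a ``bottom'' copy and strictly prioritizing top copies. The entire improvement comes from showing that deliberate deprioritization of bottom copies boosts their steady-state availability beyond $1/e$, and proving this requires the weakly correlated Markov chain construction together with renewal-theoretic mixing bounds and a Hardy--Littlewood rearrangement step. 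A reward-maximizing greedy gives you no handle on any of this.

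Third, running one of the two sub-algorithms with a fixed mixing probability only works if each is individually within a constant of $1-1/e$ on all inputs. The paper's second algorithm is only analyzed on the post-transformation instance, where a preprocessing step deletes edges and types; on non-VWHC instances that deletion can discard a constant fraction of the LP value, so there is no general $(1-1/e)$-style guarantee for it. The paper instead inspects the LP solution to determine whether the instance is $\eps$-VWHC and deterministically selects the corresponding algorithm, sidestepping this issue. As written, the proposal leaves a genuine gap at the point where the two sub-algorithms must be combined.
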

\end{wrapper}

As a byproduct, we obtain an algorithm and analysis that improves on the best-known competitive ratio, and additionally simplifies the proof of the existing bound. In particular, we provide a polynomial-time algorithm for the online stationary matching problem that achieves expected average reward at least a $(1-1/\sqrt{e}+\eta)$-factor of the optimal  {offline} algorithm, for some universal constant $\eta > 0$, improving the results of \cite{patel2024combinatorial}.  

As in \cite[Thm.~2]{aouad2022dynamic}, we can additionally extend our main result to get  improved approximation guarantees in settings where both sides of the graph have limited patience or where the graph is non-bipartite and all types are partially patient. Such extensions are deferred to Appendix~\ref{app:extensions}.

\begin{comment}
    \paragraph{Extension to other graph structures.} We note that every algorithm for the online stationary matching problem yields similar guarantees for two other settings: (i) both sides of the graph have limited patience, (ii) the graph is non-bipartite and all types are partially patient. For example, for the non-bipartite graph, consider an instance ${\cal I}$; we construct an alternative instance $\tilde{\cal I}$ by labeling every arriving node online or offline with the same probability. We claim that the optimal policy in $\tilde{\cal I}$ obtains a stationary reward at least $1/4$ of the optimal in ${\cal I}$. Indeed, the (potentially suboptimal) policy that mimics the optimal policy of ${\cal I}$ can successfully make each of its matches with probability $1/4$. 

In light of this observation, \Cref{thm:main} implies a $(1-1/e + \delta)/4$-approximation for the case of non-bipartite graphs, improving the $(1-1/e)/4$ bound of \cite{aouad2022dynamic}. The case for a bipartite graph with partial patience on both sides is similar; there, our approach gives a $(1-1/e + \delta)/2$-approximation, strengthening the $(1-1/e)/2$ result of \cite{aouad2022dynamic}.
\end{comment}

%\todo{Put a more formal discussion of the point about other graphs somewhere else (maybe when we retstate Theorem 1.1)?}
%\Alicomment{Good idea}
%\Alirezacomment{let's iterate on this to make it better. I feel like this is too much for here. maybe we should just mention the claim and discuss it in the appendix?
%{
%\bf Amin: agreed. Move to the end of intro?}}

\subsection{Our Techniques}
We introduce several new techniques to achieve an improved approximation ratio for online stationary matching.
% , with a careful design and analysis of correlations between offline nodes.

\paragraph{A tightened LP relaxation.}  At its core, our algorithm employs a polynomial-size LP relaxation of the optimal online policy. Extending ideas from prior literature \cite{huang2021online, huang2022power, kessel2022stationary}, we identify a new exponential family of constraints bounding the availability of offline types. %, which tighten existing LP relaxations. 

\paragraph{Pivotal sampling for stationary matching.} To round our LP online, upon the arrival of an online type $j \in J$, each available offline node of type $i\in I$ proposes to match with node $j$ with a ``proposal probability'' $p_{i,j}$, computed based an optimal solution to our LP relaxation. Drawing a connection with the discrete online matching literature \cite{braverman2024new}, we correlate these proposals using pivotal sampling, and then match to the proposing node with maximum reward.

We provide a simple and concise proof that this algorithm is $(1-1/e)$-approximate by coupling the random evolution of offline nodes with a simple Markov chain, where offline nodes are completely independent. In particular, we show that the true number of offline nodes stochastically dominates a stochastic process comprised of {\em independent Markov chains}, providing a new short proof of this baseline guarantee. This analysis holds even if the proposals are drawn independently. 

Independent proposals cannot beat $1-1/e$. In contrast, we show that pivotal sampling achieves a better approximation in several cases. In particular, when the proposal probabilities are bounded away from $1$, we leverage negative correlation of pivotal sampling for an improved guarantee. We also identify straightforward improvements when the LP solution does not saturate the capacity constraints of online nodes, or when rewards are not overwhelmingly {\em vertex-weighted}, i.e., the LP solution does not ``concentrate'' on edges of nearly equal weights.  

% \paragraph{The Balanced Greedy algorithm.} 
Via our tightened LP relaxation, we argue that the remaining cases are extremely structured instances, which we term {\em vertex-weighted highly-connected}. Here, breaching the $(1-1/e)$-approximation ratio requires a second algorithm, which we call {\em Balanced Greedy}. After removing edges and node types with negligible contributions to the LP solution, Balanced Greedy matches each arriving online node myopically, using a form of ``balanced'' tie-breaking between offline nodes (instead of prioritizing those with larger rewards). To this end, we split each offline type into ``top'' and ``bottom'' copies uniformly at random, and prioritize matches to top copies when possible. 

  \paragraph{ Key technical component.}
  %Analysis of weakly correlated Markov chains.}}While we draw algorithmic inspiration from discrete-time techniques, particularly in our use of pivotal sampling, the stationary nature of our problem demands new ideas. 
Unlike discrete settings, where offline nodes' availability progresses from empty to full over a known time horizon, the  stationary nature of this problem prevents us from beating the $1-1/e$ bound using independent availability (as in, e.g., \cite{naor2023dependentroundingarxiv, braverman2024new}). % This fundamental difference requires us to develop new tools for analyzing subtle correlations in the availability of the offline nodes.%.\footnote{We note in particular the results \cite{naor2023dependentroundingarxiv, braverman2024new} do not use strict negative correlation beyond independence of the offline nodes' availability to beat $1-1/e$, instead relying on an averaging argument or time-dependent scaling.} 

\begin{comment}
The core of our improvement stems from demonstrating that the deprioritization of  ``bottom nodes'' when running Balanced Greedy leads to a higher   probability of availability in steady-state. We prove this constructively via a new \emph{weakly correlated} Markov chain. Unlike in the independent case, we cannot solve for the stationary distribution of this Markov chain exactly, and require  probabilistic techniques to bound the dependence across nodes. 

In the ``vertex-weighted highly-connected'' instances, we would like to show that an arriving online node of type $j$ has no available neighbor with probability $(1/e - \Omega(1))$. We first bound the probability that $j$ sees no available neighboring top node via the simpler independent Markov chains. Conditioned on the top nodes being unavailable, we analyze the joint probability that $j$'s bottom neighbors are unavailable by analyzing 

Our task is then to establish an improved availability rate for each individual bottom type $i$ connected to $j$, conditional on $j$ having no available top nodes, when it arrives at time $t$. The difficult case is when $i$ is connected to many online nodes $k$ whose neighborhoods look very ``similar'' to that of $j$---if such $k$-s were blocked from matching with top nodes due to our conditioning, we may not get any boost of $i$'s availability. We handle this case via techniques from renewal theory, in particular proving that the rapid mixing of top nodes ``washes out'' this conditioning when looking backwards from $t$. 
\end{comment}

Instead, the core of our improvement stems from demonstrating that the deprioritization of  ``bottom nodes'' when running Balanced Greedy leads to a higher probability of availability in steady-state. Intuitively, the prioritization in Balanced Greedy means that bottom types are not depleted continuously over time---in the intervals where top types are present, the bottom nodes are not matched. 

Formalizing this idea via a new process with \emph{weakly correlated  Markov chains}, we show that an arriving online node of type $j$ has no available neighbor with probability $1/e - \delta'$ for some constant $\delta'>0$. We first bound the probability that $j$ sees no available neighboring top node via the simpler independent Markov chains. Conditioned on its top neighbors being unavailable, we analyze the joint probability that $j$'s bottom neighbors are unavailable by exploiting independence of arrivals/departures across types, and applying the Hardy-Littlewood inequality to bound the remaining correlation. The central challenge is to analyze the conditional probability that an individual bottom neighbor $i \sim j$ is unavailable; we do so by applying  techniques from renewal theory and an analysis of mixing times.

 \subsection{Further related work}

\paragraph{Online matching with announced departures.} In the unweighted case, a beautiful recent line of work \cite{huang2018match, huang2019tight, huang2020fully, eckl2021stronger, tang2022improved} considers the problem under adversarial arrivals and deadlines, giving algorithms significantly beating the $0.5$ baseline of Greedy and showing the $1-1/e$ guarantee of \cite{karp1990optimal} is unattainable. It is even possible to slightly beat greedy for general vertex arrivals when arriving nodes must be matched immediately or to later arrivals \cite{wang2015two, gamlath2019beating}. In settings with edge-weights, results on ``windowed matching'' study policies when agents stay in a marketplace for a fixed number of time periods, or for certain i.i.d. random durations~\cite{ashlagi2019edge}. 

\paragraph{Online stationary matching.}  
% A recently introduced ``stationary'' formulation of the online matching problem models departures without announcements and additionally accounts for heterogeneity across agents.
 The stationary (or dynamic) problem was studied by~\cite{collina2020dynamic,aouad2022dynamic}.\footnote{There is extensive work on the control of matching queues in the applied probability literature, with a focus on stability criteria~\cite{buvsic2013stability, mairesse2016stability}, implicit solutions for the steady-state distribution \cite{moyal2021product}, waiting times~\cite{cadas2022analysis}, and large-market approximations~\cite{ozkan2020dynamic} or heavy-traffic~\cite{varma2021transportation}. Notably, however, the combinatorial aspect of the problem, and the design of policies through the lens of approximation algorithms, has received less attention. } Here, the central planner is not notified before agents depart. More recently, \cite{patel2024combinatorial} develop a generalization of the stationary setting to combinatorial allocation, and leverage a reduction to offline contention resolution schemes to obtain constant-factor competitive algorithms. \cite{li2023fully} consider more general distribution of departure times. While it is unknown if the stationary matching is NP-hard  or APX-hard, \cite{amani2024adaptive} devise a fully polynomial-time approximation scheme in the special case where the number of offline nodes is a constant.

\section{Pivotal Sampling Applied to a Tightened LP Relaxation} \label{sec:algo}

\subsection{A Tightened LP Relaxation} \label{subsec:tlp}

We use the following LP relaxation that upper bounds the stationary gain of the optimal offline algorithm. It has variables $\{x_{i,j}\}$ corresponding to the match rate of $i \in I$ and $j \in J$, as well as variables $\{x_{i,a}\}$ corresponding to the rate at which each $i \in I$ abandons without being matched. In addition to a natural flow constraint, it uses a new \underline{tightened} constraint on the flow rate into any subset of offline types, which generalizes the constraint for subsets of size one observed by \cite{kessel2022stationary}.

\begin{align}
	\nonumber  \max \quad &  \sum_{i \in I} \sum_{j \in J} r_{i,j} \cdot x_{i,j} && \tag{TLP$_{\text{on}}$} \label{TLPon} \\
	\textrm{s.t.} \quad  & x_{i,a} + \sum_j x_{i,j} = \lambda_i \ , && \forall i \in I   \label{eqn:tightOfflineFlow}\\
	&  \sum_{i \in H} x_{i,j} \le \gamma_j \cdot \left(1 - \exp\left(-\sum_{i \in H}  \lambda_i / \mu_i \right)\right) \ , && \forall j \in J, \forall H  \subseteq I\label{eqn:tightOnlineFlow} \\
 	&  x_{i,j}/\gamma_j \le x_{i,a} / \mu_i \ , &&\forall i \in I, \forall j \in J \label{eqn:tightOnlineConstraint} \\
 & x_{i,j}, x_{i,a} \ge 0 \ . && \forall i \in I, \forall j \in J \label{eqn:nonnegativity}
\end{align}

Constraints \eqref{eqn:tightOfflineFlow} and \eqref{eqn:tightOnlineConstraint} are already known from prior work (c.f. \cite{aouad2022dynamic}). The rationale behind Constraint \eqref{eqn:tightOnlineFlow} is that in order to match an online type $j$ and an offline type in set $H$, it is necessary that at least one offline type $i\in H$ is {\em present}. This notion refers to an item which has arrived but has not yet departed (regardless of whether or not it has been matched). Then, since this event is independent for different offline types and the number of present items of type $i$ follows from a Poisson distribution, this probability corresponds to the right-hand side of Constraint \eqref{eqn:tightOnlineFlow}. The following claim formalizes this intuition (refer to \Cref{app:claimlprelaxation} for a proof). 
\begin{restatable}{claim}{claimlprelaxation} \label{claim:lprelaxation} Let $\textup{OPT}\eqref{TLPon}$ be the optimal value of \eqref{TLPon}. Then, $\textup{OPT}\eqref{TLPon} \geq \opton$.
\end{restatable}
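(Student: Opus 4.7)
The plan is to exhibit, for any online policy $\pi$ (and in particular for $\opton$), a feasible solution to \eqref{TLPon} whose objective value equals $\pi$'s long-run average reward. Specifically, considering $\pi$ in its stationary regime, define $x_{i,j}$ as the steady-state rate at which $\pi$ matches an offline type-$i$ node with an arriving type-$j$ node, and define $x_{i,a}$ as the steady-state rate at which unmatched type-$i$ nodes abandon. Then $\sum_{i,j} r_{i,j}\, x_{i,j}$ is exactly $\pi$'s gain, so it suffices to verify that $(x_{i,j}, x_{i,a})$ satisfies \eqref{eqn:tightOfflineFlow}--\eqref{eqn:nonnegativity}; non-negativity being immediate from the definitions.

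Constraint \eqref{eqn:tightOfflineFlow} is a flow-balance statement: type-$i$ arrivals enter at rate $\lambda_i$, and under stationarity each such arrival must eventually exit either through a match (contributing to some $x_{i,j}$) or through an abandonment (contributing to $x_{i,a}$). For \eqref{eqn:tightOnlineConstraint}, let $N_i$ denote the steady-state number of unmatched present type-$i$ nodes. Since each such node abandons independently at rate $\mu_i$, we get $x_{i,a} = \mu_i \cdot \mathbb{E}[N_i]$. On the other hand, by PASTA, $x_{i,j}/\gamma_j$ equals the probability that an arriving type-$j$ node is matched to type $i$; such a match requires $N_i \geq 1$, so $x_{i,j}/\gamma_j \leq \Pr[N_i \geq 1] \leq \mathbb{E}[N_i] = x_{i,a}/\mu_i$.

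The main (and most interesting) step is establishing \eqref{eqn:tightOnlineFlow}. I would introduce a \emph{fictitious} process coupled with the real one, in which each type-$i$ arrival remains ``present'' for its entire $\textrm{Exp}(\mu_i)$ natural lifetime regardless of whether $\pi$ matched it. Under this lens, the number of present type-$i$ items evolves as an M/M/$\infty$ queue whose stationary distribution is Poisson$(\lambda_i/\mu_i)$, and because distinct offline types have independent arrival and lifetime primitives, the counts are jointly independent across $i \in I$. Hence the event $A_H$ that at least one $i \in H$ is present satisfies $\Pr[A_H] = 1 - \exp(-\sum_{i \in H} \lambda_i/\mu_i)$. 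Any match made by $\pi$ between some $i \in H$ and an arriving type-$j$ node requires at least one unmatched-and-present $i \in H$, an event contained in $A_H$ by construction of the coupling. Applying PASTA to the Poisson arrivals of type $j$ then yields $\sum_{i \in H} x_{i,j} \leq \gamma_j \cdot \Pr[A_H]$, which is precisely \eqref{eqn:tightOnlineFlow}.

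The main (mild) obstacle is the careful decoupling underlying the last step: the notion of ``present'' used in \eqref{eqn:tightOnlineFlow} must be interpreted with respect to the policy-independent natural-lifetime clocks, not with respect to the policy-dependent unmatched pool. I would make this precise by letting the real system and the fictitious no-matching system share the same arrival epochs and lifetime draws, so that matching only ever shrinks the available pool relative to the fictitious baseline and the M/M/$\infty$ calculation applies verbatim.
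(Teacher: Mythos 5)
Your proposal is correct and follows essentially the same route as the paper: for the key new constraint \eqref{eqn:tightOnlineFlow}, both you and the paper introduce the policy-independent presence process (the paper's ``alive'' counts $A_i(t)$, your fictitious no-matching process), observe that it is a product of independent birth--death (M/M/$\infty$) queues with Poisson$(\lambda_i/\mu_i)$ stationary law, note any match requires at least one present node in $H$, and invoke PASTA. The only difference is that the paper cites prior work for constraints \eqref{eqn:tightOfflineFlow} and \eqref{eqn:tightOnlineConstraint} while you supply short direct arguments (flow balance, and $x_{i,j}/\gamma_j \le \Pr[N_i \ge 1] \le \mathbb{E}[N_i] = x_{i,a}/\mu_i$), but the substance is the same.
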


We note that without the new tightened family of constraints, the value of the optimum online may be upper bounded by a $(1-1/e+o(1))$-fraction of the LP optimum; see \Cref{app:needTighteningExample}.

\paragraph{Computational tractability.}
Although \eqref{TLPon} has exponentially many constraints, it can be solved in polynomial time. The proof is nearly identical to Theorem 5 of \cite{huang2021online}; see  \Cref{app:proofTLPpolysolvable}. 

\begin{restatable}{claim}{TLPpolysolvable} \label{TLPpolysolvable}
    \eqref{TLPon} is polynomial-time solvable. 
\end{restatable}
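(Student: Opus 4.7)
The plan is to invoke the ellipsoid method with a polynomial-time separation oracle. Constraints \eqref{eqn:tightOfflineFlow}, \eqref{eqn:tightOnlineConstraint}, and \eqref{eqn:nonnegativity} are all of polynomial count and can be separated trivially by inspection, so the only work is to handle the exponential family \eqref{eqn:tightOnlineFlow}. For a candidate solution $x$, I would loop over each $j \in J$ and attempt to find a maximally violated subset $H \subseteq I$ by minimizing the slack
\[
    g_j(H) \;:=\; \gamma_j\!\left(1 - \exp\!\Big(\!-\!\sum_{i \in H} \lambda_i/\mu_i\Big)\right) - \sum_{i \in H} x_{i,j} \;=\; \gamma_j - \sum_{i \in H} x_{i,j} - \gamma_j \exp\!\Big(\!-\!\sum_{i \in H} \lambda_i/\mu_i\Big).
\]
If $\min_H g_j(H) < 0$ for some $j$, the minimizing $H$ provides a violated constraint; otherwise $x$ is feasible.

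The key observation is that $g_j$ is a submodular set function. The linear term $-\sum_{i \in H} x_{i,j}$ is modular, and the constant $\gamma_j$ contributes nothing to second differences, so it suffices to show that $F(H) := -\exp(-\sum_{i \in H} b_i)$ is submodular, where $b_i = \lambda_i/\mu_i$. For any $H$ and $i,k \notin H$, writing $B = \sum_{\ell \in H} b_\ell$,
\[
    \big[F(H \cup \{i,k\}) - F(H \cup \{k\})\big] - \big[F(H \cup \{i\}) - F(H)\big] \;=\; -e^{-B}(1 - e^{-b_i})(1 - e^{-b_k}) \;\le\; 0,
\]
verifying the diminishing-differences condition. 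Hence $g_j$ is submodular for every $j$, and its minimum over all subsets $H \subseteq I$ can be computed in polynomial time via any standard submodular minimization algorithm (e.g., Grötschel--Lovász--Schrijver, or the combinatorial algorithms of Iwata--Fleischer--Fujishige and Schrijver).

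Running this separation oracle for each $j \in J$ yields a polynomial-time separation procedure for \eqref{TLPon}. Combined with the ellipsoid method, this gives a polynomial-time algorithm for solving \eqref{TLPon}; standard rounding of ellipsoid outputs recovers an optimal vertex solution of polynomial bit complexity, since the constraint coefficients are rationals bounded in terms of the input. The main (only) nontrivial step is the submodularity verification above; everything else is a direct application of the Grötschel--Lovász--Schrijver framework, which is exactly why the argument parallels Theorem~5 of \cite{huang2021online}.
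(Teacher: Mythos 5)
Your proof is correct, but it takes a genuinely different (and heavier) route than the paper's. You verify that the slack $g_j(H)$ is a submodular set function and then invoke general-purpose submodular function minimization as a separation oracle inside the ellipsoid method; your diminishing-returns computation $-e^{-B}(1-e^{-b_i})(1-e^{-b_k}) \le 0$ is right, the linear term is indeed modular, and unconstrained SFM over $H \subseteq I$ (with $g_j(\emptyset)=0$) correctly detects a violated constraint whenever $\min_H g_j(H) < 0$. The paper instead exploits the specific structure more directly: it relaxes $H$ to a fractional vector $\theta$ with $0 \le \theta_i \le \lambda_i/\mu_i$, notes that the constraint family is equivalent to the convex inequality $\sum_i \theta_i\,\frac{x_{i,j}\mu_i}{\gamma_j\lambda_i} \le 1 - \exp(-\sum_i \theta_i)$, and observes that for any fixed budget $\sum_i \theta_i$ the left-hand side is maximized by a fractional-knapsack greedy that fills coordinates in decreasing order of $\frac{x_{i,j}\mu_i}{\gamma_j\lambda_i}$. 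Consequently it suffices to check only the $n$ nested prefix sets under that ordering, giving a far more elementary $O(n\log n)$ separation oracle rather than a black-box SFM call. Your approach buys generality (it would go through unchanged for any constraint of the form $\sum_{i\in H} x_{i,j} \le \gamma_j\, h(\sum_{i\in H} b_i)$ with $h$ concave increasing), while the paper's buys a concrete, lightweight oracle and avoids appealing to the heavy SFM machinery; both are valid routes to the same conclusion via ellipsoid.
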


\subsection{Algorithm with Pivotal Sampling}

Pivotal sampling \cite{srinivasan2001distributions, gandhi2006dependent} is a classic algorithm for sampling according to a provided list of marginal probabilities with negative correlation. For convenience, we restate some of its properties used in our algorithm and analysis.

\begin{thm}[as in \cite{srinivasan2001distributions}] \label{def:pivotalsampling}
	The pivotal sampling algorithm with input $(x_i)_{i=1}^n$ produces (in polynomial time) a random subset of $[n]$, denoted $\textup{\textsf{PS}}(\vec{x})$, with the following properties:
		\begin{enumerate}[label=(P{{\arabic*}})]%uses package enumitem
    \item \textbf{Marginals:} $\Pr[ i \in \textup{\textsf{PS}}(\vec{x})] = x_i$ for all $i \in [n]$.\label{level-set:marginals}
    \item \textbf{Prefix property:} $\Pr[ |\textup{\textsf{PS}}(\vec{x}) \cap [k] |\geq 1] = \min(1,\,\sum_{i\leq k} x_i)$ for all $k \in [1,n]$. \label{level-set:prefix}
\end{enumerate}
\end{thm}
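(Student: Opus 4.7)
The plan is to recall the procedure underlying pivotal sampling and then verify the two properties via a martingale argument. The procedure maintains a working vector $\vec{y}^{(0)} := \vec{x}$. At each step, it selects the two smallest-indexed fractional entries $i < j$, computes $u := \min(1-y_i, y_j)$ and $v := \min(y_i, 1-y_j)$, and with probability $v/(u+v)$ performs the update $(y_i, y_j) \gets (y_i+u, y_j-u)$; otherwise $(y_i, y_j) \gets (y_i-v, y_j+v)$. Each step preserves $y_i + y_j$ and forces at least one of the two entries into $\{0,1\}$, so the procedure terminates in at most $n$ iterations, yielding polynomial runtime; $\textsf{PS}(\vec{x})$ is the support of the final integer vector.

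For property (P1), a direct computation shows $\E[y_i^{\mathrm{new}} \mid \vec{y}^{\mathrm{old}}] = \tfrac{v}{u+v}(y_i + u) + \tfrac{u}{u+v}(y_i - v) = y_i$, so each coordinate $\{y_i^{(t)}\}_t$ is a martingale; optional stopping then gives $\Pr[i \in \textsf{PS}(\vec{x})] = \E[y_i^{(T)}] = x_i$.

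For property (P2), fix $k$ and let $S_k^{(t)} := \sum_{i \le k} y_i^{(t)}$. By linearity, this is also a martingale. The first-two-fractional tie-breaking rule is essential here: it guarantees that $S_k^{(t)}$ is unchanged at step $t$ unless exactly one fractional coordinate $i^*$ lies in $[1,k]$ while the co-selected index $j$ exceeds $k$. In that case $S_k^{(t)} = y_{i^*} + N$ for some integer $N \ge 0$ collecting the already-rounded mass in $[1,k]$, and enumerating the four possible update outcomes (push towards or away from $i^*$, with either endpoint hitting $\{0,1\}$) establishes the invariant $S_k^{(t+1)} \in [\lfloor S_k^{(t)} \rfloor, \lceil S_k^{(t)} \rceil]$. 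Combined with the martingale property and the integrality of $S_k^{(T)}$, this splits into two regimes: when $\sum_{i \le k} x_i \le 1$, the invariant propagates to give $S_k^{(t)} \in [0,1]$ for all $t$, so $S_k^{(T)} \in \{0,1\}$ and optional stopping yields $\Pr[S_k^{(T)} = 1] = \sum_{i \le k} x_i$; when $\sum_{i \le k} x_i > 1$, the invariant propagates to give $S_k^{(t)} \ge 1$ for all $t$ (since a fractional $y_{i^*} \in (0,1)$ forces the integer part $\lfloor S_k^{(t)} \rfloor$ to be at least $1$), yielding $\Pr[S_k^{(T)} \ge 1] = 1$.

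The main obstacle is verifying the invariant $S_k^{(t+1)} \in [\lfloor S_k^{(t)} \rfloor, \lceil S_k^{(t)} \rceil]$ in the mixed-index case---this requires careful bookkeeping across the four rounding outcomes, and in particular checking that when only $y_j$ becomes integer (while $y_{i^*}$ changes by a fractional amount), $S_k^{(t+1)}$ cannot cross the threshold $1$ in the $\sum_{i \le k} x_i > 1$ regime. All other steps are routine applications of optional stopping and elementary expectation computations.
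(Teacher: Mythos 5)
The paper does not actually prove this statement; it is cited directly to Srinivasan (2001), so there is no in-paper argument to compare against. Your reconstruction of the martingale / ordered pipage-rounding argument is correct in substance. The key structural observation---that the first-two-fractional pairing rule forces any mixed step (one paired index inside $[k]$, one outside) to have \emph{exactly} one fractional coordinate in $[1,k]$, so $S_k^{(t)} = y_{i^*}^{(t)} + N$ with $N \in \mathbb{Z}$---is precisely what makes the prefix/level-set invariant tractable, and you are right that it fails for arbitrary pairings (e.g.\ $x=(\tfrac12,\tfrac12,\tfrac12,\tfrac12)$ with pairs $\{1,3\},\{2,4\}$ gives $\Pr[y_1=y_2=0]=\tfrac14\neq 0$, violating (P2) at $k=2$).

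Two small points. First, your algorithm description omits the terminal case in which a single fractional coordinate remains (which happens whenever $\sum_i x_i \notin \mathbb{Z}$); it should then be rounded independently to $1$ with probability equal to its current value, which preserves both the coordinate-wise martingale property and the invariant, so the optional-stopping argument still closes. Second, the ``careful bookkeeping across four rounding outcomes'' you flag as the main obstacle is in fact immediate: $u \le 1-y_{i^*}$ and $v \le y_{i^*}$ by definition, so $y_{i^*}^{(t+1)} \in [0,1]$ unconditionally, giving $S_k^{(t+1)} \in [N,N+1] = [\lfloor S_k^{(t)}\rfloor,\lceil S_k^{(t)}\rceil]$ with no case analysis. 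With these two adjustments the proof is complete and matches the standard argument behind \cite{srinivasan2001distributions}.
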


We next present our algorithm. It first solves \eqref{TLPon}, and using the optimal solution computes a {\em proposal probability} $p_{i,j}$ for each $i \in I$ and $j \in J$. Upon the arrival of each online type, to match it we use pivotal sampling with the relevant proposal probabilities for each {\em available} offline node, which is both present and unmatched.

% For both benchmarks, the algorithm is identical other than the relevant LP relaxation that is solved (in \Cref{line:solveLP}) and the formula for $p_{i,j}$ (in \Cref{line:setPij}). For convenience we hence use \Cref{alg:corrprop} to describe the algorithm for the optimum online benchmark, marking in the first two lines the small changes needed for the optimum offline benchmark. 

\begin{algorithm}[H]
	\caption{Correlated Proposals  for \eqref{TLPon}}
	\label{alg:corrprop}
	\begin{algorithmic}[1]
 \State Solve \eqref{TLPon} for $\{x_{i,j}\}$  \label{line:solveLP}
 \State $p_{i,j} \gets \frac{x_{i,j}/\gamma_j}{x_{i,a} / \mu_i }$ \label{line:setPij}
 \medskip
    
		\ForAll{timesteps where an online $j$ arrives} \label{line:loop-start}
        \State $\vec{p} \gets $ vector of $p_{i,j}$'s for each available offline $i$, sorted by decreasing $r_{i,j}$ \label{lin:inside-loop-start}
        \State $\textsf{Proposers} \gets \textsf{PS}(\vec{p} )$ \label{lin:PS} \Comment{$\textsf{PS}(\cdot)$ denotes \emph{pivotal sampling}, as in \Cref{def:pivotalsampling}}
		\State If \textsf{Proposers} is non-empty, match $j$ to a proposing node of maximum reward  \label{lin:match}
        %\EndIf        
        \EndFor
	\end{algorithmic}
\end{algorithm}	

We briefly emphasize that when multiple nodes of some type $i$ are available, they are \emph{all} included as multiple entries with the same $p_{i,j}$ value in the vector $\vec{p}$ on which we call the pivotal sampling subroutine. The algorithm is well-defined as $p_{i,j} \in [0,1]$ by Constraint \eqref{eqn:tightOnlineConstraint} of \eqref{TLPon}.

\subsection{$\mathbf{(1-1/e)}$-Approximation via Independent Markov Chains} \label{sec:oneminusoneovere}

As a warm-up, we will first show that \Cref{alg:corrprop} achieves at least a $(1-1/e)$-approximation to $\opton$. This provides a significantly simpler algorithm and analysis for this known bound. In fact, the $(1-1/e)$-approximation, derived below, holds even when the call to pivotal sampling in \Cref{lin:PS} of \Cref{alg:corrprop} is replaced with independent sampling. 

Let $Q(t) := (Q_i(t))_i$ denote the continuous-time Markov chain for the number of nodes of each type $i$ that are \underline{q}ueued in the system at time $t$, when running \Cref{alg:corrprop}. As each offline node has a positive arrival and departure rate, it is immediate that this Markov chain is irreducible and positive recurrent, and hence has a unique stationary distribution $\pi$. The arrivals of online types are determined by Poisson processes independent of the state $Q(t)$; via the classic ``PASTA'' property it suffices to understand the stationary distribution $\pi$ to compute the expected long-term average gain from matching $j$ (refer to \Cref{app:prelims} for a brief refresher on continuous-time Markov chains and the PASTA property).  

\begin{lemma}[PASTA property, see \Cref{lem:pasta}] \label{lem:pastasection2}
    For $Q \in \mathbb{Z}_{\ge 0}^I$, let $\textup{\textsf{ALG}}(j, Q)$ denote the expected instantaneuous gain of an online algorithm $\textup{\textsf{ALG}}$ if online type $j$ just arrived and there are $Q_i$ nodes of type $i$ available for every $i \in I$. Then, the average reward gain by \textup{\textsf{ALG}} from matching arrivals of type $j$ equals $\gamma_j \cdot  \mathbb{E}_{Q \sim \pi} [\textup{\textsf{ALG}}(j, Q)].$
\end{lemma}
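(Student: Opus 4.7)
The plan is to derive this as a direct instance of the classical PASTA property applied to the continuous-time Markov chain $Q(t)$ under Algorithm~\ref{alg:corrprop}. First I would observe that $Q(t)$ is irreducible and positive recurrent: each coordinate $Q_i(t)$ has positive arrival rate $\lambda_i$ and, conditional on being positive, its nodes depart at total rate at least $\mu_i Q_i(t)$, so the chain is uniformly bounded in law and reaches the all-zero state from any initial configuration with positive probability in bounded time. This ensures existence of the unique stationary distribution $\pi$ on $\mathbb{Z}_{\ge 0}^I$ and that $(1/t)\int_0^t f(Q(s))\,ds \to \mathbb{E}_{Q\sim\pi}[f(Q)]$ almost surely for bounded $f$.

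Next I would invoke PASTA. For each online type $j$, the arrival process $N_j(\cdot)$ is a Poisson process of rate $\gamma_j$, independent of all other arrival processes (offline arrivals, other online types), of the exponential departure clocks, and of the algorithm's internal randomness. Crucially, $N_j$ satisfies the Lack of Anticipation assumption with respect to $Q$: the state $Q(t)$ depends only on arrival and departure events in $[0,t]$ and on algorithmic decisions made by time $t$, all of which are measurable with respect to the past of the driving processes, while increments of $N_j$ after time $t$ are independent of this past. Under these conditions, the standard PASTA theorem (see, e.g., Wolff) asserts that for any bounded measurable $g: \mathbb{Z}_{\ge 0}^I \to \mathbb{R}$,
\[
\lim_{t \to \infty} \frac{1}{N_j(t)} \sum_{\tau \le t:\ \tau \in N_j} g\bigl(Q(\tau^-)\bigr) \;=\; \mathbb{E}_{Q \sim \pi}\bigl[g(Q)\bigr]
\]
almost surely, i.e., the empirical state distribution at type-$j$ arrival epochs coincides with the time-average $\pi$.

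Taking $g(Q) := \textsf{ALG}(j, Q)$, which is bounded by $\max_{i} r_{i,j}$, and combining with $N_j(t)/t \to \gamma_j$ from the strong law for Poisson processes, the average reward collected from type-$j$ arrivals satisfies
\[
\lim_{t \to \infty} \frac{1}{t} \sum_{\tau \le t:\ \tau \in N_j} \textsf{ALG}\bigl(j, Q(\tau^-)\bigr) \;=\; \gamma_j \cdot \mathbb{E}_{Q \sim \pi}\bigl[\textsf{ALG}(j, Q)\bigr],
\]
which is the claim. The ``main obstacle'' is really just bookkeeping rather than a genuine difficulty: one has to verify the Lack of Anticipation condition carefully, since $Q(t)$ is a complicated functional of past arrivals and algorithmic randomness, and one has to check that the reward-per-arrival is integrable so that PASTA and the strong law can be combined. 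Both of these follow from standard non-anticipation of Algorithm~\ref{alg:corrprop} and from the uniform boundedness of rewards collected per arrival. The detailed verification, following Wolff's framework, is deferred to Appendix~\ref{app:prelims}.
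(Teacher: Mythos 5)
Your proof is correct and takes essentially the same route as the paper: establish irreducibility and positive recurrence of $Q(t)$, then invoke Wolff's PASTA theorem together with the strong law for the Poisson counting process $N_j$. One thing you do that the paper glosses over, and which is in fact the right way to state the hypothesis: the arrival process $N_j$ is \emph{not} independent of the state process $Q(t)$ (past type-$j$ arrivals cause matches that decrement queues), so the condition that actually licenses PASTA here is Lack of Anticipation — future increments of $N_j$ are independent of $\sigma(Q(s):s\le t)$ — exactly as you verify, rather than the blanket ``independent stream'' phrasing used in the appendix.
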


Unfortunately, the stationary distribution $\pi$ induced by \Cref{alg:corrprop} may be extremely complex. A technique introduced by \cite{aouad2022dynamic} is hence to prove stochastic dominance between $\pi$ and the stationary distribution of a simpler Markov chain, where the number of available nodes for each type evolves according to some Markov chain, independent of other types.

\begin{definition}\label{def:imc}
    In the \emph{independent Markov chains}, denoted by $Q^{\textup{\textsf{ind}}}(t) := (Q^{\textup{\textsf{ind}}}_i(t))_{i \in I}$, each $Q^{\textup{\textsf{ind}}}_i(t)$ evolves as an independent birth-death process. When $Q^{\textup{\textsf{ind}}}_i(t)$ is in state $k \in \mathbb{Z}_{\ge 0}$, it increases by 1 at rate $\lambda_i$ and decreases by 1 at rate $k \cdot  \left( \mu_i + \sum_{j \in J} \gamma_j \cdot p_{i,j} \right)$. We denote the stationary distribution of $Q^{\textup{\textsf{ind}}}$ by $\pi^{\textup{\textsf{ind}}}$.
\end{definition}

% \Alirezacomment{It may be worth highlighting that the queue above is not what \cite{aouad2020dynamic} used. For them, the downward pressure was $k\mu_i + \sum_j \gamma_j p_{i,j}$, which is more difficult to analyze.}

To see why $Q^{\textsf{ind}}(t)$ is dominated by the true evolution of $Q(t)$, consider a time $t$ where $Q_i(t) = k$ (i.e., there are $k$ nodes of type $i$ waiting who have not yet been matched or departed). The ``birth rate''\footnote{By birth rate, we mean the rate at which $Q_i$ transitions from state $k$ to state $k+1$. By death rate, we mean the transition rate from $k$ to $k-1$.} with which a new node of type $i$ joins is precisely the arrival rate $\lambda_i$, as in the independent Markov chains. Each of the waiting nodes of type $i$ will depart after time $\text{Exp}(\mu_i)$, hence the ``death rate'' is at least $k \cdot \mu_i$ just due to departures. The other contribution to the death rate is the rate at which $i$ is matched. If an online type $j$ arrives, each copy of type $i$ proposes (in a correlated fashion) with probability $p_{i,j}$; only one proposer will be matched according to \Cref{alg:corrprop}. To compute the exact match rate, we hence need to know the number of available nodes of all other types, i.e., $(Q_{i'}(t))_{i' \neq i}$. However, it is easy to \emph{upper bound} that no matter the state of these other types, each copy of type $i$ is matched to $j$ at rate at most $\gamma_j \cdot p_{i,j}$. For this reason, we have that $Q(t)$ stochastically dominates $Q^{\textup{\textsf{ind}}}(t)$, and analyzing the reward of \Cref{alg:corrprop} under the independent Markov chains is a lower bound on its actual performance. This is formalized in the following claim. 

\begin{restatable}{claim}{claimstochasticdominance}\label{claim:stochasticdominance}
    The evolution of $Q(t)$ stochastically dominates that of the independent Markov chains $Q^{\textup{\textsf{ind}}}(t)$. In particular, letting $\textup{\textsf{ALG}}(j, Q)$ be the expected instantaneous gain of \Cref{alg:corrprop} when online type $j$ just arrived and there are $Q_i$ nodes of type $i$ available for every $i \in I$, then $$\gamma_j \cdot  \mathbb{E}_{Q \sim \pi} [\textup{\textsf{ALG}}(j, Q)] \ge \gamma_j \cdot\mathbb{E}_{Q \sim \pi^{\textup{\textsf{ind}}}} [\textup{\textsf{ALG}}(j, Q)]. $$
\end{restatable}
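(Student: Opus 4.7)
The plan is to establish coordinate-wise stochastic dominance of the chain $Q(t)$ over $Q^{\textsf{ind}}(t)$ by an infinitesimal-generator coupling argument, and then combine the stationary dominance with a monotonicity property of $\textsf{ALG}(j,\cdot)$ to conclude the expected-reward inequality.

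First, I would compare the two generators coordinate by coordinate. In coordinate $i$, both chains share the same birth rate $\lambda_i$. The death rate of $Q^{\textsf{ind}}_i$ at state $k$ is, by definition, exactly $k(\mu_i + \sum_j \gamma_j p_{i,j})$, while the death rate of $Q_i$ at state $k$ splits into $k\mu_i$ from spontaneous departures plus a matching contribution. The key computation is to upper bound the matching contribution by $k \sum_j \gamma_j p_{i,j}$: when an online $j$ arrives at rate $\gamma_j$, the marginal property (P1) of pivotal sampling tells us each of the $k$ available copies of type $i$ proposes with probability exactly $p_{i,j}$, and since at most one proposer can win the match, a union bound gives $\Pr[\text{some copy of } i \text{ is matched to } j \mid j \text{ arrives}] \le k p_{i,j}$. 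Crucially, this bound does not depend on the counts of any other coordinates, which sidesteps the between-coordinate correlations of the true process.

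Next, I would construct a joint coupling preserving the invariant $Q(t) \ge Q^{\textsf{ind}}(t)$. Both chains start in the same state, share common Poisson arrival streams, and whenever $Q^{\textsf{ind}}_i$'s death rate at the current joint state exceeds the death rate of $Q_i$, we augment $Q^{\textsf{ind}}$ with independent ``phantom'' death events at the rate deficit. The computation in the previous step ensures the deficit is always nonnegative, so the coupling is feasible and preserves the coordinate-wise partial order. Both chains are irreducible and positive recurrent (bounded birth rates, strictly positive per-node death rates), hence letting $t \to \infty$ yields $\pi$ stochastically dominating $\pi^{\textsf{ind}}$ in the coordinate-wise order.

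Finally, I would verify that $\textsf{ALG}(j,Q)$ is monotone non-decreasing in each coordinate of $Q$. Using the prefix property (P2), the expected instantaneous gain can be written as $\sum_c r_c \cdot [\min(1,S_c) - \min(1,S_{c-1})]$, where $c$ ranges over the available copies sorted by decreasing $r_{i,j}$ and $S_c$ is the prefix sum of their proposal probabilities; a short algebraic check shows this quantity weakly increases upon inserting any additional entry. Combined with the coupling above, this yields $\textsf{ALG}(j,Q) \ge \textsf{ALG}(j,Q^{\textsf{ind}})$ pointwise, and hence the desired inequality after taking expectations and multiplying by $\gamma_j$. The main obstacle is constructing a genuinely joint coupling rather than a coordinate-marginal one, since in the true process the across-type death events are mutually exclusive at each online arrival (and therefore negatively correlated); the union-bound decoupling in the first step is precisely what makes the phantom-death construction work coordinate by coordinate, while the monotonicity check is a secondary technical step needed to convert process-level dominance into reward-level dominance.
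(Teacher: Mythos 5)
Your proposal follows essentially the same path as the paper's proof: bound each copy's match contribution by $p_{i,j}$ using the marginals property (P1) together with the fact that at most one proposer wins, deduce that $Q^{\textsf{ind}}$'s death rates dominate $Q$'s coordinate by coordinate, establish process-level stochastic dominance, and transfer to rewards via monotonicity of $\textsf{ALG}(j,\cdot)$ derived from the prefix property (P2). The one step where you diverge is how the dominance is certified. The paper invokes a known sufficient condition for stochastic ordering of CTMCs (\Cref{stochasticdominanceviamonotone}, a Kamae--Krengel--O'Brien-type criterion comparing intensity kernels on upwards-closed sets) and verifies it algebraically; you instead try to build the coupling by hand via ``phantom deaths.'' The black-box criterion is the safer route here, because your hand-built coupling as described does not quite yield the correct $Q^{\textsf{ind}}$-marginal once the chains separate: if $Q_i > Q^{\textsf{ind}}_i$, the true death rate of $Q_i$ can exceed $Q^{\textsf{ind}}_i\bigl(\mu_i + \sum_j \gamma_j p_{i,j}\bigr)$, so letting $Q^{\textsf{ind}}_i$ inherit all of $Q_i$'s deaths and adding phantoms ``at the rate deficit'' overshoots the required rate for $Q^{\textsf{ind}}_i$. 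This is routinely fixed (share death events only when $Q_i = Q^{\textsf{ind}}_i$; otherwise let the two coordinates die independently), and that is precisely the bookkeeping the kernel criterion saves you from. Your monotonicity check via the telescoping sum $\sum_c r_c\bigl[\min(1,S_c)-\min(1,S_{c-1})\bigr]$ is correct and is the same computation the paper performs by writing the expected reward as $\int_0^\infty \min\bigl(1,\sum_{i:r_{i,j}\ge w} x_i p_{i,j}\bigr)\,dw$ and noting the integrand is coordinate-wise monotone.
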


For a formal treatment of stochastic dominance and a full proof of this claim, we refer the reader to \Cref{appendix:proofofclaimstochasticdominance}. The advantage of studying the evolution of $Q^{\textup{\textsf{ind}}}(t)$ via independent Markov chains is that we can solve for the stationary distribution $\pi^{\textup{\textsf{ind}}}$ exactly, as each $Q^{\textup{\textsf{ind}}}_i(t)$ evolves as a classic Markov chain known as a birth-death process (see \Cref{app:prelims} for a reminder). We note that while \cite{aouad2022dynamic} used stochastic dominance with an independent Markov chains, it took a different, harder-to-analyze form than ours.\footnote{In \cite{aouad2022dynamic}, an offline node's neighbors are drawn once upon its arrival, whereas we redraw the proposals upon each new online node's arrival. In our algorithm, the propensity of matching each offline type increases with queue lengths, which is a fundamental difference in the algorithm design.}

\begin{cor} \label{claim:stationarydistiMcPoisson}
    The stationary distribution $\pi^{\textup{\textsf{ind}}}$ of the Markov chain $Q^{\textup{\textsf{ind}}}(t)$ has the number of available nodes of type $i$ distributed independently as $\textup{Pois}(x_{i,a} / \mu_i)$.
\end{cor}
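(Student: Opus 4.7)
The plan is to observe that each coordinate $Q_i^{\textsf{ind}}(t)$ is, by definition, an M/M/$\infty$-style birth-death process with a constant arrival rate and per-unit service rate, and then to identify the effective service rate and compute the resulting Poisson mean.

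First I would note that independence across $i \in I$ is immediate from \Cref{def:imc}: the coordinates evolve as independent Markov chains, so their joint stationary distribution is a product measure, and it suffices to analyze each $Q_i^{\textsf{ind}}$ separately. Fix $i \in I$ and define the effective total per-unit departure rate
\[
\mu_i' \;:=\; \mu_i + \sum_{j \in J} \gamma_j \cdot p_{i,j}.
\]
Then $Q_i^{\textsf{ind}}$ is a birth-death chain with birth rate $\lambda_i$ from every state and death rate $k \cdot \mu_i'$ from state $k$. This is a classic M/M/$\infty$ queue, whose stationary distribution is known (and can be verified by detailed balance $\pi(k) \cdot \lambda_i = \pi(k+1) \cdot (k+1)\mu_i'$, giving $\pi(k) \propto (\lambda_i/\mu_i')^k / k!$) to be $\textup{Pois}(\lambda_i/\mu_i')$.

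It then remains to verify $\lambda_i/\mu_i' = x_{i,a}/\mu_i$. Using the definition $p_{i,j} = (x_{i,j}/\gamma_j)/(x_{i,a}/\mu_i)$ from \Cref{line:setPij} of \Cref{alg:corrprop}, one has $\gamma_j \cdot p_{i,j} = \mu_i \cdot x_{i,j} / x_{i,a}$, so
\[
\mu_i' \;=\; \mu_i + \frac{\mu_i}{x_{i,a}} \sum_{j \in J} x_{i,j} \;=\; \frac{\mu_i}{x_{i,a}} \Big( x_{i,a} + \sum_{j \in J} x_{i,j} \Big) \;=\; \frac{\mu_i \lambda_i}{x_{i,a}},
\]
where the last equality uses the flow constraint \eqref{eqn:tightOfflineFlow}. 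Hence $\lambda_i/\mu_i' = x_{i,a}/\mu_i$, as claimed.

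There is no real obstacle here — the statement is essentially a direct consequence of three facts: independence of the coordinates (by construction of $Q^{\textsf{ind}}$), the well-known Poisson stationary law of the M/M/$\infty$ queue, and a one-line algebraic simplification using the definition of $p_{i,j}$ together with constraint \eqref{eqn:tightOfflineFlow}. The only point worth emphasizing in the write-up is that the death rate in \Cref{def:imc} is linear in the state $k$, which is precisely what makes the chain an M/M/$\infty$ queue rather than a more general birth-death process, and hence what makes the stationary distribution Poisson (rather than only geometric or something more complicated).
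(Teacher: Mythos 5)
Your proof is correct and takes essentially the same route as the paper: identify each coordinate as an independent birth-death process with linear death rate, compute the effective per-unit rate $\mu_i + \sum_j \gamma_j p_{i,j} = \mu_i \lambda_i / x_{i,a}$ from the definition of $p_{i,j}$ and Constraint~\eqref{eqn:tightOfflineFlow}, and invoke the standard M/M/$\infty$ stationary Poisson law (the paper cites \Cref{claim:stationarydistbirthdeath} for this last step).
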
 

\begin{proof}
Using that $p_{i,j} := \frac{x_{i,j}/\gamma_j}{x_{i,a}/\mu_i}$ and Constraint~\eqref{eqn:tightOfflineFlow} of \eqref{TLPon}, we have $$\mu_i + \sum_j \gamma_j \cdot p_{i,j} = \mu_i + \sum_j \frac{x_{i,j}}{x_{i,a}/\mu_i} \overset{\eqref{eqn:tightOfflineFlow}}{=} \mu_i + \frac{\lambda_i - x_{i,a}}{x_{i,a}/\mu_i} = \mu_i \cdot \frac{\lambda_i}{x_{i,a}} \ . $$ The claim follows (refer to \Cref{claim:stationarydistbirthdeath} for a refresher on the calculation). 
\end{proof}

For every online type $j$ and $w > 0$, define $$R_{j}(w) := \sum_{i : r_{i,j} \ge w} p_{i,j} \cdot \text{Pois}(x_{i,a}/\mu_i) \,$$ where $\{\text{Pois}(x_{i,a}/\mu_i)\}_i$ denote independent Poisson random variables with rates $\{x_{i,a}/\mu_i\}_i$. When available offline types are distributed according to $\pi^{\textup{\textsf{ind}}}$, the probability that \Cref{alg:corrprop} generates reward at least $w$ from matching an arrival of type $j$ is directly related to the distribution of $R_j(w)$, as formalized in the following claim. 

\begin{claim}\label{claim:gainequalexpectedmin}
   For every $j \in J$, we have $ \prtwo{Q \sim \pi^{\textup{\textsf{ind}}}}{\textup{\textsf{ALG}}(j, Q )  \ge w] = \mathbb{E}[\min(1, R_{j}(w))}  .$
\end{claim}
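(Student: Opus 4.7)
The plan is to combine the prefix property of pivotal sampling with the explicit form of $\pi^{\textup{\textsf{ind}}}$ given by \Cref{claim:stationarydistiMcPoisson}. The key observation is that when Algorithm~\ref{alg:corrprop} processes online type $j$ in state $Q$, the sorted vector $\vec{p}$ fed to pivotal sampling contains, for each offline type $i$, exactly $Q_i$ copies of the value $p_{i,j}$, and the entries are sorted by decreasing $r_{i,j}$. Since the algorithm matches $j$ to the proposer of maximum reward, the event $\{\textup{\textsf{ALG}}(j,Q) \geq w\}$ occurs if and only if pivotal sampling selects at least one entry from the prefix consisting of all copies with $r_{i,j} \geq w$.

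First I would invoke property~\ref{level-set:prefix} of \Cref{def:pivotalsampling} on this prefix. Its marginals sum to $\sum_{i : r_{i,j} \geq w} p_{i,j} \cdot Q_i$, so conditional on $Q$,
\[
\Pr[\textup{\textsf{ALG}}(j,Q) \geq w \mid Q] \;=\; \min\!\left(1,\ \sum_{i : r_{i,j} \geq w} p_{i,j} \cdot Q_i\right).
\]
Next, I would take expectation over $Q \sim \pi^{\textup{\textsf{ind}}}$. By \Cref{claim:stationarydistiMcPoisson}, the components $Q_i$ are independent and each $Q_i \sim \textup{Pois}(x_{i,a}/\mu_i)$, so the random variable $\sum_{i : r_{i,j} \geq w} p_{i,j} \cdot Q_i$ has exactly the same distribution as $R_j(w)$ by its definition. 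Therefore
\[
\Pr_{Q \sim \pi^{\textup{\textsf{ind}}}}[\textup{\textsf{ALG}}(j,Q) \geq w] \;=\; \mathbb{E}\!\left[\min\!\left(1,\ \sum_{i : r_{i,j} \geq w} p_{i,j} \cdot Q_i\right)\right] \;=\; \mathbb{E}[\min(1, R_j(w))],
\]
which is the claim.

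The proof is essentially a direct application of the prefix property together with the Poisson-product form of $\pi^{\textup{\textsf{ind}}}$; there is no substantial obstacle. The only subtlety to be explicit about is the remark already noted after the algorithm that multiple available nodes of the same type appear as distinct entries sharing the same $p_{i,j}$, so that the sum inside $\min(1,\cdot)$ correctly carries the factor $Q_i$ rather than an indicator of availability. One should also note that the items with $r_{i,j} \geq w$ genuinely form a prefix under any tie-breaking of the sort in Line~4, which is all that is needed to apply~\ref{level-set:prefix}.
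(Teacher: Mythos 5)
Your proof is correct and follows essentially the same route as the paper: identify the entries with $r_{i,j}\ge w$ as a prefix of the sorted vector $\vec{p}$, apply the prefix property~\ref{level-set:prefix} to get $\min(1,\sum_{i:r_{i,j}\ge w}Q_i\,p_{i,j})$ conditional on $Q$, and then integrate against $\pi^{\textup{\textsf{ind}}}$ using \Cref{claim:stationarydistiMcPoisson}.
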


\begin{proof}
    Under a certain realization of $Q$, let $k$ denote the number of available nodes $i$ with $r_{i,j} \ge w$, i.e., $k := \sum_{i : r_{i,j} \ge w} Q_i$.  For this realization, note that the first $k$ entries of $\vec{p}$ correspond to exactly those nodes with reward at least $w$ because the nodes are sorted according to non-increasing $r_{i,j}$ values in \Cref{lin:inside-loop-start}. Then, applying Property \ref{level-set:prefix} of pivotal sampling with this value of $k$, the probability that an arrival $j$ is matched to a node with reward at least $w$ is equal to $\min \left(1, \sum_{i: r_{i,j} \ge w} Q_i \cdot p_{i,j} \right).$
    Using \Cref{claim:stationarydistiMcPoisson}, the claim follows when sampling $Q$ from the stationary distribution of the independent Markov chains. 
\end{proof}

To bound expectations of the form $\mathbb{E}[\min(1, R_{j}(w))]$, we recall some basic notions about the convex order on random variables.

\begin{definition}
For random variables $X$ and $Y$, we say that $X$ is dominated by $Y$ in the convex order, and write $X \cvxle Y$, if for any convex $f$  we have $\E[f(X)] \le \E[f(Y)].$ 
\end{definition}

We will use the following two facts (refer to \Cref{app:proofofweightedPoisCvx} for a proof of the latter).
\begin{fact}[c.f., \cite{shaked2007stochastic}]
If $X_i \cvxle Y_i$ for $i \in [n]$, and $(X_i)_{i=1}^n, (Y_i)_{i=1}^n$ are jointly independent, then $\sum_{i=1}^n X_i \cvxle \sum_{i=1}^n Y_i.$
\end{fact}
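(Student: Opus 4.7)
The plan is to prove this classical fact via a standard telescoping argument, swapping one summand at a time from $X_i$ to $Y_i$. Fix an arbitrary convex function $f$; the goal is to establish $\ex{f\bigl(\sum_{i=1}^n X_i\bigr)} \le \ex{f\bigl(\sum_{i=1}^n Y_i\bigr)}$.

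First, I would introduce the hybrid sums $Z_k := Y_1 + \cdots + Y_k + X_{k+1} + \cdots + X_n$ for $k = 0, 1, \ldots, n$, so that $Z_0 = \sum_i X_i$ and $Z_n = \sum_i Y_i$. It then suffices to show $\ex{f(Z_{k-1})} \le \ex{f(Z_k)}$ for every $k \in [n]$, since chaining these inequalities telescopes to the desired bound.

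For the $k$-th step, let $W_k := Y_1 + \cdots + Y_{k-1} + X_{k+1} + \cdots + X_n$, so that $Z_{k-1} = W_k + X_k$ and $Z_k = W_k + Y_k$. By the joint independence hypothesis, $W_k$ is independent of both $X_k$ and $Y_k$. Conditioning on $W_k = w$ and defining the shifted function $g_w(t) := f(w + t)$---which is again convex because $f$ is convex---the hypothesis $X_k \cvxle Y_k$ applied to the convex test function $g_w$ yields $\ex{f(w + X_k)} = \ex{g_w(X_k)} \le \ex{g_w(Y_k)} = \ex{f(w + Y_k)}$. Integrating against the distribution of $W_k$ (using independence to drop the conditioning on both sides) gives $\ex{f(Z_{k-1})} \le \ex{f(Z_k)}$, which closes the inductive step.

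I do not anticipate any real obstacle: the only subtlety is that the \emph{joint} independence of all $2n$ variables (as opposed to pairwise) is what guarantees $W_k \perp X_k$ and $W_k \perp Y_k$ simultaneously at each step, and integrability of the relevant quantities is implicit in the convex-order hypothesis.
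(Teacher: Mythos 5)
The paper does not prove this fact; it simply cites Shaked and Shanthikumar's book. Your telescoping argument---replacing one summand at a time and applying the hypothesis $X_k \cvxle Y_k$ to the shifted convex test function $g_w(t) = f(w+t)$, using independence to integrate out $W_k$---is the standard textbook proof of closure of the convex order under independent sums, and it is correct.
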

\begin{restatable}{fact}{weightedPoisCvx} \label{lem:weightedPoisCvx}
    For any $0 \le a \le b$, we have $a \cdot \textup{Pois}(b) \cvxle b \cdot \textup{Pois}(a).$
\end{restatable}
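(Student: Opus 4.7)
The plan is to reduce the claim to a Poisson-thinning coupling together with conditional Jensen's inequality. First I would dispose of the trivial case $a = 0$, where both sides equal $0$ almost surely. Otherwise, set $p := a/b \in (0,1]$ and divide both sides of the claimed inequality by $b$; since the convex order is invariant under scaling by a common positive constant, it suffices to establish $p \cdot \textup{Pois}(b) \cvxle \textup{Pois}(pb)$.

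Next, I would invoke the classical Poisson-thinning identity: if $N \sim \textup{Pois}(b)$ and $B_1, B_2, \ldots$ are i.i.d.\ $\Ber(p)$ independent of $N$, then $S := \sum_{i=1}^N B_i$ is distributed as $\textup{Pois}(pb)$. This realizes both sides of the inequality on a common probability space. The decisive observation is that $\E[S \mid N] = p N$, so $S$ is a mean-preserving randomization of $pN$.

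For any convex $f$, conditional Jensen's inequality gives $\E[f(S) \mid N] \ge f(\E[S \mid N]) = f(pN)$, and taking outer expectations yields $\E[f(S)] \ge \E[f(pN)]$. Since this holds for every convex $f$, we conclude $pN \cvxle S$, i.e., $p \cdot \textup{Pois}(b) \cvxle \textup{Pois}(pb)$, which after undoing the rescaling is exactly the desired inequality.

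I do not anticipate a serious obstacle: this is essentially a direct instance of the martingale-coupling characterization of the convex order (Strassen's theorem), with the required mean-preserving coupling handed to us for free by Poisson thinning.
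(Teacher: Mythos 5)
Your proof is correct, and it takes a genuinely different route from the paper's. The paper first reduces to rational $a = a'/n$, $b = b'/n$, rewrites the two sides as sample means of $b'$ (resp.\ $a'$) i.i.d.\ copies of $\textup{Pois}(1/n)$, and then cites the standard fact (Shaked--Shanthikumar, 3.A.29) that a sample mean of fewer i.i.d.\ terms dominates a sample mean of more terms in the convex order; it then appeals to density of the rationals to get arbitrary reals. Your argument is both more direct and more self-contained: Poisson thinning puts $p\,\textup{Pois}(b)$ and $\textup{Pois}(pb)$ on a common probability space as $(pN, S)$ with $\E[S\mid N] = pN$, and conditional Jensen gives the convex-order comparison immediately for all real $0\le a\le b$ with no rationality or limiting step, and no external citation beyond the martingale (mean-preserving spread) characterization of $\cvxle$. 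What the paper's route buys is brevity via a black-box reference; what yours buys is a fully explicit coupling, an argument that works uniformly over all reals, and a clearer probabilistic reason the inequality holds. Both are valid.
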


\paragraph{The $\mathbf{(1-1/e)}$ bound.} With these tools in place, a lower bound of $(1-1/e)$ on the approximation ratio is direct. Indeed, note that by \Cref{lem:weightedPoisCvx}, we have that 
\begin{equation}
R_{j}(w)  
\cvxle \sum_{i : r_{i,j} \ge w} 1 \cdot \text{Pois}(p_{i,j} \cdot x_{i,a}/\mu_i) \notag 
= \sum_{i : r_{i,j} \ge w} \text{Pois} \left( \frac{x_{i,j} }{ \gamma_j } \right) \notag 
= \text{Pois} \left(\sum_{i : r_{i,j} \ge w} \frac{x_{i,j} }{ \gamma_j } \right).  \label{eq:matchpois}
\end{equation}
As $\min(1,x)$ is concave, we infer that $\mathbb{E}[\min(1, R_{j}(w))]$ is lower bounded by  
\begin{align}
 \mathbb{E} \left[ \min \left( 1 , \text{Pois} \left(\sum_{i : r_{i,j} \ge w} \frac{x_{i,j} }{ \gamma_j } \right) \right) \right] 
= 1 - \exp \left( - \sum_{i : r_{i,j} \ge w} \frac{x_{i,j}}{\gamma_j } \right)  
\ge (1-1/e) \cdot  \sum_{i : r_{i,j} \ge w} \frac{x_{i,j}}{ \gamma_j }.  \label{Emin1lower}
\end{align}

We can calculate the stationary reward from matching to $j$ by integrating the complementary CDF, as follows:
\begin{align*}
    \gamma_j \cdot  \mathbb{E}_{Q \sim \pi^{\textsf{ind}}} [\textup{\textsf{ALG}}(j, Q)] &\ge \gamma_j \cdot \int_0^\infty \E[\min(1,R(w))] \, dw && \hspace{-7em} \text{\Cref{claim:stochasticdominance}, \Cref{claim:gainequalexpectedmin}}\\
    &\ge \gamma_j \cdot \int_0^\infty (1 - 1/e) \cdot \sum_{i: r_{i,j} \ge w} x_{i,j} / \gamma_j \, dw && \text{\eqref{Emin1lower}}\\
    &\ge (1-1/e) \cdot \int_0^\infty \sum_i x_{i,j} \cdot \mathbbm{1}[r_{i,j} \ge w]  \, dw   
    = (1-1/e) \cdot \sum_i x_{i,j} \cdot r_{i,j}.
\end{align*}
This demonstrates \Cref{alg:corrprop} achieves a $(1-1/e)$-approximation to $\textup{OPT}\eqref{TLPon}$. \\

\noindent \emph{Remark.} It is not hard to see that the same argument works for an algorithm which makes proposals independently, rather than via pivotal sampling. For such an algorithm, the probability that no copy of $i$ proposes to $j$ in the stationary distribution is given by $\mathbb{E}[(1-p_{i,j})^{\text{Pois}(x_{i,a}/\mu_i)}] = \exp \left(- x_{i,a} / \mu_i \cdot p_{i,j} \right) = \exp \left( x_{i,j} / \gamma_j \right). $ Thus $\Pr_{Q \sim \pi^{\textsf{ind}}}[\textsf{ALG}(j,Q)] \ge w] = 1 - \prod_{i : r_{i,j} \ge w} \exp(-x_{i,j} / \gamma_j)$ which results in the same bound. In \Cref{sec:sharpened}, to beat $(1-1/e)$ we will crucially use the correlation of pivotal sampling to get a better lower bound on $\mathbb{E}[\min(1, R_j(w))]$, which does \emph{not} hold for independent proposals.

\section{Improved Approximation Ratio}\label{sec:approximation}

In this section we prove our main result, restated below. 

\mainthm*

%This directly results in improvements for other graph structures, as in \Cref{app:othergraphstructures}. 
Our first step is noting that for many inputs, the analysis of \Cref{sec:oneminusoneovere} may already be loose, in which case \Cref{alg:corrprop}'s approximation ratio is strictly better than $1-1/e$. We concretely identify three such cases. When none of these cases apply, it is possible that the analysis in the previous section cannot go beyond $1-1/e$; the main technical hurdle is the independence of offline types in the independent Markov chains. We also observe that, on certain instances, \Cref{alg:corrprop} may make many matches on edges with low reward, at the expense of high-reward edges, leading to an approximation ratio no better than $1-1/e$.

We overcome these obstacles by noting that inputs which do not satisfy any of the three cases are extremely structured. In particular, they (roughly) are online-vertex-weighted and ``highly-connected''---proposals are with probability $\approx 1$ and each online node j has a neighbor present with high probability. For such inputs, we propose a new algorithm (\Cref{alg:second}). As a pre-processing step, this algorithm removes ``unimportant'' edges and nodes. On the resulting subgraph, we run a greedy algorithm that breaks ties in a balanced way rather than using rewards. In particular we split each offline node into a top and bottom type with equal probability; our {\em Balanced Greedy} algorithm prioritizes the nodes in the top section. The core technical contribution is using the structure of vertex-weighted highly-connected inputs to analyze Balanced Greedy, while accounting for correlations among offline types using a new notion of {\em weakly correlated Markov chains}.

In \Cref{sec:sharpened} we present a sharpened analysis of \Cref{alg:corrprop} for the easy cases. In \Cref{ssec:correlated_analysis}, we formally define ``vertex-weighted highly-connected'' instances and present \Cref{alg:second} along with its analysis. Throughout, we will use the following notation.

% These bounds are also useful for analyzing the \emph{competitive ratio} against the optimum offline benchmark, as we show in \Cref{sec:competitiveRatio}.

\paragraph{Notation.} For convenience we define the shorthand $\textsf{ALG}_j := \gamma_j \cdot \mathbb{E}_{Q \sim \pi^{\textup{\textsf{ind}}}} [\textsf{ALG}(j,Q)]$ for the expected stationary reward of \Cref{alg:corrprop} from matches to $j$ in the stationary distribution of the independent Markov chains. Let $(x_{i,j})_{i,j}, (x_{i,a})_i$ be an optimal solution of \eqref{TLPon}. For any $S \subseteq J$ we let $\textsf{LP-Gain}_S := \sum_{j \in S} \sum_{i \in I} r_{i,j} x_{i,j}$ denote the total gain from matching to online types in $S$ in the optimal LP solution. For convenience of notation, we will let $\textsf{LP-Gain}_j := \textsf{LP-Gain}_{\{j\}}$. For any offline type $i$, we let $N_i := \{ j : x_{i,j} > 0 \}$ denote the neighborhood of $i$ (and we define $N_j$ analogously). For every offline type $i$, we let $\Gamma_i := \sum_{k \in N_i} \gamma_k$.

\subsection{Three (Relatively) Easy Cases} \label{sec:sharpened}

Below, we fix an impatient type $j$ and identify three different cases, each of which implies that \Cref{alg:corrprop} already achieves a boosted $(1-1/e + b(\eps))$ approximation on $j$ for a continuous, increasing function $b(\cdot)$ with $b(0) = 0$. Recall from \Cref{sec:oneminusoneovere} that $\textsf{ALG}_j$ can be written as
\begin{align}
    \textsf{ALG}_j &= \int_0^\infty \ex{\min(1,R_j(w))} \ge \int_0^\infty \left(1-\exp\left(-\sum_{i: r_{i,j} \geq w} {x_{i,j}}/{\gamma_j}\right)\right) \, dw  \ .\label{ineq:wintegrationsec3}
\end{align} 

Previously, our $(1-1/e)$-approximation followed from the \emph{concavity step} lower bounding $1-\exp(-z) \ge (1-1/e)z$ for $z \in [0,1]$. The first two cases identified below simply quantify when we can improve on this simple bound. The third case requires more care, as we instead give a tighter bound for $\mathbb{E}[\min(1, R_j(w))]$ than in \Cref{sec:oneminusoneovere}; our improvement here crucially relies on the correlation present in pivotal sampling and does \emph{not} hold for independent proposals.  

For each of the three cases, we label it with an informal title followed by the corresponding formal observation. Full proofs can be found in \Cref{app:threecases}.

% When these cases do not apply, if a case does not hold, we remove its unimportant edges, incurring a small loss of LP's gain. This edge exclusion will help subsequent cases.\footnote{Note that this edge exclusion merely simplifies the exposition and it is not required for our algorithm.} In the following, let $0 < \eps < 0.1$ be a fixed constant (later, $\epsilon$ will be chosen sufficiently small relative to the constant factor gain in the approximation ratio that we establish). % Note that while we should use different values of $\eps$ at each case to obtain a $(1-1/e + \Omega(1))$ approximation, we abuse the notation by reusing $\eps$ to keep the notation convenient. Finally, we show that if a significant number of impatient types satisfy our case analysis, our algorithm obtains an overall $(1-1/e + \Omega(1))$ approximation. Otherwise, we must go beyond independent queues, as discussed in \Cref{ssec:correlated_analysis}.

% Throughout the rest of this section, we use the algebraic fact that if $z \leq 1-\eps$, we have
% \[
%     1-\exp(-z) = 1-\exp\left(-(1-\eps) \cdot \frac{z}{1-\eps}\right) \geq \frac{1-e^{-(1-\eps)}}{1-\eps} \cdot z \ ,
% \]
% where the inequality follows from the convexity of $z \to 1-\exp(-bz)$. Note that the boost function $b(\eps) := \frac{1-\exp(-(1-\eps))}{1-\eps}$ is strictly increasing and thus, it is bounded away from $1-1/e$ for any fixed $\eps > 0$. 

\paragraph{Case 1: {\bf\eqref{TLPon}} does not saturate $\boldsymbol{j}$.\label{case1}} This case is the easiest---if the LP mass incident to $j$ is less than $\gamma_j$, the concavity step is loose. Formally, we have the following observation.

\begin{restatable}{obs}{obscaseone} \label{obscase1}
For $\eps \in (0,0.1)$, if $\sum_{i} x_{i,j} \le (1-\epsilon ) \cdot \gamma_j$, then $\textup{\textsf{ALG}}_j \ge (1 - 1/e + b (\epsilon)) \cdot \textsf{\textup{LP-Gain}}_j$ for some $b (\epsilon) > 0$. 
\end{restatable}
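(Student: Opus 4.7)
The plan is to re-run the integration argument from Section 2.3 while replacing the coarse concavity step $1 - e^{-z} \ge (1 - 1/e)z$ on $[0,1]$ with a tighter chord bound that exploits the hypothesis $\sum_i x_{i,j} \le (1 - \eps)\gamma_j$.

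First, I would start from the inequality derived in \eqref{ineq:wintegrationsec3}, namely
\[
  \textsf{ALG}_j \;\ge\; \gamma_j \cdot \int_0^\infty \left(1 - \exp\!\left(-\sum_{i : r_{i,j} \ge w} x_{i,j}/\gamma_j\right)\right) dw.
\]
Under the hypothesis, the inner sum satisfies $z(w) := \sum_{i : r_{i,j} \ge w} x_{i,j}/\gamma_j \le 1 - \eps$ for every $w \ge 0$. Since $f(z) := 1 - e^{-z}$ is concave, on the interval $[0, 1-\eps]$ it lies above the chord from $(0,0)$ to $(1-\eps, 1 - e^{-(1-\eps)})$, i.e.
\[
  1 - e^{-z} \;\ge\; \frac{1 - e^{-(1-\eps)}}{1-\eps} \cdot z \qquad \text{for all } z \in [0, 1-\eps].
\]

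Next, I would plug this chord bound into the integral and swap the order of summation and integration exactly as in Section 2.3:
\[
  \textsf{ALG}_j \;\ge\; \gamma_j \cdot \frac{1 - e^{-(1-\eps)}}{1-\eps} \cdot \int_0^\infty \sum_{i : r_{i,j} \ge w} \frac{x_{i,j}}{\gamma_j}\, dw \;=\; \frac{1 - e^{-(1-\eps)}}{1-\eps} \cdot \sum_i r_{i,j} x_{i,j} \;=\; \frac{1 - e^{-(1-\eps)}}{1-\eps} \cdot \textsf{LP-Gain}_j,
\]
using $\int_0^\infty \mathbbm{1}[r_{i,j} \ge w]\, dw = r_{i,j}$.

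Finally, I would check that the coefficient strictly exceeds $1 - 1/e$ for $\eps \in (0, 0.1)$. Setting $u = 1 - \eps$ and $g(u) := (1 - e^{-u})/u$, one computes $g'(u) = (e^{-u}(u+1) - 1)/u^2$. Since $h(u) := e^{-u}(u+1) - 1$ satisfies $h(0) = 0$ and $h'(u) = -u e^{-u} < 0$ for $u > 0$, we get $h(u) < 0$ on $(0,1]$, hence $g'(u) < 0$. Thus $g$ is strictly decreasing in $u$, equivalently strictly increasing in $\eps$, so
\[
  b(\eps) \;:=\; \frac{1 - e^{-(1-\eps)}}{1-\eps} - \left(1 - \tfrac{1}{e}\right) \;>\; 0 \quad \text{for all } \eps \in (0, 0.1),
\]
and the desired inequality $\textsf{ALG}_j \ge (1 - 1/e + b(\eps)) \cdot \textsf{LP-Gain}_j$ follows. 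There is no genuine obstacle here; the only thing to be careful about is ensuring the chord bound is applied on the correct interval, which is guaranteed by the hypothesis and the fact that $z(w)$ is non-increasing in $w$ (with $z(0) \le 1-\eps$).
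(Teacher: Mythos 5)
Your proof is correct and takes essentially the same approach as the paper: both replace the coarse concavity bound $1-e^{-z}\ge(1-1/e)z$ with the tighter secant bound at $z=1-\eps$, which in the paper's notation amounts to observing that $g(z):=(1-e^{-z})/z-(1-1/e)$ is positive and decreasing on $(0,1)$. Your write-up simply spells out the monotonicity computation that the paper asserts without proof.
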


\paragraph{Case 2: $\textsf{LP-Gain}_j$ is not concentrated on a single reward weight.\label{case2}} This case follows from carefully extending the logic of \hyperref[case1]{Case 1}. In particular, if there is a threshold $\bar{w} \in \mathbb{R}_{\ge 0}$ such that the LP mass with reward at least $\bar{w}$ is non-zero but does not saturate $j$, the concavity step in \eqref{ineq:wintegrationsec3} is loose for rewards above $\bar{w}$. We formalize this intuition in the following way. 

 \begin{restatable}{obs}{obscasetwo} \label{obs:case2}
    For real numbers $a \le b$ define $$\textup{\textsf{LP-Gain}}_j^{[a,b]} := \sum_{i : r_{i,j} \in [a,b]} x_{i,j} \cdot r_{i,j}.$$ For $\eps \in (0,0.1)$, suppose that for every $w \in \mathbb{R}_{\geq 0}$ we have $\textup{\textsf{LP-Gain}}_j^{[w, w(1+\epsilon)]} < (1-\epsilon) \cdot \textup{\textsf{LP-Gain}}_j$. Then, we have $\textup{\textsf{ALG}}_j \geq (1-1/e+b(\epsilon)) \cdot \textup{\textsf{LP-Gain}}_j$, for some $b(\epsilon) > 0$. 
\end{restatable}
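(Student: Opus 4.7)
The plan is to sharpen the concavity step used in Section~\ref{sec:oneminusoneovere}. By strict concavity of $1-e^{-z}$ on $[0,1]$, there is a universal $c_0 > 0$ with $1 - e^{-z} \ge (1-1/e)\,z + c_0\, z(1-z)$ on this interval. Substituting into \eqref{ineq:wintegrationsec3} yields
\[
\textsf{ALG}_j \;\ge\; (1-1/e)\,\textsf{LP-Gain}_j \;+\; c_0\,\gamma_j \int_0^\infty F_j(w)\bigl(1-F_j(w)\bigr)\, dw,
\]
where $F_j(w) := \sum_{i:r_{i,j}\ge w} x_{i,j}/\gamma_j \in [0,1]$. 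I may assume Case~1 fails (otherwise Observation~\ref{obscase1} already gives the bound), so $F_j(0) \ge 1-\eps$. Thus it suffices to lower bound the integral above by $b_0(\eps)\cdot \textsf{LP-Gain}_j/\gamma_j$ for some positive $b_0(\eps)$.

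Let $X,Y$ be i.i.d.\ samples from the reward distribution supported on $\{r_{i,j}\}_i$ with probabilities proportional to $\{x_{i,j}\}_i$, so that $F_j(w) = F_j(0)\cdot\Pr[X\ge w]$ and $\textsf{LP-Gain}_j = \gamma_j F_j(0)\,\mathbb{E}[X]$. A short Fubini computation, using $F_j(w)^2 = F_j(0)^2 \Pr[\min(X,Y)\ge w]$ together with the identity $\mathbb{E}\min(X,Y) = \mathbb{E}[X] - \tfrac12\,\mathbb{E}|X-Y|$, gives
\[
\gamma_j \int_0^\infty F_j(w)(1-F_j(w))\,dw \;=\; (1-F_j(0))\,\textsf{LP-Gain}_j \;+\; \frac{F_j(0)}{2}\cdot\frac{\mathbb{E}|X-Y|}{\mathbb{E}[X]}\cdot \textsf{LP-Gain}_j.
\]
Since $F_j(0)\ge 1-\eps$, it remains to lower bound $\mathbb{E}|X-Y|/\mathbb{E}[X]$ by a positive function $b_1(\eps)$. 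The Case~2 hypothesis, after dividing through by $\textsf{LP-Gain}_j$, reads exactly: for every $w>0$, $\mathbb{E}\bigl[X\,\mathbf{1}_{X\in[w,w(1+\eps)]}\bigr] < (1-\eps)\,\mathbb{E}[X]$.

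I would establish $\mathbb{E}|X-Y|/\mathbb{E}[X] \ge b_1(\eps)$ by a compactness/contradiction argument. Normalize $\mathbb{E}[X_n]=1$. If a sequence of distributions $\mu_n$ satisfied the hypothesis but $\mathbb{E}|X_n-Y_n|\to 0$, then along a subsequence $\mu_n$ concentrates around some constant $c\in(0,\infty)$, so for $w$ near $c/\sqrt{1+\eps}$ the window $[w,w(1+\eps)]$ would asymptotically capture the entire normalized mean, contradicting the per-window bound $(1-\eps)$. The main obstacle is establishing tightness---ensuring mass does not escape to $0$ or $\infty$ while the mean stays normalized. I would handle this by partitioning $(0,\infty)$ into geometric intervals $I_k = [(1+\eps)^k,(1+\eps)^{k+1})$, applying the hypothesis per interval to obtain $\mathbb{E}[X\,\mathbf{1}_{X\in I_k}]\le(1-\eps)\mathbb{E}[X]$, and combining this with the mean constraint: mass at very small or very large scales contributes (respectively) too little or ``too much per unit mass'' to $\mathbb{E}[X]$, forcing it to be uniformly bounded and preventing escape.
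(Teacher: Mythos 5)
Your approach is genuinely different from the paper's, and the main pieces are correct. The paper's proof chooses the threshold $w^*$ at which at least an $\eps/2$ fraction of $\textsf{LP-Gain}_j$ lies below $w^*$, deduces that for $w>w^*$ the tail mass $F_j(w):=\sum_{i:r_{i,j}\ge w}x_{i,j}/\gamma_j$ is at most $1-\eps/2$ so the concavity step has slack, and shows at least an $\eps/2$ fraction of the LP gain lives strictly above $(1+\eps)w^*$ to collect that slack. You instead strengthen the concavity bound to $1-e^{-z}\ge(1-1/e)z+c_0\,z(1-z)$, convert the slack integral $\int_0^\infty F_j(w)(1-F_j(w))\,dw$ via Fubini and the identity $\mathbb{E}\min(X,Y)=\mathbb{E}[X]-\tfrac12\mathbb{E}|X-Y|$ into a Gini-type dispersion term, and reduce to a clean scalar claim: the hypothesis forces $\mathbb{E}|X-Y|/\mathbb{E}[X]$ to be bounded away from zero. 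The algebraic identities you state check out. The one step that is not convincingly closed is the very one you flag as the main obstacle: the geometric-interval tightness argument as sketched does not prevent escape of mass, since the hypothesis only gives an \emph{upper} bound per interval and summing $(1-\eps)$ over all intervals is vacuous. Fortunately compactness can be bypassed entirely with a short quantitative argument: normalize $\mathbb{E}[X]=1$, let $m$ be a median of $X$, use $\mathbb{E}|X-Y|\ge\mathbb{E}|X-m|$ and $|1-m|\le\mathbb{E}|X-m|$, set $w:=m/(1+\eps/2)$, and apply Markov's inequality to $|X-m|$ to bound $\mathbb{E}\bigl[X\,\mathbf{1}_{X\notin[w,w(1+\eps)]}\bigr]$ by $O\bigl(\mathbb{E}|X-m|/\eps\bigr)$; if $\mathbb{E}|X-Y|$ were below a suitable multiple of $\eps^2$ this would violate the hypothesis at that $w$. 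With that substitution your route gives a complete and somewhat more conceptual proof; the paper's route is more elementary and keeps everything at the level of threshold bookkeeping.
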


 \paragraph{Case 3: Some of $\bs{j}$'s proposal probabilities $\bs{\{p_{i,j}\}}$ are bounded away from 1.\label{case3}} Our final case is the most important and relies crucially on the correlated proposals of \Cref{alg:corrprop}. Roughly speaking, there is no gain to be had when correlating proposals if all of the marginal probabilities are equal to 1---proposing independently would be equivalent. If however a non-negligible number of $j$'s neighbors have a marginal proposal probability bounded away from 1, we can take advantage of the correlation introduced by pivotal sampling. In particular, we can tighten our bounding of $R_j(w)$ in the convex order, resulting in the following observation.

\begin{restatable}{obs}{obscasethree} \label{obscase3}
For $\eps \in (0,0.1)$, suppose that there exists $S \subseteq I$ such that $\sum_{i \in S} x_{i,j} \ge \epsilon \cdot \gamma_j$ and $\max_{i \in S} p_{i,j} \leq 1-\epsilon$. Then, we have $\textup{\textsf{ALG}}_j \geq (1-1/e+b (\epsilon)) \cdot \textup{\textsf{LP-Gain}}_{j}$ for some $b (\epsilon) > 0$.
\end{restatable}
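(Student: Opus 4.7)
The plan is to sharpen the convex-order bound on $R_j(w)$ from Section~\ref{sec:oneminusoneovere}, exploiting that $p_{i,j}\cdot\textup{Pois}(x_{i,a}/\mu_i) \cvxle \textup{Pois}(x_{i,j}/\gamma_j)$ becomes strict whenever $p_{i,j} < 1$: both sides share mean $x_{i,j}/\gamma_j$, but their variances differ by a factor of $p_{i,j}$. The hypothesis $p_{i,j}\le 1-\epsilon$ for $i \in S$ quantifies this slack and---crucially---this is precisely where the correlation induced by pivotal sampling (rather than independent proposals) will be essential.

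First, I would tighten the bound for the $S$-contribution. Writing $\beta_i := x_{i,j}/\gamma_j$, $\beta_S(w) := \sum_{i \in S,\, r_{i,j}\ge w}\beta_i$, and $\beta_{\bar S}(w)$ analogously for the complement (so that $\beta(w)=\beta_S(w)+\beta_{\bar S}(w)$), scaling \Cref{lem:weightedPoisCvx} by the ratio $p_{i,j}/(1-\epsilon)\le 1$ gives
\[ p_{i,j}\cdot\textup{Pois}(x_{i,a}/\mu_i) \;\cvxle\; (1-\epsilon)\cdot\textup{Pois}\!\left(\beta_i/(1-\epsilon)\right) \qquad \text{for every } i \in S. \]
Summing over $i \in S$ with $r_{i,j}\ge w$, the partial sum $T_S(w) := \sum_{i\in S,\,r_{i,j}\ge w} p_{i,j}\cdot\textup{Pois}(x_{i,a}/\mu_i)$ is convex-dominated by $(1-\epsilon)\cdot\textup{Pois}(\beta_S(w)/(1-\epsilon))$. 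A direct computation---splitting on $\textup{Pois}(\lambda)\in\{0,1\}$ versus $\ge 2$, which is justified once $\epsilon$ is small enough that $\lceil 1/(1-\epsilon)\rceil = 2$---gives $\E[\min(1,(1-\epsilon)\cdot\textup{Pois}(\lambda))] = 1 - e^{-\lambda}(1 + \lambda\epsilon)$ for $\lambda := \beta_S(w)/(1-\epsilon)$. Applying the elementary inequality $e^{-y}(1+y) \le 1 - y^2/3$ for $y \in [0,1]$ with $y = \beta_S(w)\epsilon/(1-\epsilon)$, this exceeds $1 - e^{-\beta_S(w)}$ by a positive amount $\Delta_S(w) = \Omega(\epsilon^2\beta_S(w)^2)$.

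Next, I would combine with the $\bar S$-contribution. Decomposing $R_j(w) = T_S(w) + T_{\bar S}(w)$ into independent summands (over disjoint offline types) and applying the elementary inequality $\min(1,a+b) \ge 1-(1-\min(1,a))(1-\min(1,b))$ for $a,b\ge 0$, we obtain
\[ \E[\min(1,R_j(w))] \;\ge\; 1 - (1-\E[\min(1,T_S(w))])\cdot(1-\E[\min(1,T_{\bar S}(w))]) \;\ge\; 1 - e^{-\beta(w)} + e^{-\beta_{\bar S}(w)}\,\Delta_S(w), \]
where the second inequality uses the standard bound $\E[\min(1,T_{\bar S}(w))] \ge 1 - e^{-\beta_{\bar S}(w)}$ from Section~\ref{sec:oneminusoneovere} combined with the strengthened bound for $T_S(w)$.

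Finally, I would integrate over $w$ against $\gamma_j\,dw$ and combine with the baseline $(1-1/e)\textsf{LP-Gain}_j$ bound, using the LP constraint $\beta(w)\le 1$ (so $e^{-\beta_{\bar S}(w)}\ge e^{-1}$), to conclude $\textsf{ALG}_j\ge (1-1/e+b(\epsilon))\textsf{LP-Gain}_j$. The main obstacle is this aggregation step: the per-$w$ improvement scales like $\beta_S(w)^2$, which may concentrate on a small range of reward levels, whereas $\textsf{LP-Gain}_j = \gamma_j\int_0^\infty\beta(w)\,dw$ integrates $\beta(w)$ across all $w$. Quantitatively lower-bounding $\int_0^\infty\beta_S(w)^2\,dw$ in terms of $\textsf{LP-Gain}_j$---using the hypothesis $\sum_{i\in S}x_{i,j}\ge\epsilon\gamma_j$ together with a Cauchy--Schwarz-type relation between $\int\beta_S(w)\,dw$ and $\int\beta_S(w)^2\,dw$---is the central technical subtlety in extracting a bonus $b(\epsilon)>0$ that depends only on $\epsilon$.
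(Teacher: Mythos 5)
Your setup is correct and mirrors the paper's: you decompose $R_j(w)$ over $S$ and its complement, tighten the convex-order bound on the $S$-part using $p_{i,j}\le 1-\eps$ (getting $(1-\eps)\cdot\textup{Pois}(\beta_S(w)/(1-\eps))$), and compute $\E[\min(1,(1-\eps)\textup{Pois}(\lambda))]=1-e^{-\lambda}(1+\lambda\eps)$ exactly. This is the same dominating distribution $D$ the paper uses.

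However, the aggregation step you flag as ``the central technical subtlety'' is a genuine gap, and the Cauchy--Schwarz route you sketch does not close it. Your per-$w$ boost is $\Omega(\eps^2\beta_S(w)^2)$, which vanishes at reward levels $w$ where $\beta_S(w)=0$. Consider an instance where $S$ consists only of types with $r_{i,j}=1$ and $\beta_S=\eps$, while $\bar S$ has types with $r_{i,j}=M$ and $\beta_{\bar S}=1-\eps$: then $\int_0^\infty\beta_S(w)^2\,dw=\eps^2$, while $\textsf{LP-Gain}_j/\gamma_j\approx M(1-\eps)$ can be arbitrarily large. No relation between $\int\beta_S^2$ and $\int\beta$ can rescue a boost that is additive in $\int\Delta_S(w)\,dw$ against an $\textsf{LP-Gain}_j$ that concentrates on $w$ where $\Delta_S(w)=0$.

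What the paper does instead is exploit a dichotomy that makes the per-$w$ boost \emph{uniform} and integration immediate. From the hypothesis $\sum_{i\in S}x_{i,j}\ge\eps\gamma_j$ and Constraint~\eqref{eqn:tightOnlineFlow} with $H=I$, we get $\sigma_w-\sigma_{w,S}\le\sum_{i\notin S}x_{i,j}/\gamma_j\le 1-\eps$ for \emph{every} $w$. Hence for each $w$ either $\sigma_w\le 1-\eps/2$, in which case the concavity gap $1-e^{-\sigma_w}\ge(1-1/e+g(1-\eps/2))\sigma_w$ already gives the boost, or $\sigma_w>1-\eps/2$, which forces $\sigma_{w,S}\ge\eps/2$ and then your pivotal-sampling improvement kicks in at constant strength. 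Either way, $\E[\min(1,R_j(w))]\ge(1-1/e+b(\eps))\sigma_w$ for every $w$ (this is precisely the content of \Cref{app:expectationbounding}), and the integral collapses to $(1-1/e+b(\eps))\textsf{LP-Gain}_j$ directly. In your writeup this corresponds to combining the pivotal boost with the concavity improvement at $w$ where $\beta(w)<1$, rather than treating $1-e^{-\beta(w)}$ as merely the baseline; your formula $1-e^{-\beta(w)}+e^{-\beta_{\bar S}(w)}\Delta_S(w)$ actually already contains both, you just did not cash in the first term.

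A minor point: the inequality $e^{-y}(1+y)\le 1-y^2/3$ is false at $y=1$ ($2/e\approx 0.736>2/3$). Since $1-e^{-y}(1+y)=\int_0^y te^{-t}\,dt\ge e^{-1}y^2/2$ for $y\in[0,1]$, the constant should be $1/(2e)$, or $1/6$ via the Taylor bound; this does not affect the structure of the argument.
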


We can of course assume Observations~\ref{obscase1}--\ref{obscase3} hold with the same continuous boost function $b(\cdot)$. We next provide our definition of a \emph{hard} online type. It would be natural to say that any online type $j$ is hard if none of these three cases apply. We, however, use a slightly stronger definition. In particular, for some online type $j$,  note that if \hyperref[case2]{Case 2} applies, we are already better-than-$(1-1/e)$ approximate on $j$. If \hyperref[case2]{Case 2} does not apply, $j$'s rewards are concentrated around some $r_j \in \mathbb{R}_{\ge 0}$. We can then delete edges $(i,j)$ with $r_{i,j} \notin [r_j, r_j(1+\eps)]$ and attempt to apply the first or third case on the resulting neighborhood, with some larger parameter $\eps'$. If this is insufficient as well, we call an online type \emph{hard}, as formalized below. 

\begin{definition} \label{def:hardtype}
    We say online type $j$ is $\epsilon$-\emph{hard} if
    \begin{enumerate}[label=(\roman*)]
        \item There exists $r_j \ge 0$ such that $\textup{\textsf{LP-Gain}}_j^{[r_j, r_j(1+\epsilon)]} \ge (1 - \epsilon ) \cdot \textup{\textsf{LP-Gain}}_j$;
        \item For $S_j := \{i : r_{i,j} \in [r_j, r_j(1+\epsilon )]\}$ and $\epsilon' :=   b^{-1}(2\epsilon)$ we have $\sum_{i\in S_j \, : \, p_{i,j} \geq 1-\eps'} x_{i,j}/\gamma_j \geq 1-2\eps'.$
    \end{enumerate}
\end{definition}

If a constant fraction of the LP reward is supported on online types $j$ that are not hard, then \Cref{alg:corrprop} beats $1-1/e$, as formalized in the following observation.  

\begin{restatable}{lem}{casesbeatoneminusoneovere} \label{lem:casesBeat1-1/e}
    If $\mathcal{H}_{\eps}$ denotes the set of $\eps$-hard online types, and $\textup{\textsf{LP-Gain}}_{J \setminus \mathcal{H}_{\eps}} \geq \epsilon \cdot \textup{\textsf{LP-Gain}}_J$, then \Cref{alg:corrprop} achieves stationary reward at least $(1 - 1/e + c(\epsilon)) \cdot \textup{\textsf{LP-Gain}}_J$ for some constant $c(\epsilon) > 0$. 
\end{restatable}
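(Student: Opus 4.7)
The plan is to establish a per-type strengthening: for every $j \notin \mathcal{H}_\eps$, \Cref{alg:corrprop} satisfies $\textsf{ALG}_j \geq (1-1/e + c'(\eps)) \cdot \textsf{LP-Gain}_j$ for some $c'(\eps)>0$. Summing over $j \in J$, combining with the base $(1-1/e)$-bound for hard types established in \Cref{sec:oneminusoneovere}, and using the hypothesis $\textsf{LP-Gain}_{J \setminus \mathcal{H}_\eps} \geq \eps \cdot \textsf{LP-Gain}_J$, then yields the claim with $c(\eps) := \eps \cdot c'(\eps)$.

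Fix a non-hard $j$ and set $\eps' := b^{-1}(2\eps)$. I distinguish cases based on which part of \Cref{def:hardtype} fails. If condition (i) fails for every candidate $r_j$, then \Cref{obs:case2} applies directly with parameter $\eps$. Otherwise, fix any $r_j^*$ witnessing (i); since $j$ is not hard, (ii) must fail for this $r_j^*$, yielding with $S_j^* := \{i : r_{i,j} \in [r_j^*, r_j^*(1+\eps)]\}$:
\[ \sum_{i \in S_j^*,\, p_{i,j} \geq 1-\eps'} x_{i,j}/\gamma_j < 1 - 2\eps'. \]
I subdivide further based on how the remaining LP mass is distributed. If $\sum_{i : p_{i,j} < 1-\eps'} x_{i,j}/\gamma_j \geq \eps'/2$, I apply \Cref{obscase3} to $S = \{i: p_{i,j} < 1-\eps'\}$. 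Else, if $\sum_i x_{i,j}/\gamma_j \leq 1 - \eps'/2$, I apply \Cref{obscase1}. Otherwise, most of $j$'s LP mass has proposal probability $\geq 1-\eps'$; subtracting the $(ii)$-failure bound forces residual mass $\Omega(\eps')\cdot\gamma_j$ on neighbors $i \notin S_j^*$ with $p_{i,j} \geq 1-\eps'$, handled next.

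For this residual outside mass, I split by reward. If a constant fraction lies on neighbors with $r_{i,j} > r_j^*(1+\eps)$, combining the bound $\sum_{i \notin S_j^*} x_{i,j} r_{i,j} \leq \eps \cdot \textsf{LP-Gain}_j$ coming from (i) with the upper bound $\textsf{LP-Gain}_j \leq r_j^*(1+\eps)\gamma_j/(1-\eps)$ (obtained from (i) and the flow constraint $\sum_{i\in S_j^*} x_{i,j}\le\gamma_j$) forces $\eps' \leq O(\eps)$, which contradicts $\eps' = b^{-1}(2\eps) \gg \eps$ for $\eps$ sufficiently small, given the superlinear form of $b$ stemming from the proofs of \Cref{obscase1}--\Cref{obscase3} in the appendix. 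Otherwise, the outside mass concentrates on neighbors with $r_{i,j} < r_j^*$, so $\lambda_{R^*} := \sum_{i : r_{i,j} \geq r_j^*} x_{i,j}/\gamma_j \leq 1 - \Omega(\eps')$. Lower-bounding $\textsf{ALG}_j$ by the reward from matches of reward at least $r_j^*$ and using the Poisson convex-order argument from \Cref{sec:oneminusoneovere} gives $\textsf{ALG}_j \geq \gamma_j\cdot r_j^* \cdot (1 - e^{-\lambda_{R^*}})$, while (i) yields $\textsf{LP-Gain}_j \leq r_j^*(1+\eps)\lambda_{R^*}\gamma_j/(1-\eps)$. Combining,
\[ \frac{\textsf{ALG}_j}{\textsf{LP-Gain}_j} \geq \frac{1-\eps}{1+\eps} \cdot \frac{1 - e^{-\lambda_{R^*}}}{\lambda_{R^*}}, \]
and since $(1-e^{-x})/x$ is strictly decreasing with derivative $2/e-1<0$ at $x=1$, having $\lambda_{R^*} \leq 1 - \Omega(\eps')$ produces a linear-in-$\eps'$ boost that dominates the $O(\eps)$ loss from $(1-\eps)/(1+\eps)$.

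The main obstacle is this final low-reward, high-proposal sub-case: none of the three observations from \Cref{sec:sharpened} applies, and I must revisit the concavity step of \Cref{sec:oneminusoneovere} directly to extract the gain. The calibration $\eps' = b^{-1}(2\eps)$ is what lets me simultaneously rule out the high-reward sub-case by contradiction and beat the $(1-\eps)/(1+\eps)$ loss in the low-reward sub-case; verifying this trade-off for a universal constant $\eps > 0$ constitutes the technical core of the argument.
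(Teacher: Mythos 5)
Your high-level plan (per-type strengthening, then summing) and the dispatch of the case where (i) of Definition~3.5 fails are exactly the paper's. The divergence begins when (i) holds but (ii) fails: you invoke Observations~\ref{obscase1} and \ref{obscase3} with respect to \emph{all} of $I$, whereas the paper applies them restricted to the concentrated neighborhood $S_j$. That restriction is what makes the dichotomy exhaustive: write $z := \sum_{i\in S_j} x_{i,j}/\gamma_j$. If $z \le 1-\eps'$, the Case-1 argument (rerunning the integral in \eqref{ineq:wintegrationsec3} but only over edges of $S_j$) gives $\textsf{ALG}_j \ge (1-1/e+b(\eps'))\cdot\textsf{LP-Gain}_j^{S_j} \ge (1-\eps)(1-1/e+b(\eps'))\cdot\textsf{LP-Gain}_j$, where the only loss factor $(1-\eps)$ comes from $\textsf{LP-Gain}_j^{S_j}\ge (1-\eps)\textsf{LP-Gain}_j$. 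If instead $z > 1-\eps'$, then failure of (ii) forces $\sum_{i\in S_j:\,p_{i,j}<1-\eps'} x_{i,j}/\gamma_j > \eps'$, and Case 3 applies with $S\subseteq S_j$. No third case exists, and the calibration $\eps'=b^{-1}(2\eps)$ is used only to guarantee $(1-\eps)(1-1/e+2\eps)\ge 1-1/e+\eps$ for small $\eps$, which is an algebraic identity holding for any continuous $b$.

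Because you do not restrict to $S_j^*$, you are left with a residual ``high-proposal-probability, outside-$S_j^*$'' case, and the argument you give for it has a genuine gap. Both sub-cases (high-reward and low-reward) are made to work only by asserting $\eps'=b^{-1}(2\eps)\gg\eps$, ``given the superlinear form of $b$.'' But inspecting the boost functions constructed in the proofs of Observations~\ref{obscase1}--\ref{obscase3} (e.g.\ $b(\eps)=g(1-\eps)$ where $g(z)=(1-e^{-z})/z-(1-1/e)$), one finds $b(\eps)=(1-2/e)\eps+O(\eps^2)$, which is \emph{linear}, so $b^{-1}(2\eps)=\Theta(\eps)$ --- a fixed constant multiple of $\eps$, not asymptotically larger. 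With $\eps'=\Theta(\eps)$, your high-reward contradiction (which needs $\eps'$ to dominate a constant times $\eps$) and your low-reward boost (which produces a gain $g(1-\Omega(\eps'))=\Theta(\eps')$ that must overcome the multiplicative $(1-\eps)/(1+\eps)\approx 1-2\eps$ loss) both become constant-versus-constant races whose outcome depends on unstated implicit constants, and in at least some parametrizations the loss wins. The fix is simply to replace your unrestricted case split with the paper's: restrict the analysis to $S_j^*$ before invoking Cases 1 and 3, and the extra sub-case disappears.
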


The proof is straightforward, arguing that on hard types \Cref{alg:corrprop} achieves a baseline $(1-1/e)$-approximation, while for easy types it achieves a boosted $(1-1/e+b(\eps))$-approximation, via either \hyperref[case2]{Case 2} or one of \hyperref[case1]{Case 1} or  \hyperref[case3]{Case 3} applied to the neighborhood $S_j$ with parameter $\epsilon'$. A full proof is in \Cref{app:casesbeat}.

% \begin{prop}
%     If instance ${\cal I}$ is non-expansive, \Cref{alg:corrprop} is $(1-1/e+b(\eps))$-approximate. 
% \end{prop}

\subsection{Difficult Case: Vertex-weighted Highly-connected Instances}\label{ssec:correlated_analysis}

\Cref{lem:casesBeat1-1/e} shows that if an instance derives a constant fraction of its LP gain from easy types, \Cref{alg:corrprop} suffices to obtain a $(1-1/e+c(\eps))$-approximation for some constant $c(\epsilon) > 0$. Therefore, the next section focuses on the core difficult scenario, where instances that  do not satisfy the conditions of \Cref{lem:casesBeat1-1/e} because they are primarily composed of hard types. We refer to this class of instances as {\em vertex-weighted highly-connected}, to make their structure explicit. Indeed, such instances are approximately {\em online vertex-weighted}.\footnote{Interestingly, this a different notion from the vertex-weighted assumption frequently considered in the online matching literature, which generally requires that \emph{offline} nodes are weighted.} Moreover, the {\em highly-connected} property signals the fact that most proposal probabilities are approximately equal to $1$, and upon each hard type $j$'s arrival, there is a high probability that one of $j$'s neighbors is present. Indeed, the capacity constraints of hard online types in \eqref{TLPon} are nearly saturated. In light of the new tightening constraints~\eqref{eqn:tightOnlineFlow},  this condition implies that $\sum_{i \in N_j} \lambda_i / \mu_i $ is large, which translates into a high probability that one of $j$'s neighbors is present in steady-state.

% By our new tightened constraint (i.e., Constraint~\eqref{eqn:tightOnlineFlow}),  $j$ being nearly saturated implies that $\sum_{i \in N_j} \lambda_i / \mu_i $ is large. This means that upon each $j$'s arrival, there is a high probability that one of $j$'s neighbors is present. For this reason, we refer to this class of instances as ``vertex-weighted highly-connected.''

\begin{definition}
    A \emph{$\eps$-vertex-weighted highly-connected ($\eps$-VWHC)} instance is one where $$\textup{\textsf{LP-Gain}}_{J \setminus \mathcal{H}_{\eps}} < \epsilon \cdot \textup{\textsf{LP-Gain}}_J.$$ 
\end{definition}

Despite the improvements identified in the previous section, it is possible that \Cref{alg:corrprop} obtains an approximation ratio arbitrarily close to $1-1/e$ (see \Cref{app:alg_bad_example}). For this reason, in this section we introduce a related algorithm (\Cref{alg:second}) and show that on any instance, at least one of our two algorithms beats $1-1/e$ by a universal constant. 

% In this section, we focus on vertex-weighted highly-connected (VWHC) instances and obtain an improved approximation ratio for them. On such instances,  Therefore, we present a second algorithm and show that it is $(1-1/e + 
% \Omega(1))$-approximate. Our main result of this section is stated in the next proposition. 

\begin{prop}\label{prop:main}
    There exists a constant $\zeta > 0$, independent of $\eps$, and a polynomial-time algorithm with approximation ratio at least $(1- f(\eps)) \cdot (1 - 1/e + \zeta)$ on $\epsilon$-VWHC instances, for some non-negative continuous $f(\cdot)$ with $f(0) = 0$. 
\end{prop}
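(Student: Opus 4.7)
The plan is to design and analyze a second algorithm, \emph{Balanced Greedy}, tailored to VWHC instances. As pre-processing, for each hard type $j \in \mathcal{H}_\eps$ restrict $j$'s neighborhood to $S_j = \{i : r_{i,j} \in [r_j, r_j(1+\eps)]\}$ and, within it, to the subset on which $p_{i,j} \geq 1-\eps'$ (both guaranteed by \Cref{def:hardtype}), and discard all non-hard online types together with all other edges. By the VWHC assumption combined with the hardness conditions, the restricted LP solution still has value at least $(1 - f(\eps)) \cdot \textsf{LP-Gain}_J$ for some continuous $f$ with $f(0)=0$. On the reduced graph, upon arrival of each offline node we independently label it \emph{top} with probability $1/2$ and \emph{bottom} otherwise. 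When an online type $j$ arrives, we match it to an arbitrary available top neighbor if one exists, and otherwise to an arbitrary available bottom neighbor. Since all surviving edges incident to $j$ have reward within a $(1+\eps)$-factor of $r_j$, breaking ties arbitrarily loses only a $(1-O(\eps))$ factor, so it suffices to prove that the per-$j$ probability of finding an available neighbor exceeds $1-1/e$ by a universal constant.

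Let $p_j^{\top}$ denote the stationary probability that no top neighbor of $j$ is available when $j$ arrives, and $p_j^{\bot \mid \top}$ the conditional probability that no bottom neighbor of $j$ is available given the former event; then $\Pr[j \text{ unmatched}] = p_j^{\top} \cdot p_j^{\bot \mid \top}$. The top chains are mutually independent (bottom nodes never affect tops), and a direct analogue of \Cref{def:imc} and \Cref{claim:stationarydistiMcPoisson}---with birth rate halved to $\lambda_i/2$---yields an explicit closed-form Poisson bound on $p_j^{\top}$ in terms of the parameter $\alpha_j := \sum_{i \in N_j} x_{i,a}/\mu_i$, which also controls the tightened constraint \eqref{eqn:tightOnlineFlow}. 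If $\alpha_j$ lies outside a fixed interval $[c_1, c_2]$, I would show that the baseline analysis of \Cref{sec:oneminusoneovere} is already loose on $j$ (either through slack in the concavity step when $\alpha_j$ is small, or through the doubled product $p_j^{\top} \cdot p_j^{\bot}$ being much smaller than $1/e$ when $\alpha_j$ is large), so we may assume $\alpha_j = \Theta(1)$ throughout the argument below.

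The heart of the argument is to prove $p_j^{\bot \mid \top} \le p_j^{\bot} - \zeta'$ for an absolute $\zeta' > 0$: during intervals in which a top neighbor of $j$ is present, bottom copies of the same offline type are never consumed, so conditioning on \emph{all} of $j$'s top neighbors being simultaneously unavailable at time $t$ biases the history towards realizations in which bottom queues have had extra time to replenish. To formalize this, I would introduce \emph{weakly correlated Markov chains} in which, given the top trajectory, each bottom type evolves as a birth--death chain whose matching death rate is suppressed during intervals when some top copy of the same type is present. Conditional on the top trajectory the bottom chains are independent across offline types, and a Hardy--Littlewood rearrangement inequality factors $p_j^{\bot \mid \top}$ into a product of per-type conditional probabilities over $i \in N_j$. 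The main obstacle is the single-type conditional bound: for a bottom neighbor $i \sim j$, bounding the probability that $i$'s bottom copies are unavailable at $t$ given that all of $j$'s top neighbors are simultaneously unavailable at $t$. The difficulty is that $i$ may share many online neighbors $k \neq j$ with $j$, so the conditioning could retrospectively increase the match rate of rival $k$'s against $i$ just before $t$. I plan to resolve this via renewal theory and mixing-time estimates: each top chain mixes on timescale $O(1/\mu_i)$, so the joint conditioning event at $t$ has exponentially decaying influence on the history as one looks backwards; integrating this decay against the Poisson arrivals of rival online types $k$ bounds the induced extra consumption of $i$, leaving a net boost to $i$'s bottom availability by a universal constant. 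Combining the Hardy--Littlewood factorization with the single-type bound yields $p_j^{\bot \mid \top} \le p_j^{\bot} - \zeta'$, and plugging into the decomposition gives $1 - p_j^{\top}\, p_j^{\bot \mid \top} \ge 1 - 1/e + \zeta$ for some universal $\zeta > 0$, completing the proof of the proposition.
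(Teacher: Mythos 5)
Your proposal follows the same high-level route as the paper: Balanced Greedy with random top/bottom splits, the factorization $\Pr[j \text{ unmatched}] = p_j^{\top}\cdot p_j^{\bot\mid\top}$, a weakly correlated Markov-chain coupling, the Hardy--Littlewood rearrangement inequality to control cross-type dependence, and renewal/mixing estimates for the per-type conditional bound. But one step, as written, is not correct and would invalidate the constant boost. You state that a bottom type's matching death rate is ``suppressed during intervals when some top copy of the same type is present.'' Under that modulation, the only protective event for BOT copy $(i,2)$ is the presence of $(i,1)$, which occurs with probability $\approx \lambda_i/(2\mu_i)$; the binary-queues reduction forces $\lambda_i/\mu_i \le \eps^2$, so this boost vanishes with $\eps$ rather than being a universal constant $\zeta$. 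The correct modulation --- and the one the paper's $Q^{\textup{\textsf{weak}}}$ uses --- is that BOT $i$'s match-depletion rate is suppressed at time $t$ exactly when an arriving online type $k\in N_i$ has \emph{some} available neighbor in TOP, i.e.\ when $\textup{\textsf{TE}}_k(t)=0$, a condition on the entire set $N_k^\uparrow$ rather than on $i$'s own top copy. That is what makes the protective probability $\Theta(1)$ per $k$ (via the saturation and balancedness properties) rather than $O(\lambda_i/\mu_i)$. Your later discussion of rivals $k$ sharing top neighbors with $j$ suggests you have this picture in mind, but the modulation rule must be corrected.

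Two smaller points. First, your interval case analysis on $\alpha_j := \sum_{i\in N_j}x_{i,a}/\mu_i$ is unnecessary: the hardness conditions (\Cref{def:hardtype}) together with the tightened constraints already pin $\alpha_j$ to $[1-\eps', (1-\eps')^{-1}]$ after the instance transformation, so there is no ``$\alpha_j$ far from $1$'' regime to dispose of. Second, the bare ``each top chain mixes on timescale $O(1/\mu_i)$, so the conditioning decays exponentially'' heuristic is not yet a proof: the hard case is exactly when rivals $k\in N_i$ have $N_k^\uparrow$ overlapping $N_j^\uparrow$, and there the conditioning does not wash out coordinate-wise. The paper handles this by splitting $N_i$ into $J^{\rm correl}$ and $J^{\rm indep}$, extracting a core set $I^{\rm core}\subseteq N_j^\uparrow$ with large arrival rate and large overlap with each $N_k^\uparrow$, and running concentration bounds on the empty/nonempty renewal cycles of $I^{\rm core}$. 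Your sketch gestures at renewal theory but would need this explicit decomposition to close the gap.
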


With this proposition in place, the proof of our main theorem follows by suitable choice of $\eps$. 

\begin{proof}[Proof of \Cref{thm:main}]
Taking $\eps = f^{-1}(\zeta / 2)$ immediately gives at least a $(1-1/e + \zeta / 2)$-approximation on $\eps$-VWHC instances. For any instance that is not $\eps$-VWHC, by \Cref{lem:casesBeat1-1/e} the approximation ratio of \Cref{alg:corrprop} is at least $1-1/e+c(f^{-1}(\zeta/2)).$
\end{proof}

To prove this proposition, in the remainder of the section for ease of notation we assume $\epsilon$ and $\eps' = b^{-1}(2\eps)$ are fixed, and slightly abuse the notation accordingly (for example, we refer to $\eps$-VWHC instances simply as VWHC instances). 

\paragraph{Algorithm outline.} As our analysis of \Cref{alg:corrprop} might be tight due to overmatching offline nodes to online nodes of low rewards, we resolve this issue by explicitly deleting all easy online types, as well as unimportant edges incident to hard nodes. For VWHC instances, such deletions lose very little LP weight. For the remaining subgraph, \Cref{alg:second} obtains a constant-factor boost using a form of balanced tie-breaking that we achieve by splitting each offline type into two copies, referred to as ``top'' and ``bottom'' nodes. The \textsf{Instance Transformation} subroutine formalizes these ideas (and \Cref{fig:transformation} visualizes them). 
\begin{tcolorbox}[colback=white, colframe=black, width=\textwidth, arc=5mm, auto outer arc]
\noindent \textsf{Instance Transformation}  \label{transf}
\vspace{-0.3cm}
\paragraph{}{\it Step 1: Removal of easy nodes and unimportant edges.} We remove all easy nodes from the graph. Moreover, for every hard online type $j$, we remove every edge with $r_{i,j} \not \in [r_j, (1+\eps) r_j]$ or $p_{i,j} < 1-\eps'$, where we recall that $\eps' = b^{-1}(2\eps)$ as defined in the previous section. Algorithmically, these removals mean that we ignore such edges or nodes.
\vspace{-0.3cm}
\paragraph{}{\it Step 2: Randomization through offline-node splits. \label{step2}}  %Our greedy matching policy requires a specific randomization that can be captured by splitting each offline vertex into two identical nodes. 
We use the Poisson thinning property to split type-$i$ online nodes into types $(i,1)$ and $(i,2)$ with $\lambda_{(i,1)} = \lambda_{(i,2)}$, by flipping a fair coin upon the arrival of any offline type. Let $$\text{TOP} := \{(i,1)\}_{i \in I} \quad \text{and} \quad \text{BOT} := \{(i,2)\}_{i \in I}. $$ 
\end{tcolorbox}

The \hyperref[transf]{\textsf{Instance Transformation}} subroutine leaves us with very structured instances---they are nearly (online) vertex weighted, all proposal probabilities are almost 1, and all online types are almost saturated. As previously mentioned, the tightening constraints \eqref{eqn:tightOnlineFlow} in \eqref{TLPon} imply that for every online type $j$, $\sum_{i \in N_j} \frac{\lambda_i}{\mu_i}$ is large. Intuitively, this means there is a high chance $j$ has an adjacent offline neighbor present upon its arrival, which is the reason for our highly-connected nomenclature.  With this subroutine in hand, we present our second algorithm. 

At a high level, \Cref{alg:second} matches each online node greedily to one of its available neighbors on the offline side. Nonetheless, it prioritizes top nodes over bottom nodes whenever possible; this will ensure sufficiently balanced match rates across different types. It is instructive to contrast how our two algorithms operate on the transformed instances. While  \Cref{alg:corrprop} would specify proposal probabilities close to 1, it instead prioritizes offline nodes based on rewards, as implied by the pivotal sampling prefix ordering.

\begin{figure}[ht]
    \centering
    % First minipage for Figure 1
    \begin{minipage}[b]{0.25\textwidth}%
        \centering
        \begin{tikzpicture}[scale=0.8, every node/.style={scale=0.8}]
            % Left vertices
            \foreach \i in {1,...,3} {
                \node[draw, circle, fill=black, inner sep=3pt] (L\i) at (0, -1.5*\i+1.5) {};
            }
            
            % Right vertices
            \node[draw, circle, fill=red, inner sep=3pt] (R1) at (2, 0) {};
            \node[draw, circle, fill=green, inner sep=3pt] (R2) at (2, -1.5) {};
            \node[draw, circle, fill=red, inner sep=3pt] (R3) at (2, -3) {};
            
            % Draw solid edges
            \draw[line width=1.2pt] (L1) -- (R1);
            \draw[dashed] (L2) -- (R1);
            \draw[line width=1.2pt] (L2) -- (R2);
            \draw[line width=1.2pt] (L3) -- (R1);
            \draw[line width=1.2pt] (L3) -- (R2);
            \draw[line width=1.2pt] (L3) -- (R3);
            
            % Draw dashed edges
            \draw[dashed] (L1) -- (R3);
            \draw[line width=1.2pt] (L2) -- (R3);
        \end{tikzpicture}
        \caption*{Original instance}
        \label{fig:figure1}
    \end{minipage}%
    \hspace{0.02\textwidth}%
    % Arrow between Figure 1 and Figure 2
    \raisebox{2cm}{\color{black} \Large$\Rightarrow$}%
    \hspace{0.02\textwidth}%
    % Second minipage for Figure 2
    \begin{minipage}[b]{0.25\textwidth}%
        \centering
        \begin{tikzpicture}[scale=0.8, every node/.style={scale=0.8}]
            % Left vertices
            \foreach \i in {1,...,3} {
                \node[draw, circle, fill=black, inner sep=3pt] (L\i) at (0, -1.5*\i+1.5) {};
            }
            
            % Right vertices
            \node[draw, circle, fill=red, inner sep=3pt] (R1) at (2, 0) {};
            % \node[draw, circle, fill=red, inner sep=3pt] (R2) at (2, -1.5) {};
            \node[draw, circle, fill=red, inner sep=3pt] (R3) at (2, -3) {};
            
            % Draw solid edges
            \draw[line width=1.2pt] (L1) -- (R1);
            % \draw[dashed] (L2) -- (R1);
            % \draw[line width=1.2pt] (L2) -- (R2);
            \draw[line width=1.2pt] (L3) -- (R1);
            % \draw[line width=1.2pt] (L3) -- (R2);
            \draw[line width=1.2pt] (L3) -- (R3);
            
            % Draw dashed edges
            % \draw[dashed] (L1) -- (R3);
            \draw[line width=1.2pt] (L2) -- (R3);
        \end{tikzpicture}
        \caption*{After step 1}
        \label{fig:figure2}
    \end{minipage}%
    \hspace{0.02\textwidth}%
    % Arrow between Figure 2 and Figure 3
    \raisebox{2cm}{\color{black} \Large$\Rightarrow$}%
    \hspace{0.02\textwidth}%
    % Third minipage for Figure 3
    \begin{minipage}[b]{0.25\textwidth}%
        \centering
        \begin{tikzpicture}[scale=0.8, every node/.style={scale=0.8}]
            % Left vertices
            \foreach \i in {1,...,6} {
                \node[draw, circle, fill=black, inner sep=3pt] (L\i) at (0, -0.75*\i+1.25) {};
            }
            
            % Right vertices
            \node[draw, circle, fill=red, inner sep=3pt] (R1) at (2, 0) {};
            % \node[draw, circle, fill=red, inner sep=3pt] (R2) at (2, -1.5) {};
            \node[draw, circle, fill=red, inner sep=3pt] (R3) at (2, -3) {};
            
            % Draw solid edges
            \draw[line width=1.2pt] (L1) -- (R1);
            \draw[line width=1.2pt] (L4) -- (R1);

            % \draw[dashed] (L2) -- (R1);
            % \draw[line width=1.2pt] (L2) -- (R2);
            \draw[line width=1.2pt] (L3) -- (R1);
            \draw[line width=1.2pt] (L6) -- (R1);
            % \draw[line width=1.2pt] (L3) -- (R2);
            \draw[line width=1.2pt] (L3) -- (R3);
            \draw[line width=1.2pt] (L6) -- (R3);
            
            % Draw dashed edges
            % \draw[dashed] (L1) -- (R3);
            \draw[line width=1.2pt] (L2) -- (R3);
            \draw[line width=1.2pt] (L5) -- (R3);

            \draw[decorate, decoration={brace, amplitude=10pt, mirror, raise=2pt}, thick]
    (-0.2,0.8) -- (-0.2,-1.25) node[midway, left=12pt] {TOP};
        \draw[decorate, decoration={brace, amplitude=10pt, mirror, raise=2pt}, thick]
        (-0.2,-1.4) -- (-0.2,-3.5) node[midway, left=12pt] {BOT};
        \end{tikzpicture}
        \caption*{Transformed instance}
        \label{fig:figure3}
    \end{minipage}
    
    \caption{Schematic visualization of \textsf{``Instance Transformation''}}
    \label{fig:transformation}
    \begin{minipage}{1\textwidth}
        \raggedright
        \vspace{0.2cm}
        \footnotesize
In the original instance, red nodes are hard online types and green node is easy. Dashed edges have either a reward not in $[r_j, (1+\eps)r_j]$ or a proposal probability $p_{i,j} < 1-\eps'$. In the second step, each offline node (on the left) is splitted into two identical nodes. 
    \end{minipage}
    
%     \footnotesize
% In the original instance, red vertices are hard online types and green vertex is easy. Dashed edges have either a reward not in $[r_j, (1+\eps)r_j]$ or a proposal probability $p_{i,j} < 1-\eps'$. In the second step, each offline vertex (on the left) is splitted into two identical vertices. 

    \label{fig:sidebyside}
\end{figure}
% These operations ensure that every remaining online vertex satisfies \eqref{asmp1}-\eqref{asmp3}. 

% This operation does not change the LP value and retains feasibility and optimality, and will be useful in the analysis. 

% We now present an analysis that goes beyond the independent queues in the independent Markov chains. 

\begin{algorithm}[H]
	\caption{Balanced Greedy}
	\label{alg:second}
	\begin{algorithmic}[1]
 \State Solve \eqref{TLPon} for $\{x_{i,j}\}$ 
 \State Invoke the \hyperref[transf]{\textsf{Instance Transformation}} subroutine, modify the graph and LP solution accordingly \label{line:instancetransformation}
 
 \ForAll{timesteps where an online node $u$ of type $j$ arrived}
            \If{$u$ has an available neighbor in TOP}
                \State Match $u$ to one such neighbor arbitrarily
            \Else
                \State Match $u$ to an available neighbor in BOT arbitrarily (if any)
            \EndIf
        \EndFor
	\end{algorithmic}
\end{algorithm}

Recall that for VWHC instances, easy types make up at most an $\eps$-fraction of the (original) LP gain. Moreover, \Cref{def:hardtype} immediately implies that removing unimportant edges in Step 1 of \hyperref[transf]{\textsf{Instance Transformation}} incurs at most $O(\eps')$-fraction loss of LP gain. Therefore, performing the subroutine yields a modified instance with an LP gain that is at least $(1-\eps)(1-O(\eps')) = 1 - \tilde{f}(\eps)$ fraction of the original LP gain, as summarized in the following observation.\footnote{We emphasize that \Cref{alg:second} does not ``re-solve'' \eqref{TLPon} after running the subroutine; it simply modifies the original LP solution.} 

\begin{obs}\label{obs:smallLPloss}
    After performing \Cref{line:instancetransformation} of \Cref{alg:second}, for VWHC instances the resulting LP solution is feasible with value at least $(1-\tilde{f}(\eps)) \cdot \textup{OPT}\text{\eqref{TLPon}}$, for some continuous $\tilde{f}$ with $\tilde{f}(0)=0$. 
\end{obs}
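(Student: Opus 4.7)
My plan is to verify the two assertions of the observation---feasibility of the lifted LP solution for the transformed instance, and the value bound---by treating the two steps of the \hyperref[transf]{\textsf{Instance Transformation}} in order.

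\textbf{Step 1: pruning easy nodes and unimportant edges.} First I would zero out $x_{i,j}$ for every deleted edge and compensate by increasing the corresponding $x_{i,a}$ so that the offline flow equality~\eqref{eqn:tightOfflineFlow} still holds; every other constraint only loosens when we decrease variables, so feasibility is immediate. To bound the value loss I would split the deleted reward into three contributions: $(a)$ edges incident to easy online nodes, bounded by $\eps \cdot \textsf{LP-Gain}_J$ directly by the $\eps$-VWHC assumption; $(b)$ for each hard $j$, edges with $r_{i,j}\notin[r_j,(1+\eps)r_j]$, bounded by $\eps\cdot\textsf{LP-Gain}_j$ using part~(i) of~\Cref{def:hardtype}; $(c)$ for each hard $j$, edges in $S_j$ with $p_{i,j}<1-\eps'$, whose aggregate $x$-mass is at most $2\eps'\gamma_j$ by part~(ii). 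For $(c)$ I would convert this absolute $x$-mass bound into a relative reward bound using $r_{i,j}\leq (1+\eps)r_j$ on $S_j$ together with the matching lower bound $\textsf{LP-Gain}_j\geq (1-2\eps')\gamma_j\, r_j$ (also from part~(ii)), giving a relative loss of $\tfrac{2\eps'(1+\eps)}{1-2\eps'}=O(\eps')$. Summing the three contributions yields a total loss of at most $2\eps+O(\eps')$ fraction of $\textup{OPT}\eqref{TLPon}$.

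\textbf{Step 2: Poisson splitting of offline types.} Here I would take $\lambda_{(i,k)}=\lambda_i/2$, $\mu_{(i,k)}=\mu_i$, and lift the LP solution by $x_{(i,k),j}:=x_{i,j}/2$, $x_{(i,k),a}:=x_{i,a}/2$. The objective is preserved exactly, and constraints \eqref{eqn:tightOfflineFlow}, \eqref{eqn:tightOnlineConstraint}, \eqref{eqn:nonnegativity} follow from a direct scaling check. The only nontrivial step---and the one I expect to be the main obstacle---is verifying \eqref{eqn:tightOnlineFlow} for an arbitrary subset $H$ of split types that mixes top and bottom copies of different $i$'s. Writing $H_k:=\{i:(i,k)\in H\}$, $\alpha_k:=\sum_{i\in H_k}x_{i,j}$, and $\beta_k:=\sum_{i\in H_k}\lambda_i/\mu_i$, the required inequality becomes
\[
\tfrac12(\alpha_1+\alpha_2)\;\leq\;\gamma_j\,\bigl(1-\exp(-\tfrac12(\beta_1+\beta_2))\bigr).
\]
I would apply the original \eqref{eqn:tightOnlineFlow} separately to $H_1$ and to $H_2$, average the two inequalities to bound the left-hand side by $\gamma_j(1-\tfrac12(e^{-\beta_1}+e^{-\beta_2}))$, and then invoke AM--GM in the form $\tfrac12(e^{-\beta_1}+e^{-\beta_2})\geq e^{-(\beta_1+\beta_2)/2}$ to close the gap.

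Combining the two steps and setting $\tilde{f}(\eps):=2\eps+\tfrac{2\eps'(1+\eps)}{1-2\eps'}$ yields the claimed multiplicative guarantee; $\tilde{f}$ is continuous with $\tilde{f}(0)=0$ since $\eps'=b^{-1}(2\eps)\to 0$ as $\eps\to 0$ by continuity of $b$ with $b(0)=0$. Beyond the AM--GM cross-split verification, everything is routine bookkeeping on the LP value.
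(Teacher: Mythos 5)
Your proof is correct and takes essentially the same route as the paper's (very terse) justification: you decompose the Step~1 loss into easy-node removal (bounded by $\epsilon$ via the $\epsilon$-VWHC hypothesis) and unimportant-edge removal for each hard $j$ (bounded via parts~(i) and~(ii) of Definition~\ref{def:hardtype}), and you verify feasibility of the cross-split constraints~\eqref{eqn:tightOnlineFlow} in Step~2 via the concavity of $z\mapsto 1-e^{-z}$. The paper states the latter only as "continues to hold by concavity"; your AM--GM formulation is a special instance of the same Jensen-type inequality (restricted to a two-point split with equal weights), and your explicit three-way breakdown $(a)/(b)/(c)$ of the Step~1 loss with the conversion of the $x$-mass bound $2\epsilon'\gamma_j$ into a relative-reward bound via $\textsf{LP-Gain}_j\geq (1-2\epsilon')\gamma_j r_j$ is simply the detail the paper leaves implicit.
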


\paragraph{Properties of the modified instance.} We  summarize below the properties we obtained by performing \hyperref[transf]{\textsf{Instance Transformation}}. For the analysis, we also assume that we ``split'' the solution of \eqref{TLPon} when performing \hyperref[step2]{Step 2} of the transformation, obtaining $x_{(i,1),j} = x_{(i,2), j}$ and $x_{(i,1),a} = x_{(i,2), a}$ for every $i \in I, j \in J$. It is straightforward to see that this new LP solution is feasible with the same objective value, using that $\lambda_{(i,1)} = \lambda_{(i,2)} = \lambda_i /2$; in particular, the non-linear Constraint~\eqref{eqn:tightOnlineFlow} continues to hold by concavity of $z \to 1-\exp(-z)$. For Property (v), which we call {\em binary queues}, we rely on the Poisson thinning property to assume that, for each offline type $i$, $\lambda_i$ is very small compared to $\mu_i$. We emphasize that this  property is exclusively for convenience of the analysis, and is not required algorithmically, in contrast to the Poisson thinning used in \hyperref[transf]{\textsf{Instance Transformation}}.

We henceforth abuse the notation by using our original parameters to refer to the modified instance, e.g., $J$ refers to the set of all online types remaining after running \hyperref[transf]{\textsf{Instance Transformation}}. 

    \begin{enumerate}[label=(\roman*)]

        \item {\it Vertex-weighted edges}: For every $j \in J$, there exists a non-negative $r_j$ such that every edge incident to $j$ has a reward within the interval $[r_j, (1+\eps)r_j]$. To simplify the notation, we assume in the sequel that all edges incident to $j$ have a reward exactly equal to $r_j$. Clearly, this assumption introduces an error factor of at most \(\epsilon\) into our analysis.
        \item {\it High proposal probability}: For every $i \in I, j \in J$, we have $1-\eps' \leq p_{i,j} \leq 1$. 
        \item {\it Saturation of online types}: For every $j \in J$, we have $1-\eps' \leq \sum_{i \in N_i} x_{i,j}/\gamma_j \leq 1$. Combining this property with Property (ii) gives for every $j$:
        \begin{align} \label{ineq:xiamubound}
            1 - \epsilon' \le \sum_{i \in N_j} \frac{x_{i,a}}{\mu_i} \le \frac{1}{1 - \epsilon'}  \ . 
        \end{align}
        \item {\it Balancedness}: Consider some $j \in J$. Let $N_j^\uparrow, N_j^\downarrow$ be, respectively, $j$'s neighbors in TOP and BOT. The splitting in our instance transformation guarantees that for every $j\in J$, the instance is {\em $j$-balanced}, meaning that $ \sum_{i\in N_j^\uparrow} \frac{\lambda_i}{\mu_i} = \frac{1}{2} \cdot \sum_{i\in N_j} \frac{\lambda_i}{\mu_i}$. Thus,    \begin{align}
\frac{1-\epsilon'}{2} \le \sum_{i \in N_j^\uparrow} \frac{x_{i,a}}{\mu_i} = \sum_{i \in N_j^\downarrow} \frac{x_{i,a}}{\mu_i}  \le \frac{0.5}{1-\epsilon'}  .                     \label{eqn:jBalancedness}
    \end{align}
    
        \item {\it Binary queues} \label{propertyv}: To simplify our exposition, we assume that the queue of available offline nodes of a fixed type waiting is \emph{binary}, i.e., has length 0 or 1. This property can be achieved by further splitting each offline type $i$ sufficiently many times---we refer the reader to \Cref{app:binary_queues} where we formally construct a reduction to binary queues incurring negligible loss, furthermore guaranteeing $\lambda_i / \mu_i \le \eps^2$ for all $i$. We preserve our definitions of TOP and BOT, and all other LP properties. Importantly, we make no change to our algorithm, and only use this property for purpose of analysis. Hereafter, we abuse our notation and use $I$ to denote the set of all offline types, after this splitting procedure.

         %We preserve the membership in TOP and BOT i.e. if $i \in \text{TOP}$, we split it into some vertices in TOP. Hence, the splits in this stage do not change the behavior of the algorithm and are only in the analysis. See \Cref{prf:instance_property} for a formal discussion. Going forward, we abuse the notation and the parameters refer to the instance after the splits in this stage, e.g., $n$ is increased $2K$ times compared to its value in the original instance. 

    \end{enumerate}

%    \Alicomment{To mention: The tightening constraints are non-linear in input parameters, so check that splitting $i$-s is fine (it is from concavity).}
    
        As a consequence of these properties, recalling the notation $\Gamma_i = \sum_{k \in N_i} \gamma_k$, Constraint~\eqref{eqn:tightOfflineFlow} of \eqref{TLPon} implies 
        \begin{align}
            (1-\eps') \cdot \frac{\gamma_j \lambda_i}{\mu_i + \Gamma_i} \leq x_{i,j} \leq \frac{1}{1-\eps'} \cdot \frac{\gamma_j \lambda_i}{\mu_i + \Gamma_i} && \forall i \in I, j \in N_i \label{ineq:match_bound}
        \end{align}
    and \begin{align}
            \frac{\mu_i\lambda_i}{\mu_i + \Gamma_i} \leq x_{i,a} \leq \frac{1}{1-\eps'} \cdot \frac{\mu_i\lambda_i}{\mu_i + \Gamma_i} \ . && \forall i \in I \label{ineq:abandonment_bound}
        \end{align}
        Refer to \Cref{app:approximate_lp} for a quick proof. 

\paragraph{Approximations up to $\mathbf{\epsilon}$.} To avoid carrying around cumbersome functions of $\epsilon$ and $\epsilon'$ in the proof, for convenience we write $\alpha \simeq_{\epsilon}  \beta$ to denote that there exists some absolute constant $C$, independent of $\epsilon$ (and hence $\epsilon'$), such that $(1 - C \cdot \max(\epsilon, \epsilon')) \cdot \beta \le \alpha \le (1 + C \cdot \max(\epsilon, \epsilon')) \cdot \beta$. Similarly, $\alpha \lesssim_{\eps} \beta$ means there exists some $C$ independent of $\epsilon$ with $\alpha \le (1 + C \cdot \max(\eps, \eps')) \cdot \beta.$

\subsection{Analysis via Weakly Correlated Markov Chains}

In the remainder of this section, we develop a tighter analysis of the expected average match rate for each offline type $j\in J$. To do so, we need to account for dependence between the queues of offline nodes; as previously mentioned,  the independent Markov chains cannot go beyond a $(1-1/e)$-approximation. Our crucial new tool is a Markov chain that better approximates the process induced by \Cref{alg:second} and is sufficiently tractable for analysis.

\begin{definition}[The weakly correlated Markov chains] The weakly correlated Markov chains, denoted by $Q^{\textup{\textsf{weak}}}(t) = (Q^{\textup{\textsf{weak}}}_i(t))_{i \in I}$ are defined as follows: 
\begin{enumerate}[label=(\roman*)]
\item For each $i \in \textup{TOP}$, the queue $Q_i^{\textup{\textsf{weak}}}(t)$ evolves according to an independent birth-death process, which at time $t$ increases by 1 at rate $\lambda_i$ and decreases by 1 at rate $Q_i^{\textup{\textsf{weak}}}(t) \cdot \left(\mu_i + \Gamma_i \right)$.\footnote{This is roughly the same as the evolution of the independent Markov chains, in which the death rate is $Q_i^{\textup{\textsf{ind}}}(t) \cdot (\mu_i + \sum_{j \in N_i} \gamma_j \cdot p_{i,j})$, because in our instance $p_{i,j} \ge 1-\epsilon$ for every $i \in N_j$.}

\item Define $\textup{\textsf{TE}}_j(t)\in \{0,1\}$ as $\textup{\textsf{TE}}_j(t) := 1 - \mathbbm{1}[\sum_{i\in  N_j^{\uparrow}} Q^{\textup{\textsf{weak}}}_i(t) > 0 ]$. For each $i \in \textup{BOT}$ we let ${Q}^{\textup{\textsf{weak}}}_i(t)$ denote the Markov-modulated queueing process where (i) offline nodes of type $i$ arrive with rate $\lambda_i$ and abandon with rate $\mu_i$, and (ii) the queue is depleted (when $Q^{\textup{\textsf{weak}}}_i(t) > 0$)  at rate $\sum_{j\in N_i} \textup{\textsf{TE}}_j(t) \cdot \gamma_j$.
\end{enumerate}
 We let $\pi^{\textup{\textsf{weak}}}$ denote the stationary distribution of $Q^{\textup{\textsf{weak}}}$.
\end{definition}

% \paragraph{Notation.} For convenience, we let $\Pr_{\textup{\textsf{weak}}}[\cdot ]$ and $\mathbb{E}_{\textup{\textsf{weak}}}[\cdot ]$ denote $\Pr_{Q^{\textup{\textsf{weak}}} \sim \pi^{\textup{\textsf{weak}}}}[ \cdot ]$ and $\mathbb{E}_{Q^{\textup{\textsf{weak}}} \sim \pi^{\textup{\textsf{weak}}}}[ \cdot ]$, respectively. \\

Note that when $\textup{\textsf{TE}}_j(t)=1$, an arriving type-$j$ will have its neighboring  \textup{\underline{T}OP} queues  \textup{\underline{e}mpty} (leading to our notation ``$\textsf{TE}$''). In such a scenario, under \Cref{alg:second} the arrival will search for a match in BOT. In contrast, when $\textsf{TE}_j(t) = 0$, i.e., $j$'s neighbors in TOP have non-empty queues, $j$ is guaranteed to match to someone in TOP. In our definition, the depletion rate for some offline node $i \in \text{BOT}$ under the weakly correlated Markov chains considers that arrivals of type $j \in N_i$ only contribute to matches with $i$ when $\textsf{TE}_j(t) = 1$.  It is thus straightforward to see that the weakly correlated Markov chains satisfy the following stochastic dominance relationship.

% Contrast this to the independent Markov chains, where depletion rate of $i \in \text{BOT}$ would be $\Gamma_i = \sum_{j \in N_i} \gamma_j$. 

% Neither of these processes describes the exact evolution of the system. The the independent Markov chains corresponds to a system where each offline type has a disjoint stream of online nodes of type $j$, arriving with rate $\gamma_j$ for every $j \in J$. In other words, we have $n$ identical copies of each online type $j$. The weakly correlated Markov chain, however, is closer to the actual system. It ``imagines'' duplicate online types for the depletion of TOP queues; for the depletion of a queue $i \in \textup{BOT}$, however, it only considers the real type $j$, assuming that $i$ is prioritized over every other queue in BOT.
\begin{restatable}{claim}{claimstochasticdominancesecond} \label{claim:stochasticdominancesecond}
If $Q(t) = (Q_i(t))_{i \in I}$ denotes the evolution of queues of offline types under \Cref{alg:second}, we have $Q\succeq_{st} Q^{\textup{\textsf{weak}}}$ where $\succeq_{st}$ denotes stochastic dominance. 
\end{restatable}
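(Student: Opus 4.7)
The plan is to construct a monotone coupling of $(Q, Q^{\textup{\textsf{weak}}})$ under which $Q_i(t) \ge Q^{\textup{\textsf{weak}}}_i(t)$ for every $i \in I$ and every $t \ge 0$ almost surely, started from a common initial state (for example, all queues empty); this immediately yields the claimed stochastic dominance. By Property~(v), both processes can be regarded as continuous-time Markov chains on $\{0,1\}^I$ whose only transitions toggle a single coordinate, corresponding to arrivals (increase at rate $\lambda_i$ from state $0$), abandonments (decrease at rate $\mu_i$ from state $1$), and matchings. To produce the coupling I will invoke the standard monotone-coupling criterion for such single-flip CTMCs: it suffices to check that for any pair of states $q \ge q^w$ componentwise and any coordinate $i$, the decrease rate of coordinate $i$ in $Q$ at $q$ is at most the decrease rate in $Q^{\textup{\textsf{weak}}}$ at $q^w$ whenever $q_i = q^w_i = 1$, and that the corresponding increase rates coincide whenever $q_i = q^w_i = 0$.

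The arrival rates ($\lambda_i$) and abandonment rates ($\mu_i$) match exactly, so only the matching contributions require attention. For a TOP coordinate $i$ with $q_i = q^w_i = 1$, the matching contribution in $Q^{\textup{\textsf{weak}}}$ equals $\Gamma_i = \sum_{j \in N_i} \gamma_j$ by the definition of the independent birth-death dynamics. In $Q$ under Balanced Greedy, node $i$ is matched only when some type $j \in N_i$ arrives and $i$ is selected among the available TOP neighbors of $j$; regardless of the tie-breaking rule, the resulting rate is bounded above by $\sum_{j \in N_i} \gamma_j = \Gamma_i$. Hence the TOP comparison holds unconditionally.

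The crux is the BOT case. For $i \in \text{BOT}$ with $q_i = q^w_i = 1$, the matching contribution in $Q^{\textup{\textsf{weak}}}$ at $q^w$ is $\sum_{j \in N_i} \gamma_j \cdot \textup{\textsf{TE}}_j(q^w)$, where $\textup{\textsf{TE}}_j(q^w) = \mathbbm{1}[\sum_{i' \in N_j^\uparrow} q^w_{i'} = 0]$. In $Q$ at state $q$, Balanced Greedy matches into BOT only when no TOP neighbor of the arriving type $j$ is available, so the corresponding rate is bounded by $\sum_{j \in N_i} \gamma_j \cdot \mathbbm{1}[\sum_{i' \in N_j^\uparrow} q_{i'} = 0]$ (the tie-breaking probability among available BOT neighbors of $j$ only contributes an additional factor of at most $1$). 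The key observation is that $q \ge q^w$ componentwise implies $\sum_{i' \in N_j^\uparrow} q_{i'} \ge \sum_{i' \in N_j^\uparrow} q^w_{i'}$, so $\mathbbm{1}[\sum_{i' \in N_j^\uparrow} q_{i'} = 0] \le \textup{\textsf{TE}}_j(q^w)$; the required BOT rate comparison follows.

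The only delicate point (rather than any genuine difficulty) is that $Q^{\textup{\textsf{weak}}}$'s BOT dynamics reference its own TOP queues through $\textup{\textsf{TE}}_j$, while $Q$'s dynamics reference $Q$'s TOP queues; the monotone coupling is precisely the device that lets the componentwise TOP dominance propagate into control of BOT transitions. With all rate comparisons verified, the standard coupling construction produces a joint process in which $Q(t) \ge Q^{\textup{\textsf{weak}}}(t)$ almost surely at every $t$, which establishes $Q \succeq_{st} Q^{\textup{\textsf{weak}}}$.
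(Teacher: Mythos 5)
Your proof is correct and takes essentially the same approach as the paper: both arguments amount to verifying a monotone-coupling/stochastic-ordering criterion for the CTMCs via coordinate-wise rate comparisons, with the crucial step in both cases being the observation that $q \ge q^w$ implies $\mathbbm{1}[\sum_{i' \in N_j^\uparrow} q_{i'} = 0] \le \mathbbm{1}[\sum_{i' \in N_j^\uparrow} q^w_{i'} = 0]$, so that depletion of BOT queues is no faster in $Q$ than in $Q^{\textsf{weak}}$. The paper phrases the criterion abstractly through the Kamae--Krengel--O'Brien characterization in terms of upwards-closed subsets (Lemma~\ref{stochasticdominanceviamonotone}) and works on $\mathbb{Z}_{\ge 0}^I$ without explicitly restricting to binary queues, while you invoke the equivalent concrete pathwise-coupling version on $\{0,1\}^I$ (justified by Property~(v)); these are interchangeable formulations of the same technique, so the difference is presentational rather than substantive.
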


A formal treatment of the stochastic dominance and proof is deferred to \Cref{proofofstochasticdominancesecond}. We note it is also easy to see that $Q^{\textsf{weak}}$ stochastically dominates the independent Markov chains for \Cref{alg:second}; hence the weakly correlated Markov chains give a middle ground between independent queues and the full (potentially complex) dynamics introduced by \Cref{alg:second}. As an immediate result of this stochastic dominance, it suffices to show that under the weakly correlated Markov chains, the likelihood that an arriving node of type $j$ sees only empty neighboring queues is strictly smaller than $1/e$. 

\begin{lemma} \label{prop:proof main}
 There exists a universal constant $\zeta > 0$, independent of $\eps$, such that for every $j\in [m]$ we have $$\prtwo{Q \sim \pi^\textup{\textsf{weak}}}{\sum_{i\in N_j} Q_i=0 }\lesssim_{\eps} 1/e - \zeta \  .$$
\end{lemma}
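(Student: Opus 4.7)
The plan is to decompose the target event as $A \cap B$, where $A := \{Q_i^{\textup{\textsf{weak}}}(t) = 0 \text{ for all } i \in N_j^\uparrow\}$ and $B := \{Q_i^{\textup{\textsf{weak}}}(t) = 0 \text{ for all } i \in N_j^\downarrow\}$, and to bound $\Pr[A]$ and $\Pr[B\mid A]$ separately so that their product lies strictly below $1/e$. Since the TOP chains are mutually independent birth-death processes by construction, $\Pr[A] = \prod_{i \in N_j^\uparrow}\Pr[Q_i^{\textup{\textsf{weak}}}(t) = 0]$, with each factor equal to $\exp(-\lambda_i/(\mu_i+\Gamma_i))$ up to $O(\epsilon^2)$ from the binary-queue reduction. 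Using $\lambda_i/(\mu_i+\Gamma_i) \simeq_\epsilon x_{i,a}/\mu_i$ from \eqref{ineq:abandonment_bound} and the $j$-balancedness identity \eqref{eqn:jBalancedness}, I would obtain $\sum_{i \in N_j^\uparrow}\lambda_i/(\mu_i+\Gamma_i) \simeq_\epsilon 1/2$, and hence $\Pr[A] \simeq_\epsilon e^{-1/2}$.

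For $\Pr[B \mid A]$, the next step is to further condition on the entire TOP trajectory $\mathcal{T} := \{Q_i^{\textup{\textsf{weak}}}(s) : i \in \textup{TOP},\, s \le t\}$, which deterministically fixes every indicator $\textup{\textsf{TE}}_k(s)$. Given $\mathcal{T}$, the bottom queues $\{Q_i^{\textup{\textsf{weak}}}\}_{i \in N_j^\downarrow}$ are mutually independent: they interact only through $\mathcal{T}$ and their arrivals/abandonments are drawn from independent Poisson streams. Hence
\[
\Pr[B \mid A] = \mathbb{E}\!\left[\prod_{i \in N_j^\downarrow}\Pr[Q_i^{\textup{\textsf{weak}}}(t) = 0 \mid \mathcal{T}] \,\middle|\, A\right].
\]
The Hardy-Littlewood inequality would then be used to absorb the residual correlation between the factors (induced by the common randomness in $\mathcal{T}$), yielding $\Pr[B \mid A] \lesssim_\epsilon \prod_{i \in N_j^\downarrow}\Pr[Q_i^{\textup{\textsf{weak}}}(t) = 0 \mid A]$.

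The core technical step is to bound each individual $\Pr[Q_i^{\textup{\textsf{weak}}}(t) = 0 \mid A]$ for $i \in N_j^\downarrow$. The binary-queue property $\lambda_i/\mu_i \le \epsilon^2$ places $Q_i^{\textup{\textsf{weak}}}$ in a quasi-stationary regime, and renewal theory applied to its backward recurrence gives, up to $O(\epsilon^2)$, $\Pr[Q_i^{\textup{\textsf{weak}}}(t)=0\mid\mathcal{T}] \approx 1 - \lambda_i\int_{-\infty}^t\exp\!\bigl(-\int_s^t D_i(u)\,du\bigr)\,ds$, where $D_i(s) := \mu_i + \sum_{k\in N_i}\textup{\textsf{TE}}_k(s)\gamma_k$. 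Unconditionally, $\mathbb{E}[\textup{\textsf{TE}}_k] = \prod_{i'\in N_k^\uparrow}\exp(-\lambda_{i'}/(\mu_{i'}+\Gamma_{i'})) \simeq_\epsilon e^{-1/2}$ by $k$-balancedness, so the effective depletion rate is $\mu_i + e^{-1/2}\Gamma_i$ and the baseline estimate becomes $\Pr[Q_i^{\textup{\textsf{weak}}}(t) = 0] \lesssim_\epsilon \exp(-\lambda_i/(\mu_i + e^{-1/2}\Gamma_i))$. To pass to the conditional version, I would exploit the exponential mixing (at rate $\mu_{i'}+\Gamma_{i'}$) of the independent TOP chains: the single-time event $A$ only biases the law of $\mathcal{T}$ within an $O(1/(\mu_{i'}+\Gamma_{i'}))$-window near $t$ and only for $i' \in N_j^\uparrow$, so a careful accounting---handling the contribution of $k = j$ separately via the smallness of $\gamma_j/\Gamma_i$---would give $\mathbb{E}[D_i \mid A] \simeq_\epsilon \mathbb{E}[D_i]$ and hence the analogous bound on $\Pr[Q_i^{\textup{\textsf{weak}}}(t) = 0 \mid A]$. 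This renewal/mixing-time step, especially for online types $k \neq j$ whose TOP neighborhoods overlap with $N_j^\uparrow$, is the principal obstacle.

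Assembling the pieces yields $\Pr[A \cap B] \lesssim_\epsilon e^{-1/2}\cdot\exp\bigl(-\sum_{i \in N_j^\downarrow}\lambda_i/(\mu_i+e^{-1/2}\Gamma_i)\bigr)$. To extract the universal $\zeta > 0$ against $1/e$, I would apply the tightened constraint \eqref{eqn:tightOnlineFlow} with $H = N_j$: combined with near-saturation ($\sum_{i \in N_j}x_{i,j} \simeq_\epsilon \gamma_j$), it forces $\sum_{i \in N_j}\lambda_i/\mu_i = \Omega(\log(1/\epsilon'))$, whereas \eqref{ineq:xiamubound} gives $\sum_{i \in N_j}\lambda_i/(\mu_i+\Gamma_i) \simeq_\epsilon 1$. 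Thus the $\lambda_i/(\mu_i+\Gamma_i)$-weighted average of $\Gamma_i/\mu_i$ is at least $\Omega(\log(1/\epsilon'))$, and a Markov-type argument puts a constant fraction of this weight on indices with $\Gamma_i \gg \mu_i$; on such indices $\Gamma_i/(\mu_i+e^{-1/2}\Gamma_i) \to e^{1/2}$, so $\sum_{i \in N_j^\downarrow}\lambda_i/(\mu_i+e^{-1/2}\Gamma_i)$ exceeds the baseline $\sum_{i \in N_j^\downarrow}\lambda_i/(\mu_i+\Gamma_i) \simeq_\epsilon 1/2$ by an absolute constant, producing the claimed universal gap $\zeta > 0$.
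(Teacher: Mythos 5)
Your high-level decomposition (bound $\Pr[A]\simeq_\epsilon e^{-1/2}$ for the TOP-empty event, then bound the conditional BOT-empty probability) matches the paper's, as does the identification that the key difficulty is the bias induced on $\textsf{TE}_k(\cdot)$ by conditioning on $A$ when $N_k^\uparrow$ overlaps $N_j^\uparrow$. But there are two genuine gaps. First, the Hardy--Littlewood step is stated in the wrong direction: the inequality gives $\mathbb{E}[\prod_i X_i]\le\int_0^1\prod_i F_{X_i}^{-1}(u)\,du$, i.e.\ the comonotone coupling maximizes the expectation of a product of fixed marginals, and this bound is in general \emph{larger} than $\prod_i\mathbb{E}[X_i]$, not smaller. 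For positively associated factors (which is what the common randomness in $\mathcal{T}$ produces here) one has $\mathbb{E}[\prod_i X_i]\ge\prod_i\mathbb{E}[X_i]$, so the inequality you want, $\Pr[B\mid A]\lesssim_\epsilon\prod_i\Pr[Q_i^{\textsf{weak}}=0\mid A]$, simply does not follow. The paper never claims it; instead it carries along the full quantile integral from \Cref{fact:rearrangement} and chooses the dominating marginal to be a specific \emph{two-atom} distribution (\eqref{berstochdomdistribution}), which is what makes the comonotone integral split into the tractable terms $(\dagger)+(\star)$. Replacing $\sum_k\gamma_k\theta_{i,k}$ by its conditional mean, as you propose, gives no control over the marginal CDF and so cannot be fed into this machinery.

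Second, and more fundamentally, the step you flag as ``the principal obstacle''---showing $\mathbb{E}[D_i\mid A]\simeq_\epsilon\mathbb{E}[D_i]$ because exponential mixing of TOP chains ``washes out'' the conditioning within an $O(1/(\mu_{i'}+\Gamma_{i'}))$ window---is not just unproven, it is false as stated. With binary queues $t_i\sim\mathrm{Exp}(\mu_i+\lambda_i)$ has scale $\approx 1/\mu_i$, and nothing in the instance forces $\mu_i\ll\mu_{i'}+\Gamma_{i'}$ for the TOP neighbors $i'$ of $k\in N_i$; when the mixing time of the relevant TOP chains is comparable to $t_i$, conditioning on $\textsf{TE}_j(t^\infty)=1$ biases $\textsf{TE}_k(s)$ toward $1$ over a constant fraction of $[t^\infty-t_i,t^\infty]$, so $\theta_{i,k}/t_i$ need not concentrate anywhere near $e^{-1/2}$. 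This is exactly why the paper does \emph{not} try to prove an $e^{-1/2}$ effective depletion rate: its \Cref{lem:main} establishes only that with a universal constant probability $c$ there is a constant-factor gap $\theta_{i,k}\le(1-c)t_i$, provided $t_i\ge\tau_i^*=(\delta\Gamma_i)^{-1}$, and proving even this weaker statement requires the split into $J^{\rm correl}$ and $J^{\rm indep}$, the construction of the highly-connected core $I^{\rm core}\subseteq N_j^\uparrow$, and renewal-process concentration over empty/nonempty cycles. The restriction $t_i\ge\tau_i^*$ is then made effective via \Cref{clm:important_i}, which shows a constant LP mass sits on types with $\Gamma_i\ge\eta\mu_i$; your $\Omega(\log(1/\epsilon'))$ observation from \eqref{eqn:tightOnlineFlow} is in the same spirit but plays a different role. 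In short, your proposal skips precisely the technical core of the proof.
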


Recall our previously defined notation $\lesssim_{\eps}$, denoting that the inequality is true up to a $1+C \cdot \eps'$ factor for some universal constant $C$. With Lemma~\ref{prop:proof main} in place, the proof of \Cref{prop:main} immediate. Indeed, stochastic dominance implies that the expected gain of \Cref{alg:second} is no lower when replacing the true stationary distribution with that of the weakly correlated Markov chains (as in \Cref{app:factstochasticdom}). Then, by the PASTA property of Poisson arrivals (\Cref{lem:pasta}) and the greedy nature of \Cref{alg:second}, we have that
    the stationary gain of \Cref{alg:second} is lower bounded by
    \begin{align*}
    \sum_j \gamma_j \cdot \prtwo{Q \sim \pi^\textup{\textsf{weak}}} { \sum_{i\in N_j} Q_i \ge 1 } &\ge (1 - (1/e - \zeta) \cdot (1 + C \cdot \epsilon')) \cdot \sum_j \gamma_j \\
    &\ge (1 - 1/e + \zeta) \cdot (1 - C \cdot \epsilon') \cdot \sum_j \gamma_j && \\
    &\ge (1 - 1/e + \zeta) \cdot (1 - C \cdot \epsilon') \cdot (1 - \tilde{f}(\eps)) \cdot \text{OPT}\text{\eqref{TLPon}} \ ,
    \end{align*}
where the final line used \Cref{obs:smallLPloss}. Hence the remainder of this section is devoted to the proof of \Cref{prop:proof main}. Accordingly, fix an online type $j$. It is easy to bound the stationary probability that $\textup{\textsf{TE}}_j  = 1$ (i.e., that $j$ has no neighbor available in TOP) because in the weakly correlated Markov chains, the queues in TOP evolve as independent birth-death processes, as in our analysis in \Cref{sec:oneminusoneovere}. For convenience of notation we let $t^\infty$ denote some time when $Q^{\textsf{weak}}$ is in steady-state; whenever a time is not specified we will assume it to be at  $t^\infty$. For example $Q^{\textsf{weak}}_i, \textsf{TB}_j$ refer to $Q^{\textsf{weak}}_i(t^{\infty}), \textsf{TB}_j(t^{\infty})$, respectively. As $\textsf{TE}_j$ is a binary variable, we will also slightly abuse notation throughout by letting it also refer to the event that $\textsf{TE}_j = 1$, as below.

\begin{restatable}{claim}{claimPrTEj} \label{claim:PrTEj}
    We have $ \pr{ \textup{\textsf{TE}}_j  } \simeq_{\epsilon} e^{-0.5}.$
\end{restatable}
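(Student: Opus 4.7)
The plan is to exploit the fact that, by construction, the top queues $\{Q_i^{\textup{\textsf{weak}}}\}_{i \in \textup{TOP}}$ evolve as mutually independent birth-death processes in the weakly correlated Markov chains. First I would invoke the standard stationary distribution of such a birth-death process: for each $i \in \textup{TOP}$, the queue at $t^\infty$ is distributed as $\textup{Pois}(\lambda_i/(\mu_i + \Gamma_i))$ (see Corollary~\ref{claim:stationarydistiMcPoisson} with the depletion rate simplified to $\mu_i + \Gamma_i$, which is valid because the top queues here are not modulated).

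Using independence across $i \in N_j^\uparrow$, I would then write
\[
\Pr[\textup{\textsf{TE}}_j] \;=\; \prod_{i \in N_j^\uparrow} \Pr\bigl[Q_i^{\textup{\textsf{weak}}} = 0\bigr] \;=\; \exp\!\left(-\sum_{i \in N_j^\uparrow} \frac{\lambda_i}{\mu_i + \Gamma_i}\right).
\]
Thus the claim reduces to showing $\sum_{i \in N_j^\uparrow} \lambda_i/(\mu_i + \Gamma_i) \simeq_\epsilon 0.5$.

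The next step is to translate between the Markov-chain quantities $\lambda_i/(\mu_i + \Gamma_i)$ and the LP quantities $x_{i,a}/\mu_i$, for which I would directly use the approximate LP inequality~\eqref{ineq:abandonment_bound}, which states $x_{i,a} \simeq_\eps \mu_i \lambda_i / (\mu_i + \Gamma_i)$, so that $x_{i,a}/\mu_i \simeq_\eps \lambda_i/(\mu_i + \Gamma_i)$. Summing over $i \in N_j^\uparrow$ and applying the $j$-balancedness inequality~\eqref{eqn:jBalancedness}, which guarantees $\sum_{i \in N_j^\uparrow} x_{i,a}/\mu_i \simeq_\eps 1/2$, I obtain
\[
\sum_{i \in N_j^\uparrow} \frac{\lambda_i}{\mu_i + \Gamma_i} \;\simeq_\eps\; \tfrac{1}{2}.
\]
Substituting back and using continuity of $z \mapsto e^{-z}$ yields $\Pr[\textup{\textsf{TE}}_j] \simeq_\epsilon e^{-0.5}$.

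There is no serious obstacle here: the entire argument is a direct computation once one notes that the top queues are independent birth-death processes with the cleaner depletion rate $\mu_i + \Gamma_i$ (as opposed to the independent Markov chains of Section~\ref{sec:oneminusoneovere}, whose depletion rate involved the probabilities $p_{i,j}$). The only mild point to verify is that the $\simeq_\eps$ approximation is preserved when exponentiating; since the exponent is bounded in a neighborhood of $1/2$, this follows from standard Lipschitz continuity of $e^{-z}$ on bounded intervals, absorbing the constants into the universal constant hidden in $\simeq_\epsilon$.
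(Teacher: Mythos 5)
Your proof is correct and takes essentially the same route as the paper: both use that the TOP queues in the weakly correlated chains are independent birth-death processes with Poisson stationary distributions, factor $\Pr[\textup{\textsf{TE}}_j]$ over $N_j^\uparrow$, and then apply inequalities~\eqref{ineq:abandonment_bound} and~\eqref{eqn:jBalancedness} to pin the exponent to $\simeq_\eps 1/2$. Your remark about preserving $\simeq_\eps$ under exponentiation via Lipschitz continuity on a bounded interval is a correct gloss of what the paper leaves implicit.
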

\begin{proof}
    Indeed, using the independence between queues in TOP in the weakly correlated Markov chains, and \text{\Cref{claim:stationarydistbirthdeath}}, we can compute 
    $$ 
        \pr{ \textup{\textsf{TE}}_j  } = \prod_{i \in N_j^\uparrow} \pr{ Q_i = 0  } 
        = \prod_{i \in N_j^\uparrow} \exp \left( - \frac{\lambda_i}{\mu_i + \Gamma_i} \right). $$
    From inequality \text{\eqref{ineq:abandonment_bound}} we know $\frac{\lambda_i}{\mu_i + \Gamma_i} \simeq_{\eps} \frac{x_{i,a}}{\mu_i}$ and from inequality \text{\eqref{eqn:jBalancedness}} we know $\sum_{i \in N_j^{\uparrow}} \frac{x_{i,a}}{\mu_i} \simeq_{\eps} 0.5$; these together imply the claim. 
\end{proof}

 The remaining challenge is to bound the probability that all of $j$'s neighboring queues in BOT are empty, conditioned on those in TOP being empty. To this end, we define the following random variables. 

\begin{enumerate}[label=(\roman*)]

\item {\em Presence indicator:} For each $i \in I$, indicate by $A_i(t)\in \{0,1\}$ whether an offline node of type $i$ is present, i.e., has arrived and not departed yet due to its wait time expiring. Note that $A_i(t) \geq Q_i^{\textsf{weak}}(t)$, with strict inequality if a type $i$ node is present but it has been matched.
\end{enumerate}

As $A_i$ is binary, we will again abuse notation and let it additionally refer to the event that $A_i = 1$ when writing ``$\Pr[A_i]$''. In the weakly correlated Markov chains, the arrivals and wait times of each  type $i\in \bott$ are independent of $\{Q_i^{\textup{\textsf{weak}}}: i\in \topp\}$ and the arrivals of online nodes. By \hyperref[propertyv]{Property (v)} of our instance (binary queues) the probability that $A_i$ is 1 is equal to
\begin{align}
\Pr \left[ A_i   \mid \textup{\textsf{TE}}_j   \right] = \Pr[  A_i ]= \frac{\lambda_i}{\mu_i+\lambda_i}. \label{probAi1}
\end{align} We also observe that these indicators are independent across queues. Hence, for every $I' \subseteq I$, 
\begin{align}
    \Pr \left[ \bigwedge_{i \in I'} A_i \; \Big| \; \textup{\textsf{TE}}_j  \right] = \prod_{i \in I'} \frac{\lambda_i}{\mu_i+\lambda_i}. \label{probAi2}
\end{align}
Here, we slightly abuse notation as $A_i$ and $\textsf{TE}_j$ stand for the events $\{A_i =1\}$ and $\{\textsf{TE}_j = 1\}$. We use a similar convention for binary random variables throughout the proof.

Next we define notation for the time spent since the last type $i$ arrival.
\begin{enumerate}[label=(\roman*)]
\setcounter{enumi}{1}
\item {\em Time since arrival:} Let $t_i$ be the random time that has elapsed since the latest arrival of type $i$, going backwards in time from $t^\infty$.
\end{enumerate}

 Since the process $\{A_i(t)\}_t$ is time-reversible we have $t_i \sim \text{Exp}(\lambda_i)$. Note also that when conditioning on the event $A_i$ (i.e., the event $A_i(t^{\infty}) = 1$) the distribution of $t_i$ is given by $\text{Exp}(\lambda_i + \mu_i)$. Our final piece of notation captures extends the notion of time since arrival to incorporate when a neighboring online type sees empty neighboring queues in TOP. 

 \begin{enumerate}[label=(\roman*)]
\setcounter{enumi}{2}

\item {\em Time since $i$'s arrival with $k \in N_i$ seeing empty queues in TOP:} For each $i \in I$ and $k \in N_i$, define the random variable $\theta_{i, k} = \int_{t^\infty-t_i}^{t^\infty} \textup{\textsf{TE}}_k(t) \, dt$.  
\end{enumerate} 

With this notation in place, we write the following expression for the conditional probability that all of the bottom queues neighboring $j$ are empty. 

\begin{restatable}{lemma}{jointproballqueuesempty} \label{lem:jointproballqueuesempty}
    We have
    $$ \Pr \left[ \sum_{i \in N_j^{\downarrow}} Q_i^{\textup{\textsf{weak}}} = 0  \, \middle | \,\textup{\textsf{TE}}_j \right] = \mathbb{E}  \left[ \prod_{i \in N_j^{\downarrow}} \left( 1 - \frac{\lambda_i}{\mu_i + \lambda_i} \cdot \exp \left( - \sum_{k \in N_i} \gamma_k \cdot \theta_{i,k} \right)\right) \middle | \bigwedge_{i \in N_j^{\downarrow}} A_i, \textup{\textsf{TE}}_j \right].$$
\end{restatable}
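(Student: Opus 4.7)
My plan is to condition on the full top-side trajectories $\{\textup{\textsf{TE}}_k(\cdot)\}_{k\in J}$ and then exploit the independence structure baked into the definition of the weakly correlated Markov chains. Once these trajectories are fixed, the bottom queues $\{Q_i^{\textup{\textsf{weak}}}:i\in\textup{BOT}\}$ become mutually independent Markov-modulated birth--death processes, and the presence indicators $\{A_i: i \in N_j^\downarrow\}$ are jointly independent of everything on the top side (since bottom-type arrivals and abandonments are independent of the top-queue dynamics that determine the $\textup{\textsf{TE}}_k$'s, and in particular of $\textup{\textsf{TE}}_j$).

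For each $i \in N_j^\downarrow$ I would compute the single-queue probability by decomposing $\{Q_i^{\textup{\textsf{weak}}}=1\} = \{A_i=1\}\cap\{\textrm{the present type-}i\textrm{ node has not been matched}\}$, which is valid under the binary-queue property. Given $A_i=1$ and $t_i$, the type-$i$ node is continuously present on $[t^\infty-t_i,t^\infty]$, and conditional on the $\textup{\textsf{TE}}_k(\cdot)$'s the probability it avoids every potential match is exactly $\exp(-\sum_{k\in N_i}\gamma_k\,\theta_{i,k})$, since arrivals of type $k$ that can match to $i$ occur at rate $\gamma_k \cdot \textup{\textsf{TE}}_k(t)$. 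Using $\Pr[A_i]=\lambda_i/(\lambda_i+\mu_i)$ from \eqref{probAi1} and the cited time-reversal identity $t_i\mid A_i\sim\textup{Exp}(\lambda_i+\mu_i)$ (which is independent of $\{\textup{\textsf{TE}}_k(\cdot)\}$), a single-factor rewriting yields
\[
\Pr\!\bigl[Q_i^{\textup{\textsf{weak}}}=0\,\big|\,\{\textup{\textsf{TE}}_k(\cdot)\}\bigr]
= \mathbb{E}\!\left[1 - \tfrac{\lambda_i}{\lambda_i+\mu_i}\exp\!\Bigl(-\textstyle\sum_{k\in N_i}\gamma_k\,\theta_{i,k}\Bigr)\,\Big|\,A_i,\,\{\textup{\textsf{TE}}_k(\cdot)\}\right].
\]

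To finish, I would multiply the previous identity over $i\in N_j^\downarrow$ using the conditional independence of the bottom queues, and then collapse the resulting product of conditional expectations into a single expectation via the mutual independence of the $t_i$'s given $\bigwedge_i A_i$. Finally, I would take expectation over $\{\textup{\textsf{TE}}_k(\cdot)\}$ conditional on $\textup{\textsf{TE}}_j$; appending $\bigwedge_i A_i$ to the conditioning is legitimate precisely because $\{A_i\}_{i\in N_j^\downarrow}$ are jointly independent of $\{\textup{\textsf{TE}}_k(\cdot)\}$, so it does not alter the distribution of the trajectories, and the tower rule then delivers the stated right-hand side. The main subtlety (and the step I would be most careful about) is bookkeeping on the independences: that the weakly correlated chain decouples bottom queues purely through deterministic top-side functionals $\textup{\textsf{TE}}_k(\cdot)$, that this same functional structure renders the bottom-side $A_i$'s independent of the $\textup{\textsf{TE}}_k$'s, and that the time-reversal identity really is unaffected by conditioning on $\textup{\textsf{TE}}_j$. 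Once these are set up cleanly, the rest is a routine Bayes/tower-rule manipulation.
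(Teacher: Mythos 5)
Your proof is correct, but it takes a genuinely different route from the paper's. You condition on the entire top-side trajectory $\{\textup{\textsf{TE}}_k(\cdot)\}_{k\in J}$, which renders the bottom queues conditionally independent Markov-modulated processes; you then compute each single-queue emptiness probability, exploit the conditional independence to multiply across $i\in N_j^\downarrow$, collapse the product of conditional expectations back into one expectation using the (conditional) independence of the $t_i$'s, and finish with the tower rule together with the observation that the $A_i$'s are independent of the trajectories. The paper instead writes $\Pr[\sum_i Q_i^{\textup{\textsf{weak}}}=0\mid\textup{\textsf{TE}}_j]=\mathbb{E}[\prod_{i}(1-A_i\cdot\textup{\textsf{NA}}_i)\mid\textup{\textsf{TE}}_j]$, expands this product by inclusion--exclusion over subsets $I'\subseteq N_j^\downarrow$, pulls out the factors $\Pr[A_i]=\lambda_i/(\mu_i+\lambda_i)$, replaces each $\textup{\textsf{NA}}_i$ by $\exp(-\sum_{k\in N_i}\gamma_k\theta_{i,k})$, and then recombines the sum into a single conditional expectation after silently enlarging the conditioning from $\bigwedge_{i\in I'}A_i$ to $\bigwedge_{i\in N_j^\downarrow}A_i$ (justified by the same independence of the extra $A_i$'s from the $\theta$-terms that you invoke). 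The two arguments exploit identical independence facts; yours is more transparent probabilistically and makes the key conditional-independence structure explicit, while the paper's inclusion--exclusion version avoids having to reason about conditioning on full continuous-time trajectories. The one place you should be slightly more careful is the step where you collapse the product of per-$i$ conditional expectations into a single expectation: the mutual independence of the $t_i$'s must be asserted conditional on both $\bigwedge_i A_i$ and the trajectories $\{\textup{\textsf{TE}}_k(\cdot)\}$, which does hold since bottom-side arrival clocks and wait times are mutually independent and independent of everything on the top side, but you write only ``given $\bigwedge_i A_i$.''
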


The full proof is deferred to \Cref{app:approximation}. Intuitively, the lemma follows from the observation that $Q_i^{\textup{\textsf{weak}}}$ will be non-empty at $t^{\infty}$ only if $A_i = 1$ (i.e., a node of type $i$ is present) and if since the latest arrival of $i$, queue $i$ was not depleted in the weakly correlated Markov chains. As mentioned above, the $(A_i)_i$ indicators are independent of $\textup{\textsf{TE}}_j$ and of each other, with $\Pr[A_i] = \lambda_i \cdot (\mu_i + \lambda_i)^{-1}$. The total depletion rate of $Q_i^{\textup{\textsf{weak}}}$ in the interval $[t^{\infty} - t_i, t^{\infty}]$ is $\int_{t^{\infty}- t_i}^{t^{\infty}} \sum_{k \in N_i}\gamma_k\cdot \textup{\textsf{TE}}_k(t) dt = \sum_{k \in N_i} \gamma_k \cdot \theta_{i,k}$, so the conditional probability that $Q_i^{\textup{\textsf{weak}}}$ avoids being depleted if present at $t^{\infty}$ is $\mathbb{E}   \left[ \exp \left( - \sum_{k \in N_i} \gamma_k \cdot \theta_{i,k} \right) \mid \textup{\textsf{TE}}_j, A_i \right]$. The precise form of \Cref{lem:jointproballqueuesempty} takes into account that $\theta_{i,k}$'s may be correlated across different $i$'s.

To get some intuition about the expression in the right-hand side of \Cref{lem:jointproballqueuesempty}, we note that a simple upper bound is obtained by using the fact that $\theta_{i,k} \le t_i$. Indeed, when we replace $\theta_{i,j}$ by $t_i$, the right-hand side in Lemma 3.11  is straightforward to calculate, because $t_i$ is independent of $\textup{\textsf{TE}}_j$ and $(A_{i'})_{i' \neq i}$. Additionally, when conditioning on $A_i = 1$, we have $t_i \sim \text{Exp}(\lambda_i + \mu_i)$. Therefore, the moment generating function of the exponential distribution gives
\begin{align}
    \mathop{\mathbb{E}}_{t_i \sim \text{Exp}(\lambda_i + \mu_i)} \left[ \frac{\lambda_i}{\mu_i + \lambda_i} \cdot   \exp\left(- t_i\cdot \left(\sum_k \gamma_k\right)   \right) \middle | \textup{\textsf{TE}}_j, A_i \right]
    &= \frac{\lambda_i}{\mu_i + \lambda_i} \cdot \frac{\lambda_i + \mu_i}{\lambda_i + \mu_i + \Gamma_i} \label{MGFequality}  \\
    &\ge \frac{\lambda_i}{\epsilon ^2 \mu_i + \mu_i+\Gamma_i} \nonumber && \text{by \hyperref[propertyv]{Property (v)}} \\
    &\gtrsim_{\eps}  \frac{x_{i,a}}{\mu_i} \ ,\label{MGFcalculation}
\end{align}
where in the final line we used inequality \eqref{ineq:abandonment_bound}. 
By \Cref{lem:jointproballqueuesempty} $$\pr{   \bigwedge_{i \in N_j^{\downarrow}} Q_i^{\textup{\textsf{weak}}}  = 0 \, \middle | \, \textup{\textsf{TE}}_j } \le \prod_{i \in N_j^{\downarrow}} \left( 1 - \frac{x_{i,a}}{\mu_i} \cdot (1 - O(\eps)) \right) \lesssim_{\eps} \exp \left( - \sum_{i \in N_j^{\downarrow}} \frac{x_{i,a}}{\mu_i} \right) \lesssim_{\eps}  \exp(-0.5) \ ,$$ where the final bound follows from inequality \eqref{eqn:jBalancedness}. Combining this inequality with \Cref{claim:PrTEj}, we recover our $1-1/e$ bound. To obtain a constant-factor boost, we argue that $\theta_{i,k}$ has a constant probability of being significantly smaller than $t_i$. Formally, we show this is the case when $t_i$ takes values in the tail its distribution, in particular values at least $\Omega(\Gamma_i^{-1})$.
\begin{lemma}\label{lem:main}
 There exist universal constants $c, \delta \in (0,1)$, independent of $\epsilon$, such that for each $i \in N_j^\downarrow$, we have
\[ \pr{\left.\sum_{k\in  N_i}\gamma_k\theta_{i,k}\leq (1-c) \cdot t_i \cdot \Gamma_i  \right|  t_i, \textup{\textsf{TE}}_j, \bigwedge_{i' \in N_j^{\downarrow}} A_{i'}} \geq {c} \cdot \mathbbm{1}[t_i \geq \tau_i^*] \ ,
\]
where $\tau_i^* =  ( \delta \cdot \Gamma_i )^{-1}$.\footnote{Recall that $\textup{\textsf{TE}}_j$ and $A_{i'}$ denote events, while $t_i$ is a random variable.}
\end{lemma}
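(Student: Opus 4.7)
The plan is a reverse Markov argument applied to $W := (t_i\Gamma_i)^{-1}\sum_{k\in N_i}\gamma_k\theta_{i,k}\in[0,1]$, combined with time reversibility and an explicit birth-death calculation to handle the bias introduced by conditioning on $\textsf{TE}_j$. In the weakly correlated Markov chains, the top queues $\{Q_{i'}^{\textsf{weak}}\}_{i'\in\topp}$ evolve independently of all bottom-type processes; both $W$ and $\textsf{TE}_j$ are top-measurable, while $t_i$ and each $A_{i'}$ are bottom-measurable. I can therefore drop the conditioning on $t_i$ and $\bigwedge_{i'\in N_j^\downarrow}A_{i'}$, reducing the claim to $\ex{W\mid\textsf{TE}_j,\,t_i}\le 1-2c$ whenever $t_i\ge\tau_i^*$, from which reverse Markov yields $\pr{W\le 1-c\mid\cdot}\ge c$.

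To bound the conditional mean, I swap expectation with the integral $\theta_{i,k}=\int_0^{t_i}\textsf{TE}_k(t^\infty-s)\,ds$ and invoke time reversibility: the conditional probability $\pr{\textsf{TE}_k(t^\infty-s)=1\mid\textsf{TE}_j(t^\infty)=1}$ equals the forward-chain probability, starting from the configuration where queues in $N_j^\uparrow$ are empty and queues outside $N_j^\uparrow$ are stationary, that all queues in $N_k^\uparrow$ are at zero at time $s$. Independence across top queues factorizes this; the stationary factors multiply out to $\prod_{i'\in N_k^\uparrow}\pi_{i'}(0)\simeq_{\eps} e^{-0.5}$ by the $k$-balancedness (analogue of \eqref{eqn:jBalancedness}), while each ``bias'' factor $p_{i'}(s)$ for $i'\in N_k^\uparrow\cap N_j^\uparrow$ is the M/M/$\infty$-like return-from-zero probability $\exp(-\rho_{i'}(1-e^{-(\mu_{i'}+\Gamma_{i'})s}))$, with $\rho_{i'}:=\lambda_{i'}/(\mu_{i'}+\Gamma_{i'})$, decaying exponentially to $\pi_{i'}(0)=e^{-\rho_{i'}}$ at rate $\mu_{i'}+\Gamma_{i'}$.

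Using the telescoping inequality $\prod a_{i'}-\prod b_{i'}\le\sum(a_{i'}-b_{i'})$ for $1\ge a_{i'}\ge b_{i'}\ge 0$, I bound the bias of the conditional mean relative to the unconditional counterpart by $\int_0^{t_i}\sum_{k\in N_i}\gamma_k\sum_{i'\in N_k^\uparrow\cap N_j^\uparrow}(p_{i'}(s)-\pi_{i'}(0))\,ds$. Swapping the sums---grouping first over $i'\in N_j^\uparrow$ and then over $k\in N_i\cap N_{i'}$---and using the routine estimate $\int_0^\infty(p_{i'}(s)-\pi_{i'}(0))\,ds\le \lambda_{i'}/(\mu_{i'}+\Gamma_{i'})^2$, the bias collapses to at most $\sum_{i'\in N_j^\uparrow}\Gamma_{i'}\lambda_{i'}/(\mu_{i'}+\Gamma_{i'})^2\le\sum_{i'\in N_j^\uparrow}\lambda_{i'}/(\mu_{i'}+\Gamma_{i'})\simeq_{\eps} 1/2$, invoking \eqref{ineq:abandonment_bound} and \eqref{eqn:jBalancedness}. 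Dividing by $t_i\Gamma_i\ge 1/\delta$---the sole place the threshold $\tau_i^*$ is used---gives $\ex{W\mid\textsf{TE}_j,t_i}\lesssim_{\eps}e^{-0.5}+\delta/2$, and choosing $\delta$ a small universal constant (say $\delta<(1-e^{-0.5})/2$) leaves a universal gap from $1$.

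The main obstacle is controlling the accumulated bias cleanly: a naive estimate would scale with the number of online types $k\in N_i$ whose top neighborhoods heavily overlap with $N_j^\uparrow$, potentially blowing up the bound. The key combinatorial step is the sum swap that telescopes the bias onto the source queues $i'\in N_j^\uparrow$, after which the $j$-balancedness property---originating from the tightened LP constraint \eqref{eqn:tightOnlineFlow} and the TOP/BOT splitting---yields a clean $O(1)$ bound. The renewal-theoretic flavor enters through the explicit rate at which each top birth-death chain ``forgets'' its initial empty state.
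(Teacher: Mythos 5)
Your proposal is correct, and it takes a genuinely different and noticeably shorter route than the paper's proof. The paper partitions $N_i$ into $J^{\rm correl}$ (types $k$ whose top neighborhoods heavily overlap $N_j^\uparrow$) and $J^{\rm indep}$ (the rest) and handles each separately: for $J^{\rm correl}$ it exhibits a ``core set'' $I^{\rm core}\subseteq N_j^\uparrow$ of large aggregate arrival rate (\Cref{app:core_set}), analyzes the succession of empty/nonempty renewal cycles of $I^{\rm core}$, and applies hyperexponential-tail and Paley--Zygmund concentration bounds (\Cref{app:concentration_bounds}) to show that a constant fraction of $[t^\infty-t_i,t^\infty]$ is spent nonempty, halving the depletion rate; for $J^{\rm indep}$ it applies Markov's inequality using the disjoint part of $N_k^\uparrow$. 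You instead bound the single conditional mean $\ex{W\mid \textup{\textsf{TE}}_j,t_i}$ with $W:=(t_i\Gamma_i)^{-1}\sum_k\gamma_k\theta_{i,k}\in[0,1]$ and invoke reverse Markov, which is sufficient since $W\le 1$ almost surely. The core observations are that (a) by independence of the top queues in $Q^{\textup{\textsf{weak}}}$ and time-reversibility of each M/M/$\infty$ birth--death chain, the conditional availability probability factorizes exactly, with the conditioning on $\textup{\textsf{TE}}_j$ biasing only the factors $i'\in N_j^\uparrow$ from $\pi_{i'}(0)$ up to $p_{i'}(s)=\exp(-\rho_{i'}(1-e^{-(\mu_{i'}+\Gamma_{i'})s}))$; and (b) after the telescoping bound $\prod p_{i'}(s)-\prod\pi_{i'}(0)\le\sum_{i'}(p_{i'}(s)-\pi_{i'}(0))$ and the sum swap, the integrated bias collapses onto the source queues and is bounded by $\sum_{i'\in N_j^\uparrow}\Gamma_{i'}\lambda_{i'}/(\mu_{i'}+\Gamma_{i'})^2\le\sum_{i'\in N_j^\uparrow}\lambda_{i'}/(\mu_{i'}+\Gamma_{i'})\simeq_\eps 1/2$, a quantity independent of $t_i$, so dividing by $t_i\Gamma_i\ge 1/\delta$ makes it $O(\delta)$. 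This eliminates the $J^{\rm correl}/J^{\rm indep}$ split, the core-set existence lemma, and all concentration machinery, relying only on first moments and exact birth--death formulas; in return the paper's renewal-cycle argument is more explicit about the trajectory-level mixing that drives the gain and provides a somewhat more hands-on picture of how the conditioning dissipates.
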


%To unpack \Cref{lem:main}, suppose we were interested in lower bounding the left-hand side without  conditioning on $\textsf{TE}_j$, i.e. $\pr{\left.\sum_{k\in  N_i}\gamma_k\theta_{i,k}\leq (1-c) \cdot t_i \cdot \Gamma_i  \right|  t_i,(A_{i'})_{i' \in N_j^\downarrow}}$ . Recalling that we have $\gamma_k \theta_{i,k} = \int_{t^\infty-t_i}^{t^\infty} \gamma_k \cdot \textsf{TE}_k(u) \, du$, it suffices to argue that $\ex{\textsf{TE}_k(u) | t_i, A_{i'}} \lesssim_{\eps} \exp(-0.5)$ analogously to \Cref{claim:PrTEj}. This shows that the top offline nodes block the depletion of $i$ by each online node $k \in N_i$ a constant fraction of the time. The time spent in such ``blocking'' events then translates into a boost of availability for the focal node $i$.

The proof of \Cref{lem:main} is deferred to Section~\ref{sec:lem:main}. Due to conditioning on the event $\textsf{TE}_j$,the  proof requires a fine-grained analysis of the correlation structure between $\theta_{i,k}$'s. Informally speaking, we use concentration bounds to show that the mixing time of the queues is typically smaller than $t_i$, so the initial conditioning on $\textup{\textsf{TE}}_j$ washes away relatively quickly when we go backwards in time from $t^\infty$.  The crucial ingredients are bounds on the first-order and second-order moments of the cumulative time where $k$'s neighbors in TOP have empty queues, for each $k\in N_i$.

To close this section, we prove that \Cref{lem:main} allows us to establish \Cref{prop:proof main}. Note that the random durations $\theta_{i,k}$ could be positively correlated, including with the event $\textup{\textsf{TE}}_j$. To handle such correlations, we rely on the Hardy-Littlewood inequality. In our setting, it intuitively says that the product of non-negative random variables with fixed marginal distributions is maximized when they are maximally positively associated.

\begin{fact}[c.f. \cite{burchard2006rearrangement}]\label{fact:rearrangement}
Let $X_1, X_2, \ldots, X_n \geq 0$ be random variables with fixed marginal CDFs $\{ F_{X_i}(\cdot) \}$. Then, we have $$\mathbb{E}   \left[ \prod_i X_i \right] \le \int_0^1 \left( \prod_i F_{X_i}^{-1}(x) \right)\, dx.$$ 
\end{fact}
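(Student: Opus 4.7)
The plan is to derive this via the classical combination of the layer-cake representation with the Fréchet--Hoeffding upper bound, exhibiting the comonotone coupling as the maximizer over all couplings with the prescribed marginals.

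First, I would reinterpret the right-hand side probabilistically. Let $U\sim\mathrm{Uniform}(0,1)$ and set $\tilde X_i := F_{X_i}^{-1}(U)$, where I use the standard quantile function $F_{X_i}^{-1}(u) := \inf\{t : F_{X_i}(t)\ge u\}$. Each $\tilde X_i$ has CDF $F_{X_i}$, hence $\tilde X_i \stackrel{d}{=} X_i$, but all $\tilde X_i$ are driven by the same $U$ (i.e., they are comonotonic). A change of variables yields
$$\int_0^1 \prod_{i=1}^n F_{X_i}^{-1}(u)\,du \;=\; \mathbb{E}\Big[\prod_{i=1}^n \tilde X_i\Big].$$
So the statement becomes equivalent to asserting that the comonotone coupling maximizes the expected product among all couplings with the given marginals.

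Second, I would apply the layer-cake identity to both sides. Since each $X_i \ge 0$, $X_i = \int_0^\infty \mathbbm{1}[X_i > t_i]\,dt_i$, and Tonelli's theorem gives
$$\mathbb{E}\Big[\prod_{i=1}^n X_i\Big] \;=\; \int_{[0,\infty)^n} \Pr\Big[\,\bigwedge_{i=1}^n \{X_i > t_i\}\,\Big]\,dt_1\cdots dt_n,$$
and the analogous identity holds with each $X_i$ replaced by $\tilde X_i$. The trivial Fréchet--Hoeffding upper bound, coming from set inclusion, says that for any joint law with the prescribed marginals, $\Pr[\bigwedge_i \{X_i > t_i\}] \le \min_i \Pr[X_i > t_i]$.

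Third, I would evaluate the comonotone joint tail in closed form. Using the elementary equivalence $\tilde X_i > t_i \iff U > F_{X_i}(t_i)$ (valid up to a null set for the right-continuous inverse above),
$$\Pr\Big[\,\bigwedge_{i=1}^n \{\tilde X_i > t_i\}\,\Big] \;=\; \Pr\Big[\,U > \max_i F_{X_i}(t_i)\,\Big] \;=\; \min_i \Pr[X_i > t_i].$$
Plugging this expression into the layer-cake identity for $\prod_i \tilde X_i$ and chaining the three displays then yields the claim.

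I do not anticipate a genuine obstacle: this is a classical rearrangement inequality and the proof reduces to a three-line calculation once the layer-cake viewpoint is taken. The only step requiring mild technical care is the handling of the quantile function when $F_{X_i}$ has plateaus or jumps, but this is standard for the right-continuous inverse $\inf\{t : F(t)\ge u\}$, and any exceptional sets are of Lebesgue measure zero and do not affect the integrals.
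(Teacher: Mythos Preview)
Your proof is correct and follows the standard derivation of the Hardy--Littlewood rearrangement inequality via the layer-cake representation and the Fr\'echet--Hoeffding upper bound on joint survival functions. Note, however, that the paper does not actually prove this fact: it is stated as a cited result (attributed to \cite{burchard2006rearrangement}) and used as a black box, so there is no ``paper's own proof'' to compare against. Your argument supplies exactly the kind of self-contained justification one would expect for this classical inequality.
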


\paragraph{Proof of \Cref{prop:proof main}.} By \Cref{claim:PrTEj} and \Cref{lem:jointproballqueuesempty}, we have 
\begin{align}
    \pr {\sum_{i \in N_j} Q_i^{\textup{\textsf{weak}}}  = 0} &= \pr{\textup{\textsf{TE}}_j } \cdot \pr {\left. \sum_{i \in N_j^\downarrow} Q_i^{\textup{\textsf{weak}}} = 0 \right. \, \middle | \, \textup{\textsf{TE}}_j  } \nonumber \\
    &\lesssim_{\eps} e^{-0.5} \cdot \mathbb{E}   \left[ \prod_{i \in N_j^{\downarrow}} \left( 1 - \frac{\lambda_i}{\mu_i + \lambda_i} \cdot \exp \left( - \sum_{k \in N_i} \gamma_k \cdot \theta_{i,k} \right)\right) \, \middle | \, \bigwedge_{i \in N_j^{\downarrow}} A_i, \textup{\textsf{TE}}_j \right]. \label{ineq:semi_q_empty} 
\end{align}
To bound the right-hand side we will consider drawing the (conditional) $t_i$'s first. Recall that conditioned on $\wedge_{i \in N_j^{\downarrow}} A_i$ and $\textup{\textsf{TE}}_j$, the vector $(t_i)$ is formed by mutually independent random variables with distribution $\text{Exp}(\mu_i+\lambda_i)$ at each coordinate $i$. Let $F_i(\cdot, \tau_i)$ denote the conditional CDF of $\sum_{k \in N_i} \gamma_k \cdot \theta_{i,k}$ given $\textup{\textsf{TE}}_j$ and $t_i = \tau_i$. We will apply \Cref{fact:rearrangement} for each draw of $(\tau_i) \sim (\text{Exp}(\mu_i + \lambda_i))$, with $F_{X_i}(\cdot)$ taken as $F_i(\cdot, \tau_i)$, to get the following bound.
\begin{align}
   \pr{\sum_{i \in N_j} Q_i^{\textup{\textsf{weak}}} = 0} \lesssim_{\eps}  e^{-0.5  } \cdot \hspace{-0.5em} \mathop{\mathbb{E}}_{(\tau_i \sim \text{Exp}(\mu_i + \lambda_i))_{i}} \left[  \int_{0}^1\prod_{i\in N_j^\downarrow} \left(1- \frac{\lambda_i}{\mu_i+\lambda_i}\cdot \exp\left(- F_{i}^{-1}(u,\tau_i)\right)   \right) du   \right]. \label{eqn:proballempty}
\end{align}
Consequently, we use \Cref{lem:main} to incorporate the constant-factor gap between $\sum_{k\in  N_i}\gamma_k\theta_{i,k}$ and $\sum_{k\in  N_i}\gamma_k\tau_i$. This lemma can be restated as saying that the conditional distribution of $\sum_{k \in N_i} {\gamma_k} \theta_{i,k}$ given $\textup{\textsf{TE}}_j$ is stochastically dominated by that of
\begin{align}
t_i &\cdot \Gamma_i \left( 1 - \text{Ber}({c}) \cdot c \cdot \mathbbm{1}[ t_i \ge \tau_i^*] \right) \, \label{berstochdomdistribution}
\end{align}
where  $\text{Ber}({c}) $ denotes an independent Bernoulli with success probability ${c}$. We will let    $\tilde{F}_i(\cdot, \tau_i)$ denote the conditional CDF of the distribution \eqref{berstochdomdistribution} given $t_i = \tau_i$, which corresponds to a step-function with two pieces. Clearly, the preceding discussion implies that, for every fixed $u$ and $\tau_i$, we have $F_i(u, \tau_i) \leq \tilde{F}_i(u, \tau_i)$. Therefore, 
\begin{align}
     &\extwo{ (\tau_i \sim \text{Exp}(\mu_i + \lambda_i))_{i}}{\left. \int_{0}^1\prod_{i\in N_j^\downarrow} \left(1- \frac{\lambda_i}{\mu_i+\lambda_i}\cdot \exp\left(- F_{i}^{-1}(u,\tau_i)\right)   \right) du \right.  } \nonumber \\
     & \quad \le \extwo{(\tau_i \sim \text{Exp}(\mu_i + \lambda_i))_{i}}{\left. \int_{0}^1\prod_{i\in N_j^\downarrow} \left(1- \frac{\lambda_i}{\mu_i+\lambda_i}\cdot \exp\left(- \tilde{F}_{i}^{-1}(u,\tau_i)\right)   \right) du \right.  } \nonumber \\
     & \quad = \underbrace{(1-{c})\cdot \mathbb{E}_{(\tau_i \sim \text{Exp}(\mu_i + \lambda_i))_{i}} \left[  \prod_{i\in  N_j^\downarrow} \left( 1- \frac{\lambda_i}{\mu_i+\lambda_i}\cdot \exp\left(-\tau_i \Gamma_i \right)\right)   \right] }_{(\dagger)} \nonumber \\
      & \quad \quad \quad \quad \quad + \underbrace{{c} \cdot \mathbb{E}_{(\tau_i \sim \text{Exp}(\mu_i + \lambda_i))_{i}} \left[   \prod_{i\in  N_j^\downarrow} \left( 1- \frac{\lambda_i}{\mu_i+\lambda_i}\cdot \exp\left(-\left(1-  c \cdot \mathbbm{1}\left[\tau_i \geq \tau_i^*\right]\right) \cdot \tau_i \Gamma_i \right)\right)  \right]  }_{(\star)}. \nonumber
\end{align}
To obtain the equality, we observe that the worst-case distribution after applying the Hardy-Littlewood inequality corresponds to ``perfectly correlated'' two-outcome random variables with probabilities $c$ and $1-c$. Analogously to equation \eqref{MGFcalculation}, we can calculate
\begin{align}
\mathbb{E}_{(\tau_i \sim \text{Exp}(\mu_i + \lambda_i))_{i}} \left[ 1 - \frac{\lambda_i}{\mu_i + \lambda_i} \cdot \left. \exp\left(- \tau_i \Gamma_i \right)  \right. \right] 
    = 1  - \frac{\lambda_i}{ \mu_i + \lambda_i +  \Gamma_i} \le \exp \left( - \frac{\lambda_i}{\mu_i + \lambda_i + \Gamma_i} \right)\ . \label{easybound}
\end{align}

To argue that $(\star)$ gets a better bound than this calculation, we will focus on offline nodes $i$ where $\Gamma_i$ is not dominated by $\mu_i$. Intuitively, if $\mu_i \gg \Gamma_i$, then $\tau_i \sim \text{Exp}(\mu_i + \lambda_i)$ may have only a very small probability of exceeding $\tau_i^*$, which we recall is on the order of $\Gamma_i^{-1}$. Consequently, we focus on a specific class of offline types $i$, whose incident arrival rate $\Gamma_i$ is not dominated by $\mu_i$. 
\begin{definition}
    For some small universal constant $\eta > 0$ which we fix later, independent of $\epsilon$, define $I_j \subseteq N_j^\downarrow$ as the subset of offline types $i$ such that $\Gamma_i \geq \eta \cdot \mu_i$,  
\end{definition}
It is not hard to see that under our parameter assumptions, for sufficiently small $\eta$ some constant fraction of $j$'s LP mass is incident to types in $I_j$. This is formalized in the following claim, whose proof is deferred to \Cref{app:proofofclaimimportanti}.
\begin{restatable}{claim}{claimimportanti}\label{clm:important_i}
    For any $\eps', \eta < 10^{-3}$ we have  $\sum_{i \in I_j} \frac{x_{i,a}}{\mu_i} \geq  0.1$. 
\end{restatable}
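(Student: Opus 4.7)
The plan is to use the tightened constraint family \eqref{eqn:tightOnlineFlow} to bound the contribution of $H := N_j^\downarrow \setminus I_j$ to $\sum_{i \in N_j^\downarrow} \frac{x_{i,a}}{\mu_i}$. Since \eqref{eqn:jBalancedness} already gives $\sum_{i \in N_j^\downarrow} \frac{x_{i,a}}{\mu_i} \ge \frac{1-\eps'}{2}$, it suffices to show that $T := \sum_{i \in H} \frac{x_{i,a}}{\mu_i}$ is tiny.

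First, I would derive two pointwise comparisons on $H$. For $i \in H$, the condition $\Gamma_i < \eta \mu_i$ combined with \eqref{ineq:abandonment_bound} yields $\frac{x_{i,a}}{\mu_i} \ge \frac{\lambda_i}{\mu_i + \Gamma_i} \ge \frac{\lambda_i}{(1+\eta)\mu_i}$, i.e., $\frac{\lambda_i}{\mu_i} \le (1+\eta)\cdot \frac{x_{i,a}}{\mu_i}$. Separately, Property~(ii) and \Cref{line:setPij} of \Cref{alg:corrprop} give $\frac{x_{i,j}}{\gamma_j} = p_{i,j} \cdot \frac{x_{i,a}}{\mu_i} \ge (1-\eps') \cdot \frac{x_{i,a}}{\mu_i}$. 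Summing both inequalities over $H$ produces $\sum_{i \in H} \frac{\lambda_i}{\mu_i} \le (1+\eta)\,T$ and $\sum_{i \in H} x_{i,j} \ge (1-\eps')\,T\,\gamma_j$.

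Second, I would plug these into constraint \eqref{eqn:tightOnlineFlow} applied to $H$ (using monotonicity of $y \mapsto 1-e^{-y}$ to replace the argument by its upper bound) to obtain $(1-\eps')\,T \le 1 - \exp\bigl(-(1+\eta)\,T\bigr)$. The elementary inequality $1 - e^{-y} \le y - \frac{y^2}{2} + \frac{y^3}{6}$ for $y \ge 0$ (provable by checking that four successive derivatives of $e^{-y} - (1 - y + y^2/2 - y^3/6)$ are nonnegative at $y=0$) then gives, after dividing by $T > 0$, that $\frac{(1+\eta)^2}{2}\,T - \frac{(1+\eta)^3}{6}\,T^2 \le \eta + \eps'$. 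Since \eqref{eqn:jBalancedness} also forces $T \le \frac{1/2}{1-\eps'}$, the spurious large root of this quadratic is ruled out, leaving $T \le \frac{2(\eta+\eps')}{(1+\eta)^2}\,(1+o(1))$, which is at most $0.01$ whenever $\eta, \eps' < 10^{-3}$.

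Finally, combining this with \eqref{eqn:jBalancedness} gives $\sum_{i \in I_j} \frac{x_{i,a}}{\mu_i} = \sum_{i \in N_j^\downarrow} \frac{x_{i,a}}{\mu_i} - T \ge \frac{1-\eps'}{2} - 0.01 > 0.1$, as claimed. There is no substantial obstacle; the entire argument is essentially a one-line application of the new tightening family, highlighting precisely why its inclusion in \eqref{TLPon} is indispensable (without the exponential right-hand side, nothing would prevent $T$ from being large).
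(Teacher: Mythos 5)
Your proof is correct and takes a genuinely different, and tighter, route than the paper's. Both arguments apply constraint \eqref{eqn:tightOnlineFlow} to $H := N_j^\downarrow\setminus I_j$, using $\Gamma_i<\eta\mu_i$ and $p_{i,j}\ge 1-\eps'$ to relate $\sum_{i\in H}\lambda_i/\mu_i$ and $\sum_{i\in H} x_{i,j}/\gamma_j$ to $T := \sum_{i\in H} x_{i,a}/\mu_i$. The paper then substitutes the balancedness cap $T\le 0.5/(1-\eps')$ into the exponent, which yields the loose estimate $T \lesssim 1-e^{-1/2}\approx 0.393$ and hence $\sum_{I_j}\gtrsim 0.107$ in the limit. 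You instead keep $T$ symbolic on both sides, arriving at the self-consistent inequality $(1-\eps')T \le 1-e^{-(1+\eta)T}$; since $1-e^{-y}\le y$ with only $O(\eta+\eps')$ slack in the linear coefficient, this forces $T = O(\eta+\eps')$, giving the far stronger conclusion $T<0.01$ and $\sum_{I_j}>0.48$. Two small housekeeping points: you divide by $T$, so $T=0$ should be dispatched separately (it is immediate from \eqref{eqn:jBalancedness}); and the a priori bound $T\le 0.5/(1-\eps')$ is indeed what excludes the spurious large root (near $3/(1+\eta)$) introduced by your degree-$3$ Taylor relaxation, as you correctly observe. Both proofs are valid; yours shows that essentially all of $j$'s bottom LP mass lies on types with $\Gamma_i\ge\eta\mu_i$, which the paper's looser bound does not reveal.
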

We can now bound $(\star)$ using the following inequality for $i \in I_j$. 
\begin{restatable}{claim}{clmIjbound} \label{clm:Ij_bound}
    There exists a universal constant $\tilde{c} > 0$, independent of $\epsilon$, such that for each $i \in I_j$ and $\tau_i \sim \textup{Exp}(\lambda_i + \mu_i)$, 
    \[ 
        \ex{\exp\left(-\left(1- c \cdot \mathbbm{1}\left[\tau_i \geq \tau_i^*\right]\right) \cdot \tau_i \Gamma_i \right)} \geq (1 + \tilde{c}) \cdot \frac{\mu_i + \lambda_i}{\mu_i + \lambda_i + \Gamma_i} \ .
    \]
\end{restatable}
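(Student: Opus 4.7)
The plan is to evaluate the left-hand side in closed form using the exponential MGF, and then show the resulting boost factor is bounded below by a positive constant uniformly over the relevant parameter range. Set $\alpha := \mu_i + \lambda_i$ and split the expectation at the threshold $\tau_i^*$. Using that $\tau_i \sim \textup{Exp}(\alpha)$, I would integrate the two pieces separately to obtain
\[
\E\bigl[\exp(-(1-c\,\mathbbm{1}[\tau_i \ge \tau_i^*])\,\tau_i\Gamma_i)\bigr] \;=\; \frac{\alpha}{\alpha+\Gamma_i}\bigl(1 - e^{-(\alpha+\Gamma_i)\tau_i^*}\bigr) + \frac{\alpha}{\alpha+(1-c)\Gamma_i}\, e^{-(\alpha+(1-c)\Gamma_i)\tau_i^*}.
\]
After dividing by the baseline $\alpha/(\alpha+\Gamma_i)$ and substituting $\Gamma_i\tau_i^* = 1/\delta$ together with $\rho := \alpha/\Gamma_i$, the desired inequality becomes
\[
g(\rho) \;:=\; e^{-(\rho+1)/\delta}\left(\frac{\rho+1}{\rho+1-c}\,e^{c/\delta} \;-\; 1\right) \;\ge\; \tilde c.
\]

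Next, I would use the hypothesis $i \in I_j$ (so $\Gamma_i \ge \eta\mu_i$) together with Property~(v) ($\lambda_i \le \eps^2\mu_i$) to control the range of $\rho$: we obtain $\rho = \alpha/\Gamma_i \le (1+\eps^2)/\eta$, which for any $\eps \le 0.1$ lies in the compact interval $[0, 2/\eta]$. Crucially, this bound is independent of the instance once $\eta$ is fixed.

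Finally, I would observe that $g$ is continuous and strictly positive on $[0,\infty)$: the bracketed factor is positive because $c \in (0,1)$ gives $(\rho+1)/(\rho+1-c) > 1$ and $e^{c/\delta} > 1$, while the prefactor $e^{-(\rho+1)/\delta}$ is clearly positive. Compactness then yields $\tilde c := \min_{\rho \in [0,\,2/\eta]} g(\rho) > 0$. Since $c,\delta$ are the universal constants from \Cref{lem:main} and $\eta$ is fixed earlier independently of $\eps$, the constant $\tilde c$ depends only on $c, \delta, \eta$ and is in particular independent of $\eps$, as required.

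There is no deep obstacle: the computation is elementary, and the only mild subtlety is ensuring $\tilde c$ is truly $\eps$-independent. This is handled by passing to the slightly enlarged compact interval $[0,\,2/\eta]$ which contains the admissible range of $\rho$ uniformly in $\eps \le 0.1$. Note that $g(\rho) \to 0$ as $\rho \to \infty$, so $\tilde c$ scales down (roughly like $e^{-1/(\eta\delta)}$) if $\eta$ is taken smaller; this is acceptable since $\eta$ is a fixed absolute constant chosen before invoking this claim, and the final approximation improvement $\zeta$ in \Cref{prop:main} is obtained by balancing the resulting chain of constants.
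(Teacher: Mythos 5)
Your proof is correct, and after the shared initial step (the piecewise MGF computation yielding the same closed-form expression) it takes a genuinely different route from the paper's. The paper first weakens the exponent in the second term (replacing $(1-c)\Gamma_i$ by $\Gamma_i$), then views the resulting expression as a two-point mixture $(1-p)\,h(0)+p\,h(c)$ of the convex function $h(x)=\frac{\mu_i+\lambda_i}{\mu_i+\lambda_i+(1-x)\Gamma_i}$ with $p=e^{-(\mu_i+\lambda_i+\Gamma_i)\tau_i^*}$, applies Jensen to pass to $h(pc)$, and then threads the hypotheses $\Gamma_i\geq\eta\mu_i$ and $\lambda_i\leq\eps^2\mu_i$ through a short chain of estimates to extract an explicit formula for $\tilde c$. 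You instead normalize by the baseline, substitute $\rho=(\mu_i+\lambda_i)/\Gamma_i$ and $\Gamma_i\tau_i^*=1/\delta$, and observe the ratio is \emph{exactly} $1+g(\rho)$; the same hypotheses then confine $\rho$ to the compact interval $[0,2/\eta]$, and the claim follows from continuity and strict positivity of $g$ there. Your approach is cleaner in that it avoids the Jensen-and-relax step (your reduction is an identity, not a bound) and makes the $\eps$-independence transparent via compactness; the trade-off is that $\tilde c$ emerges only as an implicit minimum rather than the paper's closed-form constant. Both are valid and yield the same statement.
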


The proof of \Cref{clm:Ij_bound} follows from a direct calculation and hence is deferred to \Cref{app:proofclaimIjbound}. We are now ready to complete the proof of \Cref{lem:main}. By inequality \eqref{easybound}, we have 
\begin{align*}
    (\dagger) &\le (1 - {c}) \cdot \exp \left( - \sum_{i \in N_j^{\downarrow}  } \frac{\lambda_i}{\mu_i + \lambda_i + \Gamma_i} \right).
\end{align*}
On the other hand, we upper bound $(\star)$ using \Cref{clm:Ij_bound}:
\begin{align*}
    (\star) &\leq {c} \cdot \prod_{i\in  N_j^\downarrow \setminus I_j} \ex{1- \frac{\lambda_i}{\mu_i+\lambda_i}\cdot \exp\left(-\tau_i \Gamma_k \right)}  \prod_{i\in I_j} \ex{1- \frac{\lambda_i}{\mu_i+\lambda_i}\cdot \exp\left(- (1-c \cdot \mathbbm{1}[\tau_i \geq \tau_i^*]) \cdot \tau_i \Gamma_k \right)} \\
    &\le {c} \cdot \exp\left(-\sum_{i \in N_j^\downarrow \setminus I_j} \frac{\lambda_i}{\mu_i + \lambda_i + \Gamma_i} \right) \cdot \exp\left(- (1 + \tilde{c}) \cdot \sum_{i \in I_j} \frac{\lambda_i}{\mu_i + \lambda_i + \Gamma_i} \right)  \hspace{4em} \text{by \Cref{clm:Ij_bound}} \\
    &= {c} \cdot \exp\left(-\sum_{i \in N_j} \frac{\lambda_i}{\mu_i + \lambda_i + \Gamma_i}\right)
    \cdot \exp\left(- \tilde{c}  \cdot \sum_{i \in I_j} \frac{\lambda_i}{\mu_i + \lambda_i + \Gamma_i}\right).
\end{align*}
By inequality \eqref{MGFcalculation}, we know that   $\frac{\lambda_i}{\mu_i + \lambda_i + \Gamma_i} \gtrsim_{\eps}  \frac{x_{i,a}}{\mu_i}$. Furthermore, $\sum_{i \in I_j} \frac{x_{i,a}}{\mu_i} \ge 0.1$ by \Cref{clm:important_i}.
By combining the preceding inequalities, we obtain
\begin{align}
     (\star) &\lesssim_{\eps}  {c} \cdot \exp\left(-\sum_{i \in N_j} \frac{\lambda_i}{\mu_i + \lambda_i + \Gamma_i}\right) \cdot \exp \left( -  \tilde{c}    \cdot 0.1 \right). \label{ineq:starbound}
\end{align}

To conclude, we define  $f(\tilde{c}, {c}) :=c \cdot (1 -     \exp \left(- \tilde{c}  \cdot  0.1 \right) $, and simplify using inequality~\eqref{eqn:proballempty}:
\begin{align*}
    \pr{\sum_{i \in N_j} Q_i^{\textup{\textsf{weak}}} = 0} &\lesssim_{\eps} e^{-0.5} \cdot ((\dagger) + (\star)) \\
    &\simeq_{\eps} e^{-0.5 } \cdot  \exp\left(-\sum_{i \in N_j^\downarrow} \frac{\lambda_i}{\mu_i + \lambda_i +  \Gamma_i} \right) \cdot  f(\tilde{c}, {c})  \\
    &\simeq_{\eps} e^{-0.5} \cdot \exp\left(-\sum_{i \in N_j^\uparrow} \frac{x_{i,a}}{\mu_i}\right) \cdot  f(\tilde{c}, {c})  \\
    &\simeq_{\eps} e^{-1} \cdot  f(\tilde{c}, {c})  &&\text{by \eqref{eqn:jBalancedness}.}
\end{align*}
Recalling that $c \in (0,1)$ and $\tilde{c} >0$, we observe that $f(\tilde{c}, c)$ is some universal constant smaller than 1, which completes the proof of \Cref{prop:proof main}.

% % \Alirezaedit{For $T \subseteq I$ and $S \subseteq J$, let $\textsf{LP}_{S,T} = \sum_{i \in T} \sum_{j \in S} \frac{x_{i,j}}{\gamma_j}$. 
  
\subsection{Proof Outline of Lemma~\ref{lem:main}}
 \label{sec:lem:main}
% Note that the proportion of time spent in a blocked cycle $\textup{\textsf{TE}}_j(t)=1$ is equal to $\prpartwo{s}{\textup{\textsf{TE}}_j(t)=1} \leq e^{-\frac{1}{2}+\eps}$ from inequality~\eqref{ineq:balanced}. From renewal theory, this is also equal to $\frac{\expar{\eta^{\rm e}_{j}}}{\expar{\eta^{\rm e}_{j}+\eta^{\rm n}_{j}}}$ where $\eta^{\rm e}_{j},\eta^{\rm n}_{j}$ are the random durations for one busy cycle and one non-busy cycle, respectively. Since $\expar{\eta^{\rm e}_{j}} = \frac{1}{\sum_{i\in N_j^\uparrow}\lambda_i}$, we obtain 
% \begin{eqnarray}
% \expar{\eta^{-}_j} \geq \frac{1 -  e^{-\frac{1}{2}+\eps}}{ e^{-\frac{1}{2}+\eps}} \cdot \frac{1}{\sum_{i\in N_j^\uparrow}\lambda_i} \ .
% \end{eqnarray}
This subsection outlines the main elements of the proof of Lemma~\ref{lem:main}; some details are deferred to \Cref{app:approximation}. Recall that \Cref{lem:main} considers fixed types $j \in J, i \in N_j^\downarrow$, and bounds the aggregate amount of time in $[t^{\infty} - t_i, t^{\infty}]$ that online types $k \in N_i$ see all their neighboring queues in TOP empty, conditioned on the queues in $N_j^\uparrow$ being empty at time $t^\infty$. Hereafter, saying that a subset $S \subseteq I$ is empty means that all the queues of a type in $S$ are empty. Recall that the random variable $\theta_{i, k}$ tracks for each $k \in N_i$, the amount of time, since $i$'s most recent arrival, that $N_k^\uparrow$ was empty. We would like to show that conditioned on $N_j^\uparrow$ being empty and $\wedge_{i' \in N_j^\downarrow} A_{i'}$, with a constant probability, there is a constant-factor gap between $\sum_{k \in N_i} \gamma_k \theta_{i,k}$ and $t_i \Gamma_i$. We consider two qualitatively distinct cases depending on the degree of overlap between $j$'s neighborhood and the neighborhoods of $k \in N_i$, under our LP solution. 

In the easier case, we consider online types $k$ that have a somewhat different neighborhood to that of $j$. Let the set of these online nodes be $J^{\rm indep}$. In this case, we might intuitively expect that conditioning on $N_j^\uparrow$ being empty does not affect much the distribution of $\theta_{i,k}$, as there is little overlap between $N_j^\uparrow$ and $N_k^\uparrow$. Hence, we consider each $k \in J^{\rm indep} \cap N_i$ independently by focusing on set $I_k = N_k^\uparrow \setminus N_j^\uparrow$, since $I_k$ is independent of $N_j^\uparrow$. We then analyze the expected duration in which $I_k$ is empty. By combining this result for different $k \in J^{\rm indep} \cap N_i$, we obtain our result. 

In the harder case, however, node $k$'s neighborhood could have a large degree of overlap with that of $j$, in which case there is a significant correlation between $\theta_{i,k}$ and $\theta_{i,j}$; this correlation makes the conditioning on $\textsf{TE}_j$ highly influential. Let the set of such nodes $k$ be $J^{\rm correl}$. We jointly analyze nodes in $J^{\rm correl}$ by focusing on a specific set $I^{\rm core} \subseteq N_j^\uparrow$. We show that a set $I^{\rm core}$ exists such that (i) it has a large arrival rate from its neighbors in $J^{\rm correl} \cap N_i$, and (ii) $(\sum_{i' \in I^{\rm core}} \lambda_{i'}) / (\sum_{k \in J^{\rm correl} \cap N_i} \gamma_j)$ is lower bounded by a constant. Given these properties of $I^{\rm core}$ and by employing tools from renewal theory, we analyze the duration in which $I^{\rm core}$ is empty and use it to bound $\sum_{k \in N_i} \gamma_k \theta_{i,k}$.

% \todo{Write more about the different techniques used.}

First, we formalize the notions of $J^{\rm indep}$ and $J^{\rm correl}$, schematically visualized in \Cref{fig:J_correl_indep}. 
\begin{definition}
    Let $0 < \kappa < 0.1$ denote a fixed constant that we will choose later. We then define $$J^{\rm indep} := \left\{k\in J: \sum_{i\in N_k^\uparrow\setminus N_j^\uparrow} \frac{x_{i,a}}{\mu_i} \geq \kappa \right\} \quad \text{and} \quad J^{\rm correl} := J \setminus J^{\rm indep} \ . $$  
\end{definition} 

\begin{figure}
    \centering
\begin{center}
\scalebox{0.7}{\begin{tikzpicture}[
    >=Stealth,
    node distance=1cm and 2cm,
    scale = 1.2,
    every node/.style={font=\large},
    dot/.style={circle, fill, inner sep=1.5pt}
]
% Titles
\node[anchor=north west, font=\LARGE] (offline) at (0.4,7) {Offline};
\node[anchor=north west, font=\LARGE] (online) at (7,7) {Online};
% Boxes for Offline and Online sections (unfilled, thicker)
\draw[lightgray, line width=2pt] (-1.4,-2.2) rectangle ++(5.5,8.5);
\draw[lightgray, line width=2pt] (5,-2.2) rectangle  ++(5.5,8.5);
% TOP and BOT labels with left-facing brackets (larger font)
\draw[decorate, decoration={brace, amplitude=10pt, mirror, raise=2pt}, thick]
    (0.35,6) -- (0.35,1.1) node[midway, left=12pt, font=\LARGE] {TOP};
\draw[decorate, decoration={brace, amplitude=10pt, mirror, raise=2pt}, thick]
    (0.35,0.9) -- (0.35,-2) node[midway, left=12pt, font=\LARGE] {BOT};
% Dotted line between TOP and BOT (thicker)
\draw[dotted, line width=1pt] (0.4,1) -- (4.08,1);
% Offline labels
\node[anchor=east, text=orange, font=\Large] (Njd) at (0.9,-0.5) {$N_j^{\downarrow}$};
\node[anchor=east, text=darkgreen, font=\Large] (Nku) at (0.9,5) {$N_k^{\uparrow}$};
\node[anchor=east, text=BrickRed, font=\Large] (Nju) at (0.9,3) {$N_j^{\uparrow}$};
\node[anchor=east, text=blue, font=\Large] (Nkprimeu) at (0.9,2) {$N_{k'}^{\uparrow}$};
% Online column
\node[dot] (k) at (6,5) {};
\node[anchor=west, font=\LARGE] (k_label) at ($(k)+(0.2,0)$) {$k \in J^{\text{indep}}$};
\node[anchor=west, text width=5cm, font=\small] (k_text) at (6.2,4.2) {($N_k^{\uparrow}$ and $N_j^{\uparrow}$ have significant non-overlapping sections)};
\node[dot] (k_prime) at (6,3) {};
\node[anchor=west, font=\LARGE] (k_prime_label) at ($(k_prime)+(0.2,0)$) {$k' \in J^{\text{correl}}$};
\node[anchor=west, text width=5cm, font=\small] (k_prime_text) at (6.2,2.2) {($N_{k'}^{\uparrow}$ and $N_j^{\uparrow}$ have lots of overlap)};
\node[dot] (j) at (6,1) {};
\node[anchor=west, font=\LARGE] (j_label) at ($(j)+(0.2,0)$) {$j$};
\node[dot] (i) at (1.15,-0.5) {};
\node[dot] (a) at (1.15,3.1) {};
\node[dot] (b) at (1.15,2.8) {};
\node[dot] (c) at (1.15,2.5) {};
\node[dot] (d) at (1.15, 1.7) {};
\node[dot] (e) at (1.15,5) {};
\node[dot] (e) at (1.15,4.3) {};

\node[anchor=west, font=\LARGE] (i_label) at ($(i)+(0.1,-0.2)$) {$i$};
% Shaded triangles
\fill[opacity=0.1, color=orange] ($(Njd)+(0.35,0.7)$) -- ($(j)$) -- ($(Njd)+(0.35,-0.7)$) -- cycle;
\fill[opacity=0.1, color=BrickRed] ($(Nju)+(0.35,1)$) -- ($(j)$) -- ($(Nju)+(0.35,-1)$) -- cycle;
\fill[opacity=0.1, color=darkgreen] ($(Nku)+(0.35,1)$) -- ($(k)$) -- ($(Nku)+(0.35,-1.2)$) -- cycle;
\fill[opacity=0.1, color=blue] ($(Nkprimeu)+(0.35,1)$) -- ($(k_prime)$) -- ($(Nkprimeu)+(0.35,-1)$) -- cycle;
% Connecting lines
\draw[opacity=1] ($(i)$) -- ($(j)$);
\draw[opacity=1] ($(i)$) -- ($(k)$);
\draw[opacity=1] ($(i)$) -- ($(k_prime)$);
\end{tikzpicture}}\end{center}
\caption{Schematic visualization of ${J^{\rm correl}}$ and ${J^{\rm indep}}$}
    \label{fig:J_correl_indep}
\end{figure}
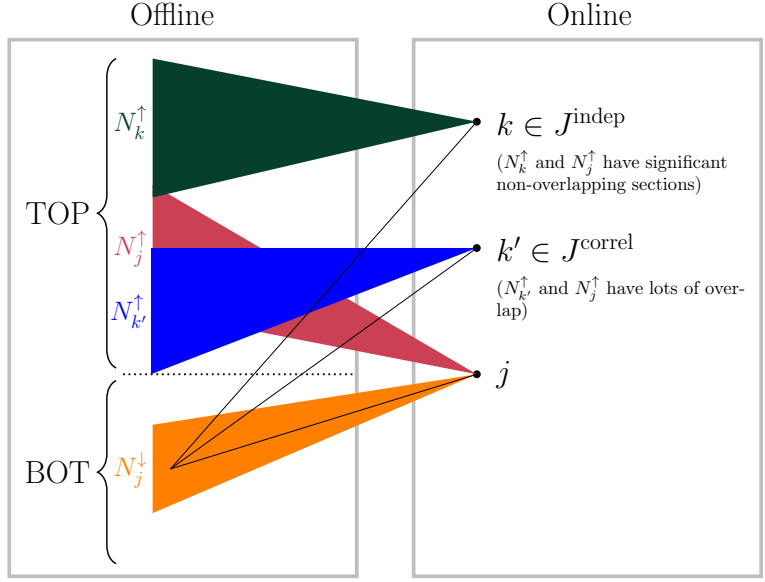

\paragraph{Proof of \Cref{lem:main}.} The proof proceeds by analyzing $J^{\rm correl}$ and $J^{\rm indep}$ separately, showing a constant probability of having a constant-factor gap between $\sum_{k\in {\cal J} \cap N_i}\gamma_k\theta_{i,k}$ and $t_i \cdot \sum_{k\in {\cal J} \cap N_i}\gamma_k$ for ${\cal J} \in \{J^{\rm correl}, J^{\rm indep}\}$. We begin  with proving this result for ${\cal J} = J^{\rm correl}$---the case of online types whose neighborhood has a high level of overlap with that of $j$. Next, we state the result for $J^{\rm indep}$. Since the proof is less challenging, we defer it to the appendix. Lastly, we combine the two results to conclude the proof of \Cref{lem:main}.
\begin{lemma} \label{lem:correl}
There exist constants $c_0, \bar{\delta} \in (0,1)$ independent of $\eps, \kappa$ such that if $\delta \leq \bar{\delta}$, for each $i \in N_j^\downarrow$, we have 
\[ \pr{\left.\sum_{k\in J^{\rm correl}\cap N_i}\gamma_k\theta_{i,k}\leq (1-c_0) \cdot t_i \sum_{k\in J^{\rm correl}\cap N_i}\gamma_k  \right| t_i, (A_{i'})_{i' \in N_j^\downarrow}, \textup{\textsf{TE}}_j } \geq c_0 \cdot \mathbbm{1}\left[t_i \geq \tilde{\tau}_i\right]  \ ,
\] where $\tilde{\tau}_i = ({\delta \cdot \sum_{k \in J^{\rm correl} \cap N_i}{\gamma_k}})^{-1}$. 
\end{lemma}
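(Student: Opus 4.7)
The plan is to reduce the lemma to controlling the amount of time that a well-chosen ``core'' subset $I^{\rm core}\subseteq N_j^\uparrow$ is entirely empty in the window $[t^\infty-t_i,t^\infty]$. First I would construct $I^{\rm core}$ so that (a) its steady-state probability of being non-empty is bounded away from zero by a universal constant independent of $\eps,\kappa$, and (b) a $\gamma_k$-weighted majority of types $k\in J^{\rm correl}\cap N_i$ satisfy $N_k^\uparrow\cap I^{\rm core}\neq \emptyset$. The construction leverages that for every $k\in J^{\rm correl}$, the definition of $J^{\rm correl}$ together with $j$-balancedness forces $\sum_{i'\in N_k^\uparrow\cap N_j^\uparrow} x_{i',a}/\mu_{i'}\geq 0.5-\kappa-O(\eps')$, i.e.\ essentially all of $k$'s top mass already lies in $N_j^\uparrow$. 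A double-counting / averaging argument on the bipartite incidence between $N_j^\uparrow$ and $J^{\rm correl}\cap N_i$, weighted by $x_{i',a}/\mu_{i'}$ on one side and $\gamma_k$ on the other, should yield such an $I^{\rm core}$; alternatively one can sample $I^{\rm core}$ as a random thinning of $N_j^\uparrow$ and use concentration.

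Once $I^{\rm core}$ is fixed, for every ``covered'' $k\in J^{\rm correl}\cap N_i$ (i.e.\ $N_k^\uparrow\cap I^{\rm core}\neq\emptyset$) the indicator $\textsf{TE}_k(t)$ vanishes whenever any queue in $I^{\rm core}$ is non-empty. Setting
\[
L \;:=\; \int_{t^\infty-t_i}^{t^\infty}\mathbbm{1}\bigl[\text{all queues of }I^{\rm core}\text{ are empty at time }t\bigr]\,dt ,
\]
we get $\theta_{i,k}\leq L$ for every covered $k$, and therefore
\[
\sum_{k\in J^{\rm correl}\cap N_i}\gamma_k\theta_{i,k}\;\leq\;\Bigl(\sum_{k\text{ uncovered}}\gamma_k\Bigr) t_i \;+\; \Bigl(\sum_{k\text{ covered}}\gamma_k\Bigr) L .
\]
Using (b), this reduces the lemma to showing that $L\leq (1-c_0)t_i$ with probability at least $c_0$ whenever $t_i\geq \tilde{\tau}_i$. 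The conditioning consists of (i) the value $t_i$; (ii) the indicators $\{A_{i'}\}_{i'\in N_j^\downarrow}$, which involve only BOT arrivals/departures and are therefore independent of the TOP queues; and (iii) $\textsf{TE}_j$, which pins every queue of $I^{\rm core}\subseteq N_j^\uparrow$ to zero at $t^\infty$. Under the weakly correlated Markov chains the queues $\{Q^{\textsf{weak}}_{i'}\}_{i'\in I^{\rm core}}$ are independent birth--death processes, so this conditioning splits into independent zero-endpoint conditionings, one per queue.

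The analytic heart of the proof is then Phase~3: using time reversibility of the $M/M/\infty$-style birth--death process, the law of $Q^{\textsf{weak}}_{i'}(t^\infty-s)$ conditional on $Q^{\textsf{weak}}_{i'}(t^\infty)=0$ equals the transient law of the same process started from~$0$, which relaxes to stationarity at rate $\mu_{i'}+\Gamma_{i'}$. Choosing $\bar{\delta}$ small ensures that for $t_i\geq\tilde{\tau}_i= (\delta\sum_{k\in J^{\rm correl}\cap N_i}\gamma_k)^{-1}$ the window length dominates the slowest relevant mixing scale (which by construction of $I^{\rm core}$ is on the order of $1/\sum_{k\in J^{\rm correl}\cap N_i}\gamma_k$), so the transient emptiness probability averaged over the window is strictly below the stationary value $\exp(-\sum_{i'\in I^{\rm core}} x_{i',a}/\mu_{i'})$, itself bounded away from~$1$ by (a). This gives $\mathbb{E}[L\mid t_i,\textsf{TE}_j,\{A_{i'}\}]\leq(1-2c_0)t_i$, and a Paley--Zygmund inequality applied to $t_i-L\geq 0$, using independence of the birth--death processes to control the second moment, upgrades this to $\Pr[L\leq (1-c_0)t_i\mid \cdots]\geq c_0$. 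The hardest step is exactly this third phase: quantifying the backward-in-time relaxation of each queue under the endpoint pinning, and aggregating across $i'\in I^{\rm core}$ (where the emptiness event is a product of individual events) without losing the constant gap from~$1$. This is where renewal-theoretic bounds on the expected empty-time of a birth--death queue started from~$0$ are essential, and it is the step where the assumption $\delta\leq\bar{\delta}$ is truly used.
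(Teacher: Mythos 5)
Your reduction step contains a genuine gap that the rest of the argument depends on. You claim that for a ``covered'' type $k$ (i.e.\ $N_k^\uparrow\cap I^{\rm core}\neq\emptyset$), the indicator $\textsf{TE}_k(t)$ ``vanishes whenever any queue in $I^{\rm core}$ is non-empty,'' and hence $\theta_{i,k}\le L$ where $L$ is the total time all of $I^{\rm core}$ is empty. This is false: $\textsf{TE}_k(t)=0$ requires some queue in $N_k^\uparrow$ to be non-empty, so a non-empty queue at $i'\in I^{\rm core}\setminus N_k^\uparrow$ does nothing for $k$. Conversely, $\textsf{TE}_k(t)=1$ forces only $N_k^\uparrow$ (and in particular $N_k^\uparrow\cap I^{\rm core}$) to be empty, not all of $I^{\rm core}$; thus $\theta_{i,k}$ can strictly exceed $L$, and the decomposition $\sum_k\gamma_k\theta_{i,k}\le(\sum_{k\text{ uncov.}}\gamma_k)t_i+(\sum_{k\text{ cov.}}\gamma_k)L$ does not follow. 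Strengthening ``covered'' to $I^{\rm core}\subseteq N_k^\uparrow$ would validate the inequality but then the existence of a non-negligible $I^{\rm core}$ that is simultaneously contained in most $N_k^\uparrow$'s is not guaranteed: the sets $N_k^\uparrow$ all overlap heavily with $N_j^\uparrow$, but need not overlap heavily with \emph{each other}.

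The paper avoids this by making the construction of $I^{\rm core}$ do more work. It imposes a \emph{per-$i'$ coverage} property: for every single $i'\in I^{\rm core}$, $\sum_{k\in J^{\rm correl}\cap N_i}\gamma_k\,\mathbbm{1}[i'\in N_k^\uparrow]\ge \tilde\Gamma/2$ (inequality \eqref{ineq:core_gamma_lb}). This is the thing the double-counting argument actually delivers, and it is much stronger than your property~(b). With it, whenever \emph{any} queue in $I^{\rm core}$ is non-empty the $\gamma$-weighted sum $\sum_k\gamma_k\,\textsf{TE}_k(t)$ is automatically at most $\tilde\Gamma/2$, no matter which $i'$ is the one occupied. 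The paper then integrates $\sum_k\gamma_k\,\textsf{TE}_k(t)$ over $[t^\infty-t_i,t^\infty]$ and splits \emph{time} into empty and non-empty cycles of $I^{\rm core}$ (a renewal process), applying Paley--Zygmund and an exponential tail bound to the cycle-length sums; this replaces your per-queue relaxation/Paley--Zygmund argument in Phase~3. Your Phase~3 (reversibility, zero-endpoint conditioning, mixing time comparison against $1/\tilde\Gamma$) is in the right spirit and could probably be carried through, but it only becomes relevant once the reduction is fixed, and fixing the reduction essentially forces you back to the per-$i'$ coverage property and the time-integral decomposition that the paper uses.
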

{\it Proof sketch.} Below, we sketch the main ideas of the proof with detailed derivations deferred to Appendix \ref{prf:correl}. To simplify the notation, we make the conditioning on the event $\{\textup{\textsf{TE}}_j, (A_{i'})_{i' \in N_j^\downarrow}\}$ implicit throughout this proof and assume that $t_i \geq \tilde{\tau}_i$ since the lemma is trivially true otherwise. % It turns out that conditioning on the event $(A_{i'})_{i' \in N_j^\downarrow}$ has no effect on our argument. 
For convenience, define $\tilde{\Gamma} := \sum_{k\in J^{\rm correl}\cap N_i}\gamma_k$. The proof of \Cref{lem:correl} proceeds in three steps. First, we establish the existence of a ``core set'' of offline nodes $I^{\rm core} \subseteq N_j^\uparrow$ such that (roughly) (i) its neighbors in $J^{\rm correl} \cap N_i$ have a large aggregate arrival rate, and (ii) $(\sum_{i' \in I^{\rm core}} \lambda_{i'}) / \tilde{\Gamma}$ is 
bounded below by a constant. In the second step, we leverage property (ii) of $I^{\rm core}$ to consider a sufficiently long horizon of empty and nonempty cycles, deriving concentration bounds on the aggregate empty and nonempty durations within that horizon. Lastly, we use property (i) of $I^{\rm core}$ to bound $\sum_{k \in J^{\rm correl} \cap N_i} \gamma_k \theta_{i,k}$ via the bounds on the total duration of empty and nonempty cycles (from step 2).

\vspace{0.2cm}

\noindent {\it Step 1: A core set of offline nodes.} We prove that the subgraph induced by $N^\uparrow_j$ and $J^{\rm correl} \cap N_i$ is (roughly speaking) highly connected. In particular, a counting argument in \Cref{app:core_set} shows that there exists a subset $I^{\rm core}\subseteq N_j^{\uparrow}$ such that 
\begin{eqnarray}\label{ineq:core_sum}
 \sum_{{i'} \in I^{\rm core}} \frac{x_{{i'}, a}}{\mu_{i'}} \geq u(\eps', \kappa) \ ,   
\end{eqnarray} for $u(\eps', \kappa):= {1-\eps'} - 2\kappa - \frac{0.5}{1-\eps'}$
and, for each ${i'} \in I^{\rm core}$,  
\begin{eqnarray} \label{ineq:core_gamma_lb}
\sum_{k\in J^{\rm correl} \cap N_i} \gamma_k \cdot \mathbbm{1}\left[{i'} \in N^\uparrow_k\right] \geq \frac{\tilde{\Gamma}}{2} \ .
\end{eqnarray}
It is straightforward to see that $u(\eps', \kappa) \geq 0.1$ for any $0 < \eps', \kappa < 0.1$. Letting $\tilde{\Lambda} = \sum_{i' \in I^{\rm core}} \lambda_{i'}$, \Cref{app:core_set} combines inequalities \eqref{ineq:core_sum} and \eqref{ineq:core_gamma_lb} to show that 
\begin{align}\label{ineq:u_lb}
\frac{\tilde{\Lambda}}{\tilde{\Gamma}} \geq \frac{(1-\eps')u(\eps', \kappa)}{2} \geq 0.045 \ .
\end{align}

\noindent  {\it Step 2: Analyzing the evolution of queues $I^{\rm core}$.} For any subset of offline nodes $S \subseteq \textup{TOP}$, we define an ``empty cycle'' as a maximal time interval under which every queue in the weakly correlated Markov chains is empty (i.e., $\sum_{i \in S} Q_i^{\textup{\textsf{weak}}} = 0$ during this interval); the remaining intervals are ``nonempty" cycles. By definition, an empty cycle ends with a new arrival of nodes in $S$ which indicates the start of a new nonempty cycle. Moreover, the durations of empty or nonempty cycle are IID which implies that the evolution of TOP queues, across the succession of cycles, forms a renewal process.

Let $\textsf{NonEmpty}(S)$ denote the distribution of the amount of time $S$ spends in a nonempty cycle. While characterizing $\textsf{NonEmpty}(S)$ is complicated, we provide a simple distribution that $\textsf{NonEmpty}(S)$ stochastically dominates.
\begin{claim} \label{clm:nonbusy}
$\textup{\textsf{NonEmpty}}(S)$ stochastically dominates the hyperexponential distribution $$ Z_S \sim 
    \Big( \textup{Exp}(\mu_{i'} + \Gamma_{i'}) \; \; \textup{with probability } \frac{\lambda_{i'}}{\sum_{i''\in S} \lambda_{i''}} \Big)_{i' \in S} \ .$$  
\end{claim}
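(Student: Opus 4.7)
The plan is to exploit the independence of the TOP queues in the weakly correlated Markov chains, together with the binary-queue simplification from Property (v) of the transformed instance, so that the evolution of $\{Q_{i'}^{\textup{\textsf{weak}}}\}_{i' \in S}$ reduces to a collection of independent two-state CTMCs on $\{0,1\}$, where each $i' \in S$ transitions $0 \to 1$ at rate $\lambda_{i'}$ and $1 \to 0$ at rate $\mu_{i'} + \Gamma_{i'}$. This will let me characterize the start of a nonempty cycle exactly and give a clean stochastic lower bound on its length.

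First, I would pick an arbitrary empty cycle of $S$ and analyze how it ends. Immediately before the transition out of an empty cycle, every queue $i' \in S$ sits at state $0$. By the strong Markov property and memorylessness of the exponential, from that instant each $i'$ independently waits $\textup{Exp}(\lambda_{i'})$ until jumping to $1$, and the nonempty cycle starts the moment the minimum of these waiting times is realized. The standard race-of-exponentials argument then shows that the triggering queue $i^\star \in S$ has distribution
\[ \Pr[i^\star = i'] \;=\; \frac{\lambda_{i'}}{\sum_{i'' \in S} \lambda_{i''}}, \qquad i' \in S. \]

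Second, I would lower bound the nonempty-cycle duration conditional on $i^\star = i'$. Right after the triggering jump, queue $i'$ is at state $1$ while every other queue in $S$ is at state $0$. By definition, the nonempty cycle cannot terminate until all queues in $S$ are simultaneously at $0$, so in particular it must last at least as long as $i'$ itself remains at state $1$. Under the binary dynamics, this residence time is exactly $\textup{Exp}(\mu_{i'} + \Gamma_{i'})$, and (by the strong Markov property applied at the triggering jump) is independent of the identity $i^\star = i'$. Hence, conditional on $i^\star = i'$, the nonempty-cycle length stochastically dominates $\textup{Exp}(\mu_{i'} + \Gamma_{i'})$, and integrating over the distribution of $i^\star$ yields, for every $t \ge 0$,
\[ \Pr[\textup{\textsf{NonEmpty}}(S) > t] \;\ge\; \sum_{i' \in S} \frac{\lambda_{i'}}{\sum_{i'' \in S} \lambda_{i''}} \cdot \Pr\bigl[\textup{Exp}(\mu_{i'} + \Gamma_{i'}) > t\bigr] \;=\; \Pr[Z_S > t]. \]

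I do not expect serious technical obstacles; the main subtlety is to argue cleanly that the triggering identity $i^\star$ and the subsequent sojourn of $i^\star$ at state $1$ are independent (so the stochastic dominance factors through the mixture), which follows from the strong Markov property at the transition epoch combined with memorylessness. A secondary point worth stating carefully is that ``sampling a nonempty cycle'' is well defined because the TOP process restricted to $S$ is an alternating renewal process with IID cycle lengths, so one can interpret $\textup{\textsf{NonEmpty}}(S)$ either as the common cycle-length distribution or via the time-average ergodic theorem on a long horizon.
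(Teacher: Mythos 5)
Your proposal is correct and follows essentially the same argument as the paper's proof: identify the triggering type at the start of a nonempty cycle via the race-of-exponentials (giving the mixture weights $\lambda_{i'}/\sum_{i''} \lambda_{i''}$), then observe that the cycle persists at least until that queue empties, which under binary queues takes time $\textup{Exp}(\mu_{i'}+\Gamma_{i'})$. You spell out the independence of the trigger identity and its subsequent sojourn more carefully than the paper does, but the substance is identical.
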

\begin{proof}
    Each $i' \in S$ has its queue transition out of the empty state at rate $\lambda_{i'}$ in the weakly correlated Markov chains. Thus, if $S$ is currently empty, a nonempty cycle will start after time distributed as $\text{Exp} \left( \sum_{i' \in S} \lambda_{i'} \right)$; the type that causes this transition from empty to nonempty is $i'$ with probability $\lambda_{i'} \cdot \left( \sum_{i'' \in S} \lambda_{i''} \right)^{-1}$. Note that the nonempty cycle cannot end until the queue for $i'$ empties. As $i' \in \textup{TOP}$ and we assume binary queues, we know that the depletion rate is precisely $\mu_{i'} + \Gamma_{i'}$. 
\end{proof}

We now focus on the empty and nonempty cycles of $I^{\rm core}$. Since the Markov chain is time-reversible, we consider the backward process, starting from $t^\infty$ assuming that we start from an empty cycle (due to the conditioning on $\textup{\textsf{TE}}_j$). 
In particular, we consider a horizon of $T = \lceil \frac{\tilde{\Lambda}}{\delta' \tilde{\Gamma}} \rceil$ inter-renewal times, where $\delta' \leq 4\delta$ is a constant specified later. Concretely, we analyze $T$ empty and nonempty cycles, denoted by $\eta^{\rm e}_1,\eta^{\rm n}_1,\ldots,\eta^{\rm e}_T,\eta^{\rm n}_T$ respectively. The properties of $I^{\rm core}$ described in inequality \eqref{ineq:u_lb} imply $T \geq \frac{(1-\eps')u(\eps',\kappa)}{2\delta'} \geq \frac{0.045}{\delta'}$; taking $\delta'$ sufficiently small ensures a long-enough horizon. Using concentration bounds presented in \Cref{app:concentration_bounds}, and \Cref{clm:nonbusy}, we obtain 
\begin{align} \label{ineq:eta_minus_bound}
    \pr{\sum_{t=1}^T \eta^{\rm n}_t \geq \frac{1}{4\delta' \tilde{\Gamma}}} \geq \frac{1}{4} \cdot \frac{1}{\frac{8}{(1-\eps')u(\eps', \kappa)} \delta' + 1} \geq \frac{1}{4} \cdot \frac{1}{90\delta' + 1} \ ,
\end{align}
and
\begin{eqnarray} \label{ineq:eta_plus_bound}
    \pr{\sum_{t=1}^T \eta^{\rm e}_t\geq \frac{2}{\delta' \tilde{\Gamma}}} \leq \frac{1}{2} e^{-\frac{(1-\eps')u(\eps', \kappa)(1 -\ln 2)}{2\delta'}} \leq \frac{1}{2} e^{-\frac{0.01}{\delta'}}  \ .
\end{eqnarray}

\noindent {\it Step 3: Bounding $\sum_{k \in J^{\rm correl} \cap N_i} \gamma_k\theta_{i,k}$ via $\sum_{t=1}^T \eta^{\rm e}_t$ and $\sum_{t=1}^T \eta^{\rm n}_t$.} Finally, we argue that the bounds in inequalities \eqref{ineq:eta_minus_bound} and \eqref{ineq:eta_plus_bound} lead to the desired constant-factor gap between $\sum_{k \in J^{\rm correl} \cap N_i} \gamma_k\theta_{i,k}$ and $t_i \tilde{\Gamma}$. Recall that we assume $t_i \geq {1}/(\delta \tilde{\Gamma})$ from the lemma's hypothesis. We thus choose $\delta'$ such that $\frac{1}{\delta'} =  \frac{t_i \tilde{\Gamma}}{4} \geq \frac{1}{4\delta}$. Let $\bar{\delta} > 0$ be the constant that satisfies 
\begin{align}
    \frac{1}{4} \cdot \frac{1}{360\bar{\delta} + 1} - \frac{1}{2}\exp\left({-\frac{0.0025}{\bar{\delta}}}\right) = \frac{1}{32} \ . \label{eq:delta_bar}
\end{align}
Therefore, if $\delta \leq \bar{\delta}$, with probability at least 
\begin{align}
    \frac{1}{4} \cdot \frac{1}{\frac{16}{(1-\eps')u(\eps', \kappa)} \delta + 1} - \frac{1}{2} e^{-\frac{(1-\eps')u(\eps', \kappa)(1 -\ln 2)}{\delta}} \geq \frac{1}{4} \cdot \frac{1}{360\delta + 1} - \frac{1}{2}e^{-\frac{0.0025}{\delta}} \geq \frac{1}{32}  \label{ineq:good_prob}
\end{align}
we have 
\begin{align}\label{cond:eta_bounds}
    \sum_{t = 1}^T \eta_t^{\rm e} \leq \frac{2}{\delta' \tilde{\Gamma}} = \frac{t_i}{2} \quad \text{and} \quad \sum_{t = 1}^T \eta_t^{\rm n} \geq \frac{1}{4\delta'\tilde{\Gamma}} = \frac{t_i}{16} \ .
\end{align}
 We henceforth assume that conditions \eqref{cond:eta_bounds} hold. If at any time $t \in [t^\infty - t_i, t^\infty]$, $I^{\rm core}$ is nonempty, inequality \eqref{ineq:core_gamma_lb} implies 
 \begin{align}\label{ineq:unblocked_bound}
     \sum_{k \in J^{\rm correl} \cap N_i} \gamma_k \cdot \textup{\textsf{TE}}_k(t) \leq \frac{\tilde{\Gamma}}{2} \ ,
     \end{align}
regardless of which types in $I^{\rm core}$ are available. Therefore, it follows that
 \begin{align}
    \sum_{k\in J^{\rm correl}\cap N_i} \gamma_k\theta_{i,k} &\leq \max\left\{ t_i \tilde{\Gamma} - \left(t_i - \sum_{t=1}^T \eta_t^{\rm e}\right) \cdot \frac{\tilde{\Gamma}}{2}, t_i \tilde{\Gamma} - \left(\sum_{t=1}^T \eta_t^{\rm n}\right) \cdot \frac{\tilde{\Gamma}}{2}\right\} \label{ineq:first_theta} \\ & \leq \tilde{\Gamma} \cdot \max\left\{ \frac{3t_i}{4}, \frac{31t_i}{32}\right\} = \frac{31}{32} t_i \tilde{\Gamma}
     \ . \label{ineq:final_delta}
\end{align}
To unpack the first inequality, note that the first argument in the max expression follows from assuming that $t_i$ is smaller than $\sum_{t = 1}^T (\eta_t^{\rm e} + \eta_t^{\rm n})$. In this case, the nonempty duration is at least $t_i - \sum_{t=1}^T \eta_t^{\rm e} \geq \frac{t_i}{2}$ by inequality \eqref{cond:eta_bounds}. Consequently, inequality \eqref{ineq:unblocked_bound} yields
\[ 
    \sum_{k \in J^{\rm correl} \cap N_i} \gamma_k \theta_{i,k} = \int_{t^\infty - t_i}^{t^\infty} \sum_{k \in J^{\rm correl} \cap N_i}  \gamma_k \cdot \textup{\textsf{TE}}_k(t) dt \leq t_i \tilde{\Gamma} - \left(t_i - \sum_{t=1}^T \eta_t^{\rm e}\right) \cdot \frac{\tilde{\Gamma}}{2} \ .
\]
The second argument of the max expression in inequality \eqref{ineq:first_theta} corresponds to the reverse case that $t_i$ exceeds $\sum_{t = 1}^T (\eta_t^{\rm e} + \eta_t^{\rm n})$, in which case the nonempty time duration is at least $\sum_{t=1}^T \eta_t^{\rm n} \geq \frac{t_i}{16}$, from inequality \eqref{cond:eta_bounds}. Using a similar line of reasoning as in the previous case, we have
\[ 
    \sum_{k \in J^{\rm correl} \cap N_i} \gamma_k \theta_{i,k} = \int_{t^\infty - t_i}^{t^\infty} \sum_{k \in J^{\rm correl} \cap N_i}  \gamma_k \cdot \textup{\textsf{TE}}_k(t) dt \leq t_i \tilde{\Gamma} - \left(\sum_{t=1}^T \eta_t^{\rm n}\right) \cdot \frac{\tilde{\Gamma}}{2} \ .
\]
In light of inequalities \eqref{ineq:good_prob} and \eqref{ineq:final_delta} we conclude the proof, with $c_0 = \nicefrac{1}{32}$ and $\bar{\delta}$ specified in equality \eqref{eq:delta_bar}, that
\[
    \pr{\left.\sum_{k\in J^{\rm correl}\cap N_i}\gamma_k\theta_{i,k}\leq (1-c_0) \cdot t_i \sum_{k\in J^{\rm correl}\cap N_i}\gamma_k  \right| t_i, (A_{i'})_{i' \in N_j^\downarrow}, \textup{\textsf{TE}}_j } \geq c_0 \cdot \mathbbm{1}\left[t_i \geq \tilde{\tau}_i\right] \ ,
\]
for every $\delta \leq \bar{\delta}$ and $\tilde{\tau}_i = (\delta \cdot \sum_{k \in J^{\rm correl} \cap N_i} \gamma_k )^{-1}$. \hfill \qedsymbol

\paragraph{}Turning attention to the easier case of $J^{\rm indep}$, the following lemma demonstrates a constant-factor gap between $\theta_{i,k}$ and $t_i$ with a constant probability. The proof appears in Appendix \ref{prf:rand} where for each $k \in J^{\rm indep} \cap N_i$, we analyze the expected empty duration of $I^k = N_k^\uparrow \setminus N_j^\uparrow$, since the queues in $I^k$ are independent of $\textsf{TE}_j$ and evolve independently---identically to their evolution in $Q^{\textsf{ind}}$.
\begin{lemma} \label{lem:rand}
There exists a constant $c_1 \in (0,1)$, independent of $\eps$, such that for each $i \in N_j^\downarrow$, we have 
% $\prpar{\theta_{i,k} \geq (1-c_1) \tau_i  | t_i = \tau_i,\textup{\textsf{TE}}_j(t^\infty) =1} \leq 1-c_1$, implying 
% $\expar{\sum_{k\in J^{\rm indep}\cap N_i}\gamma_k\theta_{i,k} | t_i = \tau_i,\textup{\textsf{TE}}_j(t^\infty) =1} \leq (1-c_1) \sum_{k\in J^{\rm indep}\cap N_i}\gamma_k\tau_i $.
\[ 
\pr{\left.\sum_{k\in J^{\rm indep}\cap N_i}\gamma_k\theta_{i,k}\leq (1-c_1)\cdot t_i \cdot \sum_{k\in J^{\rm indep}\cap N_i} \gamma_k   \, \right| \, t_i, (A_{i'})_{i' \in N_j^\downarrow}, \textup{\textsf{TE}}_j } \geq c_1 \ .
\]
\end{lemma}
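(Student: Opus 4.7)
The plan is to exploit the defining property of $J^{\rm indep}$: for every $k \in J^{\rm indep}\cap N_i$, the set $I^k := N_k^\uparrow \setminus N_j^\uparrow$ satisfies $\sum_{i'\in I^k} x_{i',a}/\mu_{i'} \ge \kappa$, and by construction is disjoint from $N_j^\uparrow$.  In the weakly correlated Markov chain the top queues evolve as mutually independent birth-death processes, and the bottom-side arrival/abandonment processes (which determine $t_i$ and $(A_{i'})_{i'\in N_j^\downarrow}$) are independent of all top queues.  Therefore the joint trajectory of $(Q_{i'}^{\textup{\textsf{weak}}})_{i'\in I^k}$ over any interval is independent of $t_i$, of $(A_{i'})_{i'\in N_j^\downarrow}$, and of $\textup{\textsf{TE}}_j$ (which is a function of the queues indexed by $N_j^\uparrow$, disjoint from $I^k$).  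Conditioning on these quantities therefore has no effect on the distribution of the $I^k$-queues.

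The key deterministic observation is that, since $I^k \subseteq N_k^\uparrow$, a nonempty queue in $I^k$ at time $t$ forces $\textup{\textsf{TE}}_k(t)=0$.  Hence
\[
\theta_{i,k} \;=\; \int_{t^\infty-t_i}^{t^\infty} \textup{\textsf{TE}}_k(t)\,dt \;\le\; \int_{t^\infty-t_i}^{t^\infty} \mathbbm{1}\!\left[\textstyle\sum_{i'\in I^k} Q_{i'}^{\textup{\textsf{weak}}}(t) = 0\right] dt \ .
\]
By the stationarity and independence of the birth–death processes comprising $I^k$ (as in \Cref{claim:stationarydistbirthdeath} and the argument of \Cref{sec:oneminusoneovere}), the probability that every queue in $I^k$ is empty at any fixed time equals $\prod_{i'\in I^k}\exp\!\left(-\lambda_{i'}/(\mu_{i'}+\Gamma_{i'})\right)$, which by inequality \eqref{ineq:abandonment_bound} is at most $\exp(-(1-O(\eps))\,\kappa)$.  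Combining this with Fubini and the independence noted above gives, for each $k$,
\[
\E\!\left[\gamma_k \theta_{i,k} \,\Big|\, t_i,\,(A_{i'})_{i'\in N_j^\downarrow},\,\textup{\textsf{TE}}_j\right] \;\le\; \gamma_k \cdot t_i \cdot e^{-(1-O(\eps))\,\kappa} \ .
\]

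Summing over $k\in J^{\rm indep}\cap N_i$ and setting $\Gamma^\star := \sum_{k\in J^{\rm indep}\cap N_i}\gamma_k$, the nonnegative random variable
\[
X \;:=\; t_i\,\Gamma^\star \;-\; \sum_{k\in J^{\rm indep}\cap N_i}\gamma_k\,\theta_{i,k}
\]
satisfies $X\in[0,\,t_i\Gamma^\star]$ and has conditional mean at least $m\cdot t_i\Gamma^\star$ with $m := (1-e^{-\kappa})(1-O(\eps))$.  A reverse Markov inequality (if $X\in[0,M]$ with $\E[X]\ge mM$, then $\Pr[X\ge \alpha M]\ge (m-\alpha)/(1-\alpha)$ for any $\alpha<m$) applied with $\alpha := (1-e^{-\kappa})/4$ then produces a lower bound that is a positive constant depending only on $\kappa$ (and is insensitive to $\eps$ for $\eps$ sufficiently small).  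Setting $c_1 > 0$ to be the smaller of $\alpha$ and this probability delivers the claimed inequality.

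I do not anticipate a significant obstacle here: the argument reduces to the stationary computation for independent birth–death queues already used in \Cref{sec:oneminusoneovere}, together with the clean independence of $I^k$ from every conditioned quantity.  The only care needed is tracking the $(1-O(\eps))$ factors arising from \eqref{ineq:abandonment_bound} and choosing $\eps$ small enough (relative to $\kappa$) so that $m$ stays bounded above $\alpha$.
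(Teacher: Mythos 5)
Your proof is correct and follows essentially the same route as the paper's: both use the fact that $I^k := N_k^\uparrow \setminus N_j^\uparrow$ is disjoint from $N_j^\uparrow$ (hence conditionally independent of $\textup{\textsf{TE}}_j$ and the BOT indicators), bound $\theta_{i,k}$ by the time $I^k$ is entirely empty, compute the conditional expectation $\E[\theta_{i,k}\mid t_i,\cdot]\le e^{-(1-O(\eps))\kappa}\,t_i$ via the Poisson stationary distribution of the TOP queues, and then conclude with a one-sided Markov argument. Your reverse-Markov formulation (bounding the complementary random variable $X = t_i\Gamma^\star - \sum_k\gamma_k\theta_{i,k}$) is mathematically equivalent to the paper's direct application of Markov's inequality to $\sum_k\gamma_k\theta_{i,k}$ with threshold parameter $\tilde c$, so this is a cosmetic rather than structural difference; if anything, your version is slightly more careful in making explicit the $(1-O(\eps))$ correction coming from the discrepancy between the depletion rate $\mu_{i'}+\Gamma_{i'}$ of $Q^{\textup{\textsf{weak}}}_{i'}$ and the $x_{i',a}/\mu_{i'}$ appearing in the defining condition for $J^{\rm indep}$, where the paper elides this via ``evolve identically''.
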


\paragraph{Completing the proof of \Cref{lem:main}.} Combining \Cref{lem:correl} and \Cref{lem:rand} completes our proof. Indeed, if $\sum_{k \in J^{\rm correl} \cap N_i} \gamma_k \geq {\Gamma_i}/{2}$, by having $\delta \leq \bar{\delta}$ (e.g., $\delta = \bar{\delta}/2$), with probability $c_0$, we have
\begin{align*}
    \sum_{k \in N_i}  \gamma_k \theta_{i,k} = \sum_{k \in J^{\rm indep} \cap N_i}  \gamma_k \theta_{i,k} + \sum_{k \in J^{\rm correl} \cap N_i} \gamma_k \theta_{i,k} &\leq t_i \cdot \sum_{k \in J^{\rm indep} \cap N_i} \gamma_i + (1-c_0) t_i \sum_{k \in J^{\rm correl} \cap N_i} \gamma_k 
    \\ &\leq \left(1-\frac{c_0}{2}\right) t_i \cdot \sum_{k \in N_i} \gamma_k \ ,
\end{align*}
where the inequality follows from \Cref{lem:correl}. Therefore, \Cref{lem:main} holds with $c = c_0/2$ and $\delta = \bar{\delta}/2$. Conversely, if $\sum_{k \in J^{\rm indep} \cap N_i} \gamma_k = \Gamma_i - \sum_{k \in J^{\rm correl} \cap N_i} \gamma_k \geq \Gamma_i / 2$, we can similarly show that \Cref{lem:main} holds with $c = c_1/2$ and $\delta = \bar{\delta}/2$. In conclusion, with $c = \min\{c_0, c_1\} / 2$ and $\delta = \bar{\delta}/2$, we have
\[ 
    \pr{\left.\sum_{k\in  N_i}\gamma_k\theta_{i,k}\leq (1-c) \cdot t_i \cdot \Gamma_i  \right|  t_i, \textup{\textsf{TE}}_j, \bigwedge_{i' \in N_j^{\downarrow}} A_{i'}} \geq {c} \cdot \mathbbm{1}[t_i \geq \tau_i^*] \ ,
\]
where $\tau_i^* =  ( \delta \cdot \Gamma_i )^{-1}$. The proof of \Cref{lem:main} is now complete. \hfill \qedsymbol

\color{black}

\section{Improved Competitive Ratio}

Although our main technical contribution is to obtain an improved approximation ratio, our approach can also be adapted to yield an improved competitive ratio against the offline optimum, improving upon the wresultork of \cite{patel2024combinatorial}, as formalized in the next theorem. 

\begin{restatable}{thm}{compratiothm} \label{thm:compratio}
    There exists a polynomial-time algorithm for the online stationary matching problem that achieves expected average reward at least a $(1-1/\sqrt{e}+\delta)$-factor of the optimal \underline{offline} algorithm, for some universal constant $\delta > 0$. 
\end{restatable}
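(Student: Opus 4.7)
The plan is to adapt the proof of \Cref{thm:main} to the offline benchmark by running essentially the same pair of algorithms, but against a slightly weaker LP relaxation. Concretely, I would define $(\mathrm{TLP}_\mathrm{off})$ as \eqref{TLPon} with constraint \eqref{eqn:tightOnlineConstraint} dropped (this constraint was specific to the online benchmark; it is what made $p_{i,j}\in[0,1]$ automatic). By essentially the same argument as \Cref{claim:lprelaxation}, combined with the fact that the number of present offline nodes of type $i$ in steady-state is $\mathrm{Pois}(\lambda_i/\mu_i)$ and these counts are independent across types, $(\mathrm{TLP}_\mathrm{off})$ upper bounds $\optoff$. The relaxation remains polynomial-time solvable by the same separation oracle as in \Cref{TLPpolysolvable}.

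The next step is to run the same algorithmic template as in \Cref{sec:algo} and \Cref{sec:approximation}, but with proposal probabilities rescaled to $p_{i,j}^{\mathrm{off}} := \min\bigl(1,\,x_{i,j}/(\gamma_j\,(1-e^{-\lambda_i/\mu_i}))\bigr)$, so that each present copy of type $i$ proposes to an arriving $j$ with probability whose expectation against the Poisson presence distribution recovers $x_{i,j}/\gamma_j$. The baseline analysis of \Cref{sec:oneminusoneovere} then goes through with two concavity steps instead of one: one for the presence event $\Pr[Q_i^{\textsf{ind}}\geq 1]=1-e^{-\lambda_i/\mu_i}$, and one for the probability that $j$ has at least one proposer. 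Optimizing the worst-case product yields the classical $1-1/\sqrt{e}$ competitive guarantee, recovering \cite{patel2024combinatorial}. Next, I would introduce the analogs of the three easy cases from \Cref{sec:sharpened}: (i) $j$ is not saturated in $(\mathrm{TLP}_\mathrm{off})$; (ii) $\textsf{LP-Gain}_j$ is not concentrated on one reward level; (iii) a non-negligible set of proposal probabilities is bounded away from $1$. Each yields a strict improvement on the per-$j$ ratio above $1-1/\sqrt{e}$ using the exact same convexity/negative-correlation arguments, after accounting for the extra ``$1-e^{-\lambda_i/\mu_i}$'' factor introduced by the presence event. Defining $\varepsilon$-hard types and VWHC instances as in \Cref{def:hardtype} (with the $1/\sqrt{e}$ baseline in place of $1/e$), the non-VWHC regime is handled by this sharpened \Cref{alg:corrprop}.

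For the residual VWHC-offline instances I would invoke Balanced Greedy (\Cref{alg:second}) after the same \hyperref[transf]{\textsf{Instance Transformation}}. The properties (i)--(v) of the modified instance go through essentially verbatim, with inequality \eqref{ineq:xiamubound} replaced by a two-sided bound on $\sum_{i\in N_j}(1-e^{-\lambda_i/\mu_i})$ and the balancedness of TOP/BOT unchanged. The weakly correlated Markov chains are defined identically, and \Cref{claim:stochasticdominancesecond} still holds. The key step is an offline analog of \Cref{prop:proof main}: conditioning on $\textsf{TE}_j$ (no top neighbor present) and then bounding the joint probability that all bottom neighbors are also absent. The only change is that ``unavailable'' now simply means ``not present,'' so the per-coordinate exponential $e^{-0.5}$ calculation from \Cref{claim:PrTEj} is replaced by its square root; \Cref{lem:jointproballqueuesempty}, \Cref{lem:main}, and the renewal/Hardy--Littlewood machinery of \Cref{lem:correl} and \Cref{lem:rand} then transfer in a black-box manner, yielding an absolute constant boost $\eta>0$ over $1-1/\sqrt{e}$.

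The main obstacle is not the VWHC analysis, which is essentially unchanged, but the bookkeeping needed to make the baseline $1-1/\sqrt{e}$ bound tight against $(\mathrm{TLP}_\mathrm{off})$: one must handle the two simultaneous concavity losses carefully (presence and non-overmatching), verify that pivotal sampling, applied over the random number of \emph{present} copies, still yields the required prefix property after the capping of $p_{i,j}^{\mathrm{off}}$ at $1$, and confirm that the ``boost functions'' $b(\varepsilon)$ in the three easy cases remain continuous and positive when the driving inequality is $1-e^{-\sqrt{z}}\geq (1-1/\sqrt{e})\sqrt{z}$ rather than $1-e^{-z}\geq(1-1/e)z$. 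Once this is in place, combining the easy-case improvement of $(1-1/\sqrt{e}+c(\varepsilon))$ from sharpened \Cref{alg:corrprop} with the VWHC improvement of $(1-f(\varepsilon))(1-1/\sqrt{e}+\zeta')$ from Balanced Greedy, and choosing $\varepsilon$ as in the proof of \Cref{thm:main}, gives the claimed $(1-1/\sqrt{e}+\eta)$-competitive algorithm.
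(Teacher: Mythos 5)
Your proposal takes a genuinely different — and substantially heavier — route than the paper. You propose to replicate the entire approximation-ratio pipeline against $(\mathrm{TLP}_\mathrm{off})$: three easy cases lifted verbatim from \Cref{sec:sharpened}, a notion of $\eps$-hard types, VWHC instances, \textsf{Instance Transformation}, Balanced Greedy, weakly correlated Markov chains, and the renewal-theoretic machinery of \Cref{lem:correl}. The paper instead observes that for the \emph{offline} benchmark, the difficult VWHC regime cannot arise, so \Cref{alg:corrprop} (run with the offline LP and the modified $p_{i,j}$) already beats $1-1/\sqrt{e}$ on every single online type $j$, with no second algorithm needed. The key structural observation the paper uses — and your proposal misses — is the abundant/scarce dichotomy: classify offline type $i$ as abundant if $\lambda_i/\mu_i \ge 1$ and scarce otherwise. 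The $1/\sqrt{e}$ loss comes from the worst-case value $1/2$ of $f(\lambda_i/\mu_i) = \frac{\lambda_i/\mu_i}{(1-e^{-\lambda_i/\mu_i}) + \lambda_i/\mu_i}$, and this function is strictly above $1/2$ for abundant queues, giving a direct boost (\Cref{obs:cr_case2}). For scarce queues with high proposal probability, constraint~\eqref{eqn:offtightOnlineFlow} forces $\sum_{i}\lambda_i/\mu_i \le 2/(1-\eps)$ and hence $\sum_i x_{i,j}/\gamma_j \le 1-e^{-2/(1-\eps)}+\eps$, so $j$ is automatically unsaturated (\Cref{obs:cr_case3}). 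Combined with the pivotal-sampling boost when proposal probabilities are bounded away from $1$ (\Cref{obs:cr_case1}), these three cases exhaust all instances per-$j$ and yield \Cref{thm:compratio} directly.

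Beyond being overkill, your route would also require nontrivial re-derivation that you wave through too quickly. The ``$e^{-0.5}$ replaced by its square root'' claim in \Cref{claim:PrTEj} does not make sense as stated, and the weakly correlated Markov chain analysis leans on \eqref{ineq:xiamubound} and \eqref{ineq:abandonment_bound}, which used constraint~\eqref{eqn:tightOnlineConstraint} (dropped in $(\mathrm{TLP}_\mathrm{off})$) and the specific identity $p_{i,j} = \frac{x_{i,j}/\gamma_j}{x_{i,a}/\mu_i}$; none of this ports automatically once $p_{i,j}$ is redefined against $1-e^{-\lambda_i/\mu_i}$. Also, the capping $\min(1,\cdot)$ in your $p_{i,j}^{\mathrm{off}}$ is unnecessary: the singleton instances $H=\{i\}$ of constraint~\eqref{eqn:offtightOnlineFlow} already guarantee $p_{i,j}\le 1$. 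In short, your approach might eventually be made to work, but the paper's observation that the tightened constraints structurally preclude the hard case for the offline benchmark short-circuits the entire Balanced Greedy apparatus, and is what the intended proof actually uses.
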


The algorithm for \Cref{thm:compratio} uses an offline LP relaxation that is nearly identical to \eqref{TLPon}, however, constraint \eqref{eqn:tightOnlineConstraint} is removed so that the LP reflects the optimal offline benchmark. Consequently, our algorithm (\Cref{alg:corrpropoff}) is similar to \Cref{alg:corrprop}, albeit with using the offline LP solution and corresponding proposal probability. Contrary to \Cref{thm:main} which  required a second algorithm to handle VWHC instances, \Cref{alg:corrpropoff} is sufficient to prove \Cref{thm:compratio}. Below, we highlight the main differences between our approximation versus competitive analysis. A full treatment and proof is deferred to \Cref{app:competitive} to keep the paper concise. 

The first difference between this analysis and the one in \Cref{sec:algo} is that the independent Markov chains, induced by \Cref{alg:corrpropoff}, stochastically dominate a collection of independent $\textsf{Pois}(u_i \cdot \lambda_i/\mu_i \cdot (u_i + \lambda_i / \mu_i)^{-1})$ distributions, where we use $u_i = 1-\exp(-\lambda_i  / \mu_i)$. This should be contrasted with the independent Markov chains induced by \Cref{alg:corrprop}, where each is distributed exactly according to a $\textsf{Pois}(x_{i,a}/\mu_i)$ distribution (\Cref{claim:stationarydistiMcPoisson}). Consequently, for the competitive analysis, in the convex order, we have
\begin{align}
R_{j}(w) &:=  \sum_{i : r_{i,j} \ge w} p_{i,j} \cdot \text{Pois}\left(\frac{\lambda_i/\mu_i}{u_i+\lambda_i/\mu_i} \cdot u_i\right) \notag \\ 
& \cvxle \textup{Pois} \left( \sum_{i:r_{i,j} \geq w}\frac{\lambda_i/\mu_i}{u_i+\lambda_i/\mu_i} \cdot \frac{x_{i,j}}{\gamma_j} \right) \ . \label{ineq:pois_main_body}
\end{align} Now, observing 
\[
    f(\lambda_i / \mu_i) := \frac{\lambda_i/\mu_i}{u_i + \lambda_i / \mu_i} \geq \frac{1}{2}
\]
for every value of $\lambda_i / \mu_i$, entails the $(1-1/\sqrt{e})$-competitive ratio. 

To go beyond this bound, we use the properties of correlated proposals and analyze three different cases for every online type $j$. The first case is exactly similar to \hyperref[case3]{Case 3} in \Cref{sec:sharpened}---i.e., the majority of proposal probabilities to $j$ is nearly 1. In the second case, we classify offline types into ``abundant'' and ``scarce'' based on their $\lambda_i / \mu_i$. Note that as $\lambda_i/ \mu_i$ increases, $f(\lambda_i / \mu_i)$ increases too. Therefore, if a non-negligible fraction of $j$'s matches come from abundant offline types, we can give a sharpened bound for inequality \eqref{ineq:pois_main_body}. If neither of the above two cases hold for $j$, it means that scarce queues constitute most of $j$'s matches with high proposal probability. In this case, we show that the tightening constraints of the LP imply that $j$ cannot be saturated. Consequently, we can argue, similarly to \Cref{obscase1}, that when the LP mass incident to $j$ is bounded away from $\gamma_j$, the concavity step can be tightened to obtain a better bound. Combining these three observations, we show that our algorithm is $(1-1/\sqrt{e} + \delta)$-competitive on every $j \in J$ for some constant $\delta>0$, and \Cref{thm:compratio} immediately ensues.

\section{Conclusion}

In this work, we aimed to develop a better understanding of the stationary bipartite matching problem that reflects the dynamic evolution of many matching markets. By leveraging the toolkit of approximation algorithms and competitive analysis, while accounting for the stochastic and dynamic nature of arrivals and departures inherent in queueing systems, we  obtained improved performance guarantee and simplified existing analytical results. Notably, we breached the $(1-1/e)$-approximation ratio through a combination of correlated LP rounding, greedy-like algorithm, and fine-grained analysis of the correlation structure among the offline nodes. Our work suggests that combining algorithmic insights with analytical tools from stochastics and probability theory can provide valuable results and insights for tackling online decision making problems.

% \bibliography{a.bib}

\newpage
\appendix 

\section{Preliminaries} \label{app:prelims}
Throughout this paper we use some basic facts about continuous-time Markov chains and birth-death processes, which are recalled here for completeness.

\begin{definition}
A \emph{continuous-time Markov chain (CTMC)} $\{X(t), t \geq 0\}$ on a countable state space $S$ is characterized by transition rates $q_{ij} \geq 0$, for all $i, j \in S$ with $i \neq j$. For every $t \ge 0$ we have $\Pr[X(t+h) = j \mid X(t) = i] = q_{ij} \cdot h + o(h)$ as $h \to 0$.
\end{definition}

\begin{definition}
A CTMC is \emph{irreducible} if for any two states $i, j \in S$, there is a positive probability of reaching $j$ from $i$ in finite time.
\end{definition}

%\begin{definition}
%A probability distribution $\pi$ on $S$ is a \emph{stationary distribution} for a CTMC if $\pi Q = 0$, where $Q$ denotes the \emph{rate matrix} with entries $q_{ij}$ for $i \neq j$ and $q_{ii} = -\sum_{j \neq i} q_{ij}$.
%\end{definition}

\begin{definition}
An irreducible CTMC is called \emph{positive recurrent} if for all states $i \in S$, the expected return time to state $i$, starting from $i$, is finite. Formally, let $\tau_i = \inf\{t > 0 : X(t) = i \mid X(0) = i\}$. The chain is positive recurrent if $\mathbb{E}_i[\tau_i] < \infty$ for all $i \in S$.
\end{definition}

\begin{claim}
If $\{X(t), t \geq 0\}$ is an irreducible, positive recurrent CTMC on a countable state space $S$ it has a unique \emph{stationary distribution} $\pi$, where $\lim_{t \to \infty} P(X(t) = j | X(0) = i) = \pi_j$ for all $i, j \in S$.
\end{claim}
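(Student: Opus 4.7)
The plan is to follow the classical approach via the embedded jump chain combined with a renewal-theoretic construction of $\pi$. Let $Q$ denote the infinitesimal generator, with $q_j := -q_{jj}$ and jump probabilities $p_{ij} := q_{ij}/q_j$ for $i \neq j$. Define the embedded jump chain $Y_n := X(T_n)$, where $0 = T_0 < T_1 < \cdots$ are the successive jump times. First I would show that irreducibility of $X$ transfers to $Y$, and that positive recurrence of $X$ (together with $q_j > 0$ everywhere, which follows from irreducibility on a countable state space) implies positive recurrence of $Y$. For each state $j$, the holding times are $\mathrm{Exp}(q_j)$ and are conditionally independent given the jump chain, so the relation $\mathbb{E}_i[\tau_i] = q_i^{-1} \cdot \mathbb{E}_i[\text{\# jumps before returning to } i]$ lets us pass back and forth.

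Next, fix a reference state $i_0$ and construct a candidate stationary distribution via an occupation-time formula:
\[
\pi_j \;:=\; \frac{1}{\mathbb{E}_{i_0}[\tau_{i_0}]} \cdot \mathbb{E}_{i_0}\!\left[\int_0^{\tau_{i_0}} \mathbbm{1}[X(t)=j]\, dt\right].
\]
By positive recurrence, $\mathbb{E}_{i_0}[\tau_{i_0}] < \infty$, so $\pi$ is a well-defined probability measure with $\sum_j \pi_j = 1$. I would verify the invariance relation $\pi Q = 0$ by applying the strong Markov property at hitting times of $i_0$ and interchanging expectation and integration (justified by Fubini since all quantities are non-negative). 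Uniqueness of $\pi$ then reduces to uniqueness of the stationary distribution of the positive recurrent jump chain $Y$, which is standard.

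Finally, for the convergence statement $P(X(t)=j \mid X(0)=i) \to \pi_j$, I would use a coupling argument. Run two independent copies of the chain from arbitrary initial states $i$ and from the stationary distribution $\pi$, and define the coupling time $\sigma$ as the first $t$ at which both chains occupy the same state. By irreducibility and positive recurrence of the jump chain, both chains visit any fixed state $i_0$ infinitely often in finite time; the memorylessness of the exponential holding times means that, once both chains are in $i_0$ at the same moment, we can couple them from that point onward. A Borel-Cantelli argument shows $\sigma < \infty$ almost surely, which yields $\|P(X(t) \in \cdot \mid X(0)=i) - \pi\|_{\mathrm{TV}} \le 2\Pr[\sigma > t] \to 0$.

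The main obstacle will be the coupling step, because periodicity of the embedded jump chain prevents a direct lift from a discrete-time convergence theorem. The continuous-time setting saves us because the exponential holding times are absolutely continuous, so any periodicity in the jump structure is smoothed out over time; making this rigorous amounts to showing that, with positive probability in any finite time window, the holding times align so that both coupled chains visit $i_0$ simultaneously. The rest of the argument is essentially bookkeeping with the strong Markov property and renewal theory, and can be taken directly from standard references such as Norris or Asmussen.
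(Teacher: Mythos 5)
The paper does not actually prove this claim: it is stated (without argument) in the Preliminaries appendix as a classical fact about continuous-time Markov chains, of the kind found in Norris or Asmussen, and the paper simply cites it implicitly as standard background. So there is no ``paper's proof'' to compare against; your task amounts to supplying the textbook argument, and your sketch does follow the standard route (embedded jump chain $\to$ occupation-time measure $\to$ coupling for ergodicity), including the correct observation that the absolutely continuous holding times remove any periodicity obstruction.

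That said, two of your intermediate steps are wrong as written and should be repaired. First, the identity
\[
\mathbb{E}_i[\tau_i] \;=\; q_i^{-1}\,\mathbb{E}_i\bigl[\text{\# jumps before returning to } i\bigr]
\]
is false: the return time is $\sum_{n=0}^{N_i-1} E_n / q_{Y_n}$ with $E_n$ i.i.d.\ $\mathrm{Exp}(1)$, so $\mathbb{E}_i[\tau_i] = \mathbb{E}_i\bigl[\sum_{n=0}^{N_i-1} q_{Y_n}^{-1}\bigr]$, which does not factor through $q_i^{-1}$ and $\mathbb{E}_i[N_i]$. Second, and consequently, positive recurrence of $X$ does \emph{not} imply positive recurrence of the jump chain $Y$ when $\sup_j q_j = \infty$: one can have $\mathbb{E}_i[\tau_i] < \infty$ with $\mathbb{E}_i[N_i] = \infty$ if the holding rates blow up away from $i$ (e.g.\ a null-recurrent reflected random walk as jump chain with $q_n$ growing fast). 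Fortunately neither error is fatal to your plan. The occupation-time construction of $\pi$ only needs $\mathbb{E}_{i_0}[\tau_{i_0}] < \infty$, which is exactly the hypothesis; and for uniqueness you only need that $Y$ is irreducible and \emph{recurrent} (which does follow from recurrence of $X$, since the two processes visit the same states), so that the invariant measure $\mu_j = \pi_j q_j$ of $Y$ is unique up to scaling. Replacing ``positive recurrent jump chain'' with ``recurrent jump chain'' and dropping the spurious identity makes the sketch sound.
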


A particularly simple CTMC where the state is a queue that either grows by one or shrinks by one in each step is known as a birth-death process. 

\begin{definition}
A \emph{birth-death process} is a special case of a continuous-time Markov chain on state space $\mathbb{Z}_{\ge 0}$ where transitions occur only between adjacent states. The transition rates are given by $q_{i,i+1} = \lambda_i$, $q_{i,i-1} = \mu_i$, and $q_{ij} = 0$ for $|i-j| > 1$.
\end{definition}

When the birth rate is constant across states, and the death rate scales linearly with the state, an easy calculation shows the stationary distribution is Poisson. 

\begin{claim}\label{claim:stationarydistbirthdeath}
For a birth-death process with constant birth rate $\lambda$ and linear death rate $\mu_i = i \cdot \mu$, the stationary distribution  $\pi$ is Poisson with parameter $\lambda/\mu$.
\end{claim}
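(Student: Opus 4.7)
The plan is to verify the claim directly from the detailed balance equations of the birth-death process. Since the process is irreducible on $\mathbb{Z}_{\geq 0}$ (provided $\lambda, \mu > 0$) and only transitions between adjacent states, its stationary distribution $\pi$ satisfies the local balance equation $\pi_i \cdot q_{i, i+1} = \pi_{i+1} \cdot q_{i+1, i}$ for all $i \geq 0$. Substituting $q_{i, i+1} = \lambda$ and $q_{i+1, i} = (i+1) \mu$ yields the recursion $\pi_{i+1} = \frac{\lambda}{(i+1)\mu} \cdot \pi_i$.

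First, I would unroll this recursion to obtain $\pi_i = \pi_0 \cdot \frac{(\lambda/\mu)^i}{i!}$ for every $i \geq 0$. Then, using the normalization condition $\sum_{i \geq 0} \pi_i = 1$ together with the Taylor series $\sum_{i \geq 0} \frac{(\lambda/\mu)^i}{i!} = e^{\lambda/\mu}$, I would solve for $\pi_0 = e^{-\lambda/\mu}$. Substituting back gives $\pi_i = e^{-\lambda/\mu} \cdot \frac{(\lambda/\mu)^i}{i!}$, which is exactly the PMF of $\mathrm{Pois}(\lambda/\mu)$. Since this is a valid (summable, positive) probability distribution, positive recurrence is automatic and uniqueness of the stationary distribution follows from the preceding claim.

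As a sanity check, I would briefly justify the detailed balance step: for a birth-death chain the global balance equations reduce to local balance, because any ``cut'' between states $\{0, 1, \ldots, i\}$ and $\{i+1, i+2, \ldots\}$ is crossed only by the two rates $\lambda$ and $(i+1)\mu$, forcing equality of probability flux in stationarity. There is no real obstacle in this proof — it is a standard one-line calculation — so the only care needed is to explicitly note finiteness of $\sum_i (\lambda/\mu)^i / i!$, which guarantees that the proposed $\pi$ is a genuine probability distribution and hence \emph{the} stationary distribution.
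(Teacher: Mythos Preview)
Your proposal is correct and takes essentially the same approach as the paper: both derive the recursion $\pi_i = \pi_0 \cdot (\lambda/\mu)^i / i!$ and normalize to obtain the Poisson distribution. The only cosmetic difference is that the paper writes out the global balance equations while you use detailed balance directly, but for a birth-death chain these collapse to the same one-step recursion, so the arguments are effectively identical.
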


\begin{proof}
Note that $\pi_0 \cdot \lambda = \mu \cdot \pi_1$ because in the stationary distribution, the rate of transition out of state 0 equals the rate of transition into state 0. Similarly, for $i \ge 1$ we have $$\pi_{i} \cdot \left( \lambda + i \cdot \mu \right) = \pi_{i-1} \cdot \lambda + \pi_{i+1} \cdot (i+1) \cdot \mu.$$
Solving recursively yields $\pi_i = \pi_0 \frac{\lambda^i}{i! \cdot \mu^i}$. Normalizing so that $\sum_{i=0}^\infty \pi_i = 1$ gives $\pi_0 = e^{-\lambda/\mu}$, and thus $\pi_i = e^{-\lambda/\mu} \cdot \frac{\lambda^i}{i! \cdot \mu^i}$ for $i \ge 0$. 
\end{proof}

The ``PASTA'' property is a classic result in probability theory which implies that if a Poisson arrival process is independent of an irreducible, positive recurrent CTMC, as $t \rightarrow \infty$ each arrival observes the system as if in the stationary distribution. The formal claim we will use in this paper is as follows. 

\begin{lemma}[implied by \cite{wolff1982poisson}] \label{lem:pasta}
    Let $\{X(t), t \ge 0\}$ denote an irreducible, positive recurrent CTMC on a countable state space with stationary distribution $\pi$. Consider an independent stream of arrivals according to a Poisson process with rate $\gamma$; upon an arrival at time $t$, a possibly (random) reward $f(X(t))$ is generated depending only on the state $X(t)$. If $\textup{\textsf{Reward}}[0,t]$ denotes the cumulative reward generated in the time interval $[0,t]$ then
    $$\lim_{t \rightarrow \infty} \mathbb{E} \left[ \frac{\textup{\textsf{Reward}}[0,t]}{t} \right] = \gamma_j \cdot  \mathbb{E}_{Q \sim \pi} [f(Q)].$$
\end{lemma}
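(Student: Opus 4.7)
The plan is to combine Campbell's formula for Poisson functionals with the ergodic theorem for positive-recurrent continuous-time Markov chains. The key insight is that because the Poisson arrival process is independent of the CTMC, the expected cumulative reward can be rewritten as a time-integral of the expected instantaneous reward, and the long-run average of this integral is governed by the stationary distribution $\pi$.

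First, I would let $\{T_n\}_{n \ge 1}$ denote the arrival epochs of the independent Poisson process of rate $\gamma$, so that $\textsf{Reward}[0,t] = \sum_{n : T_n \le t} Y_n$, where $Y_n$ is the (possibly random) reward accrued at arrival $n$. Since $Y_n$ depends on the rest of the world only through $X(T_n)$, I can replace $f$ by its conditional mean $\bar f(s) := \mathbb{E}[f(s)]$, which is a deterministic function of the state; in particular $\mathbb{E}[Y_n \mid X(T_n) = s] = \bar f(s)$.

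Second, conditioning on the trajectory $\{X(s)\}_{0 \le s \le t}$ and using independence of the Poisson process together with Campbell's theorem (the mean measure of a Poisson process of rate $\gamma$ on $[0,t]$ is $\gamma$ times Lebesgue measure), I obtain
\[
\mathbb{E}\!\left[\textsf{Reward}[0,t]\right] \;=\; \mathbb{E}\!\left[\sum_{n : T_n \le t} \bar f(X(T_n))\right] \;=\; \gamma \int_0^t \mathbb{E}[\bar f(X(s))] \, ds.
\]
This step is the substantive content of PASTA at the level of expectations: it transfers the arrival-epoch average into a time average without any bias term, precisely because arrivals are Poisson and independent of $X$.

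Third, since $\{X(s)\}$ is irreducible and positive recurrent with stationary distribution $\pi$, the marginal laws converge to $\pi$, so $\mathbb{E}[\bar f(X(s))] \to \mathbb{E}_{Q \sim \pi}[f(Q)]$ as $s \to \infty$ (under the mild integrability condition that $f$ is $\pi$-integrable; in all applications in the paper $f$ is bounded). Applying Ces\`aro averaging and dividing the displayed identity by $t$ yields
\[
\lim_{t \to \infty} \mathbb{E}\!\left[\frac{\textsf{Reward}[0,t]}{t}\right] \;=\; \gamma \cdot \mathbb{E}_{Q \sim \pi}[f(Q)],
\]
which is the desired equality (the notation $\gamma_j$ in the statement being a typo for $\gamma$).

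The only subtlety, and the main potential obstacle, is ensuring the integrability needed for both Campbell's formula and the Ces\`aro passage to the limit. Under the boundedness of $f$ that holds throughout the paper this is automatic; for full generality one appeals directly to Wolff's PASTA theorem \cite{wolff1982poisson}, which under the ``lack of anticipation'' condition (satisfied here by independence of the arrival process and the CTMC) gives the stronger almost-sure statement that the empirical distribution of states observed at arrival epochs converges to $\pi$, from which the claimed equality of long-run expected average rewards follows immediately.
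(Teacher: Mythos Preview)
The paper does not prove this lemma at all: it is stated in the preliminaries appendix with the attribution ``implied by \cite{wolff1982poisson}'' and no further argument. Your derivation via Campbell's formula (to convert the expected arrival-epoch sum into $\gamma\int_0^t \mathbb{E}[\bar f(X(s))]\,ds$) followed by Ces\`aro averaging of the ergodic limit $\mathbb{E}[\bar f(X(s))]\to \mathbb{E}_{Q\sim\pi}[f(Q)]$ is correct and is the standard way to obtain the expected-average form of PASTA; you also correctly flag that $\gamma_j$ should read $\gamma$. So your proposal supplies a proof where the paper simply defers to the literature.
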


\section{Informative Examples}

\begin{exm} \label{exampleneedtightening}
    Say we have $n$ offline types, each with $\lambda_i = n^{-1}$ and $\mu_i = 1$, and one online type with arrival rate $\gamma_j = n^{-2}$. The rewards along each of the $n$ edges are all equal to 1. 
    %As $L \rightarrow \infty$, 
\end{exm}

\begin{claim} \label{app:needTighteningExample}
Consider the instance defined by \Cref{exampleneedtightening}. If Constraint~\eqref{eqn:tightOnlineFlow} of \eqref{TLPon} is replaced with the simple ``flow balance'' constraint of $\sum_{i \in I} x_{i,j} \le \gamma_j$ and/or the ``singleton constraints" $x_{i,j} \le \gamma_j \cdot \left( 1 - \exp(-\lambda_i/\mu_i) \right)$ the stationary gain of optimum online is at most a $(1-1/e+o(1))$-fraction of the optimal value of the LP. 
\end{claim}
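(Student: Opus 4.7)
The plan is to (i) exhibit a feasible solution to the weakened LP of value approaching $\gamma_j = 1/n^2$, and (ii) upper bound the optimal online policy's stationary gain by $\gamma_j(1-1/e)$ via a stochastic-dominance argument; the ratio then approaches $1-1/e$ as $n\to\infty$.

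\emph{Feasibility of a near-$\gamma_j$ LP solution.} I would set $x_{i,j} := \gamma_j(1-e^{-\lambda_i/\mu_i}) = \gamma_j(1-e^{-1/n})$ and $x_{i,a} := \lambda_i - x_{i,j}$ for each $i$. Constraint~\eqref{eqn:tightOfflineFlow} then holds by construction, and the singleton version of the online flow constraint holds with equality. The simple flow bound $\sum_i x_{i,j} = n\gamma_j(1-e^{-1/n}) \le \gamma_j$ follows from $n(1-e^{-1/n}) \le 1$, which in turn uses $e^{-x}\ge 1-x$. Lastly, Constraint~\eqref{eqn:tightOnlineConstraint} reduces after a short manipulation to $(1-e^{-1/n})(1+1/n^2)\le 1/n$, which is true for large $n$ by the Taylor expansion $1-e^{-1/n} = 1/n - 1/(2n^2) + O(1/n^3)$. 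The LP objective at this point is $n\gamma_j(1-e^{-1/n}) = \gamma_j\cdot(1-O(1/n))$.

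\emph{Upper bound on OPT$_{\textup{on}}$ via coupling.} I would couple the queue length vector $(Q_i(t))$ under any online policy to a reference system $(\tilde Q_i(t))$ driven by the same Poisson arrival and abandonment clocks, but in which no online matching is ever performed. Since matching can only remove offline nodes, $Q_i(t)\le \tilde Q_i(t)$ pointwise almost surely, hence also stochastically. Each $\tilde Q_i$ is an independent birth--death chain with birth rate $\lambda_i = 1/n$ and linear death rate $k\mu_i = k$; by Claim~\ref{claim:stationarydistbirthdeath} its stationary distribution is $\textup{Pois}(1/n)$, and the $\tilde Q_i$ are jointly independent. Hence in steady state $\Pr[\tilde Q_i = 0 \ \forall i] = \prod_i e^{-1/n} = e^{-1}$, so $\Pr[\text{some } Q_i > 0] \le 1 - 1/e$. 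By the PASTA property (Lemma~\ref{lem:pasta}), an arriving type-$j$ node sees the stationary distribution, and its instantaneous gain is at most $1$ (the common reward) when some queue is nonempty, and $0$ otherwise, giving $\textup{OPT}_{\textup{on}} \le \gamma_j(1-1/e)$.

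\emph{Conclusion and main difficulty.} Combining both steps yields
\[
    \frac{\textup{OPT}_{\textup{on}}}{\textup{LP}_{\textup{weak}}} \le \frac{\gamma_j(1-1/e)}{n\gamma_j(1-e^{-1/n})} \xrightarrow{n\to\infty} 1 - 1/e,
\]
so $\textup{OPT}_{\textup{on}} \le (1 - 1/e + o(1))\cdot \textup{LP}_{\textup{weak}}$, as claimed. The only non-routine element is the coupling that powers the upper bound: under an arbitrary online policy the queue process can be correlated and complex, but because matches only deplete queues and never introduce new offline arrivals, running the identical Poisson clocks in a matching-free reference system yields pointwise dominance, after which PASTA lets us evaluate the probability of a successful match at the easily computed stationary distribution. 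The feasibility verification is essentially a Taylor calculation.
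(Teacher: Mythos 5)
Your proof is correct and follows essentially the same route as the paper: exhibit a feasible solution to the weakened LP of value $\gamma_j(1-o(1))$, then bound the optimal online gain by $(1-1/e)\gamma_j$ using the policy-independent "present nodes" process. The only differences are cosmetic: the paper chooses the feasible point $x_{i,j}=\frac{1}{n^3+n^2}$ (same asymptotic value as yours), and for the upper bound it simply cites the already-proven \Cref{claim:lprelaxation} with $H=I$, whereas you re-derive that bound here via the matching-free coupling plus PASTA --- which is precisely the argument underlying the paper's proof of \Cref{claim:lprelaxation} in \Cref{app:claimlprelaxation}.
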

\begin{proof}
    Consider the solution which sets $x_{i,j} = \frac{1}{n^3+n^2}$ and $x_{i,a} = \frac{1}{n} - \frac{1}{n^3+n^2}$ for every offline type $i$; this satisfies Constraints~\eqref{eqn:tightOfflineFlow}, \eqref{eqn:tightOnlineConstraint}, and \eqref{eqn:nonnegativity} of \eqref{TLPon}, has $\sum_{i} x_{i,j} \le \gamma_j$, and additionally has $x_{i,j} \le \gamma_j \cdot (1 - \exp(-\lambda_i / \mu_i))$ for sufficiently large $n$. The objective value it achieves is given by $\frac{n}{n^3+n^2} = \frac{1}{n^2} \cdot (1 + o(1))$. 

    However, by the fact that \eqref{TLPon} is a valid relaxation (which we prove as \Cref{claim:lprelaxation}), we have that if $x^*_{i,j}$ denotes the average match rate of $i$ and $j$ by the optimum online, then $$\sum_{i \in I} x_{i,j}^* \le \gamma_j \cdot \left( 1 - \exp \left( \sum_{i \in I} -\lambda_i / \mu_i \right) \right) = (1-1/e) \cdot \frac{1}{n^2}.$$ As $\sum_{i \in I} x_{i,j}^*$ is precisely the stationary reward of the optimum online policy, the claim follows.
\end{proof}

\begin{exm} \label{exampleneeddependent}
    Say we have $n$ offline types, each with $\lambda_i = \frac{1}{\sqrt{n}}$ and $\mu_i = 1$, and one online type with arrival rate $\gamma_j = \sqrt{n} - 1$. The rewards along each of the $n$ edges are all equal to 1. 
    %As $L \rightarrow \infty$, 
\end{exm}

\begin{claim}
     If $\pi^{\textup{\textsf{ind}}}$ denotes the stationary distribution of the independent Markov chains when running \Cref{alg:corrprop} on the instance defined by \Cref{exampleneeddependent}, we have $$ \sum_j \gamma_j \cdot  \mathbb{E}_{Q \sim \pi^{\textup{\textsf{ind}}}} [\textup{\textsf{ALG}}(j, Q)] \le (1 - 1/e + o(1)) \cdot \textup{\text{OPT}\eqref{TLPon}} \ .$$ In other words, analyzing \Cref{alg:corrprop} via a stochastic dominance argument with independent offline queues cannot achieve better than a $(1-1/e)$-approximation to \textup{\eqref{TLPon}}. 
\end{claim}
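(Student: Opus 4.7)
The plan is to evaluate both $\textup{OPT}\eqref{TLPon}$ and the stationary gain $\gamma_j\cdot\mathbb{E}_{Q\sim\pi^{\textup{\textsf{ind}}}}[\textup{\textsf{ALG}}(j,Q)]$ in closed form (up to $o(1)$), and then check their ratio tends to $1-1/e$. By the symmetry of the instance (identical offline types, a single online type) I would restrict attention to symmetric LP solutions parameterized by $y:=x_{i,j}$, with $x_{i,a}=1/\sqrt{n}-y$ forced by \eqref{eqn:tightOfflineFlow}. Applying the tightened constraint \eqref{eqn:tightOnlineFlow} with $H=I$ yields the binding inequality $ny\le (\sqrt{n}-1)(1-e^{-\sqrt{n}})$, and setting $y^\star:=(\sqrt{n}-1)(1-e^{-\sqrt{n}})/n$ achieves equality. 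Routine feasibility checks then finish the LP step: constraint \eqref{eqn:tightOnlineFlow} for $|H|=k<n$ reduces to $k(1-e^{-\sqrt{n}})/n\le 1-e^{-k/\sqrt{n}}$, which holds by concavity of $k\mapsto 1-e^{-k/\sqrt{n}}$, while \eqref{eqn:tightOnlineConstraint} reduces to the trivial bound $0\le \sqrt{n}\,e^{-\sqrt{n}}$. Thus $\textup{OPT}\eqref{TLPon}=(\sqrt{n}-1)(1-e^{-\sqrt{n}})=(\sqrt{n}-1)(1-o(1))$.

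Next, I would plug $y^\star$ into the quantities fed to \Cref{alg:corrprop}. A direct computation gives $x_{i,a}=(1+(\sqrt{n}-1)e^{-\sqrt{n}})/n$, so $p_{i,j}=(1-e^{-\sqrt{n}})/(1+(\sqrt{n}-1)e^{-\sqrt{n}})=1-o(1)$ and $\sum_i x_{i,a}/\mu_i=1+(\sqrt{n}-1)e^{-\sqrt{n}}=1+o(1)$. Since all rewards equal $1$, integrating \Cref{claim:gainequalexpectedmin} collapses $\mathbb{E}_{Q\sim\pi^{\textup{\textsf{ind}}}}[\textup{\textsf{ALG}}(j,Q)]$ to $\mathbb{E}[\min(1,R_j(1))]$, where $R_j(1)=\sum_i p_{i,j}\cdot\textup{Pois}(x_{i,a}/\mu_i)$. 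Because all $p_{i,j}$ equal the same $p=1-o(1)$, I can factor it out and apply Poisson superposition to write $R_j(1)=p\cdot \textup{Pois}(r)$ in distribution with $r=1+o(1)$. A direct case analysis on whether the Poisson equals $0$, $1$, or $\ge 2$ (using $p>1/2$ for large $n$) yields $\mathbb{E}[\min(1,R_j(1))]=1-e^{-r}-(1-p)\,re^{-r}=1-1/e+o(1)$. Multiplying by $\gamma_j=\sqrt{n}-1$ gives a stationary gain of $(\sqrt{n}-1)(1-1/e+o(1))$.

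Combining the two estimates yields the ratio $(1-1/e)(1+o(1))$, establishing the claim. The only subtle point is forcing $\sum_i x_{i,a}/\mu_i$ to land exactly at $1+o(1)$, rather than at some arbitrary positive constant that would displace the Poisson rate from the $1-1/e$ threshold. This is precisely what the example is engineered to achieve: \eqref{eqn:tightOfflineFlow} gives $\sum_i x_{i,a}/\mu_i=\sqrt{n}-\sum_i x_{i,j}$, and the near-saturation of the tightening constraint $\sum_i x_{i,j}\le (\sqrt{n}-1)(1-e^{-\sqrt{n}})$ at the LP optimum pins this quantity to $1+o(1)$, matching the location of the Poisson mass point that makes the independent-chain analysis exactly tight.
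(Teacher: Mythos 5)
Your proof is correct and follows essentially the same approach as the paper: exhibit the symmetric LP solution (feasible and optimal by constraint~\eqref{eqn:tightOnlineFlow} with $H=I$), use \Cref{claim:stationarydistiMcPoisson} to observe that $\sum_i x_{i,a}/\mu_i = 1 + o(1)$ makes the total available queue Poisson with rate $1+o(1)$, and deduce a match probability of $1 - 1/e + o(1)$ while $\mathrm{OPT} = (\sqrt{n}-1)(1-o(1))$. You supply somewhat more detail than the paper (explicit feasibility checks for smaller $H$ and constraint~\eqref{eqn:tightOnlineConstraint}, and an exact computation of $\mathbb{E}[\min(1,R_j(1))]$ via the three-way case analysis rather than simply upper-bounding the match probability by the presence probability), but the underlying argument is the same.
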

%We consider two examples, each with the property that the state of each queue is in $\{0,1\}$ with a high probability, and ``proposes'' to the arriving online node with probability 1. 

\begin{proof}
    It is straightforward to see that for this instance, $\text{OPT}\eqref{TLPon} \ge (\sqrt{n} - 1) \cdot (1 - \exp(-\sqrt{n}))$, as it is feasible to set $x_{i,j} = \frac{\sqrt{n}-1}{n} \cdot (1 - \exp(-\sqrt{n}))$ and $x_{i,a} = \frac{1}{\sqrt{n}} - \frac{\sqrt{n}-1}{n} \cdot (1 - \exp(-\sqrt{n}))$ for every offline type $i$. However, in the stationary distribution of the independent Markov chains with this (optimal) LP solution, by \Cref{claim:stationarydistiMcPoisson} the number of available nodes of type $i$ is distributed as $\text{Pois} \left( \frac{1}{\sqrt{n}} - \frac{\sqrt{n}-1}{n} \cdot (1 - \exp(-\sqrt{n})) \right)$. As availability is independent acorss types in the stationary distribution, the probability no type is available is given by $$\exp \left( - \sqrt{n} + (\sqrt{n}-1) \cdot (1 - \exp(-\sqrt{n}))) \right) = 1/e + o(1)$$ where $o(1)$ tends to zero as $n \rightarrow \infty$. By \Cref{lem:pastasection2} (i.e., the PASTA property), the claim follows. 
\end{proof}

\begin{exm}\label{app:alg_bad_example}
    Consider an instance (visualized in \Cref{fig:bad_example}) where $I = \{1,\cdots, n\}$ with $\lambda_i = \mu_i = 1$ for every $i \in I$. Furthermore, let $J = \{1,\cdots, n+1\}$ and $\gamma_j = n$, $r_{j,j} = 0$ for every $j \leq n$; we further have $\gamma_{n+1} = 1$ and $r_{i, n+1} = 1$ for every $i \in I$. An optimal solution of \eqref{TLPon} is $x_{i,n+1} = x_{i,a} =  1/n$ and $x_{i,j} = (n-2)/n$ for every $1\leq i = j \leq$. Under \Cref{alg:corrprop}, the stationary distribution is almost the same as the independent Markov chains. Hence, the online vertex $(n+1)$ has a matching probability of $\approx \textup{Bin}(n,1/n)$, whereas $\opton$ has a reward of $\approx 1$. Letting $n \to \infty$ shows that the approximation ratio of \Cref{alg:corrprop} on these instances can be $(1-1/e + o(1))$. 
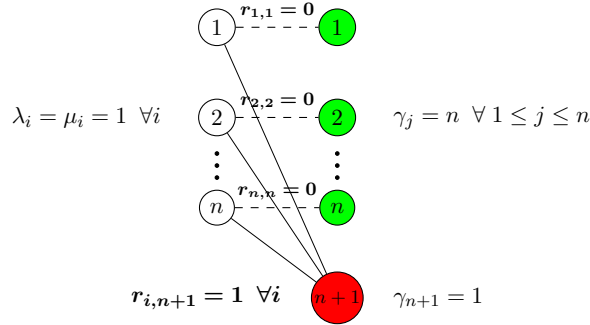
\begin{figure}[H]
    \centering
    \begin{tikzpicture}[scale=0.8, every node/.style={scale=0.8}]
            % Left vertices
            \node[draw, circle, inner sep=3pt] (L1) at (0, 0) {$1$};
            \node[draw, circle, inner sep=3pt] (L2) at (0, -1.5) {$2$};
            \node[draw, circle, inner sep=3pt] (L3) at (0, -3) {$n$};
            
            \node[draw, circle, fill=black, inner sep=.6pt] at (0, -2.3) {};
            \node[draw, circle, fill=black, inner sep=.6pt] at (0, -2.5) {};
            \node[draw, circle, fill=black, inner sep=.6pt] at (0, -2.1) {};
            \node[draw, circle, fill=black, inner sep=.6pt] at (2, -2.3) {};
            \node[draw, circle, fill=black, inner sep=.6pt] at (2, -2.5) {};
            \node[draw, circle, fill=black, inner sep=.6pt] at (2, -2.1) {};

            % Right vertices
            \node[draw, circle, fill=green, inner sep=3pt] (R1) at (2, 0) {1};
            \node[draw, circle, fill=green, inner sep=3pt] (R2) at (2, -1.5) {2};
            \node[draw, circle, fill=green, inner sep=3pt] (R3) at (2, -3) {$n$};
            \node[draw, circle, fill=red, inner sep=0.8pt] (R4) at (2, -4.5) {\footnotesize $n+1$};
            
            % Draw solid edges
            \draw[dashed] (L1) -- (R1) node[midway, above] {\footnotesize $\boldsymbol{r_{1,1} = 0}$};
            \draw[dashed] (L2) -- (R2) node[midway, above] {\footnotesize $\boldsymbol{r_{2,2} = 0}$};
            \draw[dashed] (L3) -- (R3) node[midway, above] {\footnotesize $\boldsymbol{r_{n,n} = 0}$};
            \draw[line width=0.1pt, bend right] (L1) -- (R4);
            \draw[bend left, line width=0.1pt] (L2) -- (R4);
            \draw[bend left, line width=0.1pt] (L3) -- (R4);

            \node[left, xshift = -0.8cm] at (R4) {$\boldsymbol{r_{i,n+1} = 1 \; \; \forall i}$};
            \node[right, xshift = +0.8cm] at (R4) {$\gamma_{n+1} = 1$};

            \node[left, xshift = -0.8cm] at (L2) {$\lambda_i = \mu_i = 1 \; \; \forall i$};
            \node[right, xshift = +0.8cm] at (R2) {$\gamma_j = n \; \; \forall \; 1 \leq j \leq n$};

            % \draw[line width=1.2pt] (L1) -- (R1);
            % \draw[dashed] (L2) -- (R1);
            % \draw[line width=1.2pt] (L2) -- (R2);
            % \draw[line width=1.2pt] (L3) -- (R1);
            % \draw[line width=1.2pt] (L3) -- (R2);
            % \draw[line width=1.2pt] (L3) -- (R3);
            
            % % Draw dashed edges
            % \draw[dashed] (L1) -- (R3);
            % \draw[line width=1.2pt] (L2) -- (R3);
            
        \end{tikzpicture}
     \caption{Family of instances where \Cref{alg:corrprop} is $(1-1/e+o(1))$-approximate}
        \label{fig:bad_example}

\end{figure}
    
\end{exm}

\begin{comment}
OPT can combine all the offline types, obtaining a super-type with $\lambda_1' = L, \mu_1' = 1$. By the flow balance equation, we have \begin{align}
     L &= \text{abandonment rate } + \text{match rate} \notag \\ &= \mu_1' \cdot \ex{Q(L)} + \text{match rate} \notag \\ &= \ex{Q(L)} + (L-1) \cdot \pr{Q(L) > 0} \ . \label{eq:flow-balance} \end{align} Therefore, to prove our claim, it suffices to show that $\frac{\expar{Q(L)}}{L} \to 0$. A straightforward drift analysis proves the following claim, which completes the argument:
\begin{claim}
    $\ex{Q(L)} \leq \sqrt{L} + 1$
\end{claim}
\begin{proof}[Proof sketch.]
    With the Lyapunov function $V(q) = q^2 - q$ and generator matrix $G$, we have
    \[ GV(q) = L(2q + 1) + (L-1+q)(-2q+1) - (L-1) \cdot \mathbb{I}[q=0] \ . \] By Foster-Lyapunov criterion and rearrangement, we obtain \[ \ex{2Q(L)^2 + 2Q(L) \cdot (L-1) + (L-1) \cdot \mathbb{I}[Q(L) > 0]} \leq \ex{2LQ(L) + Q(L) + 2L-1}  \ . \] Then, applying Jensen's inequality completes the proof. 
\end{proof}
\end{comment}

\label{app_algorithm}

\section{Deferred Proofs of Section~\ref{sec:algo}} 

\subsection{Proof of \Cref{claim:lprelaxation}} \label{app:claimlprelaxation}
\claimlprelaxation* 
\begin{proof}
    For the first inequality, for every $t \ge 0$ let $\text{OPT}_{i,j}[0,t]$ denote the number of matches between types $i$ and $j$ made by the optimum online algorithm in the time range $[0, t]$. Similarly let $\text{OPT}_{i,a}[0,t]$ denote the number of times a node of type $i$ abandons without being matched. Define $$x_{i,j}^* := \lim_{t \rightarrow \infty} \frac{\mathbb{E}[\text{OPT}_{i,j}[0,t]]}{t} \quad \text{and} \quad x_{i,a}^* :=  \lim_{t \rightarrow \infty} \frac{\mathbb{E}[\text{OPT}_{i,a}[0,t]]}{t}.$$ Prior work (e.g. \cite{collina2020dynamic}, \cite{aouad2022dynamic}) shows that these limits exist, and furthermore that the variables $\{x_{i,j}^*\}_{i \in I, j \in J}, \{x_{i,a}^*\}_{i \in I}$ satisfy Constraint~\eqref{eqn:tightOfflineFlow} and  Constraint~\eqref{eqn:tightOnlineConstraint}. It is immediate that the expected stationary gain of the optimum offline policy is given by $$\lim_{t \rightarrow \infty} \frac{\sum_{i \in I,j \in J} \text{OPT}_{i,j}[0,t] \cdot r_{i,j}}{t} = \sum_{i \in I} \sum_{j \in J} x_{i,j}^* \cdot r_{i,j}.$$ Constraint~\eqref{eqn:tightOnlineFlow} is new, although a special case was already observed in \cite{kessel2022stationary}. To see why it holds, for each offline type $i$ we let $A_i(t)$ denote the number of \emph{alive} offline nodes of type $i$ at time $t$, where an offline node is said to be alive if has not yet departed due to the expiry of its exponential clock with parameter $\mu_i$.\footnote{Note that this definition entirely ignores the matching algorithm. Although a node must be alive for it to be matched, a node could be matched and still count as ``alive'' until it departs.} It is immediate that for each offline type $i \in I$, the queue $A_i(t)$ is described by an independent birth-death process, with constant birth rate $\lambda_i$ and linear death-rate of $k \cdot \mu_i$ in state $k$. Thus via \Cref{claim:stationarydistbirthdeath}, for any subset of offline types $H \subseteq I$, in the stationary distribution $\pi$ of $(A_i(t))_{i \in H}$ we have 
    \begin{align}
    \Pr_{A \sim \pi} \left[ \sum_{i \in H} A_i = 0 \right] = \prod_{i \in H} \exp \left( - \frac{\lambda_i}{\mu_i} \right) . \label{appendixclaimempty}
    \end{align} Any algorithm (online or offline) can only match a type $i$ when it is alive; thus an arrival of type $j$ at time $t$ is matched to an offline type $i \in H$ only if $\sum_{i \in H} A_i(t) > 0$. By the PASTA property (\Cref{lem:pasta}) we hence have $$\sum_{i \in H} x_{i,j}^* = \frac{\mathbb{E}[\sum_{i \in H} \text{OPT}_{i,j}[0,t]]}{t} \le \gamma_j \cdot \mathop{\mathbb{E}}_{A \sim \pi} \left[ 1 -  \mathds{1} \left[ \sum_{i \in H} A_i = 0 \right] \right] \overset{\eqref{appendixclaimempty}}{=} \gamma_j \cdot \left( 1 - \exp \left( - \sum_{i \in H} \frac{\lambda_i}{\mu_i} \right) \right). $$
Thus Constraint~\eqref{eqn:tightOnlineFlow} holds for the match rates induced by any online or offline policy. This shows that $\{x_{i,j}^*, x_{i,a}^*\}$ is a feasible solution to \eqref{TLPon} whose objective is the stationary reward of the optimum online matching policy; hence $\textup{OPT}\eqref{TLPon} \geq \opton$. 
\end{proof}

\subsection{Proof of \Cref{TLPpolysolvable}} \label{app:proofTLPpolysolvable}
\TLPpolysolvable*

\begin{proof} It is sufficient to find an efficient separation oracle for constraints \eqref{eqn:tightOnlineFlow}. These constraints are equivalent to the condition that for any non-negative weights $0 \leq \theta_i \leq \frac{\lambda_i}{\mu_i}$, we have \begin{align}\label{ineq:equivalent_hall_constraints}
 \sum_{i \in I} {\theta_i} \cdot \frac{x_{i, j}\mu_i}{\gamma_j \lambda_i} \leq 1 - \exp\left(-\sum_{i \in I} \theta_i \right), \end{align} which follows from convexity of the function $\sum_{i \in I} {\theta_i} \cdot \frac{x_{i, j}\mu_i}{\gamma_j \lambda_i} + \exp\left(-\sum_{i \in I} \theta_i \right)$ in $\theta_i$ for each $i \in I$. Now, for a given $\boldsymbol{x}$ and $i, j$, finding $A \subseteq I$ that maximizes \[\sum_{i \in A} \frac{x_{i,j}}{\gamma_j} + \exp\left(-\sum_{i \in A} \frac{\lambda_i}{\mu_i}\right)\] is equivalent to finding $0 \leq \theta_i \leq \frac{\lambda_i}{\mu_i}, \forall i \in I$ that maximizes the equivalent formulation \eqref{ineq:equivalent_hall_constraints}. For any fixed $\sum_{i \in I} \theta_i$, the maximum is attained by a greedy algorithm that assigns $\theta_i$ in decreasing order of $\frac{x_{i,j}\mu_i}{\gamma_j \lambda_i}$. Therefore, we only need to examine the $n$ nested constraints in $\eqref{eqn:tightOnlineFlow}$ for subsets comprised of the first $k$ elements in the order of $\frac{x_{i,j}\mu_i}{\gamma_j \lambda_i}$. 
\end{proof}

\subsection{Proof of \Cref{claim:stochasticdominance}} \label{appendix:proofofclaimstochasticdominance}
\claimstochasticdominance*

\begin{proof} The Markov chains $Q(t)$ and $Q^i(t)$ both have states in $\mathbb{Z}_{\ge 0}^I$. In this proof we consider the standard partial order on $\mathbb{Z}_{\ge 0}^I$ where $x \le y$ for $x, y \in \mathbb{Z}_{\ge 0}^I$ if $x_i \le y_i$ for every $i \in I$. We recall the formal definition of stochastic dominance for Markov chains on a partially-ordered set. 

%\Alicomment{The standard definition of stochastic dominance (in BL 94) is in fact stronger. It should hold for any initialization $A(0) \leq B(0)$. This is not implied by just $A(0) = B(0)$. We can either keep Definition C.1 and change the ``if and only if'' of Lemma C.3 or change this definition (but ensure we always use stochastic dominance in the stronger sense)}
\begin{definition}\label{def:coupling}
    Let $A(t)$ and $ B(t)$ be continuous-time Markov chains taking values in a countable partially-ordered set $\mathcal{Y}$, with deterministic starting states $A(0) \le B(0)$. We say $A$ is stochastically dominated by $B$ if there exists a coupled Markov chain $(A'(t), B'(t))_{t \ge 0}$ such that the marginal distribution of $A'(t)$ (resp. $B'(t)$) is equivalent to that of $A(t)$ (resp. $B(t)$) and $$\Pr \left[ A'(t) \le B'(t) \text{ for all } t \right] = 1.$$
\end{definition}

From a natural monotonicity in \Cref{alg:corrprop}, stochastic dominance is sufficient to argue a relationship between the stationary gain. 
\begin{fact}\label{app:factstochasticdom}
    If $A(t)$ is stochastically dominated by $B(t)$, then $$ \mathbb{E}_{x \sim A(t)}[\textup{\textsf{ALG}}(j,x)] \le \mathbb{E}_{x \sim B(t)}[\textup{\textsf{ALG}}(j,x)].$$
\end{fact}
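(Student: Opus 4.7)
The plan is to reduce the inequality on expectations to a pointwise monotonicity statement for the functional $Q \mapsto \textup{\textsf{ALG}}(j, Q)$, and then apply the coupling guaranteed by stochastic dominance. Concretely, let $(A'(t), B'(t))$ be the coupled process of \Cref{def:coupling}, so that $A'(t) \le B'(t)$ coordinate-wise almost surely and $A'(t), B'(t)$ have the same marginals as $A(t), B(t)$. If we can show the deterministic inequality $\textup{\textsf{ALG}}(j, x) \le \textup{\textsf{ALG}}(j, y)$ whenever $x \le y$ in $\mathbb{Z}_{\ge 0}^I$, then taking expectations with respect to the coupling immediately yields $\mathbb{E}[\textup{\textsf{ALG}}(j, A'(t))] \le \mathbb{E}[\textup{\textsf{ALG}}(j, B'(t))]$, which is exactly the desired conclusion since expectations depend only on marginals.

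The core step is therefore to establish the pointwise monotonicity of $\textup{\textsf{ALG}}(j, \cdot)$. Fix $x \le y$ coordinate-wise and note that by the layer-cake formula, the instantaneous gain from matching $j$ can be written as
\[
\textup{\textsf{ALG}}(j, Q) = \int_0^\infty \Pr[\text{match reward from } j \ge w \mid Q] \, dw.
\]
Upon arrival of $j$, \Cref{alg:corrprop} constructs the proposal vector $\vec{p}$ by listing, in order of decreasing $r_{i,j}$, one entry $p_{i,j}$ for each of the $Q_i$ available nodes of type $i$, and matches to the highest-reward proposer. By Property~\ref{level-set:prefix} of pivotal sampling applied to the prefix of indices with reward at least $w$,
\[
\Pr[\text{match reward from } j \ge w \mid Q] \;=\; \min\!\Bigl(1,\; \sum_{i : r_{i,j} \ge w} Q_i \cdot p_{i,j}\Bigr).
\]
Since $p_{i,j} \ge 0$ and each $Q_i$ enters additively with a non-negative coefficient, the right-hand side is non-decreasing in every coordinate of $Q$. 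Hence the integrand at each $w$ is weakly larger under $y$ than under $x$, and integrating over $w \ge 0$ gives $\textup{\textsf{ALG}}(j, x) \le \textup{\textsf{ALG}}(j, y)$.

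Combining the pointwise monotonicity with the almost-sure inequality $A'(t) \le B'(t)$ from the coupling, we obtain $\textup{\textsf{ALG}}(j, A'(t)) \le \textup{\textsf{ALG}}(j, B'(t))$ almost surely, and taking expectations and using the equality of marginals yields the claim. The entire argument is routine; no genuine obstacle arises because the prefix property of pivotal sampling makes the dependence of $\textup{\textsf{ALG}}(j, Q)$ on $Q$ transparently monotone, and all one needs from stochastic dominance is its coupling characterization.
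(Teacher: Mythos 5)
Your proof is correct and follows essentially the same route as the paper: both express $\Pr[\textup{\textsf{ALG}}(j,Q) \ge w] = \min\bigl(1, \sum_{i:r_{i,j}\ge w} Q_i\, p_{i,j}\bigr)$ via the prefix property of pivotal sampling, observe that this is coordinate-wise monotone in $Q$, and conclude by the coupling characterization of stochastic dominance followed by integration over $w$. The only cosmetic difference is the order of operations (you integrate over $w$ to get pointwise monotonicity of $\textup{\textsf{ALG}}(j,\cdot)$ before invoking the coupling, whereas the paper applies the coupling at each fixed $w$ and then integrates), which is immaterial.
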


\begin{proof}
    For any vector $x \in \mathbb{Z}_{\ge 0}^I$, define  $f_w(x) := \Pr[\textsf{ALG}(j,x) \ge w].$ We can observe $$ f_w (x) =  \min \left( 1, \sum_{i : r_{i,j} \ge w} x_{i} \cdot p_{i,j} \right).$$ This is because in \Cref{alg:corrprop}, when queue lengths are given by $x$, we run pivotal sampling on a vector that includes $x_i$ copies of the marginal $p_{i,j}$. These marginals are sorted in decreasing order of reward, so by Property \ref{level-set:prefix} of pivotal sampling the chance at least one of them above threshold $w$ is realized equals $f_w(x)$. We can immediately see $f_w(\cdot)$ is monotone, i.e., for $x, y \in \mathbb{Z}_{\ge 0}^I$ with $x \le y$ we have $f_w(x) \le f_w(y)$. The coupling from \Cref{def:coupling} implies that for any $w > 0$ $$\mathbb{E}_{x \sim A(t)}[ f_w(x) ] = \mathbb{E}_{x_1, x_2 \sim (A'(t), B'(t))} [f_w(x_1) ] \le \mathbb{E}_{x_1, x_2 \sim (A'(t), B'(t))} [f_w(x_2) ] = \mathbb{E}_{x \sim B(t)} [f_w(x) ] $$
  
    Thus $$ \mathbb{E}_{x \sim A(t)}[\textup{\textsf{ALG}}(j,x)] =\mathbb{E}_{x \sim A(t)} \left[ \int_0^{\infty}  f_w(x) \right] \, dw \le \mathbb{E}_{x \sim B(t)} \left[ \int_0^{\infty}  f_w(x) \right] \, dw = \mathbb{E}_{x \sim B(t)}[\textup{\textsf{ALG}}(j,x)]$$ as claimed.
\end{proof}

To show stochastic dominance between the queues $(Q_i(t))_i$ induced by \Cref{alg:corrprop} and the independent Markov chains $(Q^\textsf{ind}_i(t))_i$, we utilize the following well-known criterion (also used by \cite{aouad2022dynamic}). 

\begin{lemma}[Stochastic dominance, c.f. \cite{kamae1977stochastic, brandt1994pathwise, lopez2000stochastic}] \label{stochasticdominanceviamonotone}
Let $A(t)$ and $ B(t)$ be continuous-time Markov chains taking values in a countable partially-ordered set $\mathcal{Y}$, with time-homogeneous stochastic kernels $P^{A}$ and $P^{B}$.\footnote{In our case, the kernel $P^A$ can be identified with a matrix, where for $x, y \in \mathbb{Z}_{\ge 0}^I$ entry $P^A_{x,y}$ equals the rate at which the Markov chain transitions from state $x$ to $y$ (this analogously holds for $P^B$).} Then, $A$ is stochastically dominated by $B$ if for every $x, y \in \mathcal{Y}$, and every upwards-closed\footnote{Note: we say $Z$ is upwards-closed if $x \in Z$ and $x \le y$ implies $y \in Z$.} set $Z \subseteq \mathcal{Y}$ we have

$$
x \leq y \text { with } x \in Z \text { or } y \notin Z \Longrightarrow \sum_{z \in Z} P_{x, z}^{A} \leq \sum_{z \in Z} P_{y, z}^{B}.
$$
\end{lemma}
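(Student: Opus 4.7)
The plan is to establish this criterion via an explicit Markovian coupling. I would construct a coupled process $(A'(t), B'(t))_{t \ge 0}$ living on the ``order cone'' $\mathcal{Y}^{\le} := \{(x,y) \in \mathcal{Y} \times \mathcal{Y} : x \le y\}$, with marginals agreeing with the laws of $A$ and $B$. Because the hypothesis guarantees $A(0) \le B(0)$ deterministically, we can set $(A'(0), B'(0)) = (A(0), B(0)) \in \mathcal{Y}^{\le}$; if all jumps of the coupled chain preserve the order, then $\Pr[A'(t) \le B'(t) \text{ for all } t] = 1$, which matches the definition of stochastic dominance from \Cref{def:coupling}.

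The key step is, for every $(x,y) \in \mathcal{Y}^{\le}$, to construct a joint transition-rate measure $R^{(x,y)}$ supported on $\{(x',y') \in \mathcal{Y}^{\le}\} \setminus \{(x,y)\}$ whose projected rates along the two coordinates agree with $P^A_{x,\cdot}$ and $P^B_{y,\cdot}$ respectively. Concretely, one requires $\sum_{y'} R^{(x,y)}_{(x',y')} = P^A_{x,x'}$ for $x' \neq x$ and $\sum_{x'} R^{(x,y)}_{(x',y')} = P^B_{y,y'}$ for $y' \neq y$, with ``synchronized self-stays'' (jumps where one coordinate does not change) absorbing any imbalance in the total outgoing rates. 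By a finite version of Strassen's theorem, such a monotone coupling of the two non-negative measures $P^A_{x,\cdot}$ and $P^B_{y,\cdot}$ on $\mathcal{Y}$ exists if and only if an appropriate family of cut inequalities holds---namely $\sum_{z \in Z} P^A_{x,z} \le \sum_{z \in Z} P^B_{y,z}$ for all upwards-closed $Z$ satisfying the side condition ``$x \in Z$ or $y \notin Z$''. The excluded case $x \notin Z,\, y \in Z$ does not bind, since any $P^B$-mass landing in $Z$ can be paired with an $A'$-self-stay at $x$ while still respecting $x \le y'$ for $y' \in Z$ whenever $y' \ge x$, and otherwise routed to transitions that don't violate the order. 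This is precisely the hypothesis of the lemma.

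The main obstacle I foresee is carrying out the Strassen-style transport argument cleanly in a countable state space, with careful bookkeeping of self-stays and the possibility of mismatched total outgoing rates at $x$ versus $y$. The standard way around this is to apply finite Strassen to each pair $(x,y)$ (either directly, or via a max-flow / Hall's-theorem argument on the bipartite graph of compatible transitions), then assemble the local joint rates into a global generator on $\mathcal{Y}^{\le}$, and invoke standard CTMC existence theory (Kolmogorov's forward equation, together with non-explosivity inherited from the component chains $A$ and $B$) to produce the coupled process. By construction, every jump of $(A'(t), B'(t))$ preserves the order, and the $A'$- and $B'$-marginal dynamics reduce exactly to $P^A$ and $P^B$; hence $A'(t) \stackrel{d}{=} A(t)$ and $B'(t) \stackrel{d}{=} B(t)$ with $A'(t) \le B'(t)$ almost surely for all $t$, yielding the claimed stochastic dominance.
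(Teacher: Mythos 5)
The paper does not prove this lemma---it is stated as a known background result from the stochastic-comparison literature, with a ``c.f.''\ citation to Kamae--Krengel--O'Brien, Brandt--Last, and L\'opez et al.---so there is no paper proof to compare your argument against.

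Your plan (build a Markovian coupling on the order cone $\mathcal{Y}^{\le}$; for each $(x,y)$ with $x\le y$, realize the two generator rows as marginals of an order-respecting joint rate measure whose feasibility follows from Strassen/cut-type inequalities; then glue these local couplings into a global coupled CTMC) is the standard route behind such results and is correct in outline. The genuine gap, which you flag yourself, is in justifying that the cut inequalities indexed by upper sets $Z$ with $x\in Z$ or $y\notin Z$ are the only ones needed. Your handling of the excluded case $x\notin Z,\ y\in Z$---that $B$-mass landing in $Z$ can be paired with an $A'$-self-stay at $x$---fails in general: an upper set containing $y$ may well contain states $y'$ with $y'\not\ge x$, and a one-sided $B$-jump to such a $y'$ would break the order, so ``otherwise routed to transitions that don't violate the order'' is exactly the assertion that needs a proof. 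One clean way to close this is by uniformization at rate $\Lambda$: the discrete-time cut condition $\tilde{P}^A(x,Z)\le\tilde{P}^B(y,Z)$ in the excluded case rewrites as $\sum_{z\in Z,\,z\ne x}P^A_{x,z}+\sum_{z\notin Z,\,z\ne y}P^B_{y,z}\le\Lambda$, which holds automatically once $\Lambda$ is large, so that only the hypothesized inequalities survive; one then couples the uniformized chains and passes to the continuous-time limit. Without some such argument the Strassen step in your sketch is not closed.
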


For each $i \in I$ and $j \in J$, let $\textsf{Prob-Matched}_{i,j} ( Q(t))$ denote the probability if $j$ arrives, and each offline type $i'$ has $Q(t)_{i'}$ copies waiting, when running Lines~\ref{lin:inside-loop-start} to \ref{lin:match} of \Cref{alg:corrprop} we match $j$ to $i$. At ever time $t$, the queue for type $i$ increases by 1 at rate $\lambda_i$, and decreases by 1 at rate $$Q_i(t) \cdot \mu_i + \sum_j \gamma_j \cdot \textsf{Prob-Matched}_{i,j} ( Q(t)).$$ Because pivotal sampling matches marginals (Property~\ref{level-set:marginals}), the union bound implies \begin{align}
    \textsf{Prob-Matched}_{i,j}(Q(t)) \le Q(t)_i \cdot p_{i,j}. \label{eqn:boundprob-matched}
\end{align} We will show that for a upwards-closed $A \subseteq \mathbb{Z}_{\ge 0}^n$ if $x \in A$ or $y \notin A$ with $x \le y$, we have $\sum_{z \in A} Q^i_{x,z} \le \sum_{z \in A} Q_{y, z}.$\footnote{For $x, y \in \mathbb{Z}_{\ge 0}^n$, we let $Q_{x,y}$ (resp. $Q^i_{x,y}$) denote the transition rate for $Q$ (resp. $Q^i$) from state $x$ to state $y$.}

If $y \notin A$, then $x \notin A$. For $i \in [n]$, let $e_i$ denote the vector with 1 in its $i$\textsuperscript{th} entry and zeros elsewhere. No entry $x' \le x$ can be in $A$. Hence $$\sum_{z \in A} Q^i_{x,z} = \sum_{i : x + e_i \in A} Q^i_{x,x+e_i} = \sum_{i : x + e_i \in A} \lambda_i \le \sum_{i : y + e_i \in A} \lambda_i = \sum_{z \in A} Q_{y,z}$$ where the inequality follows from $x \le y$ and that $A$ is upwards-closed. 

If $x \in A$, then $y \in A$. It suffices to show $\sum_{z \notin A} Q^i_{x,z} \ge \sum_{z \notin A} Q_{y,z}$. Using that $A$ is upwards-closed, we can compute
\begin{align*}
    \sum_{z \notin A}Q^i_{x,z} &= \sum_{i : x - e_i \notin A} x_i \cdot \left(\mu_i + \sum_j \gamma_j \cdot p_{i,j} \right) \\
    &\ge \sum_{i : y - e_i \notin A}  x_i \cdot \left(\mu_i + \sum_j \gamma_j \cdot p_{i,j} \right) \\
    &= \sum_{i : y - e_i \notin A}  y_i \cdot \left(\mu_i + \sum_j \gamma_j \cdot p_{i,j} \right) \\
    &\ge \sum_{i : y - e_i \notin A}  \left( y_i \cdot \mu_i + \sum_j \gamma_j \cdot  \textsf{Prob-Matched}_{i,j} (y) \right) && \text{via \eqref{eqn:boundprob-matched}} \\
    &=   \sum_{z \notin A} Q_{y,z}
\end{align*}
which completes the proof.
\end{proof}

\subsection{Proof of \Cref{lem:weightedPoisCvx}} \label{app:proofofweightedPoisCvx}
\weightedPoisCvx*

\begin{proof}
    If $a = 0$ the statement is immediate. We now prove the lemma for $a, b \in \mathbb{Q}_{> 0}$; the full claim follows from the fact that the rational numbers are dense in the reals. 

    Write $a = a'/n$ and $b = b'/n$ for positive integers $a'$, $b'$, and $n$. Note that it suffices to show $$\frac{\text{Pois}(b'/n)}{b'} \cvxle \frac{\text{Pois}(a'/n)}{a'}.$$ As the sum of Poisson random variables is Poisson, the LHS is distributed precisely as the empirical mean of $b'$ copies of $\text{Pois}(1/n)$, while the RHS is the empirical mean of $a'$ copies. The statement then follows from \cite[3.A.29]{shaked2007stochastic}.
\end{proof}

\section{Deferred Proofs of Section~\ref{sec:approximation}}\label{app:approximation}

\subsection{Proofs for the Three Easy Cases} \label{app:threecases}

\obscaseone*
\begin{proof} Recall from \Cref{sec:oneminusoneovere} our lower bound
    \begin{align*}
        \textsf{ALG}_j &\ge \gamma_j \cdot \int_0^{\infty} \E [ \min(1, R_j(w))] \, dw 
        \ge \gamma_j \cdot \int_0^{\infty} \left( 1 - \exp \left(-\sum_{i : r_{i,j} \ge w} x_{i,j}/\gamma_j \right) \right) \, dw.
    \end{align*}
    For convenience we define the function $g(\cdot)$ such that $1 - \exp(-z) = (1 - 1/e + g(z)) \cdot z$ and note that $g(\cdot)$ is positive and decreasing for $z \in (0,1)$. We then can bound 
 \begin{align*}
        \textsf{ALG}_j \ge \gamma_j \cdot \int_0^{\infty} \left( 1 - 1/e + g(1 - \epsilon ) \right) \cdot \left( \sum_{i : r_{i,j} \ge w}  x_{i,j} / \gamma_j \right)  \, dw = (1 - 1/e + g(1 - \epsilon )) \cdot \textsf{LP-Gain}_j
    \end{align*} as claimed.
\end{proof}

\obscasetwo*

\begin{proof}
    Let $w^*$ be the smallest weight such that $j$ has at least an $\epsilon / 2$ fraction of its LP-weighted gain sent to rewards in $[0, w^*]$; formally, define $$w^* := \inf \left\{ w \ge 0:  \frac{\textsf{LP-Gain}_j^{[0, w]}}{\textsf{LP-Gain}_j} \ge \epsilon / 2 \right\}$$ and 
    note that 
    \[
        \textsf{LP-Gain}_j^{[0, w^*)} := \sum_{i : r_{i,j} \in [0,w^*)} x_{i,j} \cdot r_{i,j} < {\eps}/{2} 
    \] by exclusion of $w^*$. Moreover, we know by assumption that $j$ sends at most a $1-\epsilon$ fraction of its LP-weighted gain in the interval $[w^*, (1+\epsilon) w^*].$ Thus the fraction of LP-weighted gain $j$ has from rewards in the interval $((1+\epsilon) w^*, \infty)$ is at least $\epsilon/2.$ Also, because $\textsf{LP-Gain}_j^{[0,w^*]} \ge \epsilon/2 \cdot \textsf{LP-Gain}_j$, we observe that
    \begin{align}
    \sum_{i : r_{i,j} \in [0, w^*]} x_{i,j}   \ge \epsilon/2 \cdot \sum_{i} x_{i,j} \ , \label{lb:x_above_wstar}
    \end{align}
    which in particular implies that for every $w > w^*$, we have
    \begin{align}
        \sum_{i:r_{i,j} \geq w} x_{i,j}/\gamma_j \leq 1-{\eps}/{2} \ . \label{ub:x_above_w}
    \end{align}
    Intuitively, \eqref{ub:x_above_w} allows us to get a boost in reward from values higher than $(1+\eps) w^*$. Defining the function $g(\cdot)$ as in the proof of \Cref{obscase1}, we have
    \begin{align*}
        \textsf{ALG}_j 
        &\ge \gamma_j \cdot \int_0^{\infty} \left( 1 - 1/e + g \left( \sum_{i: r_{i,j} \ge w} x_{i,j} / \gamma_j \right) \right) \cdot \left( \sum_{i : r_{i,j} \ge w}  x_{i,j} / \gamma_j \right)  \, dw 
        \\ &\ge \gamma_j \cdot \int_0^{\infty} \sum_i (1 - 1/e + g(1 - \epsilon/2) \cdot \mathbbm{1}[w > w^*]) \cdot x_{i,j}/\gamma_j \cdot \mathbbm{1}[w \le r_{i,j}] \, dw \\
        &= (1-1/e) \cdot \textsf{LP}_j +  g(1 - \epsilon/2) \cdot \sum_i x_{i,j} \cdot  \max(0, r_{i,j} - w^*) \, dw \\
        &\ge (1-1/e) \cdot \textsf{LP}_j +  g(1 - \epsilon/2) \cdot \sum_{i : r_{i,j} \ge (1+\epsilon)w^*} x_{i,j} \cdot r_{i,j} \cdot \epsilon/2   && \epsilon \le 1 \\
        &= (1-1/e) \cdot \textsf{LP}_j +  g(1 - \epsilon/2) \cdot \textsf{LP}_j^{[(1+\epsilon)w^*,\infty)} \cdot \epsilon/2  .
    \end{align*}
    Recalling that $\textsf{LP}_j^{[(1+\epsilon)w^*,\infty)} \geq \epsilon / 2 \cdot \textsf{LP}_j$ concludes the argument.     
\end{proof}

\obscasethree*
\begin{proof}
We recall the bound \begin{align*}
 \textsf{ALG}_j &\ge \gamma_j \cdot \int_0^{\infty} \E [ \min(1, R_j(w)] \, dw.
 \end{align*} For any $w > 0$, let $I_w$ denote the set of $i \in I$ with reward $r_{i,j} \ge w$. Let $\sigma_w := \sum_{i \in I_w} x_{i,j} / \gamma_j$ and let $\sigma_{w, S} := \sum_{i \in I_w \cap S} x_{i,j} / \gamma_j$. In the convex order, we have by \Cref{lem:weightedPoisCvx} that \begin{align*}
       R_{j}(w) = \sum_{i \in I_w} p_{i,j} \cdot \textup{Pois}(x_{i,a} / \mu_i ) & \cvxle  \sum_{i \in I_w \cap S} (1-\epsilon) \cdot \textup{Pois}\left(\frac{x_{i,a} \cdot p_{i,j}}{1-\epsilon}\right) +  \sum_{i \in I_w \setminus S} 1 \cdot \textup{Pois}\left(x_{i,a} \cdot p_{i,j} \right).
    \end{align*}
As $p_{i,j} := \frac{x_{i,j} / \gamma_j}{x_{i,a} / \mu_i}$ the RHS is distributed exactly as $(1-\epsilon) \cdot \text{Pois} \left( \frac{\sigma_{w,S}}{1-\epsilon} \right) + \text{Pois} \left( \sigma_w - \sigma_{w,S} \right).$ For convenience denote this distribution by $D$; we then have the stationary probability $j$ is matched is 
\begin{align*}
\mathbb{E}[\min(1, R_j(w))] 
    &\ge \mathbb{E}[\min(1, D)] \\
    % &= (1 - \exp(-\sigma_w + \sigma_{w,S})) \cdot 1 + \exp(-\sigma_w + \sigma_{w,S}) \cdot \mathbb{E} \left[  \min \left( 1, (1-\epsilon) \cdot \text{Pois} \left( \frac{\sigma_{w,S}}{1-\epsilon} \right)  \right) \right] \\
    &= 1 - \exp(-\sigma_w + \sigma_{w,S}) \cdot \mathbb{E} \left[ \max \left(0, 1 - (1-\epsilon) \cdot \text{Pois} \left( \frac{\sigma_{w,S}}{1-\epsilon} \right)  \right) 
    \right]  
\end{align*}
Note that when $\sigma_w = 1$ and $\sigma_{w,S} = 0$, the RHS is precisely $1-1/e$; when either $\sigma_w < 1$ or $\sigma_{w,S} > 0$, this bound is loose. By the case hypothesis, we know $\sigma_w - \sigma_{w,S} \le 1 - \epsilon$; from this it is straightforward to see $ \mathbb{E}[\min(1,R_j(w))] \ge (1-1/e + b_3(\epsilon)) \cdot \sigma_w$. (For a complete proof, refer to the auxiliary \Cref{app:expectationbounding}). We conclude via direct computation, as
\begin{align*}
    \textsf{ALG}_j &\ge \gamma_j \cdot \int_0^{\infty} \E [ \min(1, R_j(w)] \, dw \\
    &\ge \gamma_j \cdot (1 - 1/e + b_3(\epsilon)) \cdot \int_0^{\infty} \sigma_w \, dw =  (1 - 1/e + b_3(\epsilon)) \cdot \textsf{LP-Gain}_j. \qedhere
\end{align*}
\end{proof}

 \begin{claim} \label{app:expectationbounding}
      For $\sigma \in [1-\eps/2,1]$, $\sigma_S \in [0, \sigma]$, and $\epsilon < 0.1$ such that $\sigma - \sigma_S \le 1 - \epsilon$, we have $$ 1 - \exp(-\sigma + \sigma_S) \cdot \mathbb{E} \left[ \max \left(0, 1 - (1-\epsilon) \cdot \textup{Pois} \left( \frac{\sigma_S}{1-\epsilon} \right)  \right) 
     \right]   \ge (1-1/e + b(\epsilon)) \cdot \sigma $$ for some $b(\epsilon ) > 0$. 
 \end{claim}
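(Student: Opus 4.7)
The plan is to just grind out the expectation in closed form and extract the boost from a Taylor expansion. Specifically, set $X \sim \textup{Pois}(\sigma_S/(1-\epsilon))$. Since $\epsilon < 0.1$, we have $(1-\epsilon) \cdot 2 > 1$, so $\max(0, 1 - (1-\epsilon) X)$ is nonzero only for $X \in \{0, 1\}$. Direct evaluation gives
\[
\mathbb{E}\bigl[\max(0, 1 - (1-\epsilon) X)\bigr] = \exp\!\left(-\tfrac{\sigma_S}{1-\epsilon}\right) \cdot \left(1 + \tfrac{\epsilon \sigma_S}{1-\epsilon}\right).
\]
Substituting and introducing the shorthand $a := \epsilon \sigma_S / (1-\epsilon)$, the quantity to bound simplifies (using $\sigma_S - \sigma_S/(1-\epsilon) = -a$) to
\[
\text{LHS} \;=\; 1 - \exp(-\sigma - a)\,(1+a).
\]

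Next I would split $\text{LHS} - (1-1/e)\sigma$ into two nonnegative pieces:
\[
\text{LHS} - (1-1/e)\sigma \;=\; \underbrace{\bigl(1 - e^{-\sigma}\bigr) - (1-1/e)\sigma}_{=:\,h(\sigma)\,\ge\,0} \;+\; e^{-\sigma}\bigl(1 - e^{-a}(1+a)\bigr).
\]
The first term $h(\sigma)$ is nonneg on $[0,1]$ because $x \mapsto 1 - e^{-x}$ is concave and agrees with the linear chord $(1-1/e)x$ at $x = 0, 1$. The second term is also nonneg since $a \ge 0$, and a Taylor expansion gives $1 - e^{-a}(1+a) = a^2/2 - a^3/3 + \cdots \ge a^2/3$ for $a \le 1/2$ (which holds since $a \le \epsilon/(1-\epsilon) < 1/2$ for $\epsilon < 0.1$).

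The key observation is that the hypotheses force $a$ to be bounded below. From $\sigma_S \ge \sigma - (1-\epsilon)$ and $\sigma \ge 1 - \epsilon/2$, we get $\sigma_S \ge \epsilon/2$, hence $a \ge \epsilon^2/2$. Therefore
\[
\text{LHS} - (1-1/e)\sigma \;\ge\; e^{-\sigma} \cdot \frac{a^2}{3} \;\ge\; e^{-1} \cdot \frac{(\epsilon^2/2)^2}{3} \;=\; \frac{\epsilon^4}{12e}.
\]
Finally, because $\sigma \le 1$, choosing $b(\epsilon) := \epsilon^4 / (12 e)$ yields $b(\epsilon)\,\sigma \le b(\epsilon) \le \text{LHS} - (1-1/e)\sigma$, which is the required inequality.

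The only subtle point --- and the reason the two-case decomposition above is useful --- is that when $\sigma$ is exactly $1$ we get $h(\sigma) = 0$, so the boost must come entirely from $a > 0$. The saving grace is that the constraint $\sigma - \sigma_S \le 1 - \epsilon$ combined with $\sigma \ge 1 - \epsilon/2$ pins $\sigma_S$ (and hence $a$) away from zero by $\Theta(\epsilon)$, which is exactly what makes the correlated-proposals term $\mathbb{E}[\max(0, 1 - (1-\epsilon) X)]$ strictly smaller than $1$ and delivers the $\Omega(\epsilon^4)$ improvement.
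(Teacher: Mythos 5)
Your proof is correct and follows essentially the same route as the paper's: compute the Poisson expectation exactly (only $X\in\{0,1\}$ contributes), reduce the left-hand side to $1 - e^{-\sigma - a}(1+a)$ with $a=\epsilon\sigma_S/(1-\epsilon)$, observe that the hypotheses force $\sigma_S\geq \epsilon/2$ and hence $a=\Omega(\epsilon^2)$, and extract the boost from the quadratic term in the expansion of $1-e^{-a}(1+a)$. The packaging differs modestly: you split the slack explicitly into the concavity gap $h(\sigma)=(1-e^{-\sigma})-(1-1/e)\sigma$ plus the purely-$a$ term $e^{-\sigma}(1-e^{-a}(1+a))$, whereas the paper instead applies $1+x\leq e^{x}-x^2/2$ and then argues the ratio to $\sigma$ is minimized at $\sigma=1$. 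Your decomposition is arguably a bit cleaner, and it sidesteps a small sign typo in the paper's displayed bound: the paper writes $(\star)\geq 1-e^{-\sigma}\bigl(1+h(\sigma_S)\bigr)$, whereas the computation actually yields $(\star)\geq 1-e^{-\sigma}\bigl(1-h(\sigma_S)\bigr)$ (the minus sign is what produces the improvement over $1-1/e$). With the corrected sign, the paper's monotonicity-of-$h$ and ratio-minimization argument and your alternating-series bound $1-e^{-a}(1+a)\geq a^2/3$ (valid since $a\leq\epsilon/(1-\epsilon)<1/2$) are interchangeable.
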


 \begin{proof}
     For convenience let $(\star)$ denote the LHS of the above inequality. As $\epsilon < 0.5$, the expectation in the LHS is positive if and only if  $ \textup{Pois} \left( \frac{\sigma_S}{1-\epsilon} \right)$ realizes in $\{0, 1\}$. Using this we can compute
     \begin{align*}
      (\star) &= 1 - \exp(-\sigma + \sigma_S) \cdot \left( \exp \left( - \frac{\sigma_S}{1-\epsilon} \right) + \exp \left( - \frac{\sigma_S}{1-\epsilon} \right) \cdot \frac{\sigma_S}{1-\epsilon} \cdot \epsilon \right) && \epsilon < 0.5 \\
     &= 1 - \exp(-\sigma + \sigma_S) \cdot  \exp \left( - \frac{\sigma_S}{1-\epsilon} \right) \cdot \left( 1 +  \frac{\sigma_S}{1-\epsilon} \cdot \epsilon \right).
     \end{align*}
 As $1 + x + 0.5x^2 \le \exp(x)$ for non-negative $x$, we can further bound
     \begin{align*}
     (\star) &\ge 1 - \exp(-\sigma + \sigma_S) \cdot  \exp \left( - \frac{\sigma_S}{1-\epsilon} \right) \cdot \left( \exp \left(  \frac{\sigma_S}{1-\epsilon} \cdot \epsilon \right) - \frac{\sigma_S^2 \epsilon^2}{2(1-\epsilon)^2}\right) \\
     &= 1 - \exp \left( - \sigma \right) \cdot \left( 1 +  \exp \left(  \sigma_S - \frac{\sigma_S}{1 - \epsilon} \right) \cdot\frac{\sigma_S^2 \epsilon^2}{2 (1 - \epsilon)^2} \right).
     \end{align*}
     Defining $h(z) := \exp \left(  z - \frac{z}{1 - \epsilon} \right) \cdot\frac{z^2 \epsilon^2}{2 (1 - \epsilon)^2}$ our bound can be written as $(\star) \ge 1 - \exp( - \sigma ) \cdot ( 1 + h(\sigma_S)).$
    
     Since $\sigma > 1 - \epsilon / 2$, we have by the claim's assumption that $\sigma_S \ge \epsilon / 2$. For $\epsilon \le 0.5$ we have that $h(\cdot)$ is increasing on $(0,1)$, so $$(\star) \ge 1 - \exp(-\sigma) \cdot (1 + h(\epsilon / 2)) \ge \sigma \cdot (1 - 1/e + b(\epsilon)).$$ For the final inequality, one can observe that $\frac{1-\exp(-\sigma) \cdot (1+h(\epsilon / 2))}{\sigma} $ for $\sigma \in [1-\epsilon / 2, 1]$ is minimized at $\sigma = 1$ for $h(\epsilon / 2) \le 0.1$, which concludes the proof with $b(\eps) = 1-1/e \cdot (1+h(\eps/2))$. 
 \end{proof}

\subsection{Proof of \Cref{lem:casesBeat1-1/e}} \label{app:casesbeat}

\casesbeatoneminusoneovere*
\begin{proof}
    For any type $j$ that is not hard, we know that $\textsf{ALG}_j \cdot \textsf{LP-Gain}_j^{-1}$ is at least $(1-1/e + b(\epsilon))$ if $j$ falls into \hyperref[case2]{Case 2}. Otherwise, it must be the case that either \hyperref[case1]{Case 1} or \hyperref[case3]{Case 3} holds for $j$ with neighborhood $S_j$, with parameter $\epsilon'$. In this case 
    \begin{align*}
    \textsf{ALG}_j \cdot \textsf{LP-Gain}_j^{-1} \ge (1-\eps) \cdot (1 - 1/e + b(b^{-1}(2\eps))) \ge 1-1/e+\eps
    \end{align*}
    where the inequality holds for sufficiently small $\eps$ (e.g. $\eps \le 0.1$). The long-term average reward of \Cref{alg:corrprop} is thus given by 
    \begin{align*}
         \sum_{j \in \mathcal{H}} \textsf{ALG}_j + \sum_{j \in J \setminus \mathcal{H}} \textsf{ALG}_j &\ge (1-1/e) \cdot \textsf{LP-Gain}_{\mathcal{H}} + (1-1/e + \min \{ \epsilon, b (\epsilon) \} ) \cdot \textsf{LP-Gain}_{J \setminus \mathcal{H} } \\
         &\ge (1-1/e +  \min \{ \epsilon, b (\epsilon ) \} \cdot \epsilon) \cdot \textsf{LP-Gain}_J. \qedhere
    \end{align*}
\end{proof}

% \begin{claim}\label{xijxiabounds}
% For LP solutions satisfying \ref{structuralassumption} we have
%     \begin{align*}
% (1-\eps) \cdot \frac{\gamma_j \lambda_i}{\mu_i + \sum_{k \in N_i} \gamma_k} \leq x_{i,j} \leq \frac{1}{1-\eps} \cdot \frac{\gamma_j \lambda_i}{\mu_i + \sum_{k \in N_i} \gamma_k} \quad \text{ for } i \in I, j \in N_i  
% \end{align*}
% and \begin{align*}
%     \frac{\mu_i\lambda_i}{\mu_i + \sum_{k \in N_i} \gamma_k} \leq x_{i,a} \leq \frac{1}{1-\eps} \cdot \frac{\mu_i\lambda_i}{\mu_i + \sum_{k \in N_i} \gamma_k} \quad \text{for } i \in I. 
% \end{align*}
% \end{claim} 
\subsection{Proofs of Properties of the Modified Instance}\label{prf:instance_property}

\subsubsection{Property (v): Binary queues}\label{app:binary_queues}
If we split a type into sufficiently many types, we have that for every offline type, the departure rate is much higher than the arrival rate; for this reason, we can show that truncating the offline queues to be binary results in only a small loss in the approximation ratio. 

Concretely, we split offline type $i$ into $K$---with $K \gg 1$ to be specified---identical types, each with arrival rate $\lambda_i / K$. We split the LP solution similarly: each new type has $x_{i,j} / K$ and $x_{i,a} / K$ for every $j \in J$. As before, this new LP solution is feasible with the same objective value, in the (further) modified instance. Importantly, we preserve the membership of vertices in TOP and BOT, i.e. if $i \in \text{TOP}$, we split it into $K$ vertices in TOP. For this reason, this second-stage modification does not alter \Cref{alg:second}'s decisions. This is particularly noteworthy since the second-stage splits may lead to an instance with a super-polynomial size, compared to the initial values. 

To determine $K$, we define $u = \max_{i \in I} {\lambda_i}/{\mu_i}$ (for values before performing the ``binary queues'' step). If $u > 1$, we let $K = \lceil \frac{u^2 n^2}{\eps^2} \rceil$; otherwise, $K = \frac{n^2}{\eps^2}$. Therefore, for queue $i$ in the newly modified instance, we have $\lambda_i / \mu_i \leq u/K$. Now, note that regardless of the matching policy and its decisions, the queue cannot be bigger than the case where we match no one. In this case, queue $i$'s length is a $\text{Pois}( \lambda_i / \mu_i)$ random variable, independent from other queues. Thus, we have
\begin{align}
    \pr{Q_i \geq 2} \leq 1-\exp\left(-\frac{\lambda_i}{\mu_i}\right) - \frac{\lambda_i}{\mu_i} \cdot \exp\left(-\frac{\lambda_i}{\mu_i}\right) \leq \left(\frac{\lambda_i}{\mu_i}\right)^2 \leq \frac{u^2}{K^2} \ . 
\end{align}
Therefore, by union bound, the probability that some queue has length more than 1 is at most $u^2/K^2 \cdot nK = nu^2/K \leq \eps^2 /n$. Due to this small loss of accuracy and to keep this notation simple, we assume from now on that the queues are binary, i.e., if there is some offline vertex in the queue that is not matched yet, the queue cannot grow bigger. All guarantees hold up to a $(1-\eps^2/n)$-factor loss.

% \subsection{The new instance $\mathcal{I}'$} \label{app:newinstance}

% \begin{definition}
% Let $\mathcal{I}$ denote our original instance $(I, J, \lambda, \gamma, r)$ with optimal LP solution $(x_{i,j}, x_{i,a})$. We will construct a new instance $\mathcal{I}' = (I', J', \lambda', \gamma', r')$ with LP solution $(x'_{i,j}, x'_{i,a})$ by starting with $\mathcal{I}' = \mathcal{I}$ and $x' = x$ and applying the following operations. 
% \begin{enumerate}
%     \item Delete any online types $j \in J' \setminus H$. All $x'_{i,j}$ with $j \in J' \setminus H$ are set to zero. 
%     \item For every remaining $j \in J'$, we will then remove any edge to some $i$ with $i \notin T_j$ (i.e., set $r_{i,j}' = 0$ and $x_{i,j}' = 0$). For every $i \in T_j$, set reward $r_{i,j}' = r_j$.
%     \item  \label{splitting} Let $M = \max_{i \in I} \frac{\lambda_i}{\mu_i}$, and recall $n = |I|$ denotes the number of offline types in the original instance $\mathcal{I}$. We split every offline type $i \in I$ into $K := 2 \lceil M^2 n^2 \rceil$ new types $\{i_1, i_2, \ldots, i_K\}$ which we add to $\mathcal{I}'$. Each $i_k$ has arrival rate $\lambda'_{i_k} := \lambda_i / K$. We naturally also split the original LP solution, so $x'_{i_k, j} := x_{i,j} / K$ and $x'_{i_k, a} = x_{i,a} / K$; it is easy to see this still satisfies all constraints in \eqref{TLPon}. \todo{Fix when $M$ is small}
%     \item For tiebreaking purposes, we will give each node in $I'$ a label. For each original $i \in I$ and for $k \in [K]$ we will label $i_k$ with the pair $l(i_k) := (k, i)$.
% \end{enumerate}
% \end{definition}

\subsubsection{Approximation of the LP solution}\label{app:approximate_lp}
    For $j \in N_i$, from $1 - \epsilon \le p_{i,j} \le 1$ we have 
    \begin{align} \label{xijupperandlowerbound}
    (1 - \epsilon) \cdot \gamma_j \cdot \frac{x_{i,a}}{\mu_i} \le x_{i,j} \le  \gamma_j \cdot \frac{x_{i,a}}{\mu_i}.
    \end{align} 
    Thus $$ (1-\epsilon) \cdot \frac{x_{i,a}}{\mu_i} \cdot \left( \sum_{k \in N_i} \gamma_k \right) \le \sum_{k \in N_i} x_{i,k} \le \frac{x_{i,a}}{\mu_i} \cdot \left( \sum_{k \in N_i} \gamma_k \right).$$ From Constraint~\eqref{eqn:tightOfflineFlow} we recall $\sum_{k \in N_i} x_{i,k} = \lambda_i - x_{i,a}$; substituting and simplifying gives 
    $$ \frac{\mu_i\lambda_i}{\mu_i + \sum_{k \in N_i} \gamma_k} \le x_{i,a} \le \frac{\mu_i \lambda_i} {\mu_i + (1-\epsilon) \sum_{k \in N_i} \gamma_k}$$ which implies the second half of the claim. Substituting these bounds into \eqref{xijupperandlowerbound} implies the first half.

\subsection{Proof of \Cref{claim:stochasticdominancesecond}} \label{proofofstochasticdominancesecond}
\claimstochasticdominancesecond*

\begin{proof}
% We will show $Q \succeq_{st} Q^{\textup{weak}}$; the proof that $Q^{\textup{weak}} \succeq_{st} Q^{\textup{\textsf{ind}}}$ follows similar steps and is not required for our result. 

We again use \Cref{stochasticdominanceviamonotone} to prove stochastic dominance. For each $i \in I$ and $j \in J$, let $\textsf{Prob-Matched}_{i,j} ( Q(t))$ denote the probability if $j$ arrives, and each offline type $i'$ has $Q_{i'}(t)$ copies waiting, that \Cref{alg:second} will match $j$ to $i$. At every time $t$, the queue for type $i$ increases by 1 at rate $\lambda_i$, and decreases by 1 at rate $$Q_i(t) \cdot \mu_i + \sum_j \gamma_j \cdot \textsf{Prob-Matched}_{i,j} ( Q(t)).$$  We will show that for an upwards-closed $A \subseteq \mathbb{Z}_{\ge 0}^n$ if $x \in A$ or $y \notin A$ with $x \le y$, we have $\sum_{z \in A} Q^{\textup{weak}}_{x,z} \le \sum_{z \in A} Q_{y, z}.$\footnote{For $x, y \in \mathbb{Z}_{\ge 0}^n$, we let $Q_{x,y}$ (resp. $Q^{\textup{weak}}_{x,y}$) denote the entry in the intensity matrix corresponding to $Q$'s (resp. $Q^{\textup{weak}}$'s) transition from state $x$ to state $y$.}

If $y \notin A$, then $x \notin A$. For $i \in [n]$, let $e_i$ denote the vector with 1 in its $i$\textsuperscript{th} entry and zeros elsewhere. No entry $x' \le x$ can be in $A$. Hence $$\sum_{z \in A} Q^{\textup{weak}}_{x,z} = \sum_{i : x + e_i \in A} Q^{\textup{weak}}_{x,x+e_i} = \sum_{i : x + e_i \in A} \lambda_i \le \sum_{i : y + e_i \in A} \lambda_i = \sum_{z \in A} Q_{y,z}$$ where the inequality follows from $x \le y$ and that $A$ is upwards-closed. 

If $x \in A$, then $y \in A$. It suffices to show $\sum_{z \notin A} Q^{\textup{weak}}_{x,z} \ge \sum_{z \notin A} Q_{y,z}$. Using that $A$ is upwards-closed, we can compute
\begin{align}
    \sum_{z \notin A}Q^{\textup{weak}}_{x,z} &= \sum_{\substack{i : i \in \text{TOP}, \\ x - e_i \notin A}} x_i \cdot \left(\mu_i + \sum_j \gamma_j  \right) + \sum_{\substack{i : i \in \text{BOT}, \\ x-e_i \notin A}} x_i \cdot \left( \mu_i + \sum_{j \in N_i} \left(1 - \mathbbm{1} \left[\sum_{i' \in N_j^\uparrow} x_{i'} > 0\right] \right) \gamma_j \right) \nonumber \\
    &\ge \sum_{\substack{i : i \in \text{TOP}, \\ y - e_i \notin A}} x_i \cdot \left(\mu_i + \sum_j \gamma_j  \right) + \sum_{\substack{i : i \in \text{BOT}, \\ y-e_i \notin A}} x_i \cdot \left( \mu_i +  \sum_{j \in N_i} \left(1 - \mathbbm{1} \left[\sum_{i' \in N_j^\uparrow} x_{i'} > 0\right] \right) \gamma_j  \right) \nonumber \\
    &= \sum_{\substack{i : i \in \text{TOP}, \\ y - e_i \notin A}} y_i \cdot \left(\mu_i + \sum_j \gamma_j  \right) + \sum_{\substack{i : i \in \text{BOT}, \\ y-e_i \notin A}} y_i \cdot \left( \mu_i + \sum_{j \in N_i} \left(1 - \mathbbm{1} \left[\sum_{i' \in N_j^\uparrow} x_{i'} > 0\right] \right) \gamma_j \right) \label{equalitysecondstochasticdominance} \\
    &\ge \sum_{\substack{i : i \in \text{TOP}, \\ y - e_i \notin A}} y_i \cdot \left(\mu_i + \sum_j \gamma_j  \right) + \sum_{\substack{i : i \in \text{BOT}, \\ y-e_i \notin A}} y_i \cdot \left( \mu_i + \sum_{j \in N_i} \left(1 - \mathbbm{1} \left[\sum_{i' \in N_j^\uparrow} y_{i'} > 0\right] \right) \gamma_j \right) \nonumber  \\
    &\ge \sum_{i : y - e_i \notin A}  \left( y_i \cdot \mu_i + \sum_j \gamma_j \cdot  \textsf{Prob-Matched}_{i,j} (y) \right) && \hspace{-5em} \text{via \eqref{eqn:boundprob-matched}} \nonumber \\
    &=   \sum_{z \notin A} Q_{y,z} \ . 
\end{align}
Note that for the equality in \Cref{equalitysecondstochasticdominance}, we used that if $y-e_i \notin A$, then $x_i = y_i$ becuase $x \le y$.
\end{proof}

\jointproballqueuesempty*

\begin{proof}
    
Let $\textsf{NA}_i$ denote the event that there was no departure in $Q_i^w(t)$ in the interval $[t^{\infty} - t_i, t^{\infty}]$ due to the depletion of sMc.  The queue for $i$ ends up empty at $t^{\infty}$ if $A_i = 0$ ($i$ departed due to wait time expiring), or $A_i = 1$ and $\textsf{NA}_i=1$ ($i$ departed due to depletion in sMc). Thus 
\begin{align}
        \Pr \left[ \sum_{i \in N_j^{\downarrow}} Q_i^w = 0 \mid \psi_j \right] = \mathbb{E} \Bigg[ \prod_{i \in N_j^{\downarrow}} \left(1 - A_i \cdot \textsf{NA}_i  \right)  \Bigg | \psi_j \Bigg].
\end{align}

\begin{comment}
\begin{align*}
    \mathbb{E} \Bigg[ \prod_{i \in N_j^{\downarrow}} \left(1 - A_i \cdot \textsf{NA}_i  \right)  \Bigg | \psi_j \Bigg]  & = \ex{ \ex{ \prod_{i \in N_j^{\downarrow}} \left(1 - A_i \cdot \textsf{NA}_i  \right)  \Bigg | (A_i)_i, (\delta_{i,k})_k, \psi_j} \Bigg | \psi_j} \\ 
    & = \ex{  \prod_{i \in N_j^{\downarrow}} \left(1 - A_i \cdot \exp\left(-\sum_{k \in N_i} \gamma_k \delta_{i,k}\right)  \right)   \Bigg | \psi_j} \\
    &= \ex{\prod_{i \in N_j^\downarrow} \left(1- \frac{\lambda_i}{\lambda_i + \mu_i} \cdot \exp\left(-\sum_{k \in N_i} \gamma_k \delta_{i,k}\right) \right) \Bigg | \psi_j}
\end{align*}
\end{comment} 

We can expand and simplify as follows:
\begin{align}
        \Pr & \left[ \sum_{i \in N_j^{\downarrow}} Q_i^w = 0 \mid \psi_j \right] \nonumber \\
        &= \sum_{I' \subseteq N_j^{\downarrow}} (-1)^{|I'|} \cdot \mathbb{E} \left[ \prod_{i \in I'} A_i \prod_{i \in I'} \textsf{NA}_{i} \Big | \psi_j \right]  \nonumber \\
        &= \sum_{I' \subseteq N_j^{\downarrow}} (-1)^{|I'|} \cdot \prod_{i \in I'} \frac{\lambda_i}{\mu_i + \lambda_i} \cdot \mathbb{E} \left[ \prod_{i \in I'} \textsf{NA}_{i} \Big | \bigwedge_{i \in I'} A_i, \psi_j \right] \label{eqn:Aisindep} \\
        &= \sum_{I' \subseteq N_j^{\downarrow}} (-1)^{|I'|} \cdot \prod_{i \in I'} \frac{\lambda_i}{\mu_i + \lambda_i} \cdot \mathbb{E} \left[ \prod_{i \in I'} \exp \left( -\sum_{k \in N_i} \gamma_k \cdot \delta_{i,k} \right) \Big | \bigwedge_{i \in I'} A_i, \psi_j \right] \label{eqn:depletionrate} \\
        &=  \mathbb{E} \left[ \sum_{I' \subseteq N_j^{\downarrow}} (-1)^{|I'|} \cdot \prod_{i \in I'} \frac{\lambda_i}{\mu_i + \lambda_i} \cdot  \prod_{i \in I'} \exp \left( -\sum_{k \in N_i} \gamma_k \cdot \delta_{i,k} \right) \Big | \bigwedge_{i \in N_j^{\downarrow}} A_i, \psi_j \right] \nonumber \\
        &= \mathbb{E} \left[ \prod_{i \in N_j^{\downarrow}} \left( 1 - \frac{\lambda_i}{\mu_i + \lambda_i} \cdot \exp \left( - \sum_{k \in N_i} \gamma_k \cdot \delta_{i,k} \right)\right) \Big | \bigwedge_{i \in N_j^{\downarrow}} A_i, \psi_j \right] \nonumber
    \end{align}
Note that \eqref{eqn:Aisindep} is directly implied by \eqref{probAi2}, and for \eqref{eqn:depletionrate} we simply use that the depletion rate of $i$ in sMc is $  \sum_{k \in N_i} \psi_k(t) \cdot \gamma_k$. 
\end{proof}

\subsection{Proof of \Cref{clm:important_i}} \label{app:proofofclaimimportanti}
\claimimportanti*

\begin{proof}
Recall that \[
\sum_{i\in N_j^\downarrow\setminus I_j} \frac{\lambda_i}{\mu_i + \Gamma_i} \overset{\eqref{ineq:abandonment_bound}}{\leq}  \sum_{i\in N_j^\downarrow\setminus I_j} \frac{x_{i,a}}{\mu_i}   \overset{\eqref{eqn:jBalancedness}}{\leq}  \frac{1}{2(1-\eps')}.
\]
By the definition of $I_j$, for any $i \in N_j^{\downarrow} \setminus I_j$ we have $\Gamma_i < \eta \cdot \mu_i$. 
Thus \[
\sum_{i\in N_j^\downarrow\setminus I_j} \frac{\lambda_i}{\mu_i} \leq (1+\eta ) \sum_{i\in N_j^\downarrow\setminus I} \frac{x_{i,a}}{\mu_i} \le (1+\eta ) \cdot \frac{1}{2(1-\eps')}.
\]
Recall that $p_{i,j} := \frac{x_{i,j}/\gamma_j}{x_{i,a}/\mu_i} \ge 1-\eps'$ for all $i \in N_j$ after performing \hyperref[transf]{\textsf{Instance Transformation}}. Using this along with Constraint \eqref{eqn:tightOnlineFlow} of \eqref{TLPon}, we can bound
\[
\sum_{i\in N_j^\downarrow\setminus I_j} \frac{x_{i,a}}{\mu_i} \leq \frac{1}{1-\eps'} \cdot \sum_{i\in N_j^\downarrow\setminus I_j} \frac{x_{i,j}}{\gamma_j} \leq \frac{1}{1-\eps'} \cdot \left(1-e^{-\frac{1+\eta }{2(1-\eps')}}\right) .
\]
As $\sum_{i\in  N_j^\downarrow} \frac{x_{i,a}}{\mu_i} \ge (1-\eps') / 2$ by \Cref{eqn:jBalancedness}, we have that
\[
\sum_{i \in I_j} \frac{x_{i,a}}{\mu_i} = \sum_{i\in  N_j^\downarrow} \frac{x_{i,a}}{\mu_i} - \sum_{i\in  N_j^\downarrow\setminus I_j} \frac{x_{i,a}}{\mu_i} \geq \frac{1-\eps'}{2} - \frac{1}{1-\eps'} \cdot \left(1-e^{-\frac{1+\eta }{2(1-\eps')}}\right).
\] 
As $\eps' \rightarrow 0$ and $\eta \rightarrow 0$ the right-hand side approaches $\frac{1}{2} - (1 - e^{-1/2}) \approx 0.107$. It is straightforward to see that if $\eps', \eta $ are sufficiently small, in particular at most $10^{-3}$, we have $\sum_{i \in I} \frac{x_{i, a}}{\mu_i} \geq 0.1$, as desired.
\end{proof} 

\subsection{Proof of \Cref{clm:Ij_bound}} \label{app:proofclaimIjbound}
\clmIjbound*

\begin{proof}
As $\tau_i$ has density function $(\mu_i + \lambda_i) \exp ( - (\mu_i + \lambda_i) z )$ for $z > 0$, we can compute
\begin{align}
    & \ex{\exp\left(-\left(1- c \cdot \mathbbm{1}\left[\tau_i \geq \tau_i^*\right]\right) \cdot \tau_i \Gamma_i \right)} \nonumber \\ & \quad = \int_{0}^\infty (\mu_i+\lambda_i)\exp(-(\mu_i+\lambda_i)z) \cdot \exp\left(-\left(1- c \cdot \mathbbm{1}\left[z \geq \tau_i^*\right]\right) \cdot z \Gamma_i \right) d z  \nonumber \\ & \quad = \int_{0}^{\tau_i^*} (\mu_i+\lambda_i)\exp(-(\mu_i+\lambda_i)z) \cdot \exp\left(-z \Gamma_i \right) d z  \nonumber \\ 
    & \quad \quad \quad \quad \quad \quad \quad \quad + \int_{\tau_i^*}^\infty (\mu_i+\lambda_i)\exp(-(\mu_i+\lambda_i)z) \cdot \exp\left(-\left(1-c\right) z \Gamma_i \right) d z  \nonumber \\ & \quad = \frac{(\mu_i+\lambda_i) \cdot \left(1 - e^{-\left(\mu_i+\lambda_i + \Gamma_i \right)\tau_i^*}\right) }{\mu_i+\lambda_i + \Gamma_i}  + \frac{(\mu_i+\lambda_i)e^{-\left(\mu_i + \lambda_i + (1-c)\Gamma_i \right)\tau_i^*}}{\mu_i + \lambda_i + (1-c)\Gamma_i} \notag \\ & \quad \geq \frac{(\mu_i+\lambda_i) \cdot \left(1 - e^{-\left(\mu_i+\lambda_i + \Gamma_i \right)\tau_i^*}\right) }{\mu_i+\lambda_i + \Gamma_i}  + \frac{(\mu_i+\lambda_i)e^{-\left(\mu_i + \lambda_i + \Gamma_i \right)\tau_i^*}}{\mu_i + \lambda_i + (1-c)\Gamma_i} \notag  \ .
\end{align}
Applying Jensen's inequality with function $x \mapsto \frac{\mu_i+\lambda_i}{\mu_i + \lambda_i + (1-x) \Gamma_i}$, we have that
\begin{align}
\ex{\exp\left(-\left(1- c \cdot \mathbbm{1}\left[\tau_i \geq \tau_i^*\right]\right) \cdot \tau_i \Gamma_i \right)}  \ge
     \frac{\mu_i + \lambda_i}{\mu_i+\lambda_i+ (1-c') \Gamma_i} \label{jensenconsequence}
\end{align}
 for $c' := c \cdot e^{-(\mu_i+\lambda_i+ \Gamma_i )\tau_i^*}. $ Recall that $\Gamma_i \geq \eta  \cdot \mu_i$ for $i \in I_j$, and $\lambda_i \le \eps^2 \cdot \mu_i$ by \hyperref[propertyv]{Property (v)}; thus, we have
\[ 
    (\mu_i+\lambda_i+ \Gamma_i)\tau_i^* \leq (\mu_i(1+\eps^2) + \Gamma_i)\tau_i^*  \leq \left(\frac{1+\eps^2}{\eta } + 1\right) \tau_i^* \Gamma_i = \frac{1+\eta  + \eps^2}{\delta \eta }
\] 
Therefore $c' \ge c''$ for  $
   c'' := c \cdot \exp \left( -\frac{1+\eta + \eps^2}{\delta \eta} \right) \in (0,1)$. 
From inequality~\eqref{jensenconsequence} we then have
\begin{align}
    \ex{\exp\left(-\left(1- c \cdot \mathbbm{1}\left[\tau_i \geq \tau_i^*\right]\right) \cdot \tau_i \Gamma_i \right)} &\ge \frac{\mu_i + \lambda_i}{\mu_i+\lambda_i+ (1-c'') \Gamma_i} \\
    &= \frac{\mu_i + \lambda_i}{\mu_i + \lambda_i + \Gamma_i} \cdot \left(1 + \frac{c''\Gamma_i}{\mu_i+\lambda_i+ (1-c'') \Gamma_i}\right) \nonumber \\
    &\geq \frac{\mu_i + \lambda_i}{\mu_i + \lambda_i + \Gamma_i} \cdot \left(1 + \frac{c''\eta\mu_i}{(1+\eps^2)\mu_i+ (1-c'') \eta\mu_i}\right) \nonumber \\
    &= \frac{\mu_i + \lambda_i}{\mu_i + \lambda_i + \Gamma_i} \cdot \left(1 + \frac{c''\eta}{1+\eps^2+ (1-c'') \eta}\right) \nonumber \\
    &\geq \frac{\mu_i + \lambda_i}{\mu_i + \lambda_i + \Gamma_i} \cdot (1 + \tilde{c}) \nonumber
\end{align}
for the constant $\tilde{c} := \frac{c'' \eta}{2 +(1-c'')\eta} > 0$ independent of $\eps$, where we used $\Gamma_i \geq \eta\mu_i$ and $\lambda_i \leq \eps^2\mu_i$ in the second-to-last inequality.
\end{proof}

\subsection{Details from Proof of \Cref{lem:correl}}\label{prf:correl}

\subsubsection{Existence of $I^{\rm core}$}\label{app:core_set}
Via the definition of $J^{\rm correl}$ and property \eqref{eqn:jBalancedness}, we have for every $k \in J^{\rm correl}$ that 
\[
    \sum_{i \in N_k^\uparrow \cap N_j^\uparrow} x_{i,a} / \mu_i \ge \frac{1-\eps'}{2} - \kappa \ . 
\]
Consequently, we have
\begin{align}
    \sum_{i' \in N_j^\uparrow} \sum_{k \in J^{\rm correl} \cap N_i}  \gamma_k \cdot \frac{x_{i',a}}{\mu_{i'}} \cdot \mathbbm{1}[i' \in N_k] \geq \sum_{k \in J^{\rm correl} \cap N_i} \gamma_k \cdot \left(\frac{1-\eps'}{2} - \kappa \right) = \left(\frac{1-\eps'}{2} - \kappa \right) \tilde{\Gamma} . \label{ineq:tot_val}
\end{align} 

Now, define the set of offline types in $N_j^{\uparrow}$ that have at most $\tilde{\Gamma}/2$ arrival rate from their neighbors in $J^{\rm correl} \cap N_i$ by
\[ 
    U := \left \{ i' \in N_j^\uparrow : \sum_{k \in J^{\rm correl} \cap N_i} \gamma_k \mathbbm{1}[k \in N_{i'}] \leq \frac{\tilde{\Gamma}}{2} \right \}
\]
and let $O := N_j^\uparrow \setminus U$. We then define $x_U := \sum_{i' \in U} \frac{x_{i', a}}{\mu_{i'}}$ and $x_O := \sum_{i' \in O} \frac{x_{i', a}}{\mu_{i'}}$. The definition of $U$ gives
\begin{align*}
    \sum_{i' \in U} \sum_{k \in J^{\rm correl} \cap N_i}  \gamma_k \cdot \frac{x_{i',a}}{\mu_{i'}} \cdot \mathbbm{1}[i' \in N_k] \leq \frac{x_U \tilde{\Gamma}}{2} 
\end{align*} and trivially we have
\begin{align*}
    \sum_{i' \in O} \sum_{k \in J^{\rm correl} \cap N_i}  \gamma_k \cdot \frac{x_{i',a}}{\mu_{i'}} \cdot \mathbbm{1}[i' \in N_k] \leq {x_O \tilde{\Gamma}} .
\end{align*}
Since $x_O + x_U \in [\frac{1-\eps'}{2}, \frac{0.5}{1-\eps'}]$ by property \eqref{eqn:jBalancedness}, inequality \eqref{ineq:tot_val} implies
\begin{align*}
     \left(\frac{0.5}{1-\eps'}-x_O\right) \cdot \frac{\tilde{\Gamma}}{2} + x_{O} \tilde{\Gamma} \ge x_U \cdot \frac{\tilde{\Gamma}}{2} + x_O \cdot \tilde{\Gamma} \geq \left(\frac{1-\eps'}{2} - \kappa \right) \tilde{\Gamma}  
\end{align*}
or equivalently
\begin{align}
    x_O \geq {1-\eps'} - 2\kappa - \frac{0.5}{1-\eps'} \ . \label{ineq:x_O_lb}
\end{align}
 Letting $I^{\rm core} = O$ hence proves the existence of the desired core set. We also infer that 
\begin{align*}
    \sum_{{i'} \in I^{\rm core}} \lambda_{i'} & \geq \sum_{{i'} \in I^{\rm core}} \sum_{k\in J^{\rm correl} \cap N_i} x_{{i'},k} \cdot \mathbbm{1}[{i'} \in N^\uparrow_k]
    \\ & \geq (1-\eps') \sum_{{i'} \in I^{\rm core}} \sum_{k\in J^{\rm correl} \cap N_i}\gamma_k \cdot \frac{x_{{i'},a}}{\mu_{i'}} \cdot \mathbbm{1}[{i'} \in N^\uparrow_k] && p_{i,j} \geq 1-\eps'
    \\ & \geq \frac{1-\eps'}{2} \left(\sum_{{i'}\in I^{\rm core}} \frac{x_{{i'},a}}{\mu_i}\right)  \left(\sum_{k\in J^{\rm correl} \cap N_i} \gamma_k\right) 
    \\ & \geq \frac{(1-\eps') u(\eps', \kappa)}{2}  \left(\sum_{k\in J^{\rm correl} \cap N_i} \gamma_k\right)
    \\ &  \geq 0.045\left(\sum_{k\in J^{\rm correl} \cap N_i} \gamma_k\right) \ , && \eps',\kappa \in (0,0.1)
\end{align*}
where the second inequality follows from the definition of occupied types and the third inequality is by inequality \eqref{ineq:x_O_lb}. Inequality \eqref{ineq:u_lb} is now proved.
\hfill \qedsymbol

\subsubsection{Concentration bounds for $\sum_{t = 1}^T \eta^{\rm e}_t$ and $\sum_{t = 1}^T \eta^{\rm n}_t$}\label{app:concentration_bounds}
We will make use of the following facts:
\begin{fact}[c.f. \cite{janson2018tail} Theorem 5.1]\label{fact:exp_concentration}
    Given a random variable $Y$ that is a sum of $n$ i.i.d. exponential random variables, we have for all $\theta\geq 1$
\begin{eqnarray} 
\pr{Y\geq \theta \ex{Y}} \leq \frac{1}{\theta}e^{-n(\theta-1-\ln(\theta))} \ . \notag
\end{eqnarray}
\end{fact}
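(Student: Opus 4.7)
The plan is to prove this by combining the standard Poisson--Gamma duality with a Chernoff bound on the lower tail of a Poisson random variable. The statement is a classical one (cited here from Janson), so the proposal is to sketch the short self-contained derivation rather than invoke the external reference.

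First I would reduce to the unit-rate case: by rescaling, we may assume each $X_i$ is drawn i.i.d.\ from $\textup{Exp}(1)$, so that $Y\sim \textup{Gamma}(n,1)$ and $\ex{Y}=n$; the goal becomes bounding $\pr{Y \ge \theta n}$ for $\theta \ge 1$.

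Next, I would invoke the standard Poisson--Gamma duality. If $\{N(t)\}_{t\ge 0}$ is a unit-rate Poisson process, then $Y$ is distributed as the time of its $n$-th arrival, so the event $\{Y \ge \theta n\}$ coincides exactly with $\{N(\theta n) \le n-1\}$. Hence
\[
\pr{Y \ge \theta n} \;=\; \pr{\textup{Pois}(\theta n) \le n-1}.
\]

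Then I would apply the standard lower-tail Chernoff bound for a Poisson random variable. For $Z \sim \textup{Pois}(\mu)$ and any integer $k \le \mu$, optimizing $\pr{Z\le k} \le \inf_{s<0} e^{-sk}\,\ex{e^{sZ}} = \inf_{s<0} e^{-sk + \mu(e^s-1)}$ at $e^s = k/\mu$ gives $\pr{Z \le k} \le (e\mu/k)^k e^{-\mu}$. Substituting $\mu=\theta n$ and $k=n-1$ yields
\[
\pr{Y \ge \theta n} \;\le\; \left(\frac{e\theta n}{n-1}\right)^{n-1} e^{-\theta n}.
\]
Finally, using the elementary inequality $\bigl(n/(n-1)\bigr)^{n-1} \le e$ (since the sequence $(1+1/(n-1))^{n-1}$ increases to $e$), we get $\pr{Y \ge \theta n} \le \theta^{n-1} e^{n}\cdot e^{-\theta n} = \theta^{n-1} e^{-n(\theta-1)} = \tfrac{1}{\theta}\, e^{-n(\theta - 1 - \ln \theta)}$, which is exactly the claimed bound. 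The only mildly delicate point is the final algebraic simplification showing that the $e^{n}$ coming from bounding $(n/(n-1))^{n-1}$ combines with the Poisson Chernoff factor to produce precisely the prefactor $1/\theta$ rather than a constant; this is what gives the bound its tightness and is the main reason the straightforward MGF computation on the Gamma MGF $M_Y(t)=(1-t)^{-n}$ alone, which yields only $e^{-n(\theta-1-\ln\theta)}$, is insufficient.
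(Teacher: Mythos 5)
Your proof is correct. The paper treats this as a black-box citation to Janson's survey and gives no internal derivation, so there is nothing in the paper to compare your route against. Your route — Poisson--Gamma duality to convert the upper tail of $\mathrm{Gamma}(n,1)$ into the lower tail of $\mathrm{Pois}(\theta n)$, followed by a Poisson lower-tail Chernoff bound and the elementary estimate $\bigl(\tfrac{n}{n-1}\bigr)^{n-1}\le e$ — is the standard way to recover the extra polynomial prefactor $1/\theta$, and you correctly flag that a direct Chernoff bound via the Gamma MGF $(1-t)^{-n}$ only yields $e^{-n(\theta-1-\ln\theta)}$ without it. Two small points worth being explicit about if you write this up formally: (i) the identity $\pr{Y\ge \theta n}=\pr{N(\theta n)\le n-1}$ uses that $\pr{Y=\theta n}=0$ since $Y$ is continuous; (ii) the boundary case $n=1$ (so $k=0$) should be handled by taking the limit $s\to-\infty$ in the Chernoff optimization, which gives $\pr{Z\le 0}\le e^{-\mu}$, consistent with interpreting $(e\mu/k)^k$ and $(n/(n-1))^{n-1}$ as $1$ there. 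Everything else checks out, including the final algebra $e^n\theta^{n-1}e^{-\theta n}=\tfrac{1}{\theta}e^{-n(\theta-1-\ln\theta)}$.
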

\begin{fact}[Paley–Zygmund inequality]\label{fact:pz_ineq}
     Given a random variable $Y\geq 0$ and parameter $0 < \theta < 1$, we have
\begin{eqnarray} 
\pr{Y\geq \theta \ex{Y}} \geq (1-\theta)^2\frac{\expar{Y}^2}{\ex{Y^2}} \ . \notag
\end{eqnarray}
\end{fact}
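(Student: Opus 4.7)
The plan is to prove the Paley-Zygmund inequality via a one-line application of the Cauchy-Schwarz inequality. First, I would decompose the expectation of $Y$ on the tail event $A := \{Y \geq \theta \E[Y]\}$, using nonnegativity of $Y$:
\[
    \E[Y] \;=\; \E[Y \cdot \mathbbm{1}_{A^c}] + \E[Y \cdot \mathbbm{1}_A] \;\leq\; \theta \E[Y] + \E[Y \cdot \mathbbm{1}_A] \ ,
\]
where the bound on the first summand uses that $0 \leq Y < \theta \E[Y]$ on $A^c$. For the remaining summand I would invoke Cauchy-Schwarz: $\E[Y \cdot \mathbbm{1}_A]^2 \leq \E[Y^2] \cdot \E[\mathbbm{1}_A^2] = \E[Y^2] \cdot \pr{A}$.

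Combining the two bounds and rearranging gives $(1-\theta) \E[Y] \leq \sqrt{\E[Y^2] \cdot \pr{Y \geq \theta \E[Y]}}$, and squaring both sides followed by dividing by $\E[Y^2]$ delivers the claimed inequality. The proof poses no genuine obstacle; the only nuance is that the statement is trivially true when $\E[Y^2] = \infty$, so the Cauchy-Schwarz step need only be invoked in the finite-second-moment regime, and one should also note that $\E[Y] > 0$ is implicit (otherwise the event is meaningless).

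This inequality, paired with \Cref{fact:exp_concentration}, forms exactly the toolkit needed for the two concentration bounds~\eqref{ineq:eta_minus_bound} and~\eqref{ineq:eta_plus_bound} stated in the main text. The lower tail $\pr{\sum_t \eta^{\rm n}_t \geq 1/(4\delta'\tilde{\Gamma})}$ will be handled by \Cref{fact:pz_ineq} applied (with $\theta = 1/2$) to a sum of $T$ i.i.d.\ hyperexponentials, obtained via the stochastic domination of \Cref{clm:nonbusy}; here the crucial quantitative ingredient is that every $i' \in I^{\rm core}$ satisfies $\Gamma_{i'} \geq \tilde{\Gamma}/2$ by inequality~\eqref{ineq:core_gamma_lb}, which controls the second moment of each $Z_t$ and keeps the ratio $\E[Y^2]/\E[Y]^2$ of constant order. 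The upper tail $\pr{\sum_t \eta^{\rm e}_t \geq 2/(\delta'\tilde{\Gamma})}$ will follow directly from \Cref{fact:exp_concentration} with $\theta = 2$ applied to the sum of $T$ i.i.d.\ $\textup{Exp}(\tilde\Lambda)$ interarrival gaps to $I^{\rm core}$, using the lower bound $T \geq \tilde{\Lambda}/(\delta'\tilde{\Gamma}) \geq (1-\eps')u(\eps',\kappa)/(2\delta')$ from~\eqref{ineq:u_lb} in the exponent.
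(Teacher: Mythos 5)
Your proof is correct and is the standard Cauchy--Schwarz derivation of the Paley--Zygmund inequality; the paper states this as a \emph{Fact} without proof, so there is no in-paper argument to compare against. The subsequent paragraph describing how the fact is deployed for the concentration bounds~\eqref{ineq:eta_minus_bound} and~\eqref{ineq:eta_plus_bound} also matches the paper's use, but is not part of the claim being proved.
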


By \Cref{clm:nonbusy}, ${\eta_t^{\rm n}} \geq_{st} Z_{I^{\rm core}}$, where $Z_{I^{\rm core}}$ is a hyperexponential random variable specified in the claim. Note that we can directly bound the first and second moments of the hyperexponential distribution of \Cref{clm:nonbusy} as follows:
\begin{claim}\label{clm:ZS_moment_bounds}
For $Z_S$ denoting the hyperexponential distribution in \Cref{clm:nonbusy} we have
    \begin{eqnarray*}
    \ex{Z_S} = \frac{1}{\sum_{i'\in S} \lambda_{i'}}\sum_{i' \in S}  \frac{\lambda_{i'}}{\mu_{i'} + \Gamma_{i'}} \overset{\eqref{ineq:abandonment_bound}}{\geq} (1-\eps') \sum_{{i'} \in S} \frac{x_{{i'},a}}{\mu_i} \cdot \frac{1}{\sum_{{i''} \in S} \lambda_{i''}}
\end{eqnarray*}
and 
\begin{eqnarray*}
    \Var[Z_S] \leq \ex{Z_S^2} = \frac{2}{\sum_{i' \in S} \lambda_{i'}}\sum_{i'\in S}  \frac{\lambda_{i'}}{(\mu_{i'} + \Gamma_{i'})^2} \leq \frac{2}{\min_{i' \in S} (\mu_{i'} + \Gamma_{i'})} \cdot \ex{Z_S}.
\end{eqnarray*}
\end{claim}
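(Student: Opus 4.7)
The proof will be a short, direct calculation that unpacks the definition of the hyperexponential law $Z_S$ from Claim~\ref{clm:nonbusy} via the law of total expectation, using the fact that for $X \sim \mathrm{Exp}(\rho)$ we have $\E[X] = 1/\rho$ and $\E[X^2] = 2/\rho^2$. Conditioning on which component $i' \in S$ is selected (with probability $\lambda_{i'}/\sum_{i'' \in S} \lambda_{i''}$) gives the two stated equalities
\[
\E[Z_S] \;=\; \frac{1}{\sum_{i' \in S} \lambda_{i'}} \sum_{i' \in S} \frac{\lambda_{i'}}{\mu_{i'} + \Gamma_{i'}},
\qquad
\E[Z_S^2] \;=\; \frac{2}{\sum_{i' \in S} \lambda_{i'}} \sum_{i' \in S} \frac{\lambda_{i'}}{(\mu_{i'} + \Gamma_{i'})^2}.
\]

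For the lower bound on $\E[Z_S]$, I invoke inequality~\eqref{ineq:abandonment_bound}: since $x_{i',a} \le \frac{1}{1-\eps'} \cdot \frac{\mu_{i'}\lambda_{i'}}{\mu_{i'}+\Gamma_{i'}}$, rearranging yields $\frac{\lambda_{i'}}{\mu_{i'}+\Gamma_{i'}} \geq (1-\eps') \cdot \frac{x_{i',a}}{\mu_{i'}}$ for every $i' \in S$. Summing and dividing by $\sum_{i'' \in S} \lambda_{i''}$ delivers the claimed inequality.

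For the variance bound, I use $\Var[Z_S] \le \E[Z_S^2]$ as usual, and then factor out the smallest rate: for each summand
\[
\frac{\lambda_{i'}}{(\mu_{i'}+\Gamma_{i'})^2} \;\le\; \frac{1}{\min_{i'' \in S}(\mu_{i''}+\Gamma_{i''})} \cdot \frac{\lambda_{i'}}{\mu_{i'}+\Gamma_{i'}},
\]
so that after summing and normalizing by $\sum_{i'' \in S}\lambda_{i''}$, the surviving factor is precisely $\E[Z_S]$. This is routine; no step presents a real obstacle, and the chain of inequalities is a few lines long.
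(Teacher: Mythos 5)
Your proposal is correct and takes essentially the same approach as the paper: the paper states Claim~\ref{clm:ZS_moment_bounds} as a direct calculation without a separate proof, and your argument supplies exactly the routine conditioning on the hyperexponential mixture component, the standard first and second moments of the exponential distribution, the rearrangement of the upper bound in~\eqref{ineq:abandonment_bound}, and the factoring out of $\min_{i'' \in S}(\mu_{i''}+\Gamma_{i''})$ that the paper implicitly relies on.
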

Therefore, by the properties of $I^{\rm core}$ and \Cref{clm:ZS_moment_bounds}, we obtain 
\begin{align}\label{ineq:z_bounds}
    \ex{\eta_t^{\rm n}} \geq (1-\eps') u(\eps', \kappa) \cdot \frac{1}{\tilde{\Lambda}} \ .
\end{align}
Using \Cref{fact:pz_ineq} for $\theta=\frac{1}{2}$, we obtain
\begin{align}
    \pr{\sum_{t=1}^T \eta^{\rm n}_t \geq \frac{1}{4\delta' \tilde{\Gamma}}}  &\geq \frac{\ex{\sum_{t=1}^T \eta^{\rm n}_t}^2}{\Var\left(\sum_{t=1}^T \eta^{\rm n}_t\right) + \ex{\sum_{t=1}^T \eta^{\rm n}_t}^2} \label{concentration1}
    \\ = \frac{1}{4} \cdot \frac{T \cdot \ex{\eta_t^{\rm n}}^2}{\Var(\eta_t^{\rm n}) + T \cdot \ex{\eta_t^{\rm n}}^2} \nonumber
    \\ &\ge \frac{1}{4} \cdot \frac{T \cdot \ex{\eta_t^{\rm n}}^2}{\frac{2}{{\tilde{\Gamma}}/{2}} \cdot \ex{\eta_t^{\rm n}} + T \cdot \ex{\eta_t^{\rm n}}^2}   \label{concentration2}
    \\ &= \frac{1}{4} \cdot \frac{1}{\frac{2}{T\ex{\eta_t^{\rm n}} \cdot ({\tilde{\Gamma}}/{2})} + 1} \nonumber
    \\ &\ge \frac{1}{4} \cdot \frac{1}{\frac{8}{(1-\eps')u(\eps', \kappa)} \delta' + 1} \label{concentration3}
    \\ & \ge \frac{1}{4} \cdot \frac{1}{90\delta' + 1} \ , \label{concentration4}
\end{align}
where equality~\eqref{concentration1} uses independence of $(\eta_t^{\rm n})_t$, inequality~\ref{concentration2} uses \Cref{clm:nonbusy} and \Cref{clm:ZS_moment_bounds} along with inequality \eqref{ineq:core_gamma_lb}, inequality~\eqref{concentration3} uses inequality \eqref{ineq:z_bounds}, and inequality ~\eqref{concentration4} follows from \eqref{ineq:u_lb}. Concurrently, \Cref{fact:exp_concentration} and \eqref{ineq:u_lb} give
\begin{eqnarray} 
    \pr{\sum_{t=1}^T \eta^+_t\geq \frac{2}{\delta' \tilde{\Gamma}}} \leq \frac{1}{2}  e^{-T(1 -\ln 2)} \leq \frac{1}{2} e^{-\frac{(1-\eps')u(\eps', \kappa)(1 -\ln 2)}{2\delta'}} \leq \frac{1}{2} e^{-\frac{0.01}{\delta'}} \ ,\notag
\end{eqnarray}
proving the desired concentration bounds. \hfill \qedsymbol

\subsection{Proof of \Cref{lem:rand}}\label{prf:rand}
    Consider some $k \in J^{\rm indep} \cap N_i$. 
    Let $I^k := N_k^\uparrow \setminus N_j^\uparrow$, and recall $\sum_{i' \in I^k} {x_{i', a}}/{\mu_{i'}} \geq \kappa$ follows from having $k \in J^{\rm indep}$. We can upper bound $\theta_{i,k}$ by considering only the times when queues in $I_k$ are empty, i.e., when $\sum_{i' \in I^k} Q_{i'}^w(t) = 0$. Recall that the queues in TOP evolve identically in $Q^{\textsf{weak}}$ and $Q^{\textsf{ind}}$. Therefore, from \Cref{claim:stationarydistiMcPoisson}, we know that the unconditional probability of having $\sum_{i' \in I^k} Q_{i'}^w = 0$ is given by 
    \begin{align*}
        \pr{\sum_{i' \in I^k} Q_{i'}^w(t) = 0} = \exp\left(-\sum_{i' \in I^k} x_{i', a} / \mu_{i'} \right) \leq \exp(-\kappa) \ , 
    \end{align*} for every time $t \geq 0$.  Note that under the weakly correlated Markov chains, the event that the queues in $I^k$ are empty is independent of $\textsf{TE}_j$ and $(A_{i'})_{i'}$. The time $t_i$ is additionally drawn independently of $\textsf{TE}_j$, $(A_{i'})_{i' \in N_j^\downarrow}$, and the queues in $I^k$. Thus, we have 
     \begin{align}
        \ex{\theta_{i,k} \left | t_i, (A_{i'})_{i' \in N_j^\downarrow}, \textsf{TE}_j \right.} = \ex{\theta_{i,k} \mid t_i} \le \int_{t^\infty - t_{i}}^{t^\infty} \pr{\sum_{i' \in I^k} Q_{i'}^w(t) = 0} dt \leq \exp(-\kappa) \cdot t_i \ . \label{ineq:expectation_theta}
    \end{align}
     Using inequality \eqref{ineq:expectation_theta} for every $k \in J^{\rm indep} \cap N_i$, we can apply Markov's inequality to obtain
    \begin{align*}
        \pr{\left.\sum_{k \in J^{\rm indep} \cap N_i} \gamma_k \theta_{i,k} \geq \tilde{c} \cdot \exp(-\kappa) \cdot t_i \sum_{k \in J^{\rm indep} \cap N_i} \gamma_k \right | t_i, (A_{i'})_{i' \in N_j^\downarrow}, \textsf{TE}_j } \leq \frac{1}{\tilde{c}} \ .
    \end{align*}
    Consequently, letting $\tilde{c} = \frac{1}{\exp(-\kappa)} - \tilde{\eps}$ for some fixed $\tilde{\eps} \ll \frac{1}{\exp(-\kappa)}$ proves the lemma.  \hfill \qedsymbol

\section{Improved Competitive Ratio}\label{app:competitive}
In this section, we adapt our machinery for competing with the optimal \underline{online} algorithm to improve upon the best-known competitive ratio (i.e., competing against the optimal \underline{offline} algorithm. \cite{patel2024combinatorial} give an algorithm that achieves a $(1-1/\sqrt{e})$-fraction of $\optoff$'s stationary reward; we both give a simpler proof of this existing bound and improve on it, as in the result below. 
\compratiothm*

The algorithm for \Cref{thm:compratio} and the LP relaxation it uses are nearly identical to the ones in \Cref{sec:algo}, albeit with slight modifications to reflect the optimal offline benchmark.

\subsection{Algorithm with Pivotal Sampling}

We use an LP relaxation almost identical to \eqref{TLPon}, except that it does not include constraint \eqref{eqn:tightOnlineConstraint}. This LP is a tightening of the one used by \cite{patel2024combinatorial} via imposing constraint \eqref{eqn:offtightOnlineFlow} for every subset $H \subseteq I$. 

\begin{align}
	\nonumber  \max \quad &  \sum_{i \in I} \sum_{j \in J} r_{i,j} \cdot x_{i,j} && \tag{TLP$_{\text{off}}$} \label{TLPoff} \\
	\textrm{s.t.} \quad & x_{i,a} + \sum_j x_{i,j} = \lambda_i && \text{for all } i \in I   \label{eqn:offtightOfflineFlow}\\
	&  \sum_{i \in H} x_{i,j} \le \gamma_j \cdot \left(1 - \exp\left(-\sum_{i \in H}  \lambda_i / \mu_i \right)\right) && \text{for all } j \in J, H  \subseteq I\label{eqn:offtightOnlineFlow} \\
 & x_{i,j}, x_{i,a} \ge 0 & \text{for all } i \in I, j \in J \label{eqn:offnonnegativity}
\end{align}
Identically to \Cref{claim:lprelaxation} and \Cref{TLPpolysolvable}, we can show that \eqref{TLPoff} is a valid relaxation of $\optoff$ and we can solve \eqref{TLPoff} in polynomial time. Moreover, the algorithm is identical to \Cref{alg:corrprop}, other than the relevant LP relaxation that is solved (in \Cref{line:solveLP}) and the formula for $p_{i,j}$ (in \Cref{line:setPij}). 

\begin{algorithm}[H]
	\caption{Correlated Proposals for \eqref{TLPoff}}
	\label{alg:corrpropoff}
	\begin{algorithmic}[1]
 \State   Solve \eqref{TLPoff} for $\{x_{i,j}\}$   \label{line:offsolveLP}
 \State  $p_{i,j} \gets \frac{x_{i,j} / \gamma_j}{1 - \exp(-\lambda_i / \mu_i)}$  \label{line:offsetPij}
 \medskip
    
		\ForAll{timesteps where an online $j$ arrives} \label{line:offloop-start}
        \State Label present and unmatched offline nodes by $\{1, 2, \ldots, N\}$, in decreasing order of $r_{i,j}$
        \State $\textsf{Proposers} \gets \textsf{PS}( (p_{i,j}))$ \label{lin:PS} \Comment{$\textsf{PS}(\cdot)$ denotes \emph{pivotal sampling}, as in \Cref{def:pivotalsampling}}
		\State If \textsf{Proposers} is non-empty, match $j$ to a node of type $i^*  = \text{Type} \left( \min \left(  \textsf{Proposers}  \right)   \right)  $   \label{lin:offmatch}
        %\EndIf        
        \EndFor
	\end{algorithmic}
\end{algorithm}	

Note that $p_{i,j} \in [0,1]$ by Constraint~\eqref{eqn:offtightOnlineFlow}.

\subsection{$\mathbf{(1-1/\sqrt{e})}$-Competitive Ratio via Independent Queues}
Let $Q(t)$ be the Markov chain for the number of available nodes of each type under \Cref{alg:corrpropoff}. Similarly to \Cref{def:imc} for \Cref{alg:corrprop}, we define the independent Markov chains $Q^{\rm ind}$ for \Cref{alg:corrpropoff}, where queue $i$ is increased by 1 at rate $\lambda_i$ and at queue length $k$, it is depleted at rate $k \cdot (\mu_i + \sum_{j \in J} \gamma_j \cdot p_{i,j})$, independent from other queues. In the same fashion as \Cref{claim:stochasticdominance}, we can show that $Q(t)$ stochastically dominates $Q^{\rm ind}(t)$. Therefore, we now focus on the stationary distribution of $Q^{\rm ind}(t)$.

\paragraph{Steady-state analysis of $\boldsymbol{Q^{\rm ind}}$(t).} Since $Q_i^{\rm ind}(t)$ at state $k$ is depleted at rate $k \cdot (\mu_i + \sum_{j \in J} \gamma_j \cdot p_{i,j})$, we can simplify the death rate below, where we use the notation $u_i := 1-\exp(-\lambda_i/ \mu_i)$:
\begin{align*}
\mu_i + \sum_j \gamma_j \cdot p_{i,j} &= \mu_i + \sum_j \frac{x_{i,j}}{u_i} 
\\ & \leq \mu_i +  \frac{ \lambda_i}{u_i}
\\ & = \mu_i \left(1 + \frac{\lambda_i/\mu_i}{u_i}\right) \ . 
\end{align*}
Thus, by \Cref{claim:stationarydistbirthdeath} and rearrangement of the above expression, we infer that the stationary distribution of $Q_i^{\rm ind}(t)$ stochastically dominates a $\textsf{Pois}(u_i \cdot \lambda_i/\mu_i \cdot (u_i + \lambda_i/\mu_i))^{-1})$ distribution. 

% Therefore, hereafter, we analyze this queue that goes up by 1 at rate $\lambda_i$ and goes down, at length $k$, with rate $k \cdot \mu_i (1 + \lambda_i / (\mu_i u_i))$, independently from other queues. We abuse the notation to let  $Q_i^{\textsf{ind}}$ refer to this queue. 

\paragraph{The $\mathbf{(1-1/\sqrt{e})}$ bound.} Now, a lower bound of $(1-1/\sqrt{e})$ on the competitive ratio is straightforward. By \Cref{lem:weightedPoisCvx}, we have 
\begin{align}
R_{j}(w) &:= \sum_{i : r_{i,j} \ge w} p_{i,j} \cdot \text{Pois}\left(\frac{\lambda_i/\mu_i}{u_i+\lambda_i/\mu_i} \cdot u_i\right) \notag \\ 
& \cvxle \textup{Pois} \left( \sum_{i:r_{i,j} \geq w}\frac{\lambda_i/\mu_i}{u_i+\lambda_i/\mu_i} \cdot \frac{x_{i,j}}{\gamma_j} \right) \ . \label{ineq:r_pois_formula}
\end{align}
As $\min(1,x)$ is concave, we have that $\mathbb{E}[\min(1, R_{j}(w))]$ is lower bounded by  
\begin{align}
 \mathbb{E} \left[ \min \left( 1 , \text{Pois} \left(\sum_{i:r_{i,j} \geq w}\frac{\lambda_i/\mu_i}{u_i+\lambda_i/\mu_i} \cdot \frac{x_{i,j}}{\gamma_j}\right) \right) \right] 
&= 1 - \exp \left( - \sum_{i:r_{i,j} \geq w}\frac{\lambda_i/\mu_i}{u_i+\lambda_i/\mu_i} \cdot \frac{x_{i,j}}{\gamma_j} \right)  \notag
\\ & \geq 1 - \exp \left( - \left( \min_{i \in I} \frac{\lambda_i/\mu_i}{u_i+\lambda_i/\mu_i}\right) \cdot \sum_{i:r_{i,j} \geq w} \frac{x_{i,j}}{\gamma_j} \right) \label{ineq:loose_ratio_bound} \\ &
\ge \left(1-\exp\left(-\min_{i \in I} \frac{\lambda_i/\mu_i}{u_i+\lambda_i/\mu_i}\right)\right) \cdot  \sum_{i : r_{i,j} \ge w} \frac{x_{i,j}}{ \gamma_j } \notag
\\ & \geq (1-1/\sqrt{e}) \cdot  \sum_{i : r_{i,j} \ge w} \frac{x_{i,j}}{ \gamma_j } \ , \label{ineq:sqrte_bound}
\end{align}
where the second inequality holds because $1-\exp(-bx) \geq (1-\exp(-b)) \cdot x$ for every $x \in (0,1)$, and the third inequality follows from the next claim with $x = \lambda_i / \mu_i$.
\begin{claim}\label{clm:presence_function}
    The function $f(x) := \frac{x}{x + 1-\exp(-x)}$ is increasing and we have $\lim_{x \to 0} f(x) = \frac{1}{2}$.
\end{claim}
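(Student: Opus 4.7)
The plan is to verify both assertions by direct calculus, since the function is explicit and elementary. For the limit, I will Taylor-expand $1 - \exp(-x) = x - x^2/2 + O(x^3)$ as $x \to 0^+$, which gives $f(x) = \frac{x}{2x - x^2/2 + O(x^3)} = \frac{1}{2 - x/2 + O(x^2)}$, and this tends to $1/2$. (Equivalently, one can apply L'H\^opital's rule directly to $x/(x+1-\exp(-x))$, since both numerator and denominator vanish at $0$.)

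For monotonicity, I will compute $f'(x)$ by the quotient rule. Writing $g(x) := x + 1 - \exp(-x)$, one has $g'(x) = 1 + \exp(-x)$, and therefore
\[
f'(x) = \frac{g(x) - x \cdot g'(x)}{g(x)^2} = \frac{(x + 1 - \exp(-x)) - x(1 + \exp(-x))}{g(x)^2} = \frac{1 - (1+x)\exp(-x)}{g(x)^2}.
\]
Since $g(x)^2 > 0$ for $x > 0$, it suffices to show that $(1+x)\exp(-x) < 1$ for $x > 0$, which is equivalent to $\exp(x) > 1 + x$. This is a standard consequence of the Taylor expansion $\exp(x) = 1 + x + x^2/2 + \cdots$, where all remaining terms are strictly positive for $x > 0$. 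Hence $f'(x) > 0$ on $(0,\infty)$, so $f$ is strictly increasing.

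I do not expect any obstacles: both claims reduce to textbook inequalities. The only mild care is that the claim as stated takes $x \ge 0$ (since $\lambda_i/\mu_i$ could be small), and $f$ is not defined at $x = 0$ by the formula, which is precisely why the limit statement is needed; extending $f$ by continuity to $f(0) = 1/2$ then makes ``increasing on $[0,\infty)$'' unambiguous, and our derivative computation confirms strict monotonicity on $(0,\infty)$.
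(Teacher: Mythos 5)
Your proof is correct and essentially the same as the paper's: the paper also computes $f'(x)$ by the quotient rule (writing it in the equivalent form $\frac{e^x(e^x-x-1)}{((x+1)e^x-1)^2}$, which matches your expression after multiplying numerator and denominator by $e^{2x}$) and evaluates the limit by L'H\^opital, which you mention as an alternative to your Taylor expansion.
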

\begin{proof}
    The monotonicity is immediate from deriving $f'(x) = \frac{\mathrm{e}^{x} \left(\mathrm{e}^{x} - x - 1\right)}{\left(\left(x + 1\right) \mathrm{e}^{x} - 1\right)^{2}} \geq 0$. The limit at 0 can be found by the L'Hôpital's rule as 
    \begin{align*}
        \lim_{x \to 0} \frac{x}{x + 1-\exp(-x)} = \lim_{x \to 0} \frac{1}{1 + \exp(-x)} = \frac{1}{2} \ . \hfill  \qedhere
    \end{align*} 
\end{proof}
We can now calculate the stationary reward from matching to $j$ by integrating the complementary CDF, as follows:
\begin{align*}
    \gamma_j \cdot  \mathbb{E}_{Q \sim \pi^{\textsf{ind}}} [\textup{\textsf{ALG}}(j, Q)] &\ge \gamma_j \cdot \int_0^\infty \E[\min(1,R(w))] \, dw && \hspace{-7em} \text{\Cref{claim:stochasticdominance}, \Cref{claim:gainequalexpectedmin}}\\
    &\ge \gamma_j \cdot \int_0^\infty (1 - 1/\sqrt{e}) \cdot \sum_{i: r_{i,j} \ge w} x_{i,j} / \gamma_j \, dw && \hspace{-2.7cm} \text{By inequality } \eqref{ineq:sqrte_bound}\\
    &\ge (1-1/\sqrt{e}) \cdot \int_0^\infty \sum_i x_{i,j} \cdot \mathbbm{1}[r_{i,j} \ge w]  \, dw   
    = (1-1/\sqrt{e}) \cdot \sum_i x_{i,j} \cdot r_{i,j}.
\end{align*}
This demonstrates that \Cref{alg:corrpropoff} achieves a $(1-1/\sqrt{e})$-competitive ratio to $\textup{OPT}\eqref{TLPoff}$.

\subsection{A Sharpened $\mathbf{(1-1/\sqrt{e} + \eta)}$-competitive Analysis}
Similarly to \Cref{sec:sharpened}, we fix an impatient type $j$ and show that $\textsf{ALG}_j \geq (1-1/\sqrt{e} + \Omega(1)) \cdot \textsf{LP-Gain}_j$. Since this holds for every $j \in J$, \Cref{thm:compratio} follows immediately. Establishing this result is significantly less challenging than \Cref{thm:main}. 

We use \hyperref[case3]{Case 3} (\Cref{obscase3}) in \Cref{sec:approximation} almost identically to leverage the correlation in pivotal sampling. Consequently, we consider a second case that relies on a classification of offline types to \emph{abundant} and \emph{scarce} queues---which depends on the magnitude of $\lambda_i / \mu_i$. If a constant fraction of $\textsf{LP-Gain}_J$ is from abundant types, our analysis can be tightened to show a $(1-1/\sqrt{e} + \Omega(1))$-competitive ratio on $j$. Lastly, the third case considers the scenario where the LP matches $j$ mostly to scarce offline types. In this case, we show that if the majority of proposal probabilities are close to 1---i.e., the first case does not apply to the instance---scarce queues cannot saturate $j$. Hence, repeating \Cref{obscase1} implies the $(1-1/\sqrt{e} + \Omega(1))$-competitive ratio of our algorithm on $j$. Unlike when competing with optimum online, VWHC instances are easily handled here because if $j$ is saturated, a constant fraction of its matches must come from abundant offline types, which yields a competitive ratio better than $(1-1/\sqrt{e})$, as discussed in the second case.

Next, we formally discuss the different cases. In the following observations, we can assume $b(\cdot)$ is a continuous, increasing function with $b(0) \geq 0$. 

 \paragraph{Case 1: Some of $j$'s proposal probabilities $\{p_{i,j}\}$ are bounded away from 1.\label{cr:case1}} We begin with this case, which is identical to \Cref{obscase3} for our approximation analysis. We recall the intuition which is that when a non-negligible number of $j$'s neighbor have a marginal probability bounded away from 1, we can take advantage of the correlation introduced by pivotal sampling. In particular, we can tighten our bounding of $R_j(w)$ in the convex order, resulting in the following observation. Since the proof is nearly identical to \Cref{obscase3}, it is omitted. 
\begin{obs}\label{obs:cr_case1}
    Consider some small $\eps \in (0, 0.1)$. Suppose that there exists $S \subseteq I$ such that $\sum_{i \in S} x_{i,j} \ge \epsilon \cdot \gamma_j$ and $\max_{i \in S} p_{i,j} \leq 1-\epsilon$. Then, we have $\textup{\textsf{ALG}}_j \geq (1-1/\sqrt{e}+b (\epsilon)) \cdot \textup{\textsf{LP-Gain}}_{j}$ for some $b (\epsilon) > 0$.
\end{obs}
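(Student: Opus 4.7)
The plan is to mirror the proof of \Cref{obscase3} essentially verbatim, with the only substantive difference being that the Poisson parameters in the competitive analysis carry an extra factor $f(\lambda_i/\mu_i) = \frac{\lambda_i/\mu_i}{u_i + \lambda_i/\mu_i} \geq 1/2$ (from \Cref{clm:presence_function}), which is what shifts the baseline from $1-1/e$ to $1-1/\sqrt{e}$. I begin, as in \Cref{sec:oneminusoneovere}, by using stochastic dominance with the independent Markov chains and the PASTA property to write
\[
\textsf{ALG}_j \;\geq\; \gamma_j \cdot \int_0^\infty \E[\min(1, R_j(w))]\, dw,
\]
where now $R_j(w) = \sum_{i : r_{i,j} \geq w} p_{i,j} \cdot \textup{Pois}\bigl(u_i \lambda_i/\mu_i \cdot (u_i + \lambda_i/\mu_i)^{-1}\bigr)$.

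Next I fix $w>0$, set $I_w := \{i : r_{i,j}\geq w\}$, and split according to $S$ versus $I_w\setminus S$ exactly as in \Cref{obscase3}. Applying \Cref{lem:weightedPoisCvx} term-by-term (using $p_{i,j} \cdot \textup{Pois}(y) \cvxle (1-\eps)\cdot\textup{Pois}(y p_{i,j}/(1-\eps))$ for $i\in S$ and $p_{i,j}\cdot\textup{Pois}(y)\cvxle 1\cdot\textup{Pois}(y p_{i,j})$ otherwise), and recalling that $p_{i,j} = (x_{i,j}/\gamma_j)/u_i$, yields
\[
R_j(w) \;\cvxle\; (1-\eps)\cdot \textup{Pois}\!\Bigl(\tfrac{\sigma_{w,S}}{1-\eps}\Bigr) \;+\; \textup{Pois}\!\bigl(\sigma_w - \sigma_{w,S}\bigr),
\]
where $\sigma_{w,S} := \sum_{i\in I_w\cap S} f(\lambda_i/\mu_i)\cdot x_{i,j}/\gamma_j$ and $\sigma_w := \sum_{i\in I_w} f(\lambda_i/\mu_i)\cdot x_{i,j}/\gamma_j$. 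Call the right-hand distribution $D$.

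The third step is to show, by an elementary calculation analogous to \Cref{app:expectationbounding}, that $\E[\min(1,D)] \geq (1-1/\sqrt{e} + b(\eps))\cdot \widetilde{\sigma}_w$ for some continuous $b(\cdot)$ with $b(0)=0$, where $\widetilde{\sigma}_w := \sum_{i\in I_w} x_{i,j}/\gamma_j$ is the unweighted version of $\sigma_w$. The point is that since $f(\lambda_i/\mu_i)\geq 1/2$, the simple bound $1 - e^{-\sigma_w}\cdot(1+o(1))$ already yields the baseline $1-1/\sqrt{e}$ against $\widetilde{\sigma}_w$ (mirroring how inequality \eqref{ineq:loose_ratio_bound} produced the $1-1/\sqrt{e}$ factor). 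The correlation boost then comes from the two-term structure of $D$: when $\sum_{i\in S} x_{i,j}\geq \eps\gamma_j$, one has $\sigma_{w,S}\geq \eps/2$ on a suitable range of $w$, and the coefficient $(1-\eps)$ (instead of $1$) on the $S$-summand strictly improves over the naive concavity bound, producing an additive constant $b(\eps)>0$.

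Finally, integrating $w$ from $0$ to $\infty$ gives
\[
\textsf{ALG}_j \;\geq\; \gamma_j \cdot (1-1/\sqrt{e}+b(\eps)) \cdot \int_0^\infty \widetilde{\sigma}_w\, dw \;=\; (1-1/\sqrt{e}+b(\eps)) \cdot \textsf{LP-Gain}_j,
\]
as claimed. The main obstacle is verifying the tightened expectation bound of Step 3: one needs a careful two-variable analysis showing that over \emph{all} feasible $(\sigma_{w,S},\sigma_w)$ with the case-hypothesis constraints, the quantity $\E[\min(1,D)]/\widetilde{\sigma}_w$ is bounded away from $1-1/\sqrt{e}$. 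This is a routine but slightly delicate optimization, and essentially identical to the argument in \Cref{app:expectationbounding} once the $f(\lambda_i/\mu_i)$ factors are tracked; the key structural point that makes the boost persist is that \emph{both} Poisson summands have the same scaling factor $f(\lambda_i/\mu_i)$, so the relative split between $S$ and $I_w\setminus S$ (and hence the correlation advantage) is preserved.
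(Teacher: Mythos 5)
Your proposal is correct and mirrors the paper's intended approach; the paper itself omits this proof with the remark that it is ``nearly identical to \Cref{obscase3}.'' You correctly identify the one genuine change---the auxiliary optimization from \Cref{app:expectationbounding} must be redone over the larger feasible region where $\sigma_w$ now ranges in $[\widetilde{\sigma}_w/2,\widetilde{\sigma}_w]$ (due to the $f(\lambda_i/\mu_i)\ge 1/2$ weights) rather than equalling $\widetilde{\sigma}_w$---which is precisely the step the paper elides.
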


\paragraph{Case 2: Some of $j$'s gain is from ``abundant'' queues. \label{cr:case2}}
The second case relies on a tighter analysis than inequality \eqref{ineq:loose_ratio_bound}. To this end, we require a classification of queues. 

\begin{definition}[Queue classification]
    We call queue $i$ ``{abundant}'' if it satisfies $\lambda_i / \mu_i \geq 1$. Otherwise, it is ``{scarce}''. 
\end{definition}

We now show that the lower bound \eqref{ineq:loose_ratio_bound} is loose if a non-negligible fraction of $\textsf{LP-Gain}_j$ comes from abundant queues. The rationale is that, as shown by \Cref{clm:presence_function}, the bound $\lambda_i / \mu_i \cdot (u_i + \lambda_i / \mu_i)^{-1} \geq 1/2$ in inequality \eqref{ineq:loose_ratio_bound} is loose for types $i$ that have a large $\lambda_i  / \mu_i$. This motivates us to consider the following case.

\begin{obs}\label{obs:cr_case2}
    Consider some small $\eps \in (0, 0.1)$. Suppose that there exists $S \subseteq I$ of \underline{abundant} queues such that $\sum_{i \in S} x_{i,j}/\gamma_j \ge \epsilon$. Then, we have $\textsf{ALG}_j \geq (1-1/\sqrt{e} + b(\eps)) \cdot \textsf{LP-Gain}_j$ for some $b(\eps) > 0$. 
\end{obs}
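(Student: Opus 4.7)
The plan is to sharpen the lower bound on $\mathbb{E}[\min(1, R_j(w))]$ used in the baseline $(1-1/\sqrt{e})$-analysis by tracking the improved presence factor enjoyed by abundant queues. By \Cref{clm:presence_function}, the function $f(x) = x/(x+1-e^{-x})$ is strictly increasing on $[0,\infty)$, so every $i \in S$ satisfies $f(\lambda_i/\mu_i) \geq f(1) = 1/(2-1/e)$. Setting $\alpha := f(1) - 1/2 > 0$, this strictly improves on the uniform lower bound $f \geq 1/2$ used to obtain~\eqref{ineq:sqrte_bound}.

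First I would apply the Poisson stochastic dominance~\eqref{ineq:r_pois_formula}, splitting the Poisson parameter into its $S$ and $I \setminus S$ contributions. Writing $\sigma(w) := \sum_{i : r_{i,j}\geq w} x_{i,j}/\gamma_j$ and $A(w) := \sum_{i \in S : r_{i,j}\geq w} x_{i,j}/\gamma_j$, and using $f \geq 1/2+\alpha$ on $S$ and $f \geq 1/2$ elsewhere, this yields
\[
\mathbb{E}[\min(1, R_j(w))] \;\geq\; 1 - \exp\bigl(-\sigma(w)/2 - \alpha\, A(w)\bigr).
\]
By Constraint~\eqref{eqn:offtightOnlineFlow} applied with $H = I$, $\sigma(w) \leq \sigma(0) \leq 1$, so both $\sigma(w)$ and $A(w)$ lie in $[0,1]$.

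Next I would decompose the right-hand side as $[1 - e^{-\sigma(w)/2}] + e^{-\sigma(w)/2}[1 - e^{-\alpha A(w)}]$, apply the chord bounds $1 - e^{-\sigma/2} \geq (1-e^{-1/2})\sigma$ and $1-e^{-\alpha A} \geq (1-e^{-\alpha}) A$ (both valid on $[0,1]$ by concavity), and use $e^{-\sigma(w)/2} \geq e^{-1/2}$. Integrating over $w \geq 0$ and exchanging sum and integral via $\gamma_j \int_0^\infty A(w)\,dw = \sum_{i \in S} r_{i,j} x_{i,j}$, this gives the key inequality
\[
\textsf{ALG}_j \;\geq\; (1-1/\sqrt{e})\,\textsf{LP-Gain}_j \;+\; c_\alpha \sum_{i \in S} r_{i,j}\, x_{i,j},
\]
with $c_\alpha := e^{-1/2}(1-e^{-\alpha}) > 0$ an absolute constant.

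The main obstacle is bridging the gap between the hypothesis, which controls LP \emph{mass} on $S$, and the displayed boost, which scales with LP \emph{reward} on $S$. I would close this gap by an internal dichotomy: if $\sum_{i \in S} r_{i,j} x_{i,j} \geq \eps\,\textsf{LP-Gain}_j$, then the displayed inequality directly yields $\textsf{ALG}_j \geq (1-1/\sqrt{e} + c_\alpha \eps)\,\textsf{LP-Gain}_j$; otherwise the average reward on $S$-queues is significantly smaller than the overall average, which forces a positive-measure reward stratum across which $\sigma(w)$ drops by at least $\Omega(\eps)$ without commensurate contribution to $\textsf{LP-Gain}_j$. On that stratum the baseline concavity step $1 - e^{-\sigma/2} \geq (1-1/\sqrt{e})\sigma$ is quantitatively loose, and re-integrating with a sharpened chord---in the spirit of \Cref{obs:case2} from Section~\ref{sec:sharpened}---recovers a positive boost $b(\eps) > 0$. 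Carrying out this secondary argument cleanly, and verifying that $b(\cdot)$ is continuous with $b(0) = 0$, is the most delicate bookkeeping of the proof, although it introduces no probabilistic machinery beyond what is already developed in Section~\ref{sec:sharpened}.
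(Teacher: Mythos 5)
Your first half is a correct and valid alternative derivation of the integrated bound
\[
\textsf{ALG}_j \;\geq\; (1-1/\sqrt{e})\,\textsf{LP-Gain}_j + c_\alpha \sum_{i\in S} r_{i,j}x_{i,j},
\]
but the dichotomy you set up to convert ``mass on $S$'' into ``reward on $S$'' has a genuine gap, and this is precisely the part you flag as ``the most delicate bookkeeping'' and leave unfinished. The issue is that $\sum_{i\in S} r_{i,j}x_{i,j}$ can be arbitrarily small relative to $\textsf{LP-Gain}_j$ (e.g., if the abundant neighbors carry near-zero reward), so your key inequality alone gives nothing, and your ``otherwise'' branch asserts but does not establish that a positive-measure reward stratum with concavity slack must exist. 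That assertion is not obvious; it amounts to a claim about how LP mass and LP reward can be jointly distributed, and an attempt to prove it would inevitably rediscover the pointwise observation you never invoke.

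The paper's proof avoids the dichotomy entirely by arguing \emph{pointwise in $w$} rather than after integrating. For a fixed $w$, there are only two cases. If $\sigma(w) := \sum_{i: r_{i,j}\geq w} x_{i,j}/\gamma_j \leq 1-\eps/2$, the bare concavity chord for $1-e^{-z/2}$ is already loose on $z\leq 1-\eps/2$ (exactly \Cref{obs:cr_case3}/\Cref{obscase1}-style slack). If instead $\sigma(w) > 1-\eps/2$, then since Constraint~\eqref{eqn:offtightOnlineFlow} forces $\sigma(0)\leq 1$, the mass below level $w$ is at most $\eps/2$, so $A(w) := \sum_{i\in S: r_{i,j}\geq w}x_{i,j}/\gamma_j \geq \eps - \eps/2 = \eps/2$, and the abundance boost (the factor $f(1)=1/(2-1/e) > 1/2$ on $S$, your $\alpha$) kicks in at that $w$. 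Either way one obtains
\[
\mathbb{E}[\min(1,R_j(w))] \geq (1-1/\sqrt{e}+b(\eps))\cdot \sigma(w)
\]
for every $w$, and integrating gives the result with no case on $\sum_{i\in S} r_{i,j}x_{i,j}$ at all. The implication ``$\sigma(w)$ near $1$ forces $A(w)\geq \eps/2$'' is the crux that your proposal never uses; without it your recovery branch does not close. If you want to salvage your integrated route, the clean patch is to replace your global $A(0)\geq\eps$ with the pointwise lower bound $A(w) \geq \max(0,\eps - 1 + \sigma(w))$ before integrating, which collapses the argument back to the paper's.
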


\begin{proof}
    In this case, we can give a sharpened analysis of inequality \eqref{ineq:loose_ratio_bound}. Consider some $w \in {\bb R}_{\geq 0}$. We show that under our observation hypothesis, we have 
    \begin{align}
        \ex{\min(1,R_j(w))} \geq (1-1/\sqrt{e} + b(\eps)) \cdot  \sum_{i \in I:r_{i,j} \geq w} \frac{x_{i,j}}{\gamma_j} \ , \label{ineq:r_abundant_bound}
    \end{align}
    which immediately proves the observation. Consider first the case where $\sum_{i: r_{i,j} \geq w} x_{i,j} / \gamma_j \leq 1-\eps/2$. In this case, repeating \Cref{obscase1} shows that inequality \eqref{ineq:r_abundant_bound} holds with
    $b(\eps) \leq g(1-\eps/2)$. Thus, we now assume $\sum_{i: r_{i,j} \geq w} x_{i,j} / \gamma_j > 1-\eps/2$. Recalling the observation hypothesis, we infer that $\sum_{i \in S: r_{i,j} \geq w} x_{i,j} / \gamma_j \geq \eps/2$. Using this fact, we argue that
 \begin{align}
    &\mathbb{E} \left[ \min \left( 1 , \text{Pois} \left(\sum_{i:r_{i,j} \geq w}\frac{\lambda_i/\mu_i}{u_i+\lambda_i/\mu_i} \cdot \frac{x_{i,j}}{\gamma_j}\right) \right) \right] \notag 
    \\ & \; = 1 - \exp \left( - \sum_{i:r_{i,j} \geq w}\frac{\lambda_i/\mu_i}{u_i+\lambda_i/\mu_i} \cdot \frac{x_{i,j}}{\gamma_j} \right)  \notag
    \\ & \;\geq 1 - \exp \left( - \left( \min_{i \in S} \frac{\lambda_i/\mu_i}{u_i+\lambda_i/\mu_i}\right) \cdot \sum_{i \in S:r_{i,j} \geq w} \frac{x_{i,j}}{\gamma_j} - \left( \min_{i \in I \setminus S} \frac{\lambda_i/\mu_i}{u_i+\lambda_i/\mu_i}\right) \cdot \sum_{i \in I \setminus S:r_{i,j} \geq w} \frac{x_{i,j}}{\gamma_j}  \right) \notag
    \\ & \;\geq 1 - \exp \left( - \frac{1}{2-1/e} \cdot \sum_{i \in S:r_{i,j} \geq w} \frac{x_{i,j}}{\gamma_j} - \frac{1}{2} \cdot \sum_{i \in I \setminus S:r_{i,j} \geq w} \frac{x_{i,j}}{\gamma_j}  \right) \notag
    \\ & \;\geq 1 - \exp \left( - \frac{1}{2} \cdot \sum_{i \in I:r_{i,j} \geq w} \frac{x_{i,j}}{\gamma_j} - \left(\frac{1}{2-1/e} - \frac{1}{2} \right) \cdot \frac{\eps}{2} \cdot \sum_{i \in I:r_{i,j} \geq w} \frac{x_{i,j}}{\gamma_j} \right) \notag
    \\ & \; \geq \left(1- e^{-\left(\frac{1}{2} + \left(\frac{1}{2-1/e} - \frac{1}{2}\right) \cdot \frac{\eps}{2}\right)} \right) \cdot  \sum_{i \in I:r_{i,j} \geq w} \frac{x_{i,j}}{\gamma_j} \ ,
    \end{align}
    where the second inequality uses \Cref{clm:presence_function} with $\frac{{\lambda_i}/{\mu_i}}{u_i + {\lambda_i}/{\mu_i}} \geq \frac{1}{2-1/e}$ for every abundant queue $i$. The third inequality uses the fact that 
    \[
        \sum_{i \in S: r_{i,j} \geq w} x_{i,j} / \gamma_j  \geq \eps/2 \cdot \sum_{i \in I: r_{i,j} \geq w} x_{i,j} / \gamma_j \ . \]
    Finally, the last inequality uses the inequality $1-\exp(-bx) \geq (1-\exp(-b)) \cdot x$ for every $x \in (0, 1)$. 

    Now, accounting for the first case of $b(\eps) \leq g(1-\eps/2)$, the observation follows immediately with \[
        b(\eps) =  \min \left \{1/\sqrt{e} - \exp\left({-\left(\frac{1}{2} + \left(\frac{1}{2-1/e} - \frac{1}{2}\right) \cdot \frac{\eps}{2}\right)}\right) \ , \  g(1-\eps/2)\right \} \ . \qedhere \]
\end{proof}

\paragraph{Case 3: Scarce queues cannot saturate $\boldsymbol{j}$.\label{cr:case3}}
We are now in a position where abundant queues constitute a small fraction of $j$'s matches according to \eqref{TLPoff}. We show that in this case, constraints \eqref{eqn:offtightOnlineFlow} imply that $j$'s match rate in the LP is bounded away from 1. Hence, we can use the same reasoning as \Cref{obscase1} to show that our algorithm is  $(1-1/\sqrt{e} + \Omega(1))$-competitive on $j$. 

\begin{obs}\label{obs:cr_case3}
        Consider some small $\eps \in (0, 0.1)$. Let $S$ be the collection of $j$'s neighboring \underline{scarce} queues that satisfy $p_{i,j} \geq 1-\eps$. If  $\sum_{i \in I \setminus S} x_{i,j}/\gamma_j \leq \eps$, then we have $\textsf{ALG}_j \geq (1-1/\sqrt{e} + b(\eps)) \cdot \textsf{LP-Gain}_j$ for some $b(\eps) > 0$. 
\end{obs}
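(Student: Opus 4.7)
The plan is to combine the tightening constraint \eqref{eqn:offtightOnlineFlow} with the hypothesis on $S$ to show that the total LP mass incident to $j$ is bounded below $\gamma_j$ by a quantity $\delta(\eps) > 0$. Once this non-saturation gap is in hand, the concavity step $1 - e^{-z/2} \ge (1-1/\sqrt{e}) z$---the only loose inequality in the baseline analysis of inequality \eqref{ineq:sqrte_bound}---becomes strict by a multiplicative factor whenever $z$ is bounded away from $1$, and integrating over the reward threshold $w$ will deliver the desired boost, following exactly the template of \Cref{obscase1}.

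First, I would apply \eqref{eqn:offtightOnlineFlow} with $H = S$ to obtain $\sum_{i \in S} x_{i,j}/\gamma_j \le 1 - e^{-L}$, where $L := \sum_{i \in S} \lambda_i/\mu_i$. In parallel, I would lower bound the same sum: using $p_{i,j} \ge 1 - \eps$ together with the scarcity property $\lambda_i/\mu_i \le 1$ (which by concavity of $1 - e^{-x}$ implies $u_i \ge (1-1/e) \cdot \lambda_i/\mu_i$), I would derive $\sum_{i \in S} x_{i,j}/\gamma_j \ge (1-\eps)(1-1/e) \cdot L$. Combining the two bounds forces $L/(1-e^{-L}) \le 1/((1-\eps)(1-1/e))$, and since the left-hand side is strictly increasing in $L$ from $1$ to infinity, this pins $L \le L^*(\eps)$ for a continuous function with $L^*(0) = 1$. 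Adding the $\eps$-slack from $I \setminus S$ yields
\[
\sum_{i \in I} x_{i,j}/\gamma_j \le \bigl(1 - e^{-L^*(\eps)}\bigr) + \eps = 1 - \delta(\eps),
\]
where $\delta(\eps) := e^{-L^*(\eps)} - \eps$ is continuous with $\delta(0) = 1/e > 0$, hence strictly positive for all sufficiently small $\eps$. Finally, since $\sum_{i : r_{i,j} \ge w} x_{i,j}/\gamma_j \le 1 - \delta(\eps)$ for every $w \ge 0$, I would invoke the slack in the concavity inequality---specifically, the strict monotonicity of $z \mapsto (1 - e^{-z/2})/z$---to conclude $\mathbb{E}[\min(1, R_j(w))] \ge (1 - 1/\sqrt{e} + b(\eps)) \cdot \sum_{i : r_{i,j} \ge w} x_{i,j}/\gamma_j$ for some continuous $b(\eps) > 0$, and integrate over $w$ as in \Cref{obscase1} and \Cref{obs:cr_case2}.

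The main obstacle is the derivation of the bound $L \le L^*(\eps)$; this step is precisely where the tightening constraint \eqref{eqn:offtightOnlineFlow} is indispensable. Without it, only the singleton constraints $x_{i,j}/\gamma_j \le u_i$ used in \cite{patel2024combinatorial} apply, and the hypotheses on $S$ alone cannot force a non-saturation gap---the baseline $(1 - 1/\sqrt{e})$-bound could then be tight. The remaining ingredients (continuity of $L^*$ and the concavity-slack calculation) are routine.
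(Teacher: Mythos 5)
Your proof is correct and takes essentially the same route as the paper: lower-bound $\sum_{i\in S} x_{i,j}/\gamma_j$ using $p_{i,j}\ge 1-\eps$ and scarcity, upper-bound it via the tightening constraint~\eqref{eqn:offtightOnlineFlow} with $H=S$, combine to cap $\sum_{i\in S}\lambda_i/\mu_i$, feed that back into~\eqref{eqn:offtightOnlineFlow} to deduce non-saturation of $j$, and then invoke the Observation~\ref{obscase1}-style concavity slack. The only cosmetic difference is that you use the tighter linear bound $u_i\ge(1-1/e)\lambda_i/\mu_i$ where the paper uses $u_i\ge\frac12\lambda_i/\mu_i$, which leaves your cap on $L$ implicit ($L^*(\eps)$) whereas the paper obtains the cruder but explicit bound $L\le 2/(1-\eps)$; both yield the same qualitative conclusion.
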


\begin{proof}
We recall that $u_i = 1-\exp(-\lambda_i/\mu_i)$. By the definition of $S$, we have
\begin{align}
    \sum_{i \in S} \frac{x_{i,j}}{\gamma_j} = \sum_{i \in S} \frac{x_{i,j}}{\gamma_j u_i} \cdot u_i \geq \sum_{i \in S} (1-\eps) \cdot u_i \ , \label{ineq:x_gamma_lb}
\end{align}
where we used the fact that $p_{i,j} \geq 1-\eps$ for every $i \in S$. Combining inequality \eqref{ineq:x_gamma_lb} with constraint \eqref{eqn:offtightOnlineFlow} with $H = S$, we obtain
\begin{align}
    \sum_{i \in S} \left(1-\exp\left(-\frac{\lambda_i}{\mu_i}\right)\right) \le \frac{1}{1-\eps} \cdot \left(1 - \exp\left(-\sum_{i \in S}  \frac{\lambda_i}{\mu_i} \right)\right)  \ . \label{ineq:exponential_comparison}
\end{align}
We now use inequality \eqref{ineq:exponential_comparison} to provide an upper bound for $\sum_{i \in S} \frac{\lambda_i}{\mu_i}$, which in turn leads to the desired result using the same reasoning as Observation \ref{obscase1}. Recall that ${\lambda_i}/{\mu_i} \leq 1$ for every scarce queue $i \in S$. Using the inequality $\frac{x}{2} \leq 1-\exp(-x)$ for $x \in [0,1]$, inequality \eqref{ineq:exponential_comparison} gives
\begin{align*}
    \frac{1}{2} \cdot \sum_{i \in S} \frac{\lambda_i}{\mu_i} \leq \frac{1}{1-\eps} \cdot \left(1 - \exp\left(-\sum_{i \in S}  \frac{\lambda_i}{\mu_i} \right)\right)  \ ,
\end{align*}
which implies \[\sum_{i \in S} \frac{\lambda_i}{\mu_i} \leq \frac{2}{1-\eps} \ . \] 
Therefore, by constraint \eqref{eqn:offtightOnlineFlow}, we have
\begin{align*}
    \sum_{i \in S} \frac{x_{i,j}}{\gamma_j} \leq 1-\exp\left(-\frac{2}{1-\eps}\right) \ ,
\end{align*} which in turn leads to
\[
    \sum_{i \in I} x_{i,j} / \gamma_j = \sum_{i \in S} x_{i,j} / \gamma_j + \sum_{i \in I \setminus S} x_{i,j} / \gamma_j \leq 1-\exp\left(-\frac{2}{1-\eps}\right) + \eps \ .
\]
We can now use a reasoning similar to \Cref{obscase1} to prove the observation with 
\[
    b(\eps) = g \left(1-\exp\left(-\frac{2}{1-\eps}\right) + \eps\right)  \ , \qedhere
\]
Nevertheless, since $g (1-\exp(-\frac{2}{1-\eps}) + \eps) $ is decreasing in $\eps$ whereas we would like $b(\eps)$ to be increasing, we define $b(\eps) = g (1-\exp(-\frac{2}{1-\tilde{\eps}}) + \tilde{\eps}) $ for $\tilde{\eps} = 0.1$; we get a constant $b(\eps) > 0$ for every $\eps \in (0, 0.1)$. 
\end{proof}
Combining \Cref{obs:cr_case1}-\Cref{obs:cr_case3} proves \Cref{thm:compratio}, as discussed next.

\paragraph{Proof of \Cref{thm:compratio}.} Let $\xi > 0$ be a small constant and consider some fixed online type $j \in J$. If $j$ satisfies the conditions of \Cref{obs:cr_case1} or \Cref{obs:cr_case2} with $\eps = \xi$, we have $\textsf{ALG}_j \geq (1-1/\sqrt{e} + b(\xi)) \cdot \textsf{LP-Gain}_j$. Otherwise, letting 
\[
    I_1 = \left\{i \in N_j: p_{i,j} < 1-\xi \right \} \quad \text{and} \quad I_2 = \left\{i \in N_j: \lambda_i / \mu_i \geq 1 \right \}  \ ,
\]
we know that $\sum_{i \in I_1} x_{i,j} / \gamma_j \leq \xi$ and $\sum_{i \in I_2} x_{i,j} / \gamma_j \leq \xi$. Thus, the conditions of \Cref{obs:cr_case3} hold with $\eps = 2\xi$. We hence have $\textsf{ALG}_j \geq (1-1/\sqrt{e} + b(2\xi)) \cdot \textsf{LP-Gain}_j$. Therefore, we choose $\xi = 0.01$ and we always have $\textsf{ALG}_j \geq (1-1/\sqrt{e} + \delta) \cdot \textsf{LP-Gain}_j$, where $\delta = b(0.01)$. Since this argument works for every $j \in J$, \Cref{thm:compratio} immediately follows. \hfill \qedsymbol

\section{Extensions to other dynamic matching settings}\label{app:extensions}

Here, we argue that Theorem~\ref{thm:main} implies improvement on the approximation ratio in more general dynamic stochastic matching settings with (i) symmetric departure rates, or (ii) general (non-bipartite) graphs, which were previously studied by~\cite{aouad2022dynamic}.

\paragraph{Symmetric departure rates.} We define the {\em symmetric setting} as a generalization of the online stationary matching problem where each online node of type $j \in J$ also leaves the system after a duration ${\rm Exp}(\mu_j)$, but remains available for matching anytime before  departure. The online stationary matching problem can be viewed as the limiting case where $\mu_j = +\infty$ for all $j
\in J$, creating an asymmetry of patience between customers and suppliers. \cite{aouad2022dynamic} obtained a $
\frac{1}{2}(1-1/e)$-factor approximation of the online optimum in the symmetric setting, using an implicit reduction to stationary matching. By extending this reduction, Theorem~\ref{thm:main} improves on this result by providing a  $\frac{1}{2}(1-1/e+\delta)$-approximation of the online optimum.

An outline of the reduction is provided here. Let $\pi$ be a feasible policy in the symmetric setting. We show that our LP  in Section~\ref{subsec:tlp} is a valid relaxation for a variant of $\pi$ that operates in an online stationary matching setting. \cite{aouad2022dynamic} used the notion of active/passive agents. An agent is said to be {\em passive}
if it is matched by $\pi$ with another node who arrived to the market earlier; all other non-passive agents are said to be {\em active}. Consequently, we define $x^\pi_{i,j}$ to be the expected average rate of matches between active type $i\in I$ offline nodes with  passive type $j\in J$ online nodes, and similarly define $x^\pi_{j,i}$ as the match rate for the reverse active/passive labeling. Without loss of generality, let us assume that the ``online'' side and ``offline'' side are labeled such that $\sum_{i\in I} \sum_{j\in J} r_{i,j} x^\pi_{i,j} \geq \sum_{i\in I} \sum_{j\in J} r_{i,j} x^\pi_{j,i}$. 

Consider a new policy $\vec{\pi}$ in the symmetric setting that only executes the matches of $\pi$ that involve an active offline node with a passive online node. By construction, we have $r(\vec{\pi}) \geq \frac{1}{2} r(\pi)$. Next, we can show that the variables $x_{i,j}^{\vec{\pi}}$ satisfy all constraints in \eqref{TLPon} (i.e., the expected average match rates induced by $\vec{\pi}$ form a feasible solution of that LP). To do so, we extend the proof ideas of Lemmas~1 and 9 in~\cite{aouad2022dynamic}.\footnote{The authors constructed a policy $\tilde{\pi}$ that mimics $\vec{\pi}$, but it anticipates the future and executes certain matches earlier. Specifically, $\tilde{\pi}$ matches each (passive) online node immediately upon its arrival to the corresponding (active) offline node to whom it is eventually matched by $\vec{\pi}$. Note that $\tilde{\pi}$ is not a valid non-anticipating policy but it generates the same expected long-term average match rates and rewards as $\vec{\pi}$. \cite{aouad2022dynamic} showed  that constraints~\eqref{eqn:tightOnlineConstraint} are satisfied by $\tilde{\pi}$ via induction. Now, observe that  $\tilde{\pi}$ can  match an online node with an offline node only if the latter is {\em present} in the system upon the online node's arrival. Taking expectations and summing over the corresponding matching and presence indicator random variables yields that $\tilde{\pi}$ (and thus $\vec{\pi}$) also satisfies constraints~\eqref{eqn:tightOnlineFlow}. Lastly, it is not difficult to prove that the flow balance constraints~\eqref{eqn:tightOfflineFlow} are met.} Thus a $(1-1/e +\delta)$-approximate policy with respect to \eqref{TLPon} in the stationary matching setting, considering either active offline nodes and passive online nodes, or the reverse online/offline labeling, yields a $\frac{1}{2}(1-1/e +\delta)$-approximation of the online optimum in the original instance of the symmetric setting.

\paragraph{General graphs.} As a further extension, we consider settings where the edge-weighted graph comprises node types $V$ which are no longer  partitioned into offline nodes $I$ and online nodes $J$. Without of generality, this non-bipartite graph is complete as the edge rewards are possibly equal to zero. This is the most general setting considered by \cite{aouad2022dynamic}, where they obtained a $
\frac{1}{4}(1-1/e)$-factor approximation of the online optimum. By reduction to the bipartite setting, Theorem~\ref{thm:main} improves on this result by providing a  $\frac{1}{4}(1-1/e+\delta)$-approximation of the online optimum.

Let $\pi$ be a policy for a general (non-bipartite) instance. Construct an {\em extended instanced}, formed by a bipartite graph $(I,J)$ where both $I$ and $J$ are copies of $V$. Upon the arrival of any node of type $v\in V$ in the original instance, we toss a balanced coin and reassign its type to one of $v$'s two copies in   $I$ or $J$. Next, we consider policy $\pi'$ that simulates $\pi$ but only executes matches on edges between $I$ and $J$ (i.e., a match of $\pi$ is executed by $\pi'$ if the nodes' reassigned types are in opposite partite sets). The important observation is that the type reassignment is independent of the state evolution under $\pi$, and therefore, the conditional probability that a match of $\pi$ is executed by $\pi'$ given the system state is exactly $1/2$. This implies that $r(\pi') \geq \frac{1}{2} r(\pi)$. Now, observe that $\pi'$ is a valid policy for the extended bipartite instance. Combining this argument with our preceding reduction shows that the general setting is reducible to stationary matching with a $\frac{1}{4}$-factor loss.

\newpage
\bibliographystyle{alpha}
\bibliography{a}

\end{document}